\titleformat*{\paragraph}{\bfseries}
\definecolor{mygray}{gray}{0.9}
\numberwithin{equation}{section}
\definecolor{webbrown}{rgb}{0.65, 0.16, 0.16}
\newcommand{\bbraket}[1]{\llbracket #1 \rrbracket}
\newcommand*\ph{{\mkern 2mu\cdot\mkern 2mu}} 
\newcommand{\dd}{\mathrm{d}}
\newcommand{\bigO}{\mathrm{O}}
\newcommand{\NN}{\mathbb{Z}_{\geq 0}}
\newcommand{\ZZ}{\mathbb{Z}}
\newcommand{\QQ}{\mathbb{Q}}
\newcommand{\CC}{\mathbb{C}}
\newcommand{\Epsilon}{{\rm E}}
\newcommand{\Up}{{\rm U}}
\newcommand{\Mbar}{\overline{\mathcal{M}}}
\newcommand{\id}{{\rm id}}
\newcommand{\tr}{{\rm tr}}
\newcommand{\ev}{{\rm ev}}
\newcommand{\gl}{\varphi}
\newcommand{\fg}{\pi}
\newcommand{\Hom}{{\rm Hom}}
\newcommand{\End}{{\rm End}}
\newcommand{\GL}{{\rm GL}}
\newcommand{\Sym}[2]{#2^{\odot #1}}
\newcommand{\Sy}{\mathfrak{S}}
\DeclareFontFamily{U}{matha}{\hyphenchar\font45}
\DeclareFontShape{U}{matha}{m}{n}{
	<5> <6> <7> <8> <9> <10> gen * matha
	<10.95> matha10 <12> <14.4> <17.28> <20.74> <24.88> matha12
	}{}
\DeclareSymbolFont{matha}{U}{matha}{m}{n}
\DeclareFontFamily{U}{mathb}{}
\DeclareMathSymbol{\connsymb}{2}{matha}{"0C}
\DeclareFontShape{U}{mathb}{m}{n}{
		<-5.5> mathb5 <5.5-6.5> mathb6 
		<6.5-7.5> mathb7 <7.5-8.5> mathb8 <8.5-9.5> mathb9 <9.5-11> mathb10 
		<11-> mathb12
	}{}
\DeclareSymbolFont{mathb}{U}{mathb}{m}{n}
\DeclareMathSymbol{\discsymb}{2}{mathb}{"0C}
\DeclareSymbolFont{cyrletters}{OT2}{wncyr}{m}{n}
\DeclareMathSymbol{\Sha}{\mathalpha}{cyrletters}{"58}
\newcommand{\conn}[1]{#1^{\connsymb}}
\newcommand{\disc}[1]{#1^{\discsymb}}
\newcommand{\cd}[1]{#1^{\star}}
\def\loc{\textup{loc}}
\newcommand{\ind}[3]{#1\indices*{_{#3}^{#2}}}
\newcommand{\indF}[4]{#1\indices*{_{#2; #4}^{#3}}}
\newcommand{\indC}[3]{#1\vphantom{C}\indices*{_{#3}^{#2}}}
\newcommand{\cdxi}[1]{\xi^{\star,#1}}
\newcommand{\preind}[2]{\prescript{#1}{}{#2}}
\newcommand{\bcdot}{\mathbin{\bm{\cdot}}}
\newcommand{\bcdots}{\mathbin{\bm{\cdots}}}
\DeclareMathOperator*{\Res}{Res}
\def\Aut{{\rm Aut}}
\newcommand{\Ker}{{\rm Ker}}
\renewcommand{\Im}{{\rm Im}}
\renewcommand{\Vert}{\mathsf{V}}
\newcommand{\Edge}{\mathsf{E}}
\newcommand{\Half}{\mathsf{H}}
\newcommand{\Leaf}{\mathsf{L}}
\newcommand{\dfsymb}[2]{\Bigl\{ \! \genfrac..{0pt}{1}{#1}{#2} \! \Bigr\}}
\theoremstyle{plain}
\newtheorem{thm}{Theorem}[section]
\newtheorem{prop}[thm]{Proposition}
\newtheorem{lem}[thm]{Lemma}
\newtheorem*{question*}{Question}
\theoremstyle{definition}
\newtheorem{defn}[thm]{Definition}
\newtheorem{rem}[thm]{Remark}
\newtheorem{ex}[thm]{Example}
\begin{document}
\renewcommand{\hbar}{\hslash}

\title[Symmetries of F-CohFTs and F-TR]{Symmetries of F-cohomological field theories and F-topological recursion}
\author[G.~Borot]{Gaëtan Borot}
\address[G.~Borot]{
	Institut für Mathematik und Institut für Physik, Humboldt-Universität zu Berlin, Unter den Linden 6, 10099 Berlin, Germany.
}
\email{gaetan.borot@hu-berlin.de}
\author[A.~Giacchetto]{Alessandro Giacchetto}
\address[A.~Giacchetto]{ Departement Mathematik, ETH Zürich, Rämisstrasse 101, Zürich 8044, Switzerland}
\email{alessandro.giacchetto@math.ethz.ch}
\author[G.~Umer]{Giacomo Umer}
\address[G.~Umer]{
	Institut für Mathematik, Humboldt-Universität zu Berlin, Unter den Linden 6, 10099 Berlin, Germany.
}
\email{umergiac@hu-berlin.de}
\date{}

\subjclass[2020]{Primary 37K20; Secondary 14H10, 14H70}








\begin{abstract}
	We define F-topological recursion (F-TR) as a non-symmetric version of topological recursion, which associates a vector potential to some initial data. We describe the symmetries of the initial data for F-TR and show that, at the level of the vector potential, they include the F-Givental (non-linear) symmetries studied by Arsie, Buryak, Lorenzoni, and Rossi within the framework of F-manifolds. Additionally, we propose a spectral curve formulation of F-topological recursion. This allows us to extend the correspondence between semisimple cohomological field theories (CohFTs) and topological recursion, as established by Dunin-Barkowski, Orantin, Shadrin, and Spitz, to the F-world. In the absence of a full reconstruction theorem à la Teleman for F-CohFTs, this demonstrates that F-TR holds for the ancestor vector potential of a given F-CohFT if and only if it holds for some F-CohFT in its F-Givental orbit. We turn this into a useful statement by showing that the correlation functions of F-topological field theories (F-CohFTs of cohomological degree 0) are governed by F-TR. We apply these results to the extended 2-spin F-CohFT. Furthermore, we exhibit a large set of linear symmetries of F-CohFTs, which do not commute with the F-Givental action.
\end{abstract}

\maketitle

\section{Introduction}
\label{sec:intro}
Cohomological field theories (CohFTs for short) were introduced by Kontsevich and Manin \cite{KM94} to encode the geometric properties of Gromov--Witten invariants under degeneration of the source curve. They are collections of cohomology classes $\Omega_{g,n} \in H^{\bullet}(\Mbar_{g,n})$ compatible with the natural morphisms on $\Mbar_{g,n}$, i.e. algebras over the modular operad $H^{\bullet}(\Mbar_{g,n})$. The intersection indices of a CohFT with $\psi$-classes produce ancestor potentials associated with Dubrovin's Frobenius manifolds \cite{Dub96}. The Givental group action, first identified on potentials of Frobenius manifolds \cite{Giv01a,Giv01b}, can be lifted to (all genera) CohFTs \cite{FSZ10}. Teleman showed that this action is transitive on semisimple CohFTs of a given dimension \cite{Tel12}. It was then established that ancestor potentials of semisimple Frobenius manifolds can be reconstructed by the Eynard--Orantin topological recursion \cite{DOSS14,Eyn14,DNOPS18}. This was revisited in the formalism of Airy structures \cite{KS18,ABCO24} and leads to the fact that the ancestor potential of semisimple Frobenius manifolds can be realised as (the logarithm of) the partition function of an Airy structure. The non semisimple cases can sometimes be studied as limits of semisimple ones.

The Givental--Teleman theory thus constitutes a powerful tool to reconstruct semisimple CohFTs from their degree zero part (called topological field theory, or TFT for short), and it can be effectively applied for the computation of Gromov--Witten invariants of targets with semisimple quantum cohomology, such as (equivariant) $\mathbb{P}^1$ and toric Calabi--Yau threefolds. The supplementary results of topological recursion also had numerous applications in this vein, see e.g. \cite{EO15,FLZ16,FLZ20}. Beyond their enumerative relevance, families of CohFTs and their non-semisimple limits have been used to gain understanding on the structure of the tautological ring of $\Mbar_{g,n}$ \cite{PPZ15,PPZ19,Jan17,Jan18,CJ18,CGG}.
 
The notion of Frobenius manifolds can be weakened while keeping most of this picture. Hert\-ling and Manin introduced F-manifolds, which are Frobenius manifolds without the data of a metric \cite{HM99}. In contrast to Frobenius manifolds which admit a scalar potential, flat F-manifolds only admit a vector potential. F-CohFTs were introduced in \cite{BR21} and yield a weaker notion of CohFT in the sense that the axiom of compatibility with glueing maps having connected source curves is dropped and one marked point plays a distinguished role. The intersection indices of F-CohFTs with $\psi$-classes produce ancestor vector potentials for flat F-manifolds, there is an F-analogue of Givental's group acting on the vector potentials, and this action can be lifted to F-CohFTs \cite{ABLR23}. However, this action is far from being transitive on semisimple F-CohFTs, which therefore cannot always be reconstructed from their degree zero part. The purpose of this article is twofold: first, we complete this picture by establishing a relation to the topological recursion formalism, giving the analogue of \cite{DOSS14} to the extent possible; second, we show that F-CohFTs have a much larger group of symmetries than previously known, so that one can reasonably ask (see below) if a reconstruction \`a la Teleman would hold for semisimple F-CohFTs.

First, we propose in Section~\ref{sec:setting} a notion of F-Airy structures and define their associated vector potential by a topological recursion formula. F-Airy structures are drastically simpler than Airy structures as the non-linear constraints that ensure symmetry of the amplitudes in Airy structures are no longer needed. 
In this sense, F-topological recursion can be considered as a minimal framework to define topological recursion with non-symmetric output.

We then describe in Section~\ref{sec:action} a group of symmetries of F-Airy structures which contains the F-Givental group. Airy structures were ideals of the Weyl algebra of differential operators, selecting a unique partition function which is annihilated by this ideal. The simplicity of the F-Airy structure, however, complicates the task of finding the analogue of this algebraic description, and at the time of writing, it is not fully clear what ought to replace the Weyl algebra. In particular, it is not clear for concrete examples of F-CohFTs which algebraic structure replaces the linear Virasoro constraints acting on the ancestor potential of semisimple CohFTs.

In Section~\ref{sec:FCohFTs}, after recalling the notion of F-CohFTs and their known symmetries, we define a large group of \emph{abelian} symmetries acting \emph{linearly} on F-CohFTs by exploiting the combinatorics of boundary strata of \emph{non-compact type} in the moduli space of curves. This is in stark contrast with the non-linear nature of the F-Givental action. This linear symmetry, which we call \emph{tick}, does not commute with the F-Givental action. Therefore, together with the known symmetries, they generate a large group of symmetries of F-CohFTs for which we give a global description. This group is larger than the Givental group known to act on CohFTs.

\begin{question*} 
	Does the group generated by changes of basis, F-Givental transformations (sums over stable trees), translations and ticks act transitively on the set of semisimple F-CohFTs of a given dimension?
\end{question*}

In Section~\ref{sec:identification} we set up a dictionary between the formalism of F-CohFT and the one of F-Airy structures/F-topological recursion, at the level of their symmetries. This allows for a partial extension of the results of \cite{DOSS14} to the F-case: we prove that the vector potential of F-CohFTs that are in the F-Givental group orbit of a semisimple CohFT in terms of F-topological recursion. The absence of Teleman's reconstruction in the F-world does not allow going much further at the moment. We discuss the example of the extended $2$-spin class \cite{BR21}.

In Section~\ref{sec:residue:formulation}, we introduce a formalism of F-topological recursion from the perspective of spectral curves \cite{EO07}. Compared to the original topological recursion, the fundamental bidifferential $\omega_{0,2}$ is replaced in the F-world by two (possibly non-symmetric) fundamental bidifferentials, $\conn{\omega}_{0,2}$ and $\disc{\omega}_{0,2}$, which allow the freedom to use different weights for the connected and disconnected terms in the recursion formula. Not being bound to having symmetric outputs and having lost the relation to the Weyl algebra, one could even propose a more general setting where each topological type of term appearing in the recursion could have a different weight. We do not pursue this here due to the current lack of a geometric application.

\subsubsection*{Notation and conventions.}
Let $V$ be a vector space over $\CC$. Throughout the text, we make use of the following notations and conventions.
\begin{itemize}
	\item The space of all symmetric tensors of order $n$ defined on $V$ is denoted as $\Sym{n}{V}$, with the convention $\Sym{0}{V} \coloneqq \CC$. The completed symmetric algebra generated by $V$ is denoted $\Sym{}{V} \coloneqq \prod_{n \geq 0} \Sym{n}{V}$, equipped with its usual structure of commutative monoid denoted by the symbol $\odot$.
	
	\item The canonical pairing between $V$ and its dual is denoted as $\braket{\cdot, \cdot} \colon V^{\ast} \otimes V \to \CC$.

	\item Denote by $[n]$ the set of integers $\set{1,\dots,n}$. If $v_1,\ldots,v_n \in V$ and $J = \set{j_1 < \cdots < j_k} \subseteq [n]$, we denote $v_J \coloneqq v_{j_1} \otimes \cdots \otimes v_{j_k}$ in $V^{\otimes k}$.

	\item If $V$ is finite-dimensional with basis $(\mathrm{e}_i)_{i \in I}$, denote by $(\mathrm{e}^i)_{i \in I}$ the dual basis. For $X \in \Hom(V^{\otimes n},V^{\otimes m})$, we denote the expansion coefficients as
	\begin{equation}\label{eq:tensor:indx}
		X(\mathrm{e}_{i_1} \otimes \cdots \otimes \mathrm{e}_{i_n})
		=
		\ind{X}{j_1,\dots,j_m}{i_1,\dots,i_n} \; \mathrm{e}_{j_1} \otimes \cdots \otimes \mathrm{e}_{j_m} \,.
	\end{equation}
	We use Einstein's convention of summing over repeated indices. Following the diagrammatic conventions of tensor categories, such a tensor can be represented by drawing a box with $n$ legs attached on the top, labelled by $i_1,\ldots,i_n$, and $m$ legs attached on the bottom, labelled by $j_1,\ldots,j_m$:
	\begin{equation}\label{eq:tensor:dgrm}
		\begin{tikzpicture}[baseline,scale=.5]
			\draw (-1,-1) -- (1,-1) -- (1,1) -- (-1,1) -- cycle;
			\node at (0,0) {\small$X$};

			\draw (-.7,1) -- (-.7,2);
			\node at (-.7,2) [above] {\small$i_1$};
			\draw (.7,1) -- (.7,2);
			\node at (.7,2) [above] {\small$i_n$};
			\node at (.05,1.5) {$\dots$};

			\draw (-.7,-1) -- (-.7,-2);
			\node at (-.7,-2) [below] {\small$j_1$};
			\draw (.7,-1) -- (.7,-2);
			\node at (.7,-2) [below] {\small$j_m$};
			\node at (.05,-1.5) {$\dots$};
		\end{tikzpicture}
		\,.
	\end{equation}
	To summarise, input vectors keep falling on our heads, operations act on them sequentially during their fall and we read the output at the bottom. As a result of \eqref{eq:tensor:indx}, the rule for indices is reversed: top indices are carried by output/bottom legs while bottom indices are carried by input/top legs.
\end{itemize}

Throughout the article we have tried to provide basis-free expressions, which are conceptually more attractive. In Sections~\ref{sec:identification} and \ref{sec:residue:formulation} this leads to technical complications to handle correctly infinite-dimensional issues; if one resorts to bases (as in \cite{BR21,ABLR23}) most of these details can be safely ignored, as all expressions remain finite and well-defined.

\section{F-Airy structures}
\label{sec:setting}
In this section, we define the notion of F-Airy structures and give its graphical interpretation. We show that F-topological field theories fit in this formalism, extending to the F-world the relation between topological recursion and topological field theories \cite{ABO18,KS18,ABCO24}.

\subsection{Definition of F-Airy structures in finite dimension}
Let $V$ be a finite-dimensional vector space over $\CC$.

\begin{defn}
	An \emph{F-Airy structure} on $V$ is the data of tensors
	\begin{equation}
	\begin{split}
		& A \in \Hom(\Sym{2}{V},V) \,, \\
		& B \in \Hom(V^{\otimes 2},V) \,, \\ 
		& \conn{C} \in \Hom(V,V^{\otimes 2}) \,, \\
		& \disc{C} \in \Hom(\Sym{2}{V},V) \,, \\
		& D \in V \,.
	\end{split}
	\end{equation}
\end{defn}

Given an F-Airy structure on $V$, we define the \emph{F-topological recursion} (F-TR) \emph{amplitudes} 
\begin{equation}
	F_{g,1 + n} \in \Hom(\Sym{n}{V},V) \,,
\end{equation}
indexed by integers $g, n \geq 0$, by induction on $2g - 2 + (1 + n) > 0$. For the base cases, set $F_{0,3} \coloneqq A$ and $F_{1,1} \coloneqq D$. For $2g - 2 + (1 + n) > 1$, the induction step is
\begin{equation}\label{eq:F-TR:coord:free}
\begin{split}
	F_{g,1+n}(v_{1} \otimes \cdots \otimes v_{n})
	& \coloneqq
	\sum_{m = 1}^n B\bigl( v_m \otimes F_{g,1+(n-1)}(v_{1} \otimes \cdots \widehat{v_m} \cdots \otimes v_{n}) \bigr) \\
	& \quad +
	\frac{1}{2} (\id \otimes \tr)\bigl(\conn{C} \circ F_{g-1,1+(n+1)}(v_{1} \otimes \cdots \otimes v_{n} \otimes -)\bigr) \\
	& \quad +
	\frac{1}{2} \disc{C}\Biggl(
		\sum_{\substack{h + h' = g \\ J \sqcup J' = [n]}}
			F_{h, 1+|J|}(v_{J}) \otimes F_{h', 1+|J'|}(v_{J'})
	\Biggr) , 
\end{split}
\end{equation}
where $\tr$ is the trace of endomorphisms of $V$. By convention, we set $F_{0,1}$ and $F_{0,2}$ (called unstable terms) to zero.

To make the structure clearer, let us fix a basis $(\mathrm{e}_i)_{i \in I}$ of $V$. Following \eqref{eq:tensor:indx}, denote the expansion coefficients of the initial data and the amplitudes as
\begin{equation}
\begin{split}
	A(\mathrm{e}_j \otimes \mathrm{e}_k) & = \ind{A}{i}{j,k} \; \mathrm{e}_i \,, \\
	\qquad\qquad
	B(\mathrm{e}_j \otimes \mathrm{e}_k) & = \ind{B}{i}{j,k} \; \mathrm{e}_i \,, \\
	\conn{C}(\mathrm{e}_k) & = \indC{\conn{C}}{i,j}{k} \; \mathrm{e}_i \otimes \mathrm{e}_j\,,
	\qquad\qquad
	F_{g,1+n}(\mathrm{e}_{i_1} \otimes \cdots \otimes \mathrm{e}_{i_n}) = \indF{F}{g}{i_0}{i_1, \dots, i_n} \; \mathrm{e}_{i_0} \,.
	\\
	\disc{C}(\mathrm{e}_j \otimes \mathrm{e}_k) & = \indC{\disc{C}}{i}{j,k} \; \mathrm{e}_i \,, \\
	\qquad\qquad
	D & = D^i \; \mathrm{e}_i \,,
\end{split}
\end{equation}
Then equation \eqref{eq:F-TR:coord:free} can be written as
\begin{equation}\label{eq:F-TR:coords}
\begin{split}
	\indF{F}{g}{i_0}{i_1, \dots, i_n}
	& =
	\sum_{m = 1}^{n}
		\ind{B}{i_0}{i_m,a} \, \indF{F}{g}{a}{i_1,\ldots,\widehat{i_m},\ldots,i_n}
	+
	\frac{1}{2} \, \indC{\conn{C}}{i_0,b}{a} \, \indF{F}{g - 1}{a}{i_1,\ldots,i_n,b}
	+
	\frac{1}{2} \, \indC{\disc{C}}{i_0}{a,b}
		\sum_{\substack{h + h' = g \\ J \sqcup J' = \{i_1,\ldots,i_n\}}}
		\indF{F}{h}{a}{J} \, \indF{F}{h'}{b}{J'}
\end{split}
\end{equation}
and represented diagrammatically, according to \eqref{eq:tensor:dgrm}, as
\begin{equation}\label{eq:F-TR:diagr}
	\begin{tikzpicture}[baseline, scale=.5]
		\draw[fill=mygray] (-1,-1) -- (1,-1) -- (1,1) -- (-1,1) -- cycle;
		\node at (0,0) {\small $g$};

		\draw (0,-1) -- (0,-2);
		\node at (0,-2) [left] {\small $i_0$};

		\draw (-.8,1) -- (-.8,2);
		\node at (-.8,2) [above] {\small $i_1$};

		\node at (0,1.5) {\small $\cdots$};

		\draw (.8,1) -- (.8,2);
		\node at (.8,2) [above] {\small $i_n$};

		\node at (2.7,-.5) {$\displaystyle = \sum_{m=1}^n \vphantom{\sum_{\substack{h+h'=g \\ J \sqcup J' = \{i_1,\dots,i_n\}}}}$};

		\begin{scope}[xshift=6.2cm,yshift=1.5cm]
			\draw[fill=mygray] (-1,-1) -- (1,-1) -- (1,1) -- (-1,1) -- cycle;
			\node at (0,0) {\small $g$};


			\draw (-.8,1) -- (-.8,2);

			\node at (0,1.5) {\small$\cdots$};

			\draw (.8,1) -- (.8,2);

			\node at (.1,2.5) {\small $i_1 \;\; \widehat{i_m} \;\; i_n$};
		\end{scope}

		\begin{scope}[xshift=5.4cm,yshift=-1.5cm]
			\draw (-1,-1) -- (1,-1) -- (1,1) -- (-1,1) -- cycle;
			\node at (0,0) {\small $B$};

			\draw (0,-1) -- (0,-2);
			\node at (0,-2) [left] {\small $i_0$};

			\draw (-.8,1) -- (-.8,2);
			\draw (.8,1) -- (.8,2);
			\node at (-.8,2) [above] {\small $i_m$};
		\end{scope}

		\node at (8.4,-.5) {$\displaystyle + \; \frac{1}{2} \vphantom{\sum_{\substack{h+h'=g \\ J \sqcup J' = \{i_1,\dots,i_n\}}}}$};

		\begin{scope}[xshift=11cm,yshift=1.5cm]
			\draw[fill=mygray] (-1,-1) -- (1,-1) -- (1,1) -- (-1,1) -- cycle;
			\node at (0,0) {\small $g-1$};


			\draw (-.8,1) -- (-.8,2);
			\node at (-.8,2) [above] {\small $i_1$};

			\node at (-.1,1.5) {\small $\cdots$};

			\draw (.5,1) -- (.5,2);
			\node at (.5,2) [above] {\small $i_n$};

			\draw (.8,1) -- (.8,1.2);
			\draw (.8,1.2) arc (180:0:.6);
		\end{scope}

		\begin{scope}[xshift=11cm,yshift=-1.5cm]
			\draw (-1,-1) -- (1,-1) -- (1,1) -- (-1,1) -- cycle;
			\node at (0,0) {\small $\conn{C}$};

			\draw (-.5,-1) -- (-.5,-2);
			\node at (-.5,-2) [left] {\small $i_0$};
			\draw (.5,-1) -- (.5,-1.2);
			\draw (.5,-1.2) arc (-180:0:.75);

			\draw (0,1) -- (0,2);
		\end{scope}

		\draw (13,-2.7) -- (13,2.7);

		\node at (16,-.5) {$\displaystyle + \; \frac{1}{2} \!\! \sum_{\substack{h+h'=g \\ J \sqcup J' = \{i_1,\dots,i_n\}}}$};

		\begin{scope}[xshift=19.5cm,yshift=1.5cm]
			\draw[fill=mygray] (-1,-1) -- (1,-1) -- (1,1) -- (-1,1) -- cycle;
			\node at (0,0) {\small $h$};

			\draw (-.8,1) -- (-.8,2);

			\node at (0,1.5) {\small $\cdots$};

			\draw (.8,1) -- (.8,2);

			\node at (0,2.5) {\small $J$};
		\end{scope}

		\begin{scope}[xshift=22cm,yshift=1.5cm]
			\draw[fill=mygray] (-1,-1) -- (1,-1) -- (1,1) -- (-1,1) -- cycle;
			\node at (0,0) {\small $h'$};

			\draw (-.8,1) -- (-.8,2);

			\node at (0,1.5) {\small $\cdots$};

			\draw (.8,1) -- (.8,2);

			\node at (0,2.5) {\small $J'$};
		\end{scope}

		\begin{scope}[xshift=20.75cm,yshift=-1.5cm]
			\draw (-1,-1) -- (1,-1) -- (1,1) -- (-1,1) -- cycle;
			\node at (0,0) {\small $\disc{C}$};

			\draw (0,-1) -- (0,-2);
			\node at (0,-2) [left] {\small $i_0$};

			\draw (-.65,1) -- (-.65,2);
			\draw (.65,1) -- (.65,2);
		\end{scope}
	\end{tikzpicture} \;.
\end{equation}
There are three notable differences compared to the usual Airy structure formalism (cf. \cite{KS18,ABCO24}).
\begin{itemize}
	\item The amplitudes have one distinguished output.
	
	\item Apart from the specified symmetries, the tensors $(A,B,\conn{C},\disc{C},D)$ do not need to satisfy any constraint.
	
	\item The connected and disconnected terms can have different weights: $\conn{C}$ and $\disc{C}$.
\end{itemize}
Due to the first property, F-Airy structures do not have a scalar potential (or partition function), but rather a \emph{vector potential}. This is the $\hbar^{-1}V\bbraket{\hbar}$-valued formal function on $V$
\begin{equation}\label{eq:potential:FAiry}
	\Phi(x)
	\coloneqq
	\sum_{g,n \geq 0} \frac{\hbar^{g - 1}}{n!} \, F_{g,1+n}(x^{\otimes n})
	=
	\sum_{g,n \geq 0} \frac{\hbar^{g - 1}}{n!} \,
		\indF{F}{g}{i_0}{i_1,\ldots,i_n} \, x^{i_1} \cdots x^{i_n} \, \mathrm{e}_{i_0} \,,
\end{equation}
where $x = x^i \, \mathrm{e}_i$ denotes the variable in $V$.

If we are given a non-degenerate pairing $\eta$ on $V$ (i.e. an identification $V \cong V^*$), inputs and outputs can be considered on the same footing. After such identifications, if the tensors satisfy the condition $\conn{C} = \disc{C} = C$ and $(A,B,C,D)$ form an Airy structure, then $F_{g,1+n}$ are the TR amplitudes and are fully symmetric under permutation of all tensor factors. Besides, $\Phi(x) = \nabla F(x)$, where $F$ is a single formal function on $V$:
\begin{equation}
	F(x)
	=
	\sum_{g,n\geq 0} \frac{\hbar^{g - 1}}{(1 + n)!} \, \eta\!\left( x, F_{g,1+n}(x^{\otimes n}) \right)
	=
	\sum_{g,n \geq 0} \frac{\hbar^{g - 1}}{(1 + n)!} \,
		\indF{F}{g}{i}{i_1,\ldots,i_n} \, \eta_{i,j} \, x^{j} x^{i_1} \cdots x^{i_n} \,.
\end{equation}
Our proposal of F-topological recursion can be considered as a minimal framework to define topological recursion in absence of symmetry.

\subsection{A graphical interpretation}
As in the usual setting, F-TR amplitudes can be written as a sum over specific types of graphs. These graphs are obtained by repeatedly applying the F-TR formula, equation~\eqref{eq:F-TR:diagr}, which in turn generates trivalent graphs with vertices decorated by the initial data $(A,B,\conn{C},\disc{C},D)$. Besides, such graphs naturally come with a spanning tree which keeps track of the first input of the F-TR amplitudes at each step of the recursion.

\begin{defn}
	For any $g, n \geq 0$ such that $2g - 2 + (1 + n) > 0$, define the set $\mathbb{G}_{g,1+n}$ of pairs $\bm{G} = (G,\mathfrak{t})$ where:
	\begin{itemize}
		\item $G$ is a trivalent connected graph with first Betti number $b_1(G) = g$ and $1+n$ leaves, labelled as $\ell_0,\dots,\ell_n$;

		\item $\mathfrak{t} \subseteq G$ is a spanning tree which contains the first leaf $\ell_0$ (considered as the root) and no other leaf;

		\item the edges $e$ of $G$ which are not in $\mathfrak{t}$ must connect parent vertices.
	\end{itemize}
	An automorphism of $\bm{G}$ is a permutation of the set of edges which preserves the graph structure. We denote by $\Aut(\bm{G})$ the automorphism group of $\bm{G}$. We insist that $\bm{G}$ does not include the data of a cyclic order of edges or leaves incident at a vertex. Therefore, the number of automorphisms of a given $\bm{G}$ is a power of $2$. We also define
	\begin{equation}
		|\mathbb{G}_{g,1+n}|
		\coloneqq
		\sum_{\bm{G} \in \mathbb{G}_{g,1+n}} \frac{1}{|\Aut(\bm{G})|}
		\in \QQ_{> 0} \,.
	\end{equation}
\end{defn}

Consider the case $2g - 2 + (1 + n) = 1$, i.e. the sets $\mathbb{G}_{0,3}$ and $\mathbb{G}_{1,1}$. They both consist of a single element, given respectively as
\begin{equation}
	\begin{tikzpicture}[baseline]
		\node at (-1.8,0) {$\bm{G}_{0,3} = $};
		\draw [thick, ForestGreen] (0,-1) -- (0,0);
		\node at (0,-1) [right] {$\ell_0$};
		\node at (0,0) {$\bullet$};
		\draw (0,0) -- (135:1);
		\node at (135:1) [left] {$\ell_1$};
		\draw (0,0) -- (45:1);
		\node at (45:1) [right] {$\ell_2$};
		\node at (2.5,0) {and};
		\begin{scope}[xshift = 6cm]
			\node at (-1.3,0) {$\bm{G}_{1,1} = $};
			\draw [thick, ForestGreen] (0,-1) -- (0,0);
			\node at (0,-1) [right] {$\ell_0$};
			\node at (0,0) {$\bullet$};
			\draw (0,.5) circle (.5cm);
			\node at (1,0) {$\vphantom{\bm{G}_{1,1}}.$};
		\end{scope}
	\end{tikzpicture}
\end{equation}
The spanning tree is depicted in green. The graph $\bm{G}_{0,3}$ has only one trivial automorphism, while $\bm{G}_{1,1}$ has also the automorphism exchanging the two half-edges forming the loop. In particular, $|\mathbb{G}_{0,3}| = 1$ and $|\mathbb{G}_{1,1}| = \frac{1}{2}$.

It is not hard to see that $\mathbb{G}_{g,1+n}$ has a recursive structure. Indeed, by removing the vertex incident to the root $\ell_0$, we obtain a new graph for which one of the following mutually exclusive alternatives holds.
\begin{itemize}
	\item[B)] 
	The new graph $\bar{\bm{G}}$ belongs to $\mathbb{G}_{g,1+(n-1)}$ if one of the edges incident to the removed vertex is a leaf. The other edge is considered as the root of $\bar{\bm{G}}$. This situation occurs exactly $n$ times.

	\item[C${\vphantom{C}}^{\connsymb}$)] 
	The new graph $\bar{\bm{G}}$ belongs to $\mathbb{G}_{g-1,1+(n+1)}$, with an arbitrary choice of first and second leaf to be made.

	\item[C${\vphantom{C}}^{\discsymb}$)] The new graph is a disjoint union of $\bar{\bm{G}} \sqcup \bar{\bm{G}}'$, where $\bar{\bm{G}} \in \mathbb{G}_{h,1+ |J|}$ and $\bar{\bm{G}}' \in \mathbb{G}_{h',1+|J'|}$ for a splitting $h + h' = g$ of the genus and a splitting $J \sqcup J'$ of the leaves of $\bm{G}$ distinct from $\ell_0$. The roots of $\bar{\bm{G}}$ and $\bar{\bm{G}}'$ correspond to the two edges connected to $\ell_0$. 
\end{itemize}
Pictorially, we can represent the three cases as follows.
\begin{equation}\label{eq:B:Cconn:Cdisc}
	\begin{tikzpicture}[baseline]
		\draw [thick, ForestGreen] (0,-.8) -- (0,0);
		\node at (0,-.8) [left] {$\ell_0$};
		\draw (0,0) -- (135:1);
		\draw [thick, ForestGreen] (0,0) -- (45:.7);
		\node at (135:1) [left] {$\ell_m$};
		\draw [densely dotted](45:1.2) circle (.5cm);
		\node at (45:1.2) {$\bar{\bm{G}}$};
		\node at (0,0) {$\bullet$};
		\node at (0,-1.7) {(B)};
		\begin{scope}[xshift = 5cm]
			\draw [thick, ForestGreen] (0,-.8) -- (0,0);
			\node at (0,-.8) [left] {$\ell_0$};
			\draw (0,0) -- (-.4,.5);
			\draw [thick, ForestGreen] (0,0) -- (.4,.5);
			\draw [densely dotted,rounded corners] (-.8,.5) -- (.8,.5) -- (.8,1.3) -- (-.8,1.3) --cycle;
			\node at (-.5,.2) {\small$e$};
			\node at (.5,.2) {\small$\epsilon$};
			\node at (0,.9) {$\bar{\bm{G}}$};
			\node at (0,0) {$\bullet$};
			\node at (0,-1.7) {(C${\vphantom{C}}^{\connsymb}$)};
		\end{scope}
		\begin{scope}[xshift = 10cm]
			\draw [thick, ForestGreen] (0,-.8) -- (0,0);
			\node at (0,-.8) [left] {$\ell_0$};
			\draw [thick, ForestGreen] (0,0) -- (45:.7);
			\draw [thick, ForestGreen] (0,0) -- (135:.7);
			\node at (-.5,.2) {\small$e\vphantom{e'}$};
			\node at (.5,.2) {\small$e'$};
			\draw [densely dotted](45:1.2) circle (.5cm);
			\node at (45:1.2) {$\bar{\bm{G}}'$};
			\draw [densely dotted](135:1.2) circle (.5cm);
			\node at (135:1.2) {$\bar{\bm{G}}$};
			\node at (0,0) {$\bullet$};
			\node at (0,-1.7) {(C${\vphantom{C}}^{\discsymb}$)};
		\end{scope}
	\end{tikzpicture}
\end{equation}
In the first case, we have $|\Aut(\bm{G})| = |\Aut(\bar{\bm{G}})|$, while in the two last cases we have $|\Aut(\bm{G})| = 2 \, |\Aut(\bar{\bm{G}})|$ and $|\Aut(\bm{G})| = 2 \, |\Aut(\bar{\bm{G}})| \cdot |\Aut(\bar{\bm{G}}')|$. As a consequence, we find the recursive formula
\begin{equation}\label{eq:rec:Aut}
	|\mathbb{G}_{g,1+n}|
	=
	n \, |\mathbb{G}_{g,1+(n-1)}| +
	\frac{1}{2} \, |\mathbb{G}_{g-1,1+(n+1)}| +
	\frac{1}{2} \sum_{\substack{h + h' = g \\ J \sqcup J' = \{\ell_1,\ldots,\ell_n\}}}
		|\mathbb{G}_{h,1+|J|}| \cdot |\mathbb{G}_{h',1+|J'|}| \,.
\end{equation}
Consider now an F-Airy structure $(A, B, \conn{C}, \disc{C}, D)$ on $V$, together with a choice of basis $(\mathrm{e}_i)_{i \in I}$. Fix a graph $\bm{G} \in \mathbb{G}_{g,1+n}$ and a map of sets (called colouring) $c \colon \Edge^{\varnothing}(\bm{G}) \to I$, where $\Edge^{\varnothing}(\bm{G})$ denotes the set of leaves and edges of $\bm{G}$ that are not loops. We define a weight $w(\bm{G}, c)$ of the coloured graph as follows. We declare the base cases
\begin{equation}
	w(\bm{G}_{0,3},c) \coloneqq A^{c(\ell_0)}_{c(\ell_{1}),c(\ell_{2})} \,,
	\qquad\qquad
	w(\bm{G}_{1,1},c) \coloneqq D^{c(\ell_0)} \,.
\end{equation}
For $2g - 2 + (1+n) > 1$, define the weight recursively using the above decomposition. Denoting by $\bar{c}$ the restriction of $c$ to $\bar{\bm{G}}$ in the cases (B) and (C${\vphantom{C}}^{\connsymb}$) and by $\bar{c}$ (resp. $\bar{c}'$) the restriction of $c$ to $\bar{\bm{G}}$ (resp. $\bar{\bm{G}}'$) in the case (C${\vphantom{C}}^{\discsymb}$), set
\begin{itemize}
	\item[B)] 
	$w(\bm{G},c)
	\coloneqq
	\ind{B}{c(\ell_0)}{c(\ell_m),c(e)} \, w(\bar{\bm{G}},\bar{c}) \,,$ \\

	\item[C${\vphantom{C}}^{\connsymb}$)] 
	$w(\bm{G},c)
	\coloneqq
	\indC{\conn{C}}{c(\ell_0),c(\epsilon)}{c(e)} \, w(\bar{\bm{G}},\bar{c}) \,,$ \\

	\item[C${\vphantom{C}}^{\discsymb}$)] 
	$w(\bm{G},c)
	\coloneqq
	\indC{\disc{C}}{c(\ell_0)}{c(e),c(e')} \, w(\bar{\bm{G}},\bar{c}) \, w(\bar{\bm{G}}',\bar{c}') \,.$
\end{itemize}
These definitions are tailored so that the recursive formula \eqref{eq:F-TR:coords} is equivalent to the following.

\begin{prop}
	The F-TR amplitudes are given by
	\begin{equation}
		\indF{F}{g}{i_0}{i_1,\dots,i_n}
		=
		\sum_{\bm{G} \in \mathbb{G}_{g,1+n}} \sum_{c} \frac{w(\bm{G},c)}{|\Aut(\bm{G})|} \,,
	\end{equation}
	where the second sum ranges over all colourings $c \colon \Edge^{\varnothing}(\bm{G}) \to I$ satisfying $c(\ell_k) = i_k$.
\end{prop}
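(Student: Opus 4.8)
The plan is to prove the identity by induction on $2g - 2 + (1+n) > 0$, showing that the right-hand side — call it $\widetilde{F}{}^{\,i_0}_{g,1+n;\,i_1,\dots,i_n}$, where the colourings in the sum are always understood to satisfy $c(\ell_k) = i_k$ for $0 \le k \le n$ — obeys the same initial conditions and the same recursion \eqref{eq:F-TR:coords} as the F-TR amplitudes; since those data determine the amplitudes uniquely, this forces $\widetilde{F} = F$. The base case $2g - 2 + (1+n) = 1$ is a direct check: $\mathbb{G}_{0,3}$ and $\mathbb{G}_{1,1}$ are singletons, each admits exactly one colouring compatible with a prescription on the leaves (the loop of $\bm{G}_{1,1}$ lies outside $\Edge^{\varnothing}$ and hence carries no index), and the base weights together with $|\Aut(\bm{G}_{0,3})| = 1$ and $|\Aut(\bm{G}_{1,1})| = 2$ reproduce the prescribed values of $F_{0,3}$ and $F_{1,1}$.

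For the inductive step I would fix $(g,n)$ with $2g - 2 + (1+n) > 1$, assume the claim for all smaller values, and use the trichotomy recalled around \eqref{eq:B:Cconn:Cdisc}: excising from $\bm{G} \in \mathbb{G}_{g,1+n}$ the vertex $v_0$ incident to the root $\ell_0$ puts the coloured graph $(\bm{G},c)$ into exactly one of the families (B), (C${}^{\connsymb}$), (C${}^{\discsymb}$), and the weight is multiplicative along this cut — $w(\bm{G},c)$ is the decoration of $v_0$, namely one of $\ind{B}{c(\ell_0)}{c(\ell_m),c(e)}$, $\indC{\conn{C}}{c(\ell_0),c(\epsilon)}{c(e)}$, $\indC{\disc{C}}{c(\ell_0)}{c(e),c(e')}$, times $w(\bar{\bm{G}},\bar{c})$ (and also times $w(\bar{\bm{G}}',\bar{c}')$ in the disconnected case). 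Summing $w(\bm{G},c)/|\Aut(\bm{G})|$ over each family in turn then matches the corresponding term of \eqref{eq:F-TR:coords}: in case (B), for each $m \in [n]$ the cut is a colour-preserving bijection onto the pairs $(\bar{\bm{G}},\bar{c})$ with $\bar{\bm{G}} \in \mathbb{G}_{g,1+(n-1)}$ whose root is coloured by the index $c(e)$ of the excised internal edge, and $|\Aut(\bm{G})| = |\Aut(\bar{\bm{G}})|$, so summing over $c(e)$ and over $m$ and invoking the inductive hypothesis gives the first term; in case (C${}^{\connsymb}$) the two half-edges at $v_0$ other than the one towards $\ell_0$ open into the two distinguished leaves of $\bar{\bm{G}} \in \mathbb{G}_{g-1,1+(n+1)}$, whose colours get summed — this is precisely the contraction $(\id \otimes \tr)$ of \eqref{eq:F-TR:coord:free} — while $|\Aut(\bm{G})| = 2\,|\Aut(\bar{\bm{G}})|$ together with the two orderings of those leaves produce the $\tfrac{1}{2}$, yielding the second term; in case (C${}^{\discsymb}$), the cut displays $\bm{G}$ as $\bar{\bm{G}} \sqcup \bar{\bm{G}}'$ glued through $v_0$, and $|\Aut(\bm{G})| = 2\,|\Aut(\bar{\bm{G}})|\,|\Aut(\bar{\bm{G}}')|$ together with the symmetric double sum over $h + h' = g$ and $J \sqcup J' = [n]$ produces the third term. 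Adding the three contributions closes the induction.

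The part I expect to be the main obstacle is the symmetry-factor bookkeeping in the two $C$-cases. Since $\conn{C}$ and $\disc{C}$ need not be symmetric, one cannot simply quote the numerical identity \eqref{eq:rec:Aut}; one has to verify that the unordered data hidden in the cut — the pair $\{e,\epsilon\}$ of half-edges at $v_0$ in case (C${}^{\connsymb}$), and the pair of connected components together with their genus/leaf distribution in case (C${}^{\discsymb}$) — is matched on the recursion side by the explicitly ordered sums (over the contracted index, respectively over $(h,J)$ versus $(h',J')$), and that this is compatible with the extra factor of $2$ by which $|\Aut(\bm{G})|$ exceeds the product of the automorphism orders of the cut graph(s). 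In other words, one must upgrade \eqref{eq:rec:Aut} from a statement about rational cardinalities to one about coloured, weighted graphs; this is a delicate but essentially routine unwinding, and — as the text already signals — the definition of $w$ was arranged precisely so that it goes through.
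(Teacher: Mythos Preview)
Your proof is correct and follows exactly the approach the paper has in mind: the paper's own argument is the single sentence ``These definitions are tailored so that the recursive formula \eqref{eq:F-TR:coords} is equivalent to the following,'' and your induction on $2g-2+(1+n)$ via the trichotomy (B)/(C${}^{\connsymb}$)/(C${}^{\discsymb}$) together with the automorphism-factor bookkeeping is precisely the unpacking of that sentence. The symmetry-factor analysis you flag as the main obstacle is indeed the only point requiring care, and your treatment of it matches the paper's intended reading.
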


\subsection{F-topological field theories}
In the usual setting, the simplest example of Airy structure is that of a topological field theory (i.e. a Frobenius algebra) \cite{KS18,ABCO24}. The analogue of topological field theories in the F-world was introduced in \cite{ABLR23}.

\begin{defn}\label{defn:F-TFT}
	An \emph{F-topological field theory} (F-TFT for short) is the data $(V, \bcdot, w)$ of a commutative associative algebra $(V, \bcdot)$, not necessarily unital, together with a distinguished element $w \in V$. To an F-TFT is associated the collection of linear maps (called amplitudes)
	\begin{equation}\label{eq:FTFT}
		\mathscr{F}_{g,1+n} \in \Hom(\Sym{n}{V}, V) \,,
		\qquad\qquad
		\mathscr{F}_{g,1+n}(v_1 \otimes \cdots \otimes v_n) \coloneqq v_1 \bcdots v_n \bcdot w^g \,,
	\end{equation}
	indexed by $g,n \ge 0$ such that $2g-2+(1+n)>0$.
\end{defn}

The next result states that the maps $\mathscr{F}_{g,1+n}$ coincide, up to a combinatorial prefactor, with F-TR amplitudes.

\begin{prop}\label{prop:F-TFT}
	Let $(V, \bcdot , w)$ be an F-TFT. The data
	\begin{equation}
	\begin{aligned}
		A = B = C^{\discsymb}
		& \ \colon \Sym{2}{V} \longrightarrow V
		\qquad & v_1 \otimes v_2 & \longmapsto v_1 \bcdot v_2 \\
		C^{\connsymb} & \ \colon V \longrightarrow V^{\otimes 2}
		 \qquad &v& \longmapsto v \otimes w \\
		D & = \tfrac{1}{2} \, w \in V &&
	\end{aligned} 
	\end{equation}
	define an F-Airy structure on $V$ and the amplitudes of the associated F-TFT are computed by F-TR:
	\begin{equation}
	\label{eq:F-TFT:F-TR}
		|\mathbb{G}_{g,1+n}| \cdot \mathscr{F}_{g,1+n} = F_{g,1+n} \,.
	\end{equation}
\end{prop}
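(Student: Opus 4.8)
My plan is to treat the two assertions separately. That the given tensors form an F-Airy structure is immediate: the only axiom to verify is that $A$ and $\disc{C}$, both defined by $v_1 \otimes v_2 \mapsto v_1 \bcdot v_2$, descend to $\Hom(\Sym{2}{V},V)$, which is precisely the commutativity of $\bcdot$; the tensors $B \in \Hom(V^{\otimes 2},V)$, $\conn{C} \in \Hom(V,V^{\otimes 2})$ and $D \in V$ carry no symmetry constraint. The substance of the proposition is the equality \eqref{eq:F-TFT:F-TR}, which I would establish by induction on $\chi \coloneqq 2g - 2 + (1+n) > 0$. For the base case $\chi = 1$ there are two subcases: $(g,n) = (0,2)$, where $F_{0,3} = A = \mathscr{F}_{0,3}$ and $|\mathbb{G}_{0,3}| = 1$; and $(g,n) = (1,0)$, where $F_{1,1} = D = \tfrac12 w$ while $\mathscr{F}_{1,1} = w$ and $|\mathbb{G}_{1,1}| = \tfrac12$. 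In both, \eqref{eq:F-TFT:F-TR} holds on the nose.

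For the inductive step ($\chi > 1$) I would evaluate each of the three terms on the right-hand side of \eqref{eq:F-TR:coord:free} with the chosen data, apply the inductive hypothesis, and then rearrange every product into the normal form $v_1 \bcdots v_n \bcdot w^g = \mathscr{F}_{g,1+n}(v_1 \otimes \cdots \otimes v_n)$ using commutativity and associativity of $\bcdot$. For the $B$-term, each summand $B\bigl(v_m \otimes F_{g,1+(n-1)}(\cdots\widehat{v_m}\cdots)\bigr)$ involves $F_{g,1+(n-1)}$ of Euler characteristic $\chi - 1$, so by induction it equals $|\mathbb{G}_{g,1+(n-1)}| \, \mathscr{F}_{g,1+n}(v_{[n]})$; summing over $m$ produces the factor $n\,|\mathbb{G}_{g,1+(n-1)}|$ (an empty contribution when $n=0$). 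For the $\conn{C}$-term I would first check, in coordinates using $\conn{C}(u) = u \otimes w$, that $(\id \otimes \tr)\bigl(\conn{C} \circ F_{g-1,1+(n+1)}(v_{[n]} \otimes -)\bigr) = F_{g-1,1+(n+1)}(v_1 \otimes \cdots \otimes v_n \otimes w)$ — the loop contraction against $\conn{C}$ simply reinserts one copy of $w$ — so by induction (again Euler characteristic $\chi - 1$) this equals $|\mathbb{G}_{g-1,1+(n+1)}| \, v_1 \bcdots v_n \bcdot w \bcdot w^{g-1} = |\mathbb{G}_{g-1,1+(n+1)}| \, \mathscr{F}_{g,1+n}(v_{[n]})$, the extra $w$ upgrading $w^{g-1}$ to $w^g$. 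For the $\disc{C}$-term, a summand vanishes unless both $F_{h,1+|J|}$ and $F_{h',1+|J'|}$ are stable (recall $F_{0,1} = F_{0,2} = 0$), in which case each has Euler characteristic $< \chi$; applying induction to both and using $\disc{C}(a \otimes b) = a \bcdot b$ gives $|\mathbb{G}_{h,1+|J|}| \cdot |\mathbb{G}_{h',1+|J'|}| \, \mathscr{F}_{g,1+n}(v_{[n]})$ for the $(h,h',J,J')$ summand.

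Collecting the three contributions, $F_{g,1+n}(v_{[n]})$ equals $\mathscr{F}_{g,1+n}(v_{[n]})$ times
\begin{equation*}
	n \, |\mathbb{G}_{g,1+(n-1)}| + \tfrac12 \, |\mathbb{G}_{g-1,1+(n+1)}| + \tfrac12 \!\!\! \sum_{\substack{h+h'=g \\ J \sqcup J' = [n]}} \!\!\! |\mathbb{G}_{h,1+|J|}| \cdot |\mathbb{G}_{h',1+|J'|}| \,,
\end{equation*}
which is exactly $|\mathbb{G}_{g,1+n}|$ by the recursion \eqref{eq:rec:Aut}; this closes the induction. The only genuinely delicate points I foresee are the correct reading of the operation $(\id \otimes \tr)(\conn{C} \circ -)$ and the verification that it amounts to inserting $w$, together with the bookkeeping of the unstable conventions $F_{0,1} = F_{0,2} = 0$ so that the stable-range summands match exactly those in \eqref{eq:rec:Aut}. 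Once \eqref{eq:rec:Aut} is available, the matching of combinatorial prefactors is automatic, so I expect no serious obstacle beyond careful case-splitting in the base and boundary situations.
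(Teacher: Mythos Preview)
Your proposal is correct and follows essentially the same approach as the paper's proof: induction on $2g-2+(1+n)$, verification of the base cases $(0,3)$ and $(1,1)$, and reduction of each of the three F-TR terms to $\mathscr{F}_{g,1+n}$ via commutativity and associativity, followed by comparison with the recursion \eqref{eq:rec:Aut} for $|\mathbb{G}_{g,1+n}|$. Your reading of the $\conn{C}$-term as insertion of $w$ and your handling of the unstable conventions are both correct and match the paper.
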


\begin{proof}
	Unlike the usual case, there is nothing to check for the tensors $(A, B, C^{\connsymb}, C^{\discsymb}, D)$: they automatically provide an F-Airy structure. Observe now that for the case $2g-2+(1+n) = 1$, equation~\eqref{eq:F-TFT:F-TR} holds trivially following the definition of $A$ and $D$ and the values $|\mathbb{G}_{0,3}| = 1$ and $|\mathbb{G}_{1,1}| = \frac{1}{2}$. For the general case, suppose that the recursion is satisfied for all $(g_0, n_0)$ such that $2g_0 -2 + (1+n_0) < 2g - 2 + (1+n)$. From the coordinate-free definition of F-TR, equation \eqref{eq:F-TR:coord:free}, and the induction hypothesis we find
	\begin{multline}
		F_{g,1+n}(v_{1} \otimes \cdots \otimes v_{n})
		=
		|\mathbb{G}_{g,1+(n-1)}|
		\sum_{m=1}^n B \bigl( v_m \otimes \mathscr{F}_{g,1+(n-1)}(v_{1} \otimes \cdots \widehat{v_m} \cdots \otimes v_{n}) \bigr) \\
		+
		\frac{1}{2} \, |\mathbb{G}_{g-1,1+(n+1)}| \,
		(\id \otimes \tr)\bigl(\conn{C} \circ \mathscr{F}_{g-1,1+(n+1)}(v_{1} \otimes \cdots \otimes v_{n} \otimes -) \bigr) \\
		+
		\frac{1}{2} \sum_{\substack{h + h' = g \\ J \sqcup J' = \{i_1,\ldots,i_n\}}}
		|\mathbb{G}_{h, 1+|J|}| \cdot |\mathbb{G}_{h', 1+|J'|}| \,
		\disc{C}\left(
			\mathscr{F}_{h, 1+|J|}(v_{J}) \otimes \mathscr{F}_{h', 1+|J'|}(v_{J'})
		\right)
	\end{multline}
	with the convention $\mathscr{F}_{0,1} = 0$ and $\mathscr{F}_{0,2} = 0$. From the definition of F-TFT amplitudes and that of $B$, we see that
	\begin{equation}
	\begin{split}
		B\bigl( v_m \otimes \mathscr{F}_{g,1+(n-1)}(v_{1} \otimes \cdots \widehat{v_m} \cdots \otimes v_{n})\big)
		& =
		v_m \bcdot (v_{1} \bcdots \widehat{v_m} \bcdots v_{n} \bcdot w^{g}) \\
		& =
		\mathscr{F}_{g,1+n}(v_{1} \otimes \cdots \otimes v_{n}) \,.
	\end{split}
	\end{equation}
	Notice that we repeatedly used the commutativity and the associativity of the product. The same holds for $\disc{C}$. As for $\conn{C}$, we find
	\begin{equation}
	\begin{split}
		(\id \otimes \tr)\bigl(\conn{C} \circ \mathscr{F}_{g-1,1+(n+1)}(v_{1} \otimes \cdots \otimes v_{n} \otimes - ) \bigr)
		& =
		\mathscr{F}_{g-1,1+(n+1)}(v_{1} \otimes \cdots \otimes v_{n} \otimes w) \\
		& =
		\mathscr{F}_{g,1+n}(v_{1} \otimes \cdots \otimes v_{n}) \,.
	\end{split}
	\end{equation}
	The claimed induction step follows by comparison with the recursion \eqref{eq:rec:Aut} for $|\mathbb{G}_{g,1+n}|$.
\end{proof}

\subsection{Infinite-dimensional settings}
A more complicated set of examples of F-Airy structures, namely those built on loop spaces, is discussed in Section~\ref{sec:identification}. It then becomes necessary to give an appropriate definition of F-Airy structure over infinite-dimensional graded vector spaces. There are several ways to do so, and the one we propose here covers the examples considered in Sections~\ref{sec:identification} and \ref{sec:residue:formulation}.

Let $(V_d)_{d \geq 0}$ be a sequence of finite-dimensional vector spaces and consider the graded vector space $V \coloneqq \bigoplus_{d \geq 0} V_d$, which may be infinite-dimensional. Let $V_{\leq d} \coloneqq \bigoplus_{d' = 0}^{d} V_{d'}$ and the natural inclusions and projections $\iota_{d} \colon V_{\leq d} \rightarrow V$ and $\pi_d \colon V \rightarrow V_{\leq d}$. Introduce the completed tensor product
\begin{equation}
	V \widehat{\otimes} V
	\coloneqq
	\prod_{d \geq 0} \bigg(\bigoplus_{d' = 0}^{d} V_{d'} \otimes V_{d - d'}\bigg)\,.
\end{equation}
An F-Airy structure on $V$ is by definition the data of tensors
\begin{equation}
\begin{split}
	& A \in \Hom(\Sym{2}{V},V) \,, \\
	& B \in \Hom(V^{\otimes 2},V) \,, \\ 
	& \conn{C} \in \Hom(V,V \widehat{\otimes} V) \,, \\
	& \disc{C} \in \Hom(\Sym{2}{V},V) \,, \\
	& D \in V \,,
\end{split}
\end{equation}
together with an increasing function $\varphi \colon \ZZ_{\geq 0} \rightarrow \ZZ_{\geq 0}$ and $\tilde{d} \in \ZZ_{\geq 0}$ such that the following finite-dimensionality conditions hold.
\begin{itemize}
	\item There exist $\tilde{A} \in \Hom(\Sym{2}{(V_{\leq \tilde{d}})},V_{\leq \tilde{d}})$ and $\tilde{D} \in V_{\leq \tilde{d}}$ such that $A = \iota_{\tilde{d}} \circ \tilde{A} \circ (\pi_{\tilde{d}})^{\otimes 2}$ and $D = \iota_{\tilde{d}} \circ \tilde{D}$.

	\item For any $d \geq 0$, we have
	\begin{equation}
	\begin{split}
		B(V \otimes V_{\leq d}) & \subseteq V_{\leq \varphi(d)} \,, \\
		(\id \otimes \pi_d)\conn{C}(V_{\leq d})  & \subseteq V_{\leq \varphi(d)} \otimes V_{\leq d} \,, \\
		\disc{C}(V_{\leq d} \otimes V_{\leq d}) & \subseteq V_{\leq \varphi(d)} \,.
	\end{split}
	\end{equation}
\end{itemize}
The datum of $(\varphi, \tilde{d})$ is often implicit when one specifies infinite-dimensional F-Airy structures. An easy induction on $2g - 2 + (1 + n) > 0$ then shows that such F-Airy structures still admit amplitudes.

\begin{lem}
	If $(A,B,\conn{C},\disc{C},D)$ is an F-Airy structure of $V$ in the above sense, then the F-TR formula \eqref{eq:F-TR:coord:free} is a well-posed definition for tensors $F_{g,1+n} \in \Hom(\Sym{n}{V},V)$. They are such that there exist
	\begin{equation}
		d \colon \ZZ_{\ge 0} \times \ZZ_{\ge 0} \longrightarrow \ZZ_{\ge 0}
		\qquad\text{and}\qquad
		\tilde{F}_{g,1+n} \in \Hom(\Sym{n}{V_{\leq d(g,n)}},V_{\leq d(g,n)})
	\end{equation}
	for which $F_{g,1+n} = \iota_{d(g,n)} \circ \tilde{F}_{g,1+n} \circ (\pi_{d(g,n)})^{\otimes n}$.
\end{lem}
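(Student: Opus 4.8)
The plan is to argue by induction on $\chi \coloneqq 2g-2+(1+n)>0$, producing the bounding function $d(g,n)$ and the reduced tensor $\tilde F_{g,1+n}$ hand in hand with the amplitude $F_{g,1+n}$. The inductive claim is: the right-hand side of \eqref{eq:F-TR:coord:free} is a well-defined element of $\Hom(\Sym{n}{V},V)$ with image contained in $V_{\leq d(g,n)}$, and it depends on the inputs only through $(\pi_{d(g,n)})^{\otimes n}$, whence $F_{g,1+n}=\iota_{d(g,n)}\circ\tilde F_{g,1+n}\circ(\pi_{d(g,n)})^{\otimes n}$ with $\tilde F_{g,1+n}\in\Hom(\Sym{n}{V_{\leq d(g,n)}},V_{\leq d(g,n)})$. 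For $\chi=1$ the only cases are $(g,n)\in\{(0,2),(1,0)\}$ and one has $F_{0,3}=A$, $F_{1,1}=D$; setting $d(0,2)=d(1,0)\coloneqq\tilde d$, the required factorizations with $\tilde F_{0,3}=\tilde A$ and $\tilde F_{1,1}=\tilde D$ are exactly the standing hypotheses on $A$ and $D$.

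For $\chi>1$, every amplitude occurring on the right-hand side of \eqref{eq:F-TR:coord:free} has strictly smaller $\chi$, so the inductive hypothesis applies to all of them, and I would bound the three summands in turn. For the $B$-summand the inductive image bound gives $B\bigl(v_m\otimes F_{g,1+(n-1)}(\cdots)\bigr)\in B\bigl(V\otimes V_{\leq d(g,n-1)}\bigr)\subseteq V_{\leq\varphi(d(g,n-1))}$. For the $\disc C$-summand, writing $d^{\circ}\coloneqq\max\{d(h,k):h\leq g,\ 0\leq k\leq n\}$, both $F_{h,1+|J|}(v_J)$ and $F_{h',1+|J'|}(v_{J'})$ lie in $V_{\leq d^{\circ}}$, so $\disc C$ is evaluated on $V_{\leq d^{\circ}}\otimes V_{\leq d^{\circ}}$ and the summand lands in $V_{\leq\varphi(d^{\circ})}$; in both cases the dependence on the inputs factors through a fixed projection, by the inductive factorization of the lower amplitudes. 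One then sets $d(g,n)$ to be the maximum of $\varphi(d(g,n-1))$, $\varphi(d^{\circ})$ and $\varphi(d(g-1,n+1))$ (omitting the entries attached to unstable or negative-genus amplitudes, which do not contribute), so that all three summands take values in $V_{\leq d(g,n)}$, and assembling the three input-dependence statements closes the induction.

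The step I expect to be the real obstacle is the $\conn C$-summand. As written, $(\id\otimes\tr)\bigl(\conn C\circ F_{g-1,1+(n+1)}(v_1\otimes\cdots\otimes v_n\otimes -)\bigr)$ involves a trace over the infinite-dimensional space $V$, that is, an a priori infinite sum over a basis, so the point is to show it is well-defined — indeed that the sum terminates. This is where propagating the finite-dimensional factorization is essential: by the inductive hypothesis the endomorphism $\Psi\coloneqq F_{g-1,1+(n+1)}(v_1\otimes\cdots\otimes v_n\otimes -)\colon V\to V$ factors as $\iota_e\circ\tilde\Psi\circ\pi_e$ with $e\coloneqq d(g-1,n+1)$ and $\tilde\Psi\in\End(V_{\leq e})$, hence has finite rank, so the trace reduces to a trace over the finite-dimensional $V_{\leq e}$ and only the finitely many basis vectors lying in $V_{\leq e}$ contribute. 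Since $\Psi$ moreover takes values in $V_{\leq e}$ and is paired, through the trace, only against $V_{\leq e}$, the value is unchanged upon replacing $\conn C$ by $(\id\otimes\pi_e)\conn C$; the hypothesis $(\id\otimes\pi_e)\conn C(V_{\leq e})\subseteq V_{\leq\varphi(e)}\otimes V_{\leq e}$ then confines the output to $V_{\leq\varphi(e)}$, while the finiteness of $\tilde\Psi$ in its first $n$ arguments yields the factorization through $(\pi_e)^{\otimes n}$. In short, the inductive finite-dimensional factorization is simultaneously what makes the trace a finite sum and — via the particular $(\id\otimes\pi_e)$-shape of the hypothesis on $\conn C$ — what keeps the output degree bounded even though $\conn C$ carries no a priori control on its full, un-projected second leg.
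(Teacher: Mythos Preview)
Your argument is exactly the ``easy induction on $2g-2+(1+n)$'' that the paper indicates (no further detail is given there), and your treatment of the $\conn{C}$-summand---reducing the a priori infinite trace to one over $V_{\le e}$ via the inductive finite-rank factorization of $F_{g-1,1+(n+1)}$, then invoking the $(\id\otimes\pi_e)$-shape of the hypothesis on $\conn{C}$ to bound the output---is the heart of the matter and is handled correctly.

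There is, however, one place where the induction does not close as written. For the $B$-summand you assert that ``the dependence on the inputs factors through a fixed projection, by the inductive factorization of the lower amplitudes,'' but this only accounts for the variables $v_i$ with $i\neq m$: the distinguished variable $v_m$ is fed directly into the \emph{first} slot of $B$, and the standing hypotheses impose nothing on that slot (they only require $B(V\otimes V_{\le d})\subseteq V_{\le\varphi(d)}$). So nothing forces $B(v_m\otimes w)$ to depend on $v_m$ through any fixed $\pi_D$. A toy case makes the failure concrete: take $V_d=\CC\,\mathrm{e}_d$, $A$ supported in degree $0$, $D=\mathrm{e}_0$, and $B(\mathrm{e}_j\otimes\mathrm{e}_k)=\mathrm{e}_0$ for all $j,k$; the hypotheses hold, yet $F_{1,1+1}(\mathrm{e}_b)$ contains the term $B(\mathrm{e}_b\otimes D)=\mathrm{e}_0$ for \emph{every} $b$, so $v\mapsto F_{1,1+1}(v)$ is not of the form $\iota_e\circ\tilde\Psi\circ\pi_e$, and with (say) $\conn{C}(\mathrm{e}_0)=\sum_{d\ge0}\mathrm{e}_0\otimes\mathrm{e}_d$ the trace defining $F_{2,1}$ becomes the divergent sum $\sum_{b\ge0}1$. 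In the loop-space examples the paper actually uses, the tensor $B$ does carry the missing control on its first argument (fixing the degree of the second input bounds the degree of the first), so the induction goes through there; but as a proof of the lemma exactly as stated, you should flag that the input factorization in $v_m$ is the step where the hypotheses, as written, fall short.
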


Although Section~\ref{sec:action} focuses on finite-dimensional $V$, it can be adapted without difficulties to this graded infinite-dimensional setting. In the examples of Sections~\ref{sec:identification} and \ref{sec:residue:formulation} more will be said about how infinite-dimensionality is handled in practice.

\section{Actions on F-Airy structures}
\label{sec:action}
In this section we define three types of transformations of F-Airy structures on a fixed vector space $V$: changes of bases, Bogoliubov transformations, and translations. All transformations define left group actions on the set of F-Airy structures on $V$. We restrict ourselves to the finite-dimensional case, and comment on the infinite-dimensional setting in Section~\ref{sec:FCohFTs} in relation to F-CohFTs.

\subsection{Change of bases}
\label{subsec:change:bases}
The first action is rather obvious and induced by changes of bases on $V$, which can be chosen independently in the source and in the target. Given $\lambda_{\textup{s}}$, $\lambda_{\textup{t}} \in \GL(V)$ and a collection of tensors $F_{g,1+n} \in \Hom(\Sym{n}{V},V)$, define
\begin{equation}\label{eq:F:change:bases}
	\preind{\lambda}{F}_{g,1+n}
	\coloneqq
	\lambda_{\textup{t}} \circ F_{g,1+n} \circ (\lambda_{\textup{s}}^{-1})^{\otimes n} \,.
\end{equation}
From the F-TR formula \eqref{eq:F-TR:coord:free}, it follows that if $(F_{g,1+n})_{g,n \ge 0}$ are the amplitudes of an F-Airy structure $(A,B,\conn{C},\disc{C},D)$, so do $(\preind{\lambda}{F}_{g,1+n})_{g,n \ge 0}$ with initial data
\begin{equation}
\begin{split}
	& \preind{\lambda}{A}
		=
		\lambda_{\textup{t}} \circ A \circ (\lambda_{\textup{s}}^{-1})^{\otimes 2} \,, \\[.5ex]
	& \preind{\lambda}{B}
		=
		\lambda_{\textup{t}} \circ B \circ \bigl( \lambda_{\textup{s}}^{-1} \otimes \lambda_{\textup{t}}^{-1} \bigr) \,, \\[.5ex]
	& \preind{\lambda}{\vphantom{C}}{\conn{C}}
		=
		\lambda_{\textup{t}}^{\otimes 2} \circ \conn{C} \circ \lambda_{\textup{t}}^{-1} \,, \\[.5ex]
	& \preind{\lambda}{\vphantom{C}}{\disc{C}}
		=
		\lambda_{\textup{t}} \circ \disc{C} \circ \bigl( \lambda_{\textup{t}}^{-1} \bigr)^{\otimes 2} \,, \\[.5ex]
	& \preind{\lambda}{D}
		=
		\lambda_{\textup{t}} \circ D \,.
\end{split}
\end{equation}
On the vector potential $\Phi$, the change of bases simply reads
\begin{equation}
	\preind{\lambda}{\Phi} = \lambda_{\textup{t}} \circ \Phi \circ \lambda_{\textup{s}}^{-1} \,.
\end{equation}

\subsection{Bogoliubov transformation}
\label{subsec:Bogoliubov}
Changes of polarisation naturally arise to connect different methods of quantising symplectic vector spaces and are sometimes referred to as Bogoliubov transformations in quantum mechanics. In the context of Gromov--Witten theory, Givental introduced a specific change of polarisation coming from an R-matrix \cite{Giv01a}, and the transformation was later extended to CohFTs (see \cite{Pan19} for a self-contained exposition). In \cite{DOSS14}, the authors identified the R-action with a transformation of initial data for topological recursion. On the associated Airy structure this coincides with the natural action of a corresponding change of polarisation of a specific form and specified by the R-matrix.

Inspired by the definitions of \cite{ABLR23}, we now define the analogue of the changes of polarisation on F-Airy structures. As there is no symplectic structure here, to avoid immediate confusion we call them Bogoliubov transformations. For those coming from an R-matrix, we comment on their relation to the F-Givental action in Section~\ref{subsec:F-Giv}. We start by recalling the notion of stable trees.

\begin{defn}\label{def:stbl:tree}
	Let $g,n \geq 0$ such that $2g - 2 + (1 + n) > 0$. A \emph{stable tree} $\bm{T}$ of type $(g,1+n)$ is a tree $\bm{T}$ equipped with:
	\begin{itemize}
		\item a genus decoration $g(v)$ for each vertex $v$ of $\bm{T}$, subject to the local stability condition $2g(v) - 2 + (1 + n(v)) > 0$ and the global genus condition $g = \sum_v g(v)$;
		\item $1+n$ labelled leaves, denoted $\ell_0, \ell_1, \ldots,\ell_n$.
	\end{itemize}
	Here $1 + n(v)$ denotes the number of half-edges incident to $v$. Moreover, we consider $\bm{T}$ as being rooted at the leaf $\ell_0$. We also introduce the following notations (see Figure~\ref{fig:sble:tree} for an example):
	\begin{itemize}
		\item $\Vert(\bm{T})$, $\Edge(\bm{T})$, $\Half(\bm{T})$, and $\Leaf(\bm{T})$ are the sets of vertices, edges, half-edges, and leaves respectively;

		\item for every $v \in \Vert(\bm{T})$, $r_v$ is the half-edge that is the closest to the root, and $h \rightsquigarrow v$ refers to any half-edge $h$ different from $r_v$ (but including leaves) that is attached to $v$;

		\item every $e \in \Edge(\bm{T})$ is split into two half-edges $h_e'$ and $h_e''$ being the closest to and the furthest away from the root; we say that an edge $e$ enters (resp. exits) a vertex $v$ if $h_e'$ (resp. $h_e''$) is incident to $v$.
	\end{itemize}
	The set of stable trees of type $(g,1+n)$ is denoted $\mathbb{T}_{g,1+n}$.
\end{defn}

\begin{figure}[ht]
	\centering
	\begin{tikzpicture}[scale=1.6]
		\draw (0,-.8) -- (0,0) -- (0,1) -- (0,1.8);
		\draw ($(150:1) + (120:.8)$) -- (150:1) -- ($(150:1) + (60:.8)$);
		\draw (150:1) -- (0,0) -- (30:.8);

		\draw[ultra thick, BrickRed] (0,1) -- (0,.5);
		\draw[ultra thick, Mulberry] (0,0) -- (150:.5);
		\draw[ultra thick, BurntOrange] (150:.5) -- (150:1);

		\draw[fill=white] (0,0) circle (.125cm);
		\node (v0) at (0,0) {\tiny $1$};
		
		\draw[fill=white] (150:1) circle (.125cm);
		\node (v1) at (150:1) {\tiny $0$};

		\draw[fill=white] (0,1) circle (.125cm);
		\node (v2) at (90:1) {\tiny $2$};

 		\node at (0,-.8) [right] {$\ell_0$};
		\node at ($(150:1) + (110:1)$) {$\ell_1$};
		\node at ($(150:1) + (70:1)$) {$\ell_2$};
		\node at (0,1.8) [right] {$\ell_3$};
		\node at (30:1.1) {$\ell_4$};

		\node at (.3,1) {\small \textcolor{BrickRed}{$v$}};
		\node at (.2,.66) {\small \textcolor{BrickRed}{$r_v$}};
		\node at ($(150:.3) + (240:.25)$) {\small \textcolor{Mulberry}{$h_{e}'$}};
		\node at ($(150:.7) + (240:.25)$) {\small \textcolor{BurntOrange}{$h_{e}''$}};
	\end{tikzpicture}
	\caption{Example of a stable tree in $\mathbb{T}_{3,1+4}$ and corresponding notation. The genus decoration is depicted inside the vertices.}
	\label{fig:sble:tree}
\end{figure}
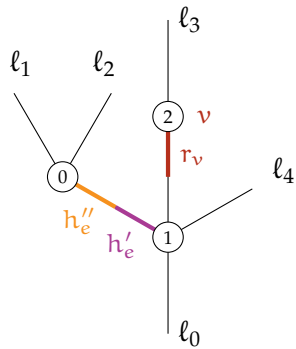

Given $\beta \in \End(V)$ and a collection of tensors $F_{g,1+n} \in \Hom(\Sym{n}{V},V)$ indexed by $g,n \geq 0$, we define new amplitudes by the formula
\begin{equation}\label{eq:Bglbv:action}
	\preind{\beta}{F}_{g,1+n}
	\coloneqq
	\sum_{\bm{T} \in \mathbb{T}_{g,1+n}}
		\Biggl( \bigotimes_{v \in \Vert(\bm{T})} F_{g(v),1+n(v)} \Biggr)
		\underset{\bm{T}}{\circ}
		\Biggl( \bigotimes_{e \in \Edge(\bm{T})} \beta \Biggr)
		\,.
\end{equation}
Here $\circ_{\bm{T}}$ means that we compose the tensors in the natural way along the edges of the stable tree $\bm{T}$: if an edge $e$ connects two vertices $v'$ and $v''$, the output of $F_{g(v''),1+n(v'')}$ (corresponding to $h_e''$) is inserted as input in $\beta$, while the output of $\beta$ (corresponding to $h_e'$) is inserted as input in $F_{g(v'),1+n(v')}$. In a basis of $V$ indexed by $I$, writing $\beta(\mathrm{e}_j) = \beta^i_j \; \mathrm{e}_i$, this means
\begin{equation}\label{eq:Bglbv:action:coords}
	\indF{(\preind{\beta}{F})}{g}{i_0}{i_1, \dots, i_n}
	=
	\sum_{\bm{T} \in \mathbb{T}_{g,1+n} }
	\sum_{j}
	\Biggl( \prod_{ v \in \Vert(\bm{T}) }
			\indF{F}{g(v)}{j(r_v)}{( j( h ) )_{ h \rightsquigarrow v }}
	\Biggr)
	\Biggl( \prod_{ e \in \Edge(\bm{T}) }
			\ind{\beta}{j(h_e')}{j(h_e'')}
	\Biggr)
	\,,
\end{equation}
where the second sum ranges over all weights on half-edges that respect the leaf decorations, that is the set of maps $j \colon \Half(\bm{T}) \to I$ such that $j(\ell_k) = i_k$. Notice the absence of symmetry factors in this formula, since stable trees do not have non-trivial automorphisms (we did not order the edges entering vertices). We can now state the following result.

\begin{thm}\label{thm:Bglbv:FAiry}
	If $(F_{g,1+n})_{g,n}$ are the amplitudes of an F-Airy structure $(A,B,\conn{C},\disc{C},D)$ on $V$, then the $(\preind{\beta}{F}_{g,1+n})_{g,n}$ defined in \eqref{eq:Bglbv:action} coincide with the amplitudes of the F-Airy structure given by
	\begin{equation}\label{eq:Bglbv:data}
	\begin{split}
		& \preind{\beta}{A} = A \,, \\
		& \preind{\beta}{B} = B + A \circ (\id_V \otimes \beta)\,, \\
		& \preind{\beta}{\vphantom{C}} \conn{C} = \conn{C} \,, \\
		& \preind{\beta}{\vphantom{C}} \disc{C} = \disc{C} + B \circ (\beta \otimes \id_V) + B \circ (\beta \otimes \id_V) \circ \sigma_{1,2} + A \circ \beta^{\otimes 2} \,, \\
		& \preind{\beta}{D} = D \,,
	\end{split}
	\end{equation}
	where $\sigma_{1,2} \colon V^{\otimes 2} \rightarrow V^{\otimes 2}$ is the permutation of the two tensor factors. For index lovers, the coefficients of the (non-trivially) modified tensors read:
	\begin{equation}\label{eq:Bglbv:data:coords}
	\begin{split}
		& \preind{\beta}{\vphantom{B}}\ind{B}{i}{j,k}
		=
		\ind{B}{i}{j,k}
		+
		\ind{A}{i}{j,a} \, \ind{\beta}{a}{k} \,, \\
		& \preind{\beta}{\vphantom{C}}\indC{\disc{C}}{i}{j,k}
		=
		\indC{\disc{C}}{i}{j,k}
		+ 
		\ind{A}{i}{a,b} \, \ind{\beta}{a}{j} \, \ind{\beta}{b}{k}
		+
		\ind{B}{i}{a,k} \, \ind{\beta}{a}{j}
		+
		\ind{B}{i}{b,j} \, \ind{\beta}{b}{k} \,.
	\end{split}
	\end{equation}
	For diagrammatic fans, the (non-trivially) modified tensors are pictured as follows:
	\begin{equation}\label{eq:Bglbv:data:dgrms}
		\begin{tikzpicture}[baseline, scale=.4]
			\draw (-1,-1) -- (1,-1) -- (1,1) -- (-1,1) -- cycle;
			\node at (0,0) {\small ${}^{\beta}B$};

			\draw (0,-1) -- (0,-2);
			\node at (0,-2) [left] {\small $i$};

			\draw (-.5,1) -- (-.5,2);
			\node at (-.5,2) [left] {\small $j$};

			\draw (.5,1) -- (.5,2);
			\node at (.5,2) [right] {\small $\vphantom{j}k$};

			\node at (2,0) {$=$};

			\begin{scope}[xshift=4cm]
				\draw (-1,-1) -- (1,-1) -- (1,1) -- (-1,1) -- cycle;
				\node at (0,0) {\small $B$};

				\draw (0,-1) -- (0,-2);
				\node at (0,-2) [left] {\small $i$};

				\draw (-.5,1) -- (-.5,2);
				\node at (-.5,2) [left] {\small $j$};

				\draw (.5,1) -- (.5,2);
				\node at (.5,2) [right] {\small $\vphantom{j}k$};

				\node at (2,0) {$+$};
			\end{scope}
			\begin{scope}[xshift=8cm,yshift=-.5cm]
				\draw (-1,-1) -- (1,-1) -- (1,1) -- (-1,1) -- cycle;
				\node at (0,0) {\small $A$};

				\draw (0,-1) -- (0,-2);
				\node at (0,-2) [left] {\small $i$};

				\draw (-.5,1) -- (-.5,3);
				\node at (-.5,3.1) [left] {\small $j$};

				\draw (.5,1) -- (.5,1.5);
				\draw (0,1.5) -- (1,1.5) -- (1,2.5) -- (0,2.5) -- cycle;
				\draw (.5,2.5) -- (.5,3);

				\node at (.5,2) {\small $\beta$};
				\node at (.5,3.1) [right] {\small $\vphantom{j}k$};

				\node at (2,0) {$\vphantom{=},$};
			\end{scope}
		\begin{scope}[xshift=15cm]
			
			\draw (-1,-1) -- (1,-1) -- (1,1) -- (-1,1) -- cycle;
			\node at (0,0) {\small ${}^{\beta} \disc{C}$};

			\draw (0,-1) -- (0,-2);
			\node at (0,-2) [left] {\small $i$};

			\draw (-.5,1) -- (-.5,2);
			\node at (-.5,2) [left] {\small $j$};

			\draw (.5,1) -- (.5,2);
			\node at (.5,2) [right] {\small $\vphantom{j}k$};

			\node at (2,0) {$=$};

			\begin{scope}[xshift=4cm]
				\draw (-1,-1) -- (1,-1) -- (1,1) -- (-1,1) -- cycle;
				\node at (0,0) {\small$\disc{C}$};

				\draw (0,-1) -- (0,-2);
				\node at (0,-2) [left] {\small $i$};

				\draw (-.5,1) -- (-.5,2);
				\node at (-.5,2) [left] {\small $j$};

				\draw (.5,1) -- (.5,2);
				\node at (.5,2) [right] {\small $\vphantom{j}k$};

				\node at (2,0) {$+$};
			\end{scope}
			\begin{scope}[xshift=8cm,yshift=-.5cm]
				\draw (-1,-1) -- (1,-1) -- (1,1) -- (-1,1) -- cycle;
				\node at (0,0) {\small $B$};

				\draw (0,-1) -- (0,-2);
				\node at (0,-2) [left] {\small $i$};

				\draw (-.5,1) -- (-.5,1.5);
				\draw (0,1.5) -- (-1,1.5) -- (-1,2.5) -- (0,2.5) -- cycle;
				\draw (-.5,2.5) -- (-.5,3);

				\node at (-.5,2) {\small $\beta$};
				\node at (-.5,3.1) [left] {\small $j$};

				\draw (.5,1) -- (.5,3);
				\node at (.5,3.1) [right] {\small $\vphantom{j}k$};
			\end{scope}
			\node at (10,0) {$+$};
			\begin{scope}[xshift=12cm,yshift=-.5cm]
				\draw (-1,-1) -- (1,-1) -- (1,1) -- (-1,1) -- cycle;
				\node at (0,0) {\small $B$};

				\draw (0,-1) -- (0,-2);
				\node at (0,-2) [left] {\small $i$};

				\draw (-.5,1) -- (-.5,1.5);
				\draw (0,1.5) -- (-1,1.5) -- (-1,2.5) -- (0,2.5) -- cycle;
				\draw (-.5,2.5) -- (-.5,3);

				\node at (-.5,2) {\small $\beta$};
				\node at (-.5,3.1) [left] {\small $\vphantom{j}k$};

				\draw (.5,1) -- (.5,3);
				\node at (.5,3.1) [right] {\small $j$};
			\end{scope}
			\node at (14,0) {$+$};
			\begin{scope}[xshift=16cm,yshift=-.5cm]
				\draw (-1,-1) -- (1,-1) -- (1,1) -- (-1,1) -- cycle;
				\node at (0,0) {\small $A$};

				\draw (0,-1) -- (0,-2);
				\node at (0,-2) [left] {\small $i$};

				\draw (-.7,1) -- (-.7,1.5);
				\draw (-.2,1.5) -- (-1.2,1.5) -- (-1.2,2.5) -- (-.2,2.5) -- cycle;
				\draw (-.7,2.5) -- (-.7,3);

				\node at (-.7,2) {\small $\beta$};
				\node at (-.7,3.1) [left] {\small $j$};

				\draw (.7,1) -- (.7,1.5);
				\draw (.2,1.5) -- (1.2,1.5) -- (1.2,2.5) -- (.2,2.5) -- cycle;
				\draw (.7,2.5) -- (.7,3);

				\node at (.7,2) {\small $\beta$};
				\node at (.7,3.1) [right] {\small $\vphantom{j}k$};

				\node at (2,0) {$\vphantom{=}.$};
			\end{scope}
		\end{scope}
		\end{tikzpicture}
	\end{equation}
\end{thm}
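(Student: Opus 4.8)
The plan is to prove the identity $\preind{\beta}{F}_{g,1+n} = F^{\beta}_{g,1+n}$ (amplitudes of the modified F-Airy structure) by induction on $2g-2+(1+n)$, showing that the tree-summed left-hand side satisfies the F-TR recursion \eqref{eq:F-TR:coord:free} with the modified initial data \eqref{eq:Bglbv:data}. The base cases $2g-2+(1+n)=1$ are immediate: $\mathbb{T}_{0,3}$ and $\mathbb{T}_{1,1}$ each consist of a single one-vertex tree with no edges, so $\preind{\beta}{F}_{0,3}=F_{0,3}=A=\preind{\beta}{A}$ and $\preind{\beta}{F}_{1,1}=F_{1,1}=D=\preind{\beta}{D}$, matching the unchanged base data. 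The real content is the induction step.

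\textbf{The induction step.} For $\bm{T}\in\mathbb{T}_{g,1+n}$ with $2g-2+(1+n)>1$, I would dissect $\bm{T}$ according to the root vertex $v_0$ (the one incident to $\ell_0$), exactly paralleling the B/$\conn{C}$/$\disc{C}$ decomposition of graphs in \eqref{eq:B:Cconn:Cdisc}, but now at the level of stable trees. Because of the \emph{local stability} condition, $v_0$ carries a genus $g(v_0)$ and some number of incident half-edges; the half-edges $h\rightsquigarrow v_0$ other than $r_v=\ell_0$ are each either a leaf $\ell_m$, or the near half-edge $h_e'$ of an edge $e$ exiting $v_0$. Removing $v_0$ splits the remaining forest into subtrees hanging off those edges. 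One must organise the sum over $\bm{T}$ as: first sum over the \emph{combinatorial type} at $v_0$ (which incident half-edges are leaves, how the non-leaf ones group into subtrees with prescribed genus/leaf distributions), then sum over the subtrees themselves. The key point is that after fixing the data at $v_0$, the inner sum over each subtree reassembles — by the induction hypothesis — into a $\preind{\beta}{F}$ amplitude of lower complexity. Crucially, when $v_0$ has $1+n(v_0)$ half-edges with $n(v_0)\ge 2$, one recognises that the vertex tensor $F_{g(v_0),1+n(v_0)}$ itself is given by the F-TR recursion one more time; iterating the recursion at $v_0$ down to its own trivalent resolution is what generates the $A$-terms that appear in $\preind{\beta}{B}$ and $\preind{\beta}{\disc{C}}$. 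More precisely: the contribution where $v_0$ is trivalent with one leaf $\ell_m$ and two edges produces the $A\circ(\id\otimes\beta)$ correction to $B$ and, for the disconnected channel, the $A\circ\beta^{\otimes 2}$ and $B\circ(\beta\otimes\id)(1+\sigma_{1,2})$ corrections to $\disc{C}$; the connected channel (a loop edge at $v_0$, which must connect parent vertices) reproduces $\conn{C}$ unchanged since a loop edge carries no $\beta$ in \eqref{eq:Bglbv:action} (edges of $\mathbb{T}$ carry $\beta$, but the $\conn{C}$-handle is internal to a single amplitude, not an edge of the stable tree). Carefully separating the half-edge incident to $v_0$ that is "first" (nearest to root after $\ell_0$) from the others is what pins down the asymmetry between $B$ and $\disc{C}$ and, within $\disc{C}$, the appearance of both $B\circ(\beta\otimes\id_V)$ and its $\sigma_{1,2}$-conjugate.

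\textbf{Bookkeeping and the main obstacle.} The absence of symmetry factors in \eqref{eq:Bglbv:action} (stable trees are rigid) means there are no $1/|\Aut|$ to track, which simplifies matters, but one must be scrupulous that the decomposition of $\mathbb{T}_{g,1+n}$ by the root vertex is a genuine bijection onto (vertex data at $v_0$) $\times$ (tuples of stable subtrees), with no overcounting or undercounting — in particular the $\tfrac12$ in the $\conn{C}$ and $\disc{C}$ channels of \eqref{eq:F-TR:coord:free} must emerge from the choice of ordering of the two half-edges at $v_0$ exactly as in the graph recursion \eqref{eq:rec:Aut}. I expect the main obstacle to be precisely this combinatorial reorganisation: matching the sum $\sum_{\bm T}\bigotimes_v F_{g(v),1+n(v)}\circ_{\bm T}\bigotimes_e\beta$ term-by-term against $B(v_m\otimes \preind{\beta}{F}_{g,n-1})+\tfrac12(\id\otimes\tr)(\conn{C}\circ\preind{\beta}{F}_{g-1,n+1})+\tfrac12\disc{C}(\sum\preind{\beta}{F}\otimes\preind{\beta}{F})$ after substituting \eqref{eq:Bglbv:data}, which requires peeling the recursion off the root vertex of the ambient tree one step while simultaneously peeling it off the vertex tensor $F_{g(v_0),\cdot}$ — a double induction whose compatibility is the heart of the argument. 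Once this peeling is set up correctly, verifying that the $\beta$-decorations on the edges incident to $v_0$ redistribute into exactly the four terms of $\preind{\beta}{\disc{C}}$ and the two terms of $\preind{\beta}{B}$ is a finite check.
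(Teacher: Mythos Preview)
Your approach is essentially the paper's: induct on $2g-2+(1+n)$, isolate the root vertex $v_0$ of each stable tree, split into the case where $v_0$ has type $(0,3)$ (yielding the $A$-corrections to $B$ and $\disc{C}$) versus higher topology (where one applies F-TR \emph{once} to $F_{g(v_0),1+n(v_0)}$ and then reassembles the subtree sums as lower-complexity $\preind{\beta}{F}$'s). One clarification: the $B\circ(\beta\otimes\id_V)(1+\sigma_{1,2})$ corrections to $\disc{C}$ do \emph{not} arise from the trivalent-$v_0$ case as you write, but from the higher-topology case --- specifically from the F-TR $B$-term at $v_0$ when the singled-out input half-edge is an \emph{edge} (hence carries a $\beta$) rather than a leaf; no ``double induction'' or iterated trivalent resolution of $v_0$ is needed, a single application of F-TR at $v_0$ suffices.
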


\begin{proof}
	In the $(0,3)$ and $(1,1)$ cases, the statement follows from the fact that there is only one stable tree of type $(0,3)$ and $(1,1)$ respectively. Suppose now that $2g-2+(1+n) > 1$. We can reformulate \eqref{eq:Bglbv:action:coords} by writing apart the term corresponding to the root vertex $v_0$:
	\begin{equation}\label{eq:beta:F:root}
		\indF{(\preind{\beta}{F})}{g}{i_0}{i_1, \dots, i_n}
		=
		\sum_{ \bm{T} \in \mathbb{T}_{g,1+n} } \sum_{j}
			\indF{F}{g(v_0)}{i_0}{( j( h ) )_{ h \rightsquigarrow v_0 }}
			\prod_{\substack{ e \in \Edge(\bm{T}) \\ h_e' \rightsquigarrow v_0 }} \ind{\beta}{j(h_e')}{j(h_e'')}
		\prod_{\substack{ v \in \Vert(\bm{T}) \\ v \neq v_0 }}
			\indF{F}{g(v)}{j(r_v)}{( j( h ) )_{ h \rightsquigarrow v }}
			\prod_{\substack{ e \in \Edge(\bm{T}) \\ h_e' \not\rightsquigarrow v_0 }} \ind{\beta}{j(h_e')}{j(h_e'')} \,.
	\end{equation}
	By looking at the topological type of the root vertex, namely $(g(v_0),1+n(v_0))$, three mutually exclusive situations can occur.

	\textit{Three-holed sphere.} In this case, $F_{g(v_0),1+n(v_0)} = A$. Among the two half-edges entering $v_0$, those which are not leaves appear in a factor of $\beta$ while the other do not, and there is either one leaf, decorated by $i_m$ for some $m \in [n]$, or no leaf at all (we cannot have two leaves because we assumed $(g,1+n) \neq (0,3)$). In the first case, $\bm{T} \setminus v_0$ is a stable tree $\bar{\bm{T}} \in \mathbb{T}_{g,1 + (n-1)}$, while in the second case it is the union of two stable trees $\bar{\bm{T}} \in \mathbb{T}_{h,1 + |J|}$ and $\bar{\bm{T}}' \in \mathbb{T}_{h',1 + |J'|}$ for a splitting $h+h'=g$ of the genus and a splitting $J \sqcup J' = \{i_1,\dots,i_n\}$ of the leaves' labels:
	\begin{equation}
		\begin{tikzpicture}[baseline]
			\draw (0,-.8) -- (0,0);
			\node at (0,-.8) [left] {\small$i_0$};
			\draw (0,0) -- (135:1);
			\draw (0,0) -- (45:.7);
			\node at (135:1) [left] {\small$i_m$};
			\draw [densely dotted](45:1.2) circle (.5cm);
			\node at (45:1.2) {$\bar{\bm{T}}$};
			\draw[fill=white] (0,0) circle (.2cm);
			\node at (0,0) {\tiny $0$};
			\node at (3,0) {or};
			\begin{scope}[xshift = 6cm]
				\draw (0,-.8) -- (0,0);
				\node at (0,-.8) [left] {\small$i_0$};
				\draw (0,0) -- (45:.7);
				\draw (0,0) -- (135:.7);
				\draw [densely dotted](45:1.2) circle (.5cm);
				\node at (45:1.2) {$\bar{\bm{T}}'$};
				\draw [densely dotted](135:1.2) circle (.5cm);
				\node at (135:1.2) {$\bar{\bm{T}}$};
				\draw[fill=white] (0,0) circle (.2cm);
				\node at (0,0) {\tiny $0$};
			\end{scope}
		\end{tikzpicture}
		\;.
	\end{equation}
	Summing over $\bm{T} \setminus v_0$, we recognise
	\begin{equation}\label{eq:beta:F:1}
		\sum_{m=1}^{n}
			\ind{A}{i_0}{i_m,a} \, \ind{\beta}{a}{b} \, \indF{(\preind{\beta}{F})}{g}{b}{i_1, \dots \widehat{i_{m}} \dots, i_n}
		+
		\frac{1}{2} \sum_{\substack{ h+h' = g \\ J \sqcup J' = \{ i_1, \dots, i_n\} }}
			\ind{A}{i_0}{a_{1},b_{1}} \,
			\ind{\beta}{a_{1}}{a_{2}} \, \ind{\beta}{b_{1}}{b_{2}} \,
			\indF{(\preind{\beta}{F})}{h}{a_{2}}{J} \,
			\indF{(\preind{\beta}{F})}{h'}{b_{2}}{J'} \,.
	\end{equation}

	\textit{One-holed torus.} If $2g-2+(1+n) > 1$, this situation can never happen (otherwise $v_0$ would be the only vertex).

	\textit{Higher topologies.} In this case, we can employ the recursion formula \eqref{eq:F-TR:coords} for $F_{g(v_0),1+n(v_0)}$. To set the notation, suppose that $v_0$ is attached to $s$ leaves different from the root and $t$ additional half-edges. The root is decorated by $i_0$, the $s$ leaves are decorated by $I_s = \{ i_{k_1},\dots,i_{k_s} \}$, while the remaining half-edges are decorated by $J_t = \{ j_1,\dots,j_t \}$. Denoting $I_s^{[m]} = I_s \setminus \{ i_{k_m} \}$ and likewise $J_s^{[l]} = J_s \setminus \{ j_l \}$, we have
	\begin{multline}
		\indF{F}{g(v_0)}{i_0}{I_s \sqcup J_t}
		=
		\sum_{m = 1}^{s} \ind{B}{i_0}{i_{k_m},a} \, \indF{F}{g(v_0)}{a}{I_s^{[m]} \sqcup J_t} 
		+
		\sum_{l = 1}^{t} \ind{B}{i_0}{j_{l},a} \, \indF{F}{g(v_0)}{a}{I_s \sqcup J_t^{[l]}} \\
		+
		\frac{1}{2} \indC{\conn{C}}{i_0,b}{a} \, \indF{F}{g(v_0) - 1}{a}{I_s \sqcup J_t,b}
		+
		\frac{1}{2} \indC{\disc{C}}{i_0}{a,b}
			\sum_{\substack{h + h' = g(v_0) \\ K \sqcup K' = I_s \sqcup J_t}}
			\indF{F}{h}{a}{K} \, \indF{F}{h'}{b}{K'} \,.
	\end{multline}
	As before, in each of the situations we can sum over $\bm{T} \setminus v_0$ and recognise some $\preind{\beta}{F}$. For the second sum involving $B$, there is an edge entering $v_0$ with index $j_l = a$, and this index comes in \eqref{eq:beta:F:root} from a factor $\beta_{a}^{c}$ preceded by a sum reconstructing some $\preind{\beta}{F}$ with output index $c$. This results in the contributions
	\begin{multline}\label{eq:beta:F:2}
		\sum_{m=1}^{n}
			\ind{B}{i_0}{i_m,a} \, \indF{(\preind{\beta}{F})}{g}{a}{i_1, \dots \widehat{i_{m}} \dots, i_n}
		+
		\sum_{\substack{ h+h' = g \\ J \sqcup J' = \{ i_1, \dots, i_n\} }}
			\ind{B}{i_0}{a,b} \, \ind{\beta}{a}{c} \, \indF{(\preind{\beta}{F})}{h}{b}{J} \, \indF{(\preind{\beta}{F})}{h'}{c}{J'}
		\\
		+
		\frac{1}{2} \, \indC{\conn{C}}{i_0,b}{a} \, \indF{(\preind{\beta}{F})}{g-1}{a}{i_1, \dots, i_n,b}
		+
		\frac{1}{2} \sum_{\substack{ h+h' = g \\ J \sqcup J' = \{ i_1, \dots, i_n\} }}
			\indC{\disc{C}}{i_0}{a,b} \, \indF{(\preind{\beta}{F})}{h}{a}{J} \, \indF{(\preind{\beta}{F})}{h'}{b}{J'}
		\,.
	\end{multline}
	The quantity $\indF{(\preind{\beta}{F})}{g}{i_0}{i_1,\dots,i_n}$ is the sum of \eqref{eq:beta:F:1} and \eqref{eq:beta:F:2}: this takes the form of the F-TR formula \eqref{eq:F-TR:coords} with the modified tensors $\preind{\beta}{B}$, $\preind{\beta}{\vphantom{C}}\conn{C}$ and $\preind{\beta}{\vphantom{C}}\disc{C}$ from equation~\eqref{eq:Bglbv:data}.
\end{proof}

We can relate the vector potentials $\Phi$ and $\preind{\beta}{\Phi}$ before and after Bogoliubov transformation through a fixed point equation.

\begin{lem}\label{lem:fixed:pnt}
	The vector potential $\preind{\beta}{\Phi}$ after Bogoliubov transformation is uniquely characterised by the fixed point equation
	\begin{equation}\label{eq:fixed:pnt}
		\preind{\beta}{\Phi}(x) = \Phi\Bigl( x + \hbar (\beta \circ \preind{\beta}{\Phi})(x) \Bigr) .
	\end{equation}
\end{lem}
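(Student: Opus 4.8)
The plan is to verify the fixed point equation \eqref{eq:fixed:pnt} directly from the combinatorial formula \eqref{eq:Bglbv:action} for the Bogoliubov action, and then to argue uniqueness separately by a degree/order filtration argument. The key observation is that a stable tree $\bm{T} \in \mathbb{T}_{g,1+n}$ is the same data as: a root vertex $v_0$ of type $(g(v_0), 1+n(v_0))$ carrying the root leaf $\ell_0$, together with, for each half-edge $h \rightsquigarrow v_0$ that is not a leaf, an edge decorated by $\beta$ followed by a stable tree hanging off it (rooted at the far half-edge), and for each leaf-half-edge the corresponding input $x$. In other words, peeling off the root vertex exhibits the recursive structure $\mathbb{T}_{g,1+n} \cong \bigsqcup (\text{root vertex}) \times \prod (\mathbb{T} \text{ for each subtree})$, exactly the structure already used in the proof of Theorem~\ref{thm:Bglbv:FAiry}.

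Concretely, I would substitute \eqref{eq:Bglbv:action} into the definition \eqref{eq:potential:FAiry} of $\preind{\beta}{\Phi}$ and resum the tree expansion by grouping trees according to their root vertex. Writing $\preind{\beta}{\Phi}(x) = \sum_{g,n} \frac{\hbar^{g-1}}{n!} (\preind{\beta}{F})_{g,1+n}(x^{\otimes n})$ and inserting the tree sum, each subtree attached to the root via a $\beta$-edge contributes, after summing over all its shapes, genus, and the input vectors $x$ distributed to its leaves, precisely a factor of $\hbar \, \beta(\preind{\beta}{\Phi}(x))$ (the extra $\hbar$ accounting for the edge/genus bookkeeping, and the combinatorial $\frac{1}{n!}$ factors distributing the $n$ input slots of $x$ among root-leaves and the various subtrees yielding the correct multinomial coefficients that get absorbed into the exponential-type resummation). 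Meanwhile the root vertex itself contributes $F_{g(v_0), 1+n(v_0)}$ evaluated on a tensor power of $x$ (for the root-leaf inputs) tensored with copies of $\hbar\,\beta(\preind{\beta}{\Phi}(x))$ (for the non-leaf half-edges). Summing over the root vertex type then reassembles exactly $\Phi$ evaluated at the argument $x + \hbar\,\beta(\preind{\beta}{\Phi}(x))$, which is \eqref{eq:fixed:pnt}. I expect the main obstacle to be the careful bookkeeping of the symmetry factors: one must check that the $\frac{1}{n!}$ in \eqref{eq:potential:FAiry}, the absence of automorphisms of stable trees noted after \eqref{eq:Bglbv:action:coords}, and the multinomial splitting of inputs among subtrees all conspire to produce the plain composition $\Phi \circ (\text{shifted argument})$ with no spurious combinatorial constants.

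For uniqueness, I would note that \eqref{eq:fixed:pnt} can be solved iteratively: the right-hand side, viewed as a map sending a candidate $\Psi \mapsto \Phi(x + \hbar\,\beta\Psi(x))$, is a contraction with respect to the filtration by total degree $2g-2+(1+n)$ (equivalently in the variables $\hbar$ and $x$ jointly), because the shift $\hbar\,\beta\Psi$ only feeds back terms of strictly higher order into $\Phi$ (each $\Phi$-term $F_{g',1+n'}$ with $2g'-2+(1+n')>0$ is at least linear in its argument when $n'\geq 1$, and the $\hbar$ prefactor strictly raises the order for the $n'=0$ contributions). Hence there is a unique formal solution, and since $\preind{\beta}{\Phi}$ satisfies the equation it must be that solution. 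I would present the contraction argument briefly, perhaps by induction on $2g-2+(1+n)$: the coefficient of order $k$ on the right-hand side depends only on coefficients of $\Psi$ of order $< k$, so the solution is determined order by order.
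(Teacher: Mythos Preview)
Your proposal is correct and follows essentially the same route as the paper: decompose each stable tree by peeling off the root vertex, observe that the subtrees hanging from $v_0$ resum independently to factors of $\hbar\,\beta(\preind{\beta}{\Phi}(x))$, and check that the multinomial distribution of the $n$ inputs among the $s$ root-leaves and $t$ subtrees produces exactly $\frac{1}{(s+t)!}\bigl(x+\hbar\beta(\preind{\beta}{\Phi}(x))\bigr)^{\otimes(s+t)}$ after symmetrisation. The paper's proof carries out precisely this bookkeeping (it finds $\frac{n!}{s!\,t!\,n_1!\cdots n_t!}$ equal terms, times $t!$ for reordering the subtrees), so your anticipated ``main obstacle'' is the only real work and matches the paper's computation. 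One small addition on your side: the paper states uniqueness but does not spell out the contraction/filtration argument in its proof, whereas you do; your order-by-order induction on $2g-2+(1+n)$ is the natural way to fill that in.
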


\begin{proof}
	We multiply \eqref{eq:beta:F:root} by $\frac{\hbar^{g - 1}}{n!} x^{i_1} \cdots x^{i_n} \mathrm{e}_{i_0}$ and sum over leaves decorations $i_0,i_1,\ldots,i_n \in I$ and topologies $(g, n)$ such that $2g - 2 + (1 + n) > 0$. On the left-hand side, we find $\preind{\beta}{\Phi}(x)$. On the right-hand side, we have a sum over stable trees $\bm{T}$ of any topology where the $n$ leaves are labelled (but we divide by $n!$) and a sum over indices $j(h)$ for each half-edge (including leaves such that $j(h_a) = i_a$ for the $a$-th leaf).

	Looking at the summand corresponding to a fixed stable tree $\bm{T}$ on the right-hand side, by removing the root vertex $v_0$ we obtain a collection of leaves $\ell_1,\ldots,\ell_{s}$ (which we order by increasing carried label) and a collection of stable trees $\bm{T}_1,\ldots,\bm{T}_t$ (which we order by increasing minimum leaf label), such that $s + t \geq 1$. Conversely, if we are given an ordered set of leaves $\ell_1,\ldots,\ell_{s}$ and an ordered set of stable trees $\bm{T}_1,\ldots,\bm{T}_t$, connecting them to a root vertex of valency $1 + (s + t)$ gives a stable tree $\bm{T}$ after one chooses a partition of $[n]$ into $1 + t$ pairwise disjoint non-empty subsets $\Lambda_0,\ldots,\Lambda_t$, where $|\Lambda_0| = s$ and $|\Lambda_k| = n_k$ is the number of leaves in $\bm{T}_k$, excluding the root. The decomposition of any such $\bm{T}$, after a reordering of the list $(\bm{T}_1, \dots, \bm{T}_t)$ so that the minimum leaf label of $\bm{T}_k$ increases with $k$, produces the ordered set of leaves and stable trees we started with:
	\begin{equation}
		\begin{tikzpicture}[baseline, scale=2]
		\begin{scope}[yshift=-.4cm]
			\node at (-1.6,.4) {$\bm{T} \; =$};
			\draw (0,-.5) -- (0,0);
			\node at (0,-.5) [right] {\small$\ell_0$};
			\draw (0,0) -- (120:1);
			\node at (120:1.2) {\small$\ell_s$};
			\draw (0,0) -- (150:1);
			\node at (150:1.2) {\small$\ell_1$};
			\draw [dotted, thick] (130:.8) arc (130:140:.8);
			\draw [decorate,decoration={calligraphic brace,amplitude=3pt},line width=1.25pt,xshift=0pt,yshift=0pt] (150:1.35) -- (120:1.35) node [black,midway,xshift=-9pt,yshift=9pt] {\small $\Lambda_0$};
			\draw (0,0) -- (80:.8);
			\draw (0,0) -- (35:.8);
			\draw [dotted, thick] (48:.45) arc (48:67:.45);
			\draw ($(80:.8) + (-.12,0)$) -- ($(80:.8) + (-.12,.4)$);
			\draw ($(80:.8) + (.12,0)$) -- ($(80:.8) + (.12,.4)$);
			\draw ($(35:.8) + (-.12,0)$) -- ($(35:.8) + (-.12,.4)$);
			\draw ($(35:.8) + (.12,0)$) -- ($(35:.8) + (.12,.4)$);
			\draw [dotted, thick] ($(80:.8) + (-.06,.3)$) -- ($(80:.8) + (.06,.3)$);
			\draw [dotted, thick] ($(35:.8) + (-.06,.3)$) -- ($(35:.8) + (.06,.3)$);
			\node at ($(80:.8) + (0,.55)$) {\small$\Lambda_1$};
			\node at ($(35:.8) + (0,.55)$) {\small$\Lambda_t$};
			\draw [densely dotted, fill=white] (80:.8) circle (.25cm);
			\draw [densely dotted, fill=white] (35:.8) circle (.25cm);
			\node at (80:.8) {\small$\bm{T}_1$};
			\node at (35:.8) {\small$\bm{T}_t$};
			\draw[fill=white] (0,0) circle (.2cm);
			\node at (0,0) {\tiny $g(v_0)$};
			\node at (1.1,.4) {$\vphantom{\bm{T}}.$};
		\end{scope}
		\end{tikzpicture}
	\end{equation}
	Given indices carried by half-edges and leaves, the weight of $\bm{T}$ only depends on the decomposition of $\bm{T}$ pictured above, and takes the factorised form:
	\begin{equation}
		\frac{\hbar^t}{n!} \,
		\hbar^{g(v_0) - 1} \,
		F_{g(v_0);(j(h))_{h \rightsquigarrow v_0}}^{i_0}
		\Biggl(
			\prod_{\substack{e \in \Edge(\bm{T}) \\ h_e' \rightsquigarrow v_0}}
				\hbar \beta_{j(h_e'')}^{j(h_e')}
		\Biggr)\Biggl(
			\prod_{\substack{\ell \in \Leaf(\bm{T}) \\ \ell \rightsquigarrow v_0}}
				x^{i_{\ell}}
		\Biggr)\Biggl(
			\prod_{k = 1}^{t}
				\hbar^{g(\bm{T}_k)- 1} \, w\bigl( \bm{T}_k; j|_{\bm{T}_k} \bigr)
		\Biggr).
	\end{equation}
	Here $g = g(v_0) + g(\bm{T}_1) + \cdots + g(\bm{T}_t)$, where $g(\bm{T}_k)$ is the genus of the stable tree $\bm{T}_k$. As we are summing over indices carried by half-edges and leaves, the factors of the decomposition play a symmetric role. Therefore, a given decomposition gives rise to $\frac{n!}{s! t! n_1! \cdots n_t!}$ equal terms, and since we allow for any order for the minimum leaf labels in the list of trees, this should be multiplied by $t!$. Thus, the sums over $\bm{T}_1, \dots, \bm{T}_t$ can be performed independently, and give rise to the coefficient of $\text{e}_{j(h_{e_k}'')}$ in $\preind{\beta}{\Phi}(x)$, where $e_k$ is the edge between $\bm{T}_k$ and $v_0$. Summing over $j(h_{e_k}'')$ recombines with a factor of $\hbar\beta$ and produces the coefficient of $\text{e}_{j(h_{e_k}')}$ in $\hbar \beta (\preind{\beta}{\Phi}(x))$. Taking the remaining sum over the indices of half-edges and leaves, we recognise the tensor $F_{g(v_0),1+s+t}$ applied to
	\begin{equation}
		\frac{1}{s! t!} x^{\otimes s} \otimes \Bigl( \hbar \beta \bigl( \preind{\beta}{\Phi}(x) \bigr) \Bigr)^{\otimes t}
		=
		\frac{1}{(s + t)!} \Bigl( x + \hbar \beta \bigl( \preind{\beta}{\Phi}(x) \bigr) \Bigr)^{\otimes (s + t)} \,.
	\end{equation}
	Summing over $g(v_0)$ and $s, t \geq 0$ with $s+t \geq 1$ yields $\Phi\big( x + \hbar\beta(\preind{\beta}{\Phi}(x)) \big)$.
\end{proof}

\subsection{Translation}
\label{subsec:translation}
The vector potential $\Phi(x)$ of an F-Airy structure is an $\hbar^{-1}V\bbraket{\hbar}$-valued function defined on the formal neighbourhood of $0 \in V$. Suppose that $\Phi$ has a non-zero radius of convergence. Then we can consider its Taylor expansion at another point $t \ne 0$ in the domain of convergence. Even though the expansion at $0$ does not contains the terms $F_{0,1}$ and $F_{0,2}$ (i.e. no constant and linear term in the coefficient of $\hbar^{-1}$), this is not the case for the expansion at $\tau$. Nonetheless, it is natural to expect that the expansion at $\tau$ coincides with the vector potential $\preind{\tau}{\Phi}(x)$ of a translated F-Airy structure, after we remove the non-zero `translated' $F_{0,1}$ and $F_{0,2}$ terms.

In order to make sense of the considerations without convergence assumptions, we are going to consider families of F-Airy structures depending on a formal parameter near $0 \in W$, where $W$ is a vector space coming with an isomorphism $\iota \colon W \to V$. By family of F-Airy structures in this case we mean tensors
\begin{equation}
\begin{split}
	& \tilde{A} \in \Hom(\Sym{2}{V} \otimes \Sym{}{W},V) \,, \\
	& \tilde{B} \in \Hom(V^{\otimes 2} \otimes \Sym{}{W},V) \,, \\ 
	& \conn{\tilde{C}} \in \Hom(V \otimes \Sym{}{W},V \otimes V) \,, \\
	& \disc{\tilde{C}} \in \Hom(\Sym{2}{V} \otimes \Sym{}{W},V) \,, \\
	& \tilde{D} \in \Hom(\Sym{}{W},V) \,,
\end{split}
\end{equation}
where we recall $\Sym{}{W} = \prod_{m \geq 0} \Sym{m}{W}$ and likewise for $\Sym{}{(W^{\ast})}$. We could take $W = V$ and $\iota = \id_V$ without loss of generality, but the main point of the discussion is that $V$ and $W$ play two different roles and it is easier to keep track of this by giving them different names.
 
Given a collection of tensors $F_{g,1+n} \in \Hom(\Sym{n}{V},V)$ indexed by $g,n \geq 0$, we define new amplitudes $\tilde{F}_{g,1+n} \in \Hom(\Sym{n}{V} \otimes \Sym{}{W}, V)$ by the formula
\begin{equation}\label{eq:t:action}
	\tilde{F}_{g,1+n} ( v_1 \otimes \cdots \otimes v_n )
	\coloneqq
	\sum_{m \ge 0}
		\frac{1}{m!} \,
		F_{g,1+n+m} \bigl( v_1 \otimes \cdots \otimes v_n \otimes \iota^{\otimes m} \bigr)
	\,.
\end{equation}
Alternatively, we can consider $\tilde{F}_{g,1+n}$ as an element of $\Hom(\Sym{n}{V},V \otimes \Sym{}{(W^{\ast})})$, or instead as a $\Hom(\Sym{n}{V},V)$-valued formal function near $0$ on $W \cong V$.

The construction is perhaps more transparent in coordinates. For a fixed basis $(\mathrm{e}_i)_{i \in I}$ of $V$, let $(\mathrm{e}^i)_{i \in I}$ be the dual basis. The algebra $\Sym{}{(W^{\ast})}$ is identified through $\iota$ with $\CC\bbraket{(\mathrm{e}^i)_{i \in I}}$, graded by assigning degree $1$ to each $\mathrm{e}^{i}$. The translated amplitudes in coordinates are the following elements of $\CC\bbraket{(\mathrm{e}^i)_{i \in I}}$:
\begin{equation}\label{eq:translation:coords}
	\indF{\tilde{F}}{g}{i_0}{i_1, \dots, i_n}
	=
	\sum_{m \geq 0} \frac{1}{m!} \,
		\indF{F}{g}{i_0}{i_1, \dots, i_n, j_1, \dots, j_m}
		\,
		\mathrm{e}^{j_1} \cdots \mathrm{e}^{j_m} \,.
\end{equation}
Besides, it can be represented diagrammatically as
\begin{equation}
\begin{tikzpicture}[baseline,scale=.5]
	\draw[Red] (.4,1) -- (.4,2);
	\node at (1.25,1.5) {\small $\cdots$};
	\draw[Red] (2,1) -- (2,2);

	\draw[fill=mygray] (-2.2,-1) -- (2.2,-1) -- (2.2,1) -- (-2.2,1) -- cycle;
	\node at (0,0) {\small $g$};

	\draw (0,-1) -- (0,-2);
	\node at (0,-2) [left] {\small $i_0$};

	\draw (-2,1) -- (-2,2);
	\node at (-2,2) [above] {\small $i_1$};

	\node at (-1.15,1.5) {\small $\cdots$};

	\draw (-.4,1) -- (-.4,2);
	\node at (-.2,2) [above] {\small $i_n$};

	\node at (4.5,0) {$\displaystyle = \sum_{m \ge 0} \frac{1}{m!}$};

	\begin{scope}[xshift=9cm]
		\draw[fill=mygray] (-2.2,-1) -- (2.2,-1) -- (2.2,1) -- (-2.2,1) -- cycle;
		\node at (0,0) {\small $g$};

		\draw (0,-1) -- (0,-2);
		\node at (0,-2) [left] {\small $i_0$};

		\draw (-2,1) -- (-2,2);
		\node at (-2,2) [above] {\small $i_1$};

		\node at (-1.15,1.5) {\small $\cdots$};

		\draw (-.4,1) -- (-.4,2);
		\node at (-.2,2) [above] {\small $i_n$};

		\draw (.4,1) -- (.4,1.5);
		\draw[Red] (.4,1.5) -- (.4,2);
		\node[white] at (.4,1.5) {$\bullet$};
		\node at (.4,1.5) {$\circ$};
		\node at (1.25,1.5) {\small $\cdots$};
		\draw (2,1) -- (2,1.5);
		\draw[Red] (2,1.5) -- (2,2);
		\node[white] at (2,1.5) {$\bullet$};
		\node at (2,1.5) {$\circ$};

		\draw [decorate,decoration={calligraphic brace,amplitude=3pt},line width=1.25pt,xshift=0pt,yshift=4pt] (.3,2) -- (2.1,2) node [black,midway,yshift=9pt] {\footnotesize $m$};
	\end{scope}
\end{tikzpicture} \ .
\end{equation}
Here $\begin{tikzpicture}[baseline=-2pt]\draw[Red](0,.25)--(0,0);\draw(0,0)--(0,-.25);\node[white]at(0,0){$\bullet$};\node at(0,0){$\circ$};\end{tikzpicture}$ is the diagrammatic representation of the map $\iota$, which takes elements of $W$ as inputs (in red) and gives elements of $V$ as outputs (in black).

In order to realise the above quantities as F-TR amplitudes coming from some translated initial data, we consider the two auxiliary tensors $G \in \Hom(\Sym{}{W},V)$ and $H \in \Hom(V \otimes \Sym{}{W},V)$ in place of the `translated' $F_{0,1}$ and $F_{0,2}$, respectively:
\begin{equation}\label{eq:G:H:translation}
\begin{aligned}
	& G \coloneqq \sum_{m \ge 2} \frac{1}{m!} \, F_{0,1+m} \circ \iota^{\otimes m} \,,
	\qquad\quad
	&& G^i = \sum_{m \geq 2} \frac{1}{m!} \, \indF{F}{0}{i}{j_1,\dots,j_m} \, \mathrm{e}^{j_1} \cdots \mathrm{e}^{j_m} \,,
	\\
	& H \coloneqq \sum_{m \ge 1} \frac{1}{m!} \, F_{0,1+(1+m)} \circ (\id_V \otimes \iota^{\otimes m}) \,,
	\qquad\quad
	&& H^i_j = \sum_{m \geq 1} \frac{1}{m!} \, \indF{F}{0}{i}{j,j_1,\dots,j_m} \, \mathrm{e}^{j_1} \cdots \mathrm{e}^{j_m} \,.
\end{aligned}
\end{equation}

\begin{thm}\label{thm:transl:FAiry}
	If $(F_{g,1+n})_{g,n}$ are the amplitudes of an F-Airy structure $(A,B,\conn{C},\disc{C},D)$ on $V$, then $(\tilde{F}_{g,1+n})_{g,n}$ defined in \eqref{eq:t:action} coincide with the amplitudes of the family of F-Airy structures given by
	\begingroup
	\allowdisplaybreaks
	\begin{align}
		\nonumber
		& \tilde{A}
		=
			B \circ (\iota \otimes \tilde{A})
			+
			B \circ (\id \otimes H)
			+
			B \circ (\id \otimes H) \circ \sigma_{1,2}
			+
			\disc{C} \circ ( G \otimes \tilde{A} + H^{\otimes 2})
			\, , \\
		\nonumber
		& \tilde{B}
		=
			B \circ (\iota \otimes \tilde{B})
			+
			\disc{C} \circ ( H \otimes \id )
			+
			\disc{C} \circ ( G \otimes \tilde{B} )
			\, , \\
		& \conn{\tilde{C}}
		=
			B \circ (\iota \otimes \conn{\tilde{C}})
			+
			\disc{C} \circ ( G \otimes \conn{\tilde{C}} )
			\, , \\
		\nonumber
		& \disc{\tilde{C}}
		=
			B \circ (\iota \otimes \disc{\tilde{C}})
			+
			\disc{C} \circ ( G \otimes \disc{\tilde{C}} )
			\, , \\
		\nonumber
		& \tilde{D}
		=
			 B \circ (\iota \otimes \tilde{D})
			+
			\id \otimes \tr(\conn{C} \circ H)
			+
			\disc{C} \circ ( G \otimes \tilde{D} )
			\, ,
	\end{align}
	\endgroup
	where:
	\begin{itemize}
		\item we have implicitly moved all the $W$ tensor factors to the right and wrote relations between maps having $\prod_{m \geq 0} W^{\otimes m}$ in the source domain, which pass to the quotient by the symmetric group action that was our $\Sym{}{W}$;
		\item $\sigma_{1,2} \colon V^{\otimes 2} \otimes \Sym{}{W} \rightarrow V^{\otimes 2} \otimes \Sym{}{W}$ is the permutation of the first two tensor factors.
		\end{itemize}
	As $G$ and $H$ start in degree $2$ and $1$ respectively, the above formulae are recursive in the homogeneous components of the tensors. This can be easily seen from the equivalent formulation in coordinates: setting $X[m]$ for the $m$-th homogeneous component of a tensor $X$,
	\begingroup
	\allowdisplaybreaks
	\begin{align}\label{eq:transl:tensors}
		\nonumber
		& \ind{\tilde{A}[m]}{i}{j,k}
		=
			\ind{B}{i}{a,b} \, \mathrm{e}^{a} \, \ind{\tilde{A}[m-1]}{b}{j,k}
			+
			\ind{B}{i}{j,a} \, H[m]^a_k + \ind{B}{i}{k,a} \, H[m]^a_j \\
		\nonumber
		& \qquad \qquad \qquad
			+
			\indC{\disc{C}}{i}{a,b} \left(
				\sum_{\substack{m_1+m_2 = m \\ m_1 \geq 2, m_2 \ge 0}}
					G[m_1]^a \, \ind{\tilde{A}[m_2]}{b}{j,k}
				+
				\sum_{\substack{m_1+m_2 = m \\ m_1, m_2 \geq 1}}
					H[m_1]^a_j \, H[m_2]^b_k
				\right)
			\,, \\
		\nonumber
		& \ind{\tilde{B}[m]}{i}{j,k}
		=
			\ind{B}{i}{a,b} \, \mathrm{e}^{a} \, \ind{\tilde{B}[m-1]}{b}{j,k}
			+
			\indC{\disc{C}}{i}{a,k} \, H[m]^a_j
			+
			\indC{\disc{C}}{i}{a,b} \sum_{\substack{m_1+m_2 = m \\ m_1 \geq 2, m_2 \ge 0}}
				G[m_1]^a \, \ind{\tilde{B}[m_2]}{b}{j,k}
			\,, \\
		& \indC{\conn{\tilde{C}}[m]}{i,j}{k}
		=
			\ind{B}{i}{a,b} \, \mathrm{e}^{a} \, \indC{\conn{\tilde{C}}[m-1]}{b,j}{k}
			+
			\indC{\disc{C}}{i}{a,b} \sum_{\substack{m_1+m_2 = m \\ m_1 \geq 2, m_2 \ge 0}}
				G[m_1]^a \, \indC{\conn{\tilde{C}}[m_2]}{b,j}{k}
			\,, \\
		\nonumber
		& \indC{\disc{\tilde{C}}[m]}{i}{j,k}
		=
			\ind{B}{i}{a,b} \, \mathrm{e}^{a} \, \indC{\disc{\tilde{C}}[m-1]}{b}{j,k}
			+
			\indC{\disc{C}}{i}{a,b} \sum_{\substack{m_1+m_2 = m \\ m_1 \geq 2, m_2 \ge 0}}
				G[m_1]^a \, \indC{\disc{\tilde{C}}[m_2]}{b}{j,k}
			\,, \\
		\nonumber
		& \tilde{D}[m]^i
		=
			\ind{B}{i}{a,b} \, \mathrm{e}^{a} \, \tilde{D}[m-1]^b
			+
			\indC{\conn{C}}{i,k}{a} \, H[m]^a_k
			+
			\indC{\disc{C}}{i}{a,b} \sum_{\substack{m_1+m_2 = m \\ m_1 \geq 2, m_2 \ge 0}}
				G[m_1]^a \, \tilde{D}[m_2]^b \,,
	\end{align}
	\endgroup
	together with the initial conditions $\tilde{X}[0] = X$ for all $X \in \{ A,B,\conn{C},\disc{C},D \}$. For diagrammatic supporters, the modified tensors are pictured as follows (we omit the degree dependence from the notation, which can be recovered by `counting' the number of red legs).
	\begingroup
	\allowdisplaybreaks
	\begin{gather}\label{eq:t:data:dgrms}
		\nonumber
		\begin{tikzpicture}[baseline,scale=.4]
			\draw (0,-1) -- (0,-2);
			\node at (0,-2) [left] {\small $i$};
			\draw (-.8,1) -- (-.8,2);
			\node at (-.9,2.5) {\small $j$};
			\draw (-.4,1) -- (-.4,2);
			\node at (-.3,2.5) {\small $\vphantom{j}k$};
			\draw[Red] (0,1) -- (0,2);
			\draw[Red] (.8,1) -- (.8,2);
			\node at (.2,1.5) {\tiny$\cdot$};
			\node at (.4,1.5) {\tiny$\cdot$};
			\node at (.6,1.5) {\tiny$\cdot$};
			\draw (-1,-1) -- (1,-1) -- (1,1) -- (-1,1) -- cycle;
			\node at (0,0) {\small $\tilde{A}$};
			\node at (2,0) {$=$};
			\begin{scope}[xshift=4cm,yshift=-1cm]
				\draw (-1,-1) -- (1,-1) -- (1,1) -- (-1,1) -- cycle;
				\node at (0,0) {\small $B$};
				\draw (0,-1) -- (0,-2);
				\node at (0,-2) [left] {\small $i$};
				\draw (-.7,1) -- (-.7,2);
				\draw[Red] (-.7,2) -- (-.7,3);
				\node[white] at (-.7,2) {\small$\bullet$};
				\node at (-.7,2) {\small$\circ$};
				%
				\draw (.2,3.5) -- (.2,4);
				\node at (.1,4.5) {\small $j$};
				\draw (.6,3.5) -- (.6,4);
				\node at (.7,4.5) {\small $\vphantom{j}k$};
				\draw[Red] (1,3.5) -- (1,4);
				\draw[Red] (1.8,3.5) -- (1.8,4);
				\node at (1.2,3.75) {\tiny$\cdot$};
				\node at (1.4,3.75) {\tiny$\cdot$};
				\node at (1.6,3.75) {\tiny$\cdot$};
				\draw (.7,1) -- (.7,1.5);
				\draw (0,1.5) -- (2,1.5) -- (2,3.5) -- (0,3.5) -- cycle;
				\node at (1,2.5) {\small $\tilde{A}$};
				\node at (2.75,1) {$+$};
			\end{scope}
			\begin{scope}[xshift=9cm,yshift=-.5cm]
				\draw (-1,-1) -- (1,-1) -- (1,1) -- (-1,1) -- cycle;
				\node at (0,0) {\small $B$};
				\draw (0,-1) -- (0,-2);
				\node at (0,-2) [left] {\small $i$};
				\draw (-.5,1) -- (-.5,3.2);
				\node at (-.5,3.7) {\small $j$};
				%
				\draw (.2,2.5) -- (.2,3.2);
				\node at (.2,3.7) {\small $\vphantom{j}k$};
				\draw[Red] (.4,2.5) -- (.4,3.2);
				\draw[Red] (.8,2.5) -- (.8,3.2);
				\node at (.525,2.8) {\tiny$\cdot$};
				\node at (.675,2.8) {\tiny$\cdot$};
				\draw (.5,1) -- (.5,1.5);
				\draw (0,1.5) -- (1,1.5) -- (1,2.5) -- (0,2.5) -- cycle;
				\node at (.5,2) {\small $H$};
				\node at (2,.5) {$+$};
			\end{scope}
			\begin{scope}[xshift=13cm,yshift=-.5cm]
				\draw (-1,-1) -- (1,-1) -- (1,1) -- (-1,1) -- cycle;
				\node at (0,0) {\small $B$};
				\draw (0,-1) -- (0,-2);
				\node at (0,-2) [left] {\small $i$};
				\draw (-.5,1) -- (-.5,3.2);
				\node at (-.5,3.7) {\small $\vphantom{j}k$};
				%
				\draw (.2,2.5) -- (.2,3.2);
				\node at (.2,3.7) {\small $j$};
				\draw[Red] (.4,2.5) -- (.4,3.2);
				\draw[Red] (.8,2.5) -- (.8,3.2);
				\node at (.525,2.8) {\tiny$\cdot$};
				\node at (.675,2.8) {\tiny$\cdot$};
				\draw (.5,1) -- (.5,1.5);
				\draw (0,1.5) -- (1,1.5) -- (1,2.5) -- (0,2.5) -- cycle;
				\node at (.5,2) {\small $H$};
				\node at (2,.5) {$+$};
			\end{scope}
			\begin{scope}[xshift=17cm,yshift=-1cm]
				\draw (-1,-1) -- (1,-1) -- (1,1) -- (-1,1) -- cycle;
				\node at (0,0) {\small $\disc{C}$};
				\draw (0,-1) -- (0,-2);
				\node at (0,-2) [left] {\small $i$};
				%
				\draw (.2,3.5) -- (.2,4);
				\node at (.1,4.5) {\small $j$};
				\draw[Red] (-1,3) -- (-1,4);
				\draw[Red] (-.4,3) -- (-.4,4);
				\node at (-.6,3.5) {\tiny$\cdot$};
				\node at (-.8,3.5) {\tiny$\cdot$};
				\draw (-.7,1) -- (-.7,2);
				\draw (-1.2,2) -- (-.2,2) -- (-.2,3) -- (-1.2,3) -- cycle;
				\node at (-.7,2.5) {\small $G$};
				%
				\draw (.6,3.5) -- (.6,4);
				\node at (.7,4.5) {\small $\vphantom{j}k$};
				\draw[Red] (1,3.5) -- (1,4);
				\draw[Red] (1.8,3.5) -- (1.8,4);
				\node at (1.2,3.75) {\tiny$\cdot$};
				\node at (1.4,3.75) {\tiny$\cdot$};
				\node at (1.6,3.75) {\tiny$\cdot$};
				\draw (.7,1) -- (.7,1.5);
				\draw (0,1.5) -- (2,1.5) -- (2,3.5) -- (0,3.5) -- cycle;
				\node at (1,2.5) {\small $\tilde{A}$};
				\node at (2.75,1) {$+$};
			\end{scope}
			\begin{scope}[xshift=22cm,yshift=-.5cm]
				\draw (-1,-1) -- (1,-1) -- (1,1) -- (-1,1) -- cycle;
				\node at (0,0) {\small $\disc{C}$};
				\draw (0,-1) -- (0,-2);
				\node at (0,-2) [left] {\small $i$};
				%
				\draw[Red] (-.7,2.5) -- (-.7,3.2);
				\draw[Red] (-.3,2.5) -- (-.3,3.2);
				\node at (-.575,2.8) {\tiny$\cdot$};
				\node at (-.425,2.8) {\tiny$\cdot$};
				\draw (-.6,1) -- (-.6,1.5);
				\draw (-1.1,1.5) -- (-.1,1.5) -- (-.1,2.5) -- (-1.1,2.5) -- cycle;
				\draw (-.9,2.5) -- (-.9,3.2);
				\node at (-.6,2) {\small $H$};
				\node at (-.9,3.7) {\small $j$};
				%
				\draw[Red] (.5,2.5) -- (.5,3.2);
				\draw[Red] (.9,2.5) -- (.9,3.2);
				\node at (.625,2.8) {\tiny$\cdot$};
				\node at (.775,2.8) {\tiny$\cdot$};
				\draw (.6,1) -- (.6,1.5);
				\draw (.1,1.5) -- (1.1,1.5) -- (1.1,2.5) -- (.1,2.5) -- cycle;
				\draw (.3,2.5) -- (.3,3.2);
				\node at (.6,2) {\small $H$};
				\node at (.3,3.7) {\small $\vphantom{j}k$};
			\end{scope}
		\end{tikzpicture}
		\\
		\begin{tikzpicture}[baseline,scale=.4]
			\draw (0,-1) -- (0,-2);
			\node at (0,-2) [left] {\small $i$};
			\draw (-.8,1) -- (-.8,2);
			\node at (-.9,2.5) {\small $j$};
			\draw (-.4,1) -- (-.4,2);
			\node at (-.3,2.5) {\small $\vphantom{j}k$};
			\draw[Red] (0,1) -- (0,2);
			\draw[Red] (.8,1) -- (.8,2);
			\node at (.2,1.5) {\tiny$\cdot$};
			\node at (.4,1.5) {\tiny$\cdot$};
			\node at (.6,1.5) {\tiny$\cdot$};
			\draw (-1,-1) -- (1,-1) -- (1,1) -- (-1,1) -- cycle;
			\node at (0,0) {\small $\tilde{B}$};
			\node at (2,0) {$=$};
			\begin{scope}[xshift=4cm,yshift=-1cm]
				\draw (-1,-1) -- (1,-1) -- (1,1) -- (-1,1) -- cycle;
				\node at (0,0) {\small $B$};
				\draw (0,-1) -- (0,-2);
				\node at (0,-2) [left] {\small $i$};
				\draw (-.7,1) -- (-.7,2);
				\draw[Red] (-.7,2) -- (-.7,3);
				\node[white] at (-.7,2) {\small$\bullet$};
				\node at (-.7,2) {\small$\circ$};
				%
				\draw (.2,3.5) -- (.2,4);
				\node at (.1,4.5) {\small $j$};
				\draw (.6,3.5) -- (.6,4);
				\node at (.7,4.5) {\small $\vphantom{j}k$};
				\draw[Red] (1,3.5) -- (1,4);
				\draw[Red] (1.8,3.5) -- (1.8,4);
				\node at (1.2,3.75) {\tiny$\cdot$};
				\node at (1.4,3.75) {\tiny$\cdot$};
				\node at (1.6,3.75) {\tiny$\cdot$};
				\draw (.7,1) -- (.7,1.5);
				\draw (0,1.5) -- (2,1.5) -- (2,3.5) -- (0,3.5) -- cycle;
				\node at (1,2.5) {\small $\tilde{B}$};
				\node at (2.75,1) {$+$};
			\end{scope}
			\begin{scope}[xshift=9cm,yshift=-.5cm]
				\draw (-1,-1) -- (1,-1) -- (1,1) -- (-1,1) -- cycle;
				\node at (0,0) {\small $\disc{C}$};
				\draw (0,-1) -- (0,-2);
				\node at (0,-2) [left] {\small $i$};
				\draw[Red] (-.6,2.5) -- (-.6,3.2);
				\draw[Red] (-.2,2.5) -- (-.2,3.2);
				\node at (-.475,2.8) {\tiny$\cdot$};
				\node at (-.325,2.8) {\tiny$\cdot$};
				%
				\draw (-.5,1) -- (-.5,1.5);
				\draw (-1,1.5) -- (0,1.5) -- (0,2.5) -- (-1,2.5) -- cycle;
				\draw (-.8,2.5) -- (-.8,3.2);
				\node at (-.5,2) {\small $H$};
				\node at (-.8,3.7) {\small $j$};
				\draw (.5,1) -- (.5,3.2);
				\node at (.5,3.7) {\small $\vphantom{j}k$};
				\node at (2,.5) {$+$};
			\end{scope}
			\begin{scope}[xshift=13cm,yshift=-1cm]
				\draw (-1,-1) -- (1,-1) -- (1,1) -- (-1,1) -- cycle;
				\node at (0,0) {\small $\disc{C}$};
				\draw (0,-1) -- (0,-2);
				\node at (0,-2) [left] {\small $i$};
				%
				\draw (.2,3.5) -- (.2,4);
				\node at (.1,4.5) {\small $j$};
				\draw[Red] (-1,3) -- (-1,4);
				\draw[Red] (-.4,3) -- (-.4,4);
				\node at (-.6,3.5) {\tiny$\cdot$};
				\node at (-.8,3.5) {\tiny$\cdot$};
				\draw (-.7,1) -- (-.7,2);
				\draw (-1.2,2) -- (-.2,2) -- (-.2,3) -- (-1.2,3) -- cycle;
				\node at (-.7,2.5) {\small $G$};
				%
				\draw (.6,3.5) -- (.6,4);
				\node at (.7,4.5) {\small $\vphantom{j}k$};
				\draw[Red] (1,3.5) -- (1,4);
				\draw[Red] (1.8,3.5) -- (1.8,4);
				\node at (1.2,3.75) {\tiny$\cdot$};
				\node at (1.4,3.75) {\tiny$\cdot$};
				\node at (1.6,3.75) {\tiny$\cdot$};
				\draw (.7,1) -- (.7,1.5);
				\draw (0,1.5) -- (2,1.5) -- (2,3.5) -- (0,3.5) -- cycle;
				\node at (1,2.5) {\small $\tilde{B}$};
			\end{scope}
			\begin{scope}[xshift=20cm]
				\draw[Red] (.7,1) -- (.7,2);
				\draw[Red] (-.3,1) -- (-.3,2);
				\node at (0,1.5) {\tiny$\cdot$};
				\node at (.2,1.5) {\tiny$\cdot$};
				\node at (.4,1.5) {\tiny$\cdot$};
				\draw (-1,-1) -- (1,-1) -- (1,1) -- (-1,1) -- cycle;
				\node at (0,0) {\small $\conn{\tilde{C}}$};
				\draw (-.5,-1) -- (-.5,-2);
				\node at (-.5,-2) [left] {\small $\vphantom{j}i$};
				\draw (.5,-1) -- (.5,-2);
				\node at (.5,-2) [right] {\small $j$};
				\draw (-.7,1) -- (-.7,2);
				\node at (-.7,2.5) {\small $k$};
				\node at (2,0) {$=$};
				\begin{scope}[xshift=4cm,yshift=-1cm]
					\draw (-1,-1) -- (1,-1) -- (1,1) -- (-1,1) -- cycle;
					\node at (0,0) {\small $B$};
					\draw (0,-1) -- (0,-2);
					\node at (0,-2) [left] {\small $\vphantom{j}i$};
					\draw (-.7,1) -- (-.7,2);
					\draw[Red] (-.7,2) -- (-.7,3);
					\node[white] at (-.7,2) {\small$\bullet$};
					\node at (-.7,2) {\small$\circ$};
					%
					\draw[Red] (1.7,3.5) -- (1.7,4);
					\draw[Red] (.7,3.5) -- (.7,4);
					\node at (1,3.75) {\tiny$\cdot$};
					\node at (1.2,3.75) {\tiny$\cdot$};
					\node at (1.4,3.75) {\tiny$\cdot$};
					\draw (.5,1) -- (.5,1.5);
					\draw (0,1.5) -- (2,1.5) -- (2,3.5) -- (0,3.5) -- cycle;
					\node at (1,2.5) {\small $\conn{\tilde{C}}$};
					\draw (1.5,1.5) -- (1.5,-2);
					\node at (1.5,-2) [right] {\small $j$};
					\draw (.3,3.5) -- (.3,4);
					\node at (.3,4.5) {\small $k$};
					\node at (2.75,1) {$+$};
				\end{scope}
				\begin{scope}[xshift=9cm,yshift=-1cm]
					\draw (-1,-1) -- (1,-1) -- (1,1) -- (-1,1) -- cycle;
					\node at (0,0) {\small $\disc{C}$};
					\draw (0,-1) -- (0,-2);
					\node at (0,-2) [left] {\small $\vphantom{j}i$};
					%
					\draw[Red] (-1,3) -- (-1,4);
					\draw[Red] (-.4,3) -- (-.4,4);
					\node at (-.6,3.5) {\tiny$\cdot$};
					\node at (-.8,3.5) {\tiny$\cdot$};
					\draw (-.7,1) -- (-.7,2);
					\draw (-1.2,2) -- (-.2,2) -- (-.2,3) -- (-1.2,3) -- cycle;
					\node at (-.7,2.5) {\small $G$};
					%
					\draw[Red] (1.7,3.5) -- (1.7,4);
					\draw[Red] (.7,3.5) -- (.7,4);
					\node at (1,3.75) {\tiny$\cdot$};
					\node at (1.2,3.75) {\tiny$\cdot$};
					\node at (1.4,3.75) {\tiny$\cdot$};
					\draw (.5,1) -- (.5,1.5);
					\draw (0,1.5) -- (2,1.5) -- (2,3.5) -- (0,3.5) -- cycle;
					\node at (1,2.5) {\small $\conn{\tilde{C}}$};
					\draw (1.5,1.5) -- (1.5,-2);
					\node at (1.5,-2) [right] {\small $j$};
					\draw (.3,3.5) -- (.3,4);
					\node at (.3,4.5) {\small $k$};
				\end{scope}
			\end{scope}
		\end{tikzpicture} \\
		\nonumber
		\begin{tikzpicture}[baseline,scale=.4]
			\draw (0,-1) -- (0,-2);
			\node at (0,-2) [left] {\small $i$};
			\draw (-.8,1) -- (-.8,2);
			\node at (-.9,2.5) {\small $j$};
			\draw (-.4,1) -- (-.4,2);
			\node at (-.3,2.5) {\small $\vphantom{j}k$};
			\draw[Red] (0,1) -- (0,2);
			\draw[Red] (.8,1) -- (.8,2);
			\node at (.2,1.5) {\tiny$\cdot$};
			\node at (.4,1.5) {\tiny$\cdot$};
			\node at (.6,1.5) {\tiny$\cdot$};
			\draw (-1,-1) -- (1,-1) -- (1,1) -- (-1,1) -- cycle;
			\node at (0,0) {\small $\disc{\tilde{C}}$};
			\node at (2,0) {$=$};
			\begin{scope}[xshift=4cm,yshift=-1cm]
				\draw (-1,-1) -- (1,-1) -- (1,1) -- (-1,1) -- cycle;
				\node at (0,0) {\small $B$};
				\draw (0,-1) -- (0,-2);
				\node at (0,-2) [left] {\small $i$};
				\draw (-.7,1) -- (-.7,2);
				\draw[Red] (-.7,2) -- (-.7,3);
				\node[white] at (-.7,2) {\small$\bullet$};
				\node at (-.7,2) {\small$\circ$};
				%
				\draw (.2,3.5) -- (.2,4);
				\node at (.1,4.5) {\small $j$};
				\draw (.6,3.5) -- (.6,4);
				\node at (.7,4.5) {\small $\vphantom{j}k$};
				\draw[Red] (1,3.5) -- (1,4);
				\draw[Red] (1.8,3.5) -- (1.8,4);
				\node at (1.2,3.75) {\tiny$\cdot$};
				\node at (1.4,3.75) {\tiny$\cdot$};
				\node at (1.6,3.75) {\tiny$\cdot$};
				\draw (.7,1) -- (.7,1.5);
				\draw (0,1.5) -- (2,1.5) -- (2,3.5) -- (0,3.5) -- cycle;
				\node at (1,2.5) {\small $\disc{\tilde{C}}$};
				\node at (2.75,1) {$+$};
			\end{scope}
			\begin{scope}[xshift=9cm,yshift=-1cm]
				\draw (-1,-1) -- (1,-1) -- (1,1) -- (-1,1) -- cycle;
				\node at (0,0) {\small $\disc{C}$};
				\draw (0,-1) -- (0,-2);
				\node at (0,-2) [left] {\small $i$};
				%
				\draw (.2,3.5) -- (.2,4);
				\node at (.1,4.5) {\small $j$};
				\draw[Red] (-1,3) -- (-1,4);
				\draw[Red] (-.4,3) -- (-.4,4);
				\node at (-.6,3.5) {\tiny$\cdot$};
				\node at (-.8,3.5) {\tiny$\cdot$};
				\draw (-.7,1) -- (-.7,2);
				\draw (-1.2,2) -- (-.2,2) -- (-.2,3) -- (-1.2,3) -- cycle;
				\node at (-.7,2.5) {\small $G$};
				%
				\draw (.6,3.5) -- (.6,4);
				\node at (.7,4.5) {\small $\vphantom{j}k$};
				\draw[Red] (1,3.5) -- (1,4);
				\draw[Red] (1.8,3.5) -- (1.8,4);
				\node at (1.2,3.75) {\tiny$\cdot$};
				\node at (1.4,3.75) {\tiny$\cdot$};
				\node at (1.6,3.75) {\tiny$\cdot$};
				\draw (.7,1) -- (.7,1.5);
				\draw (0,1.5) -- (2,1.5) -- (2,3.5) -- (0,3.5) -- cycle;
				\node at (1,2.5) {\small $\disc{\tilde{C}}$};
			\end{scope}
			\begin{scope}[xshift=17cm]
				\draw[Red] (-.6,1) -- (-.6,2);
				\draw[Red] (.6,1) -- (.6,2);
				\node at (-.3,1.5) {\tiny$\cdot$};
				\node at (0,1.5) {\tiny$\cdot$};
				\node at (.3,1.5) {\tiny$\cdot$};
				\draw (-1,-1) -- (1,-1) -- (1,1) -- (-1,1) -- cycle;
				\node at (0,0) {\small $\tilde{D}$};
				\draw (0,-1) -- (0,-2);
				\node at (0,-2) [left] {\small $i$};
				\node at (2,0) {$=$};
				\begin{scope}[xshift=4cm,yshift=-1cm]
					\draw (-1,-1) -- (1,-1) -- (1,1) -- (-1,1) -- cycle;
					\node at (0,0) {\small $B$};
					\draw (0,-1) -- (0,-2);
					\node at (0,-2) [left] {\small $i$};
					\draw (-.7,1) -- (-.7,2);
					\draw[Red] (-.7,2) -- (-.7,3);
					\node[white] at (-.7,2) {\small$\bullet$};
					\node at (-.7,2) {\small$\circ$};
					%
					\draw[Red] (.3,3.5) -- (.3,4);
					\draw[Red] (1.7,3.5) -- (1.7,4);
					\node at (.7,3.75) {\tiny$\cdot$};
					\node at (1,3.75) {\tiny$\cdot$};
					\node at (1.3,3.75) {\tiny$\cdot$};
					\draw (.5,1) -- (.5,1.5);
					\draw (0,1.5) -- (2,1.5) -- (2,3.5) -- (0,3.5) -- cycle;
					\node at (1,2.5) {\small $\tilde{D}$};
					\node at (2.75,1) {$+$};
				\end{scope}
				\begin{scope}[xshift=9cm,yshift=-1cm]
					\draw (-1,-1) -- (1,-1) -- (1,1) -- (-1,1) -- cycle;
					\node at (0,0) {\small $\conn{C}$};
					\draw (-.5,-1) -- (-.5,-2);
					\node at (-.5,-2) [left] {\small $i$};
					%
					\draw[Red] (-.1,2.5) -- (-.1,3);
					\draw[Red] (.3,2.5) -- (.3,3);
					\node at (.025,2.75) {\tiny$\cdot$};
					\node at (.175,2.75) {\tiny$\cdot$};
					\draw (0,1) -- (0,1.5);
					\draw (.5,1.5) -- (-.5,1.5) -- (-.5,2.5) -- (.5,2.5) -- cycle;
					\node at (0,2) {\small $H$};
					\draw (.5,-1) -- (.5,-1.2);
					\draw (-.3,2.5) -- (-.3,3);
					\draw (1.5,3) arc (0:180:.9);
					\draw (1.5,-1.2) arc (0:-180:.5);
					\draw (1.5,-1.2) -- (1.5,3);
					\node at (2.75,1) {$+$};
				\end{scope}
				\begin{scope}[xshift=14cm,yshift=-1cm]
					\draw (-1,-1) -- (1,-1) -- (1,1) -- (-1,1) -- cycle;
					\node at (0,0) {\small $\disc{C}$};
					\draw (0,-1) -- (0,-2);
					\node at (0,-2) [left] {\small $i$};
					%
					\draw[Red] (-1,3) -- (-1,4);
					\draw[Red] (-.4,3) -- (-.4,4);
					\node at (-.6,3.5) {\tiny$\cdot$};
					\node at (-.8,3.5) {\tiny$\cdot$};
					\draw (-.7,1) -- (-.7,2);
					\draw (-1.2,2) -- (-.2,2) -- (-.2,3) -- (-1.2,3) -- cycle;
					\node at (-.7,2.5) {\small $G$};
					%
					\draw[Red] (.3,3.5) -- (.3,4);
					\draw[Red] (1.7,3.5) -- (1.7,4);
					\node at (.7,3.75) {\tiny$\cdot$};
					\node at (1,3.75) {\tiny$\cdot$};
					\node at (1.3,3.75) {\tiny$\cdot$};
					\draw (.5,1) -- (.5,1.5);
					\draw (0,1.5) -- (2,1.5) -- (2,3.5) -- (0,3.5) -- cycle;
					\node at (1,2.5) {\small $\tilde{D}$};
				\end{scope}
			\end{scope}
		\end{tikzpicture}
	\end{gather}
	\endgroup

	Moreover, if $G$ has a non-zero radius of convergence, so do $H$, $\tilde{F}_{g,1+n}$ and all the modified initial data $(\tilde{A},\tilde{B},\conn{\tilde{C}},\disc{\tilde{C}},\tilde{D})$.
\end{thm}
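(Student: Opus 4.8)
The plan is to derive, straight from \eqref{eq:t:action} and the original F-TR formula \eqref{eq:F-TR:coord:free}, a \emph{generalised F-TR identity} for the translated amplitudes in which the resummed would-be unstable pieces $G$, $H$ of \eqref{eq:G:H:translation} appear explicitly, and then to observe that the formulas \eqref{eq:transl:tensors} are exactly what results from rearranging that identity into the shape of an F-TR recursion. A useful orientation is the generating-series shadow of the claim: summing \eqref{eq:t:action} against $\tfrac{\hbar^{g-1}}{n!}x^{\otimes n}$ over \emph{all} $g,n\ge 0$ — now including the unstable indices $(0,1)$ and $(0,2)$ — and using the binomial theorem gives, as an identity of formal series in $x$ and the parameter $\tau\in W$,
\begin{equation*}
	\sum_{g,n\ge 0}\frac{\hbar^{g-1}}{n!}\,\tilde F_{g,1+n}(x^{\otimes n})=\Phi\bigl(x+\iota(\tau)\bigr);
\end{equation*}
in particular $\tilde F_{0,1}=G$, $\tilde F_{0,2}=H$, and for $g\ge 1$ (as well as for the stable genus-$0$ terms) $\tilde F_{g,1+n}$ is the $n$-th derivative at $\iota(\tau)$ of the genus-$g$ part of $\Phi$. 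Since the F-TR amplitudes of an F-Airy structure are uniquely determined by its initial data, it suffices to prove: (a) the system \eqref{eq:transl:tensors} with $\tilde X[0]=X$ has a (necessarily unique) solution, so that \eqref{eq:transl:tensors} does define a family of F-Airy structures; (b) $\tilde F_{0,3}=\tilde A$ and $\tilde F_{1,1}=\tilde D$; (c) for $2g-2+(1+n)>1$ the $\tilde F_{g,1+n}$ satisfy \eqref{eq:F-TR:coord:free} with $(\tilde B,\conn{\tilde C},\disc{\tilde C})$ in place of $(B,\conn C,\disc C)$. Point (a) I would settle by a $W$-degree count: $G$ has $W$-valuation $\ge 2$, $H$ has $W$-valuation $\ge 1$, and $X\mapsto B\circ(\iota\otimes X)$ raises the $W$-degree by exactly one, so in each line of \eqref{eq:transl:tensors} the $m$-th homogeneous component of the left-hand side involves only components of $W$-degree $<m$ of the translated tensors and the directly-defined $G[m_1]$, $H[m']$; the $\tilde X[m]$ are then defined inductively in $m$.

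For (b) and (c) the key step is to set $\tilde F_{0,1}:=G$, $\tilde F_{0,2}:=H$ (consistently with \eqref{eq:t:action}), apply \eqref{eq:F-TR:coord:free} to every $F_{g,1+n+m}(v_1\otimes\cdots\otimes v_n\otimes\iota^{\otimes m})$ that is not a base case, and sum over $m$ with weight $1/m!$. When the distinguished vertex of $F_{g,1+n+m}$ is peeled off, each of the $m$ copies of $\iota$ is either an input of the peeled vertex ($B$, $\conn C$ or $\disc C$) or an input of one of the sub-amplitudes; splitting the sum over $m$ accordingly, using $\tfrac1{m!}\binom{m}{m_1,\dots}=\prod_i\tfrac1{m_i!}$ and the symmetry of the peeled tensor in its $\iota$-legs, each sub-amplitude resums into a $\tilde F$ — which equals $G$ or $H$ exactly when it turns unstable once its $\iota$'s are removed — and the $\iota$'s on the peeled vertex (at most one, necessarily on $B$) are absorbed. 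The result, for $2g-2+(1+n)>1$, is
\begin{equation*}
\begin{split}
\tilde F_{g,1+n}(v_{1}\otimes\cdots\otimes v_n)
&= \sum_{m=1}^n B\bigl(v_m\otimes\tilde F_{g,1+(n-1)}(v_{1}\otimes\cdots\widehat{v_m}\cdots\otimes v_n)\bigr)+B\bigl(\iota\otimes\tilde F_{g,1+n}(v_{1}\otimes\cdots\otimes v_n)\bigr)\\
&\quad+\tfrac12(\id\otimes\tr)\bigl(\conn C\circ\tilde F_{g-1,1+(n+1)}(v_{1}\otimes\cdots\otimes v_n\otimes -)\bigr)\\
&\quad+\tfrac12\,\disc C\Bigl({\textstyle\sum_{h+h'=g}\sum_{J\sqcup J'=[n]}}\tilde F_{h,1+|J|}(v_J)\otimes\tilde F_{h',1+|J'|}(v_{J'})\Bigr),
\end{split}
\end{equation*}
with an extra summand $A(v_1\otimes v_2)$ on the right when $(g,n)=(0,2)$ and an extra $D$ when $(g,n)=(1,0)$. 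The right-hand side is self-referential only through $B\circ(\iota\otimes\tilde F_{g,1+n})$ and, via the $h=0$, $J=\varnothing$ term of the $\disc C$-sum, through $\disc C\circ(G\otimes\tilde F_{g,1+n})$, both of which strictly raise the $W$-degree; so the identity pins down $\tilde F_{g,1+n}$ by induction on the $W$-degree.

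What remains — and this I expect to be the main obstacle — is the purely algebraic rearrangement of the last identity into the F-TR recursion for the tensors \eqref{eq:transl:tensors}: one moves to the defining side all occurrences of the top amplitude $\tilde F_{g,1+n}$ and of the unstable $G=\tilde F_{0,1}$, $H=\tilde F_{0,2}$. Concretely, the summand $\disc C\bigl(H(v_m)\otimes\tilde F_{g,1+(n-1)}(\cdots)\bigr)$ together with $B(v_m\otimes -)$, iterated with the degree-raising operators $B\circ(\iota\otimes -)$ and $\disc C\circ(G\otimes -)$, assembles into $\tilde B$ acting on $v_m\otimes\tilde F_{g,1+(n-1)}(\cdots)$; the $\conn C$-vertex with its $G$-, $H$-attachments assembles into $\conn{\tilde C}$; the $\disc C$-vertex joining two genuine (stable) amplitudes, with the analogous attachments, assembles into $\disc{\tilde C}$; and for $(g,n)\in\{(0,2),(1,0)\}$ the leftover $A$ (resp.\ $D$) plus these assembled contributions reproduces $\tilde A$ (resp.\ $\tilde D$). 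The clean way to make this rigorous is to compare homogeneous $W$-components of the two sides with \eqref{eq:transl:tensors}, running the induction on the pair $\bigl(2g-2+(1+n),\,m\bigr)$ lexicographically; this component-by-component matching is mechanical but carries the bulk of the bookkeeping. The diagrammatic identities \eqref{eq:t:data:dgrms} then need nothing extra: they are the transcription of \eqref{eq:transl:tensors} into the box-and-leg calculus of \eqref{eq:tensor:dgrm}, with red legs for the $W$-inputs and the node $\circ$ for $\iota$.

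For the convergence statement I would argue in three moves. First, in the coordinates $\mathrm{e}^i$ identifying $\Sym{}{(W^{\ast})}$ with $\CC\bbraket{(\mathrm{e}^i)_{i\in I}}$, comparing \eqref{eq:G:H:translation} gives $H^i_j=\partial_{\mathrm{e}^j}G^i$, so $H$ converges on the polydisc of convergence of $G$. Second, feeding bounds for $G$, $H$ into \eqref{eq:transl:tensors} and using that every term on the right increases the $W$-degree — by at least $2$ for the terms with a factor of $G$, by exactly $1$ for those with a factor $B\circ(\iota\otimes -)$ — a standard majorant-series estimate (pick the new radius small enough that the $G$- and $\iota$-contributions sum to at most half the target bound, and the overall constant large enough to cover the finitely many low-degree terms) shows $\tilde A$, $\tilde B$, $\conn{\tilde C}$, $\disc{\tilde C}$, $\tilde D$ all have non-zero radius of convergence. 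Third, by the graphical expansion of F-TR amplitudes from Section~\ref{sec:setting} applied to the translated F-Airy structure, each $\tilde F_{g,1+n}$ is a \emph{finite} $\QQ$-linear combination of finite products of components of these five tensors, hence converges on their common polydisc of convergence.
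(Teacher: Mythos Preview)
Your proof is correct and follows the same overall structure as the paper's: derive the ``generalised F-TR identity'' for $\tilde F_{g,1+n}$ by summing the original F-TR over the $m$ inserted $\iota$'s, isolate the self-referential pieces $B\circ(\iota\otimes\tilde F_{g,1+n})$ and $\disc C\circ(G\otimes\tilde F_{g,1+n})$, and then match with \eqref{eq:transl:tensors}. The one place where the paper is more efficient is the rearrangement step: rather than running a lexicographic induction on $(2g-2+(1+n),m)$, the paper packages the two self-referential terms into the operator $K^i_a=\delta^i_a-B^i_{j,a}\mathrm{e}^j-\indC{\disc C}{i}{j,a}G^j$, observes $K=\id_V+\bigO(\sum_i\mathrm{e}^i)$ is invertible, and obtains closed-form expressions $\tilde B=K^{-1}(B+\disc C\circ(H\otimes\id))$, $\conn{\tilde C}=K^{-1}\circ\conn C$, $\disc{\tilde C}=K^{-1}\circ\disc C$ (and similarly $\tilde A$, $\tilde D$ as $K^{-1}$ applied to the leftover terms). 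The recursion \eqref{eq:transl:tensors} is then nothing but the geometric-series expansion of $K^{-1}$, so your component-by-component verification is equivalent but less direct. The $K^{-1}$ viewpoint also short-circuits your convergence argument: once $G$ and $H$ converge, $K^{-1}$ has a nonzero radius of convergence, and the closed forms immediately give convergence of all five translated tensors without any majorant-series estimate.
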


\begin{proof}
	We first examine the $(0,3)$ case. By applying F-TR to the translated $(0,3)$ amplitude in \eqref{eq:translation:coords} we get
	\begin{equation}
	\begin{split}
		\indF{\tilde{F}}{0}{i_0}{i_1,i_2}
		& =
		\sum_{m \geq 0} \frac{1}{m!} \,
			\indF{F}{0}{i_0}{i_1, i_2, j_1, \dots, j_m}
			\mathrm{e}^{j_1} \cdots \mathrm{e}^{j_m} \\
		& =
		\indF{F}{0}{i_0}{i_1,i_2} + \sum_{m \geq 1} \frac{1}{m!}
		\Bigg(
			\ind{B}{i_0}{i_1, a} \,
			\indF{F}{0}{a}{i_2, j_1, \dots, j_m}
			+
			\ind{B}{i_0}{i_2, a} \,
			\indF{F}{0}{a}{i_1, j_1, \dots, j_m}
			+
			\sum_{l=1}^{m}
				\ind{B}{i_0}{j_l, a} \,
				\indF{F}{0}{a}{i_1, i_2, j_1, \dots \widehat{j_l} \dots, j_m}
			 \\	
		& \qquad\qquad\qquad\qquad\quad
			+
			\indC{\disc{C}}{i_0}{a,b}
			\sum_{J \sqcup J' = \{j_1, \dots, j_m\} }
				\bigl(
					\indF{F}{0}{a}{J} \, \indF{F}{0}{b}{i_1,i_2,J'}
					+
					\indF{F}{0}{a}{i_1,J} \, \indF{F}{0}{b}{i_2,J'}
				\bigr)
		\Bigg) \mathrm{e}^{j_1} \cdots \mathrm{e}^{j_m} \,.
	\end{split}
	\end{equation}
	Collecting monomials in the linear forms $\mathrm{e}^{j_1} \mathrm{e}^{j_2} \cdots$ while taking into account the symmetry factor, we see that the degree $m$ component of $\tilde{F}[m]_{0,3}$ equals
	\begin{multline}
			\indF{\tilde{F}[m]}{0}{i_0}{i_1,i_2}
			=
			\delta_{m,0} \; A^{i_0}_{i_1,i_2}
			+
			\Bigl(
				\ind{B}{i_0}{i_1, a} \,
				H[m]^a_{i_2}
				+
				\ind{B}{i_0}{i_2, a} \,
				H[m]^{a}_{i_1}
			\Bigr)
			+
			\ind{B}{i_0}{a, b} \, \mathrm{e}^{a} \, \ind{\tilde{A}[m-1]}{b}{i_1, i_2}	\\
			+
			\indC{\disc{C}}{i_0}{a,b} \Biggr(
			\sum_{\substack{m_1+m_2 = m \\ m_1 \geq 2, m_2 \ge 0}}
				G[m_1]^a \,
				\ind{\tilde{F}[m_2]}{b}{i_1,i_2}
			+
			\sum_{\substack{m_1+m_2=m \\ m_1, m_2 \geq 1}}
				H[m_1]^a_{i_1} \,
				H[m_2]^b_{i_2}
			\Biggl) \,,
	\end{multline} 
	with the convention $\tilde{X}[-1] = 0$. The above recursion on $m$ uniquely characterises the sequence of tensors $(\tilde{F}[m]_{0,3})_{m \geq 0}$. Moreover, the recursive definition of the homogeneous components of $\tilde{A}$ in \eqref{eq:transl:tensors} coincides with the recursion for $\tilde{F}[m]_{0,3}$, hence the equality. The computation for the $(1,1)$ case is similar and omitted: it gives the recursion for $\tilde{D}[m]$.
	
	Suppose now that F-TR holds for all $(g_0,n_0)$ such that $2g_0-2+(1+n_0) < 2g-2+(1+n)$. From the definition of translated amplitudes and F-TR for the original amplitudes, we find
	\begin{multline}
		\indF{\tilde{F}}{g}{i_0}{i_1, \dots, i_n}
		=
		\sum_{m \geq 0} \frac{1}{m!} \Bigg(
			\sum_{p=1}^{n}
				\ind{B}{i_0}{i_p, a} \,
				\indF{F}{g}{a}{i_1, \dots \widehat{i_p} \dots, i_n, j_1, \dots, j_m}
			+
			\sum_{q=1}^{m}
				\ind{B}{i_0}{j_q, a} \,
				\indF{F}{g}{a}{i_1, \dots, i_n, j_1, \dots \widehat{j_q} \dots, j_m}
		\\
		+
		\frac{1}{2}	\indC{\conn{C}}{i_0,b}{a} \,
			\indF{F}{g-1}{a}{i_1, \dots, i_n, j_{1}, \dots, j_{m},b}
			+
			\frac{1}{2} \indC{\disc{C}}{i_0}{a,b}
			\sum_{\substack{
				h+h' = g \\ N \sqcup N' = \{ i_1, \dots, i_n \} \\ M \sqcup M' = \{ j_1, \dots, j_m \}
			}}
				\indF{F}{h}{a}{N \sqcup M} \, \indF{F}{h'}{b}{N' \sqcup M'}
		\Bigg) \mathrm{e}^{j_1} \cdots \mathrm{e}^{j_m} \,.
	\end{multline}
	The second $B$-sum and the terms $(h,N) = (0,\varnothing)$ or $(h',N') = (0,\varnothing)$ in the $\disc{C}$-sum involve $\tilde{F}_{g,1 + n}$, and we move these contributions to the left-hand side while exploiting the symmetry of $\disc{C}$ in its two lower indices. In all the other terms, redistributing the vectors $\mathrm{e}^j$ we recognise some $\tilde{F}_{g_0,1+n_0}$ with $2g_0 - 2 + (1 + n_0) < 2g - 2 + (1 + n)$. We thus arrive to
	\begin{multline}
		\big(
			\delta^{i_0}_a
			-
			\ind{B}{i_0}{j,a} \, \mathrm{e}^j
			-
			\indC{\disc{C}}{i_0}{j,a}
			G^j
		\big) \indF{\tilde{F}}{g}{a}{i_1, \dots, i_n}
		=
		\sum_{l=1}^{n}
			\ind{B}{i_0}{i_l,a} \,
			\indF{\tilde{F}}{g}{a}{i_1, \dots \widehat{i_l} \dots, i_n}
		+ 
		\frac{1}{2} \indC{\conn{C}}{i_0,b}{a} \,
		\indF{\tilde{F}}{g-1}{a}{i_1, \dots, i_n,b}
		\\
		+ \indC{\disc{C}}{i_0}{a,b} \Biggl(
			\sum_{l=1}^{n}
				H^a_{i_l} \,
				\indF{\tilde{F}}{g}{b}{i_1, \dots, \widehat{i_l}, \dots, i_n}
			+
			\frac{1}{2} \sum_{\substack{h+h' = g \\ J \sqcup J' = \{ i_1, \dots, i_n\} }}
				\indF{\tilde{F}}{h}{a}{J} \, \indF{\tilde{F}}{h'}{b}{J'}
			\Biggr) \,.
	\end{multline}
	Let us introduce the operator $K \in \End( V \otimes \CC\bbraket{(\mathrm{e}^i)_{i \in I}} )$, specified by its matrix of coefficients in the chosen basis as $K^i_a \coloneqq \delta^i_a - \ind{B}{i}{j,a}\mathrm{e}^j - \indC{\disc{C}}{i}{j,a} G^j$. Since $G^j$ contains only terms of positive degree, $K = \id_V + \bigO(\sum_{i \in I} \mathrm{e}^i)$ is invertible. We can then rewrite
	\begin{equation}
	\begin{split}
		\indF{\tilde{F}}{g}{i_0}{i_1,\dots,i_n}
		=
		\sum_{l = 1}^{n} \tilde{B}^{i_0}_{i_l,a} \indF{\tilde{F}}{g}{a}{i_1,\dots,\widehat{i_l},\dots,i_n}
		+
		\frac{1}{2} \indC{\conn{\tilde{C}}}{i_0,b}{a} \, \indF{\tilde{F}}{g-1}{a}{i_1,\dots,i_n,b}
		+
		\frac{1}{2} \indC{\disc{\tilde{C}}}{i_0}{a,b} \sum_{\substack{h + h' = g \\ J \sqcup J' = \{i_1,\dots,i_n\}}}
			\indF{\tilde{F}}{h}{a}{J}
			\indF{\tilde{F}}{h'}{b}{J'} \,,
	\end{split}
	\end{equation}
	where we have set
	\begin{equation}\label{eq:BC:tilde}
		\ind{\tilde{B}}{i}{j,k} = (K^{-1})^i_a \big( \ind{B}{a}{j,k} + \indC{\disc{C}}{a}{b,k} H^{b}_{j} \big) \,,
		\qquad
		\indC{\conn{\tilde{C}}}{i,j}{k} = (K^{-1})^i_a \indC{\conn{C}}{a,j}{k} \,,
		\qquad 
		\indC{\disc{\tilde{C}}}{i}{j,k} = (K^{-1})^i_a \indC{\disc{C}}{a}{j,k} \,.
	\end{equation}
	The recursive definition of $(\tilde{B}[m])_{m \geq 0}$ in \eqref{eq:transl:tensors} is tailored such that $K^{i}_{a} \ind{\tilde{B}}{a}{j,k} = \ind{B}{i}{j,k} + \indC{\disc{C}}{i}{a,k} H_j^a$, so it matches with the homogeneous components of $\tilde{B}$ defined by \eqref{eq:BC:tilde}. Similarly the recursive definitions of $(\conn{\tilde{C}}[m])_{m \geq 0}$ and $(\disc{\tilde{C}}[m])_{m \geq 0}$ in \eqref{eq:transl:tensors} match the homogeneous components of the tensors defined by \eqref{eq:BC:tilde}. This concludes the proof by induction.
		
	As for the convergence statement, when $G^i$ has a non-zero radius of convergence $r > 0$ for all $i \in I$, then $H^i_j = \partial_{\mathrm{e}^j} G^i$ has at least radius of convergence $r$ for all $i,j \in I$. Thus, the operator $K^{-1}$ also has a non-zero radius of convergence, say $r' > 0$. Therefore $\tilde{A}$, $\tilde{B}$, $\conn{\tilde{C}}$, $\disc{\tilde{C}}$, $\tilde{D}$ all have a radius of convergence larger or equal to $r'' = \min(r,r') > 0$. From the F-TR relation we deduce that $\tilde{F}_{g,1+n}$ has a radius of convergence larger or equal to $r''$.
\end{proof}

\begin{rem}\label{rem:translation}
	For a given vector $\tau \in V$, we can specialise the translated amplitudes by taking $W = V$, $\iota = \id_V$, and considering the partial function
	\begin{equation}
		\ev_{\tau} \colon \Sym{}{(V^{\ast})} \longrightarrow \CC \,,
		\qquad
		\sum_{m\ge 0} \lambda_{m,j_{1}} \cdots \lambda_{m,j_{m}}
		\longmapsto
		\sum_{m\ge 0} \braket{\lambda_{m,j_1}, \tau} \cdots \braket{\lambda_{m,j_m}, \tau} \,,
	\end{equation}
	where the value is defined if and only if the sum is absolutely convergent for some norm on $V$. In coordinates, setting $\tau = \tau^i \mathrm{e}_i$, we have
	\begin{equation}
		\ev_{\tau}
		\colon
		\sum_{m\ge 0} \mathrm{e}^{j_1} \cdots \mathrm{e}^{j_m}
		\longmapsto
		\sum_{m\ge 0} \tau^{j_1} \cdots \tau^{j_m} \,.
	\end{equation}
	If the evaluation $\preind{\tau}{G} \coloneqq \ev_{\tau} \circ G$ is convergent, so do $\preind{\tau}{H} \coloneqq \ev_{\tau} \circ H$ and the translated amplitudes
	\begin{equation}\label{eq:t:spec}
		\preind{\tau}{F}_{g,1+n}
		\coloneqq
		\ev_{\tau} \circ \tilde{F}_{g,1+n}
		=
		\sum_{m \ge 0}
			\frac{1}{m!} \,
			F_{g,1+n+m}\bigl( \id_{V}^{\otimes n} \otimes \tau^{\otimes m} \bigr) \,.
	\end{equation}
	In this case, the vector potential associated with the translated amplitudes is given by
	\begin{equation}\label{eq:Phi:trans}
		\preind{\tau}{\Phi}(x)
		=
		\Phi(x + \tau)
		-
		\hbar^{-1} \big( \preind{\tau}{G} + \preind{\tau}{H}(x) \big) \,.
	\end{equation}
	This formalises the heuristic argument presented at the beginning of the section.
\end{rem}

\section{F-cohomological field theories}
\label{sec:FCohFTs}
In this section we recall the definition of F-cohomological field theories, following \cite{BR21,ABLR23}, and study their symmetries. We work in cohomology with coefficients in $\CC$. There are obvious variants in cohomology with rational coefficients or in Chow.

\subsection{F-cohomological field theories}
Let $\Mbar_{g,1+n}$ be the Deligne--Mumford moduli space of stable curves of genus $g$ with $(1 + n)$ marked points labelled as $0, 1, \dots, n$. Given a splitting $h + h' = g$ of the genus and a splitting $J \sqcup J' = [n]$ of the marked points, we consider the glueing morphism of \emph{separating kind}:
\begin{equation}\label{eq:glueing:sep}
	\gl \colon \Mbar_{h,1+(1+|J|)} \times \Mbar_{h',1+|J'|} \longrightarrow \Mbar_{g,1 + n} \,.
\end{equation}
This morphism consists in glueing the first node from the left factor to the first node from the right factor, thus creating a stable curve. We also consider the morphism forgetting the last marked point and (if necessary) stabilising, i.e. contracting to a point the unstable components of the normalisation:
\begin{equation}\label{eq:forgetful}
	\fg \colon \Mbar_{g,1+(n+1)} \longrightarrow \Mbar_{g,1+n} \,.
\end{equation}

\begin{defn}
	An \emph{F-cohomological field theory} (F-CohFT for short) is the data of a vector space $V_0$, called the phase space, together with a collection of linear maps
	\begin{equation}
		\Omega_{g,1 + n} \colon V_0^{\otimes n} \longrightarrow H^{\rm even}(\Mbar_{g,1+n}) \otimes V_0
	\end{equation}
	indexed by integers $g,n \geq 0$ such that $2g - 2 + (1 + n) > 0$ and satisfying the following axioms:
	\begin{itemize}
		\item $\Omega_{g,1+n}$ is equivariant for the action of the symmetric group $\Sy_n$ permuting simultaneously the tensor factors of $V_0^{\otimes n}$ and the last $n$ marked points in $\Mbar_{g,1+n}$;

		\item whenever $g = h + h'$ and $J \sqcup J' = [n] \,$, pulling back by the corresponding morphism of separating kind yields
		\begin{equation}\label{eq:glueing:axiom}
			\gl^*\Omega_{g,1 + n}(v_1 \otimes\cdots\otimes v_n)
			=
			\Omega_{h,1+(1+|J|)}\big( \Omega_{h',1+|J'|}(v_{J'}) \otimes v_J \big) \,.
		\end{equation}
	\end{itemize}
	Furthermore, we say that the F-CohFT has a flat unit if we are provided with a distinguished element $\mathrm{e} \in V_0$ such that
	\begin{equation}
		\fg^* \Omega_{g,1+n}(v_1 \otimes\cdots\otimes v_n)
		=
		\Omega_{g,1+(n+1)}(v_1 \otimes\cdots\otimes v_n \otimes \mathrm{e})
		\qquad \text{and} \qquad
		\Omega_{0,3}(v \otimes \mathrm{e}) = v \,.
	\end{equation}
\end{defn}

\begin{ex}
	Every CohFT (see e.g. \cite{Pan19} for the definition) is an F-CohFT, since the former requires the additional data of a non-degenerate pairing on $V_0$ and compatibility with respect to the glueing morphism of non-separating kind. A less trivial class of examples are the ones constructed from the top Chern class of the Hodge bundle: $\lambda_g = c_g(\mathbb{E}) \in H^{2g}(\Mbar_{g, 1+n})$, where $\mathbb{E}$ is the vector bundle whose fibre over a smooth point $[C,p_0,\dots,p_n]$ is the space of holomorphic differentials on $C$. Given a CohFT $(\Omega_{g,n})_{g,n}$ on $V_0$, the collection of linear maps $(\lambda_g \cdot \Omega_{g,1+n})_{g,n}$ forms an F-CohFT after the appropriate identification of $V_0$ and $V_0^*$ through the given pairing. In particular, the class $\lambda_g$ itself is a prime example of an F-CohFT.
\end{ex}

As in the usual context, the cohomological degree-zero part of a given F-CohFT, that is
\begin{equation}
	\Omega^{0}_{g,1+n}
	\coloneqq
	\left[ \deg = 0 \right] \Omega_{g,1+n} \colon V_0^{\otimes n}
	\longrightarrow
	H^{0}(\Mbar_{g, 1+n}) \otimes V_0
	\cong V_0 \,,
\end{equation}
is uniquely characterised by a corresponding algebraic structure, that of an F-TFT (recall Definition~\ref{defn:F-TFT}). More precisely, the commutative and associative product on $V_0$ is given by $\Omega^{0}_{0,3}$, and the distinguished element is $w = \Omega^{0}_{1,1}$. From these data, the maps $\Omega^{0}_{g,1+n}$ are given by
\begin{equation}
	\Omega^{0}_{g,1+n}( v_1 \otimes \cdots \otimes v_n)
	=
	v_1 \bcdots v_n \bcdot w^g \,,
\end{equation}
i.e. they coincide with the F-TFT amplitudes defined in equation~\eqref{eq:FTFT}.

\begin{rem}
	There are three small differences in our definition compared to \cite{ABLR23}: they include the existence of a flat unit in the definition of an F-CohFT, we do not; they consider F-CohFTs as maps of the form $V_0^* \otimes V_0^{\otimes n} \rightarrow H^{\textup{even}}(\Mbar_{g,1+n})$, while we have moved the $V_0^*$ to the right by duality resulting in a slightly different (but equivalent) form for the axioms; they label marked points as $1,\dots, n+1$, while we labelled them $0,\dots,n$. 
\end{rem}

\subsection{Known symmetries of F-CohFTs}
\label{subsec:F-Giv}
Inspired by the Givental group action on CohFTs, in \cite{ABLR23} the authors describe how to act on F-CohFTs by means of changes of basis, R-actions, and translations. In this section, we collect the definition of such actions. Before proceeding, recall the definition of $\psi$-classes: for a given $i \in \set{0,1,\dots,n}$, set $\psi_i = c_1(\mathbb{L}_i) \in H^2(\Mbar_{g,1+n})$, where $\mathbb{L}_i$ is the line bundle whose fibre over a point $[C,p_0,\dots,p_n]$ is the cotangent line $T^*_{p_i}C$. 

\paragraph{Change of basis.}
Given $L \in \GL(V_0)$, define
\begin{equation}
	(\hat{L}\Omega)_{g,1+n}
	\coloneqq
	L \circ \Omega_{g,1+n} \circ (L^{-1})^{\otimes n} \,.
\end{equation}
The resulting collection of maps forms an F-CohFT, defining a left group action of $\GL(V_0)$.

\paragraph{R-action.}
Recall that boundary strata of $\Mbar_{g,1+n}$ are described by stable graphs (see for instance \cite{PPZ15}). Among all stable curves, those whose Jacobian variety is compact are called of \textit{compact type}. Boundary strata parametrising curves of compact type are in one-to-one correspondence with the stable graphs that only have separating edges. Such boundary strata are described precisely by the set of stable trees (Definition~\ref{def:stbl:tree}). For a given stable tree $\bm{T} \in \mathbb{T}_{g,1+n}$, the associated closed boundary stratum is $\Mbar_{\bm{T}} = \prod_{v \in \Vert(\bm{T})} \Mbar_{g(v),1 + n(v)}$, which comes with the inclusion map
\begin{equation}
	\xi_{\bm{T}} \colon \Mbar_{\bm{T}} \longrightarrow \Mbar_{g,1+n} \,.
\end{equation}
The glueing map of separating kind from \eqref{eq:glueing:sep} is an example of inclusion of boundary strata, corresponding to the stable tree with a single edge connecting two vertices of genera $h$ and $h'$ satisfying $g = h + h'$ and leaves labelled by $\set{0} \sqcup J$ and $J'$ respectively (satisfying $J \sqcup J' = [n]$):
\begin{equation}\label{eq:glueing:sep:tree}
\begin{tikzpicture}[baseline]
	\draw (0,0) -- (0,-1);
	\node at (0,-1) [right] {\footnotesize$0$};
	\draw (0,0) -- (145:1);
	\draw (0,0) -- (110:1);
	\draw [dotted, thick] (135:.8) -- (119:.8);
	\draw (0,0) -- (45:1);
	\draw (45:1) -- ($(45:1)+(110:1)$);
	\draw (45:1) -- ($(45:1)+(70:1)$);
	\draw [dotted, thick] ($(45:1) + (98:.85)$) -- ($(45:1) + (82:.85)$);

	\node at (127.5:1.2) {\footnotesize$J$};
	\node at ($(45:1) + (0,1.3)$) {\footnotesize$J'$};

	\draw[fill=white] (0,0) circle (.25cm);
	\node at (0,0) {\tiny$h$};

	\draw[fill=white] (45:1) circle (.25cm);
	\node at (45:1) {\tiny$h'$};
\end{tikzpicture}.
\end{equation}
Given $R(u) \in \mathfrak{Giv} \coloneqq \id_{V_0} + u \End(V_0)\bbraket{u}$, called the \emph{F-Givental group}, define
\begin{multline}\label{eq:Raction}
	(\hat{R}\Omega)_{g,1+n}
	\coloneqq
	\sum_{\bm{T} \in \mathbb{T}_{g,1+n}}
		\xi_{\bm{T},*} \Bigg[
			\Bigg(
				\bigotimes_{v \in \Vert(\bm{T})} \Omega_{g(v),1+n(v)}
			\Bigg)
		\\
			\underset{\bm{T}}{\circ}
			\Bigg(
				\bigotimes_{e \in \Edge(\bm{T})} \mathscr{E}_{R}(\psi_{e'},\psi_{e''})
			\Bigg)
			\underset{\bm{T}}{\circ}
			\Bigg(
				R(-\psi_0)
				\otimes
				\bigotimes_{i = 1}^{n} R^{-1}(\psi_i)
			\Bigg)
		\Bigg] \,.
\end{multline}
The operation $\circ_{\bm{T}}$ is the natural composition along edges and leaves of the rooted stable tree, precisely as in \eqref{eq:Bglbv:action}. The edge weight is defined as
\begin{equation}\label{eq:Bglbv:F-Giv}
	\mathscr{E}_{R}(u',u'')
	\coloneqq
	\frac{\id_{V_0} - R^{-1}(u') \circ R(-u'')}{u' + u''} \in \End(V_0)\bbraket{u',u''} \,.
\end{equation}
The inverse $R^{-1}$ is meant with respect to the product structure $\circ$ in $\End(V_0)\bbraket{u}$, that is composition on $\End(V_0)$ and multiplication on $\CC\bbraket{u}$. Since $R(u) = \id_{V_0} + \bigO(u)$, it is always invertible.

We remark that, contrary to the R-action on CohFTs, there is no symmetry factor in \eqref{eq:Raction} since stable trees do not have non-trivial automorphisms.

\begin{thm}[{R-action on F-CohFTs \cite{ABLR23}}]
	The collection of maps $\hat{R}\Omega$ forms an F-CohFT. The resulting action is a left group action of the F-Givental group $(\mathfrak{Giv},\circ)$.
\end{thm}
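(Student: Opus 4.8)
The plan is to verify the two F-CohFT axioms for $\hat R\Omega$ and then the group-action property, by the Givental--Teleman type boundary bookkeeping of e.g.\ \cite{Pan19}, adapted to the rooted stable trees of Definition~\ref{def:stbl:tree}. The $\Sy_n$-equivariance will be immediate from the shape of \eqref{eq:Raction}: permuting the last $n$ marked points of $\Mbar_{g,1+n}$ relabels the leaves $\ell_1,\dots,\ell_n$, hence induces a bijection of $\mathbb{T}_{g,1+n}$ preserving all vertex decorations $\Omega_{g(v),1+n(v)}$, all edge weights $\mathscr{E}_R(\psi_{e'},\psi_{e''})$ (every edge of a rooted tree is canonically oriented towards $\ell_0$, so $\psi_{e'},\psi_{e''}$ are unambiguous), and the leaf decorations $R(-\psi_0)$, $R^{-1}(\psi_i)$; it intertwines $\xi_{\bm{T},*}$ with pushforward along the corresponding automorphism of $\Mbar_{g,1+n}$.

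The heart of the proof is the separating glueing axiom. Write $\omega_{\bm{T}}$ for the bracketed class in \eqref{eq:Raction}, so that $(\hat R\Omega)_{g,1+n}=\sum_{\bm{T}}\xi_{\bm{T},*}\omega_{\bm{T}}$, and note that the glueing morphism $\gl$ of \eqref{eq:glueing:sep} is $\xi_{\bm{T}_0}$ for the one-edge separating tree $\bm{T}_0$ of \eqref{eq:glueing:sep:tree}. I would pull \eqref{eq:Raction} back along $\gl$ term by term, using the standard formula for the pullback of a boundary pushforward (cf.\ \cite{PPZ15}): for a tree $\bm{T}\in\mathbb{T}_{g,1+n}$ whose stratum meets the divisor $D_0=\operatorname{Im}\gl$, exactly one of two situations occurs. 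Either (I) $\bm{T}$ already carries the (then unique) edge $e_0$ whose removal splits the leaves into $\{\ell_0\}\sqcup\{\ell_j\}_{j\in J}$ of total genus $h$ and $\{\ell_{j'}\}_{j'\in J'}$ of total genus $h'$, so that $\xi_{\bm{T}}$ factors through $\gl$ and the self-intersection formula multiplies $\omega_{\bm{T}}$ by the excess factor $-\psi_{e_0'}-\psi_{e_0''}$, which meets the weight $\mathscr{E}_R(\psi_{e_0'},\psi_{e_0''})$ sitting on $e_0$; or (II) $\bm{T}$ has no such edge, the intersection with $D_0$ is transverse, and the pullback becomes a sum over pairs (a vertex $v$ of $\bm{T}$, a compatible splitting of $v$ into two vertices joined by a new separating edge $e_0$), the $v$-term being rewritten by the $\Omega$-axiom \eqref{eq:glueing:axiom}. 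Trees meeting $D_0$ in neither way contribute zero. On the other side I would expand the composition $(\hat R\Omega)_{h,1+(1+|J|)}\bigl((\hat R\Omega)_{h',1+|J'|}(v_{J'})\otimes v_J\bigr)$ over pairs $(\bm{T}_L,\bm{T}_R)\in\mathbb{T}_{h,1+(1+|J|)}\times\mathbb{T}_{h',1+|J'|}$; each such pair is the same datum as a stable tree $\bm{T}\in\mathbb{T}_{g,1+n}$ with a marked separating edge $e_0$, whose weight agrees with $\omega_{\bm{T}}$ away from $e_0$ and carries on $e_0$ the operator $R^{-1}(\psi_{e_0'})\circ R(-\psi_{e_0''})$, namely the composite of the $R^{-1}(\psi)$ decorating the node-input of the left factor with the $R(-\psi)$ decorating the distinguished output of the right factor.

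The two expansions will then coincide term by term by the defining identity $R^{-1}(u')\circ R(-u'')=\id_{V_0}-(u'+u'')\,\mathscr{E}_R(u',u'')$, read off \eqref{eq:Bglbv:F-Giv}: splitting the $e_0$-operator on the composition side into these two summands, the $\id_{V_0}$-piece reglues $\bm{T}_L$ and $\bm{T}_R$ at $e_0$ with no $\psi$-class, i.e.\ it simply composes the $\Omega$-decorations of the two vertices adjacent to $e_0$, which by \eqref{eq:glueing:axiom} read backwards is precisely the case-(II) vertex-split contribution; while the $-(u'+u'')\mathscr{E}_R(u',u'')$-piece reproduces exactly the case-(I) contribution, under the evident bijection between trees with a marked separating edge and trees whose $e_0$-contraction has a distinguished vertex equipped with a compatible splitting. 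Summing over all trees yields $\gl^*(\hat R\Omega)_{g,1+n}=(\hat R\Omega)_{h,1+(1+|J|)}\bigl((\hat R\Omega)_{h',1+|J'|}(v_{J'})\otimes v_J\bigr)$, which is \eqref{eq:glueing:axiom} for $\hat R\Omega$.

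For the group action, $\widehat{\id_{V_0}}\,\Omega=\Omega$ since $\mathscr{E}_{\id_{V_0}}=0$ kills every tree with an edge while the one-vertex tree carries only identity decorations; for $\widehat{R_1}\circ\widehat{R_2}=\widehat{R_1\circ R_2}$ I would expand $\widehat{R_1}(\widehat{R_2}\Omega)$ as a sum over trees of trees, collapse it to a single sum over stable trees whose edges are labelled as $R_1$- or $R_2$-edges and whose leaves carry both decorations, and then invoke the cocycle-type relation for the edge weights (expressing $\mathscr{E}_{R_1\circ R_2}$ in terms of the data of $R_1$ and $R_2$) together with the compatible way the leaf decorations compose under the group law; associativity of the action then follows from associativity of $\circ$ on $\mathfrak{Giv}$, and invertibility of every $R\in\mathfrak{Giv}$ makes it a left group action. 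I expect the main obstacle to be organizational rather than conceptual: keeping the tree combinatorics, the canonical orientation of edges, and the $\psi$-class corrections consistently aligned through the boundary-pullback step and its analogue in the composition step. In fact the bookkeeping is lighter than for ordinary CohFTs, since stable trees have no non-trivial automorphisms, there are no non-separating edges, and, only the separating axiom being required and edges coming pre-oriented, no symplectic (or any) constraint on $R$ enters.
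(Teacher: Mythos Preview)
The paper does not supply its own proof of this theorem: it is stated with a citation to \cite{ABLR23} and no argument is given. So there is nothing in the paper to compare your proposal against line by line.

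That said, your sketch is the expected one and matches the strategy of \cite{ABLR23} (and its prototype in \cite{Pan19} for ordinary CohFTs): pull back \eqref{eq:Raction} along the separating glueing map, split the resulting sum over stable trees according to whether the separating edge $e_0$ is already present in $\bm{T}$ (excess contribution $-\psi_{e_0'}-\psi_{e_0''}$ against $\mathscr{E}_R$) or is created by splitting a vertex (use the F-CohFT axiom for $\Omega$), and match the two cases against the two summands of the identity $R^{-1}(u')\circ R(-u'')=\id_{V_0}-(u'+u'')\,\mathscr{E}_R(u',u'')$. Your observation that the bookkeeping is lighter here---stable trees have no non-trivial automorphisms, edges are canonically oriented towards the root so $(\psi_{e'},\psi_{e''})$ are unambiguous, and no symplectic condition on $R$ is needed since only separating glueings are in play---is exactly the point of the F-setting. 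The group-action paragraph is a bit terse (the ``cocycle-type relation'' for $\mathscr{E}_{R_1\circ R_2}$ deserves to be written out), but the outline is correct.
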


\paragraph{Translation.}
Given $T(u) \in u^2 V_0\bbraket{u}$, define
\begin{equation}\label{eq:translation:F-CohFT}
	(\hat{T}\Omega)_{g,1+n}
	\coloneqq
	\sum_{m \geq 0}
		\frac{1}{m!} \, 
		\fg_{m,*} \Big[
			\Omega_{g,1+n + m}
			\big(
				\id_{V_0^{\otimes n}} \otimes T(\psi_{n+1}) \otimes \dots \otimes T(\psi_{n+m})
			\big) \Big] \,,
\end{equation}
where $\fg_{m} \colon \Mbar_{g,1+n+m} \rightarrow \Mbar_{g,1+n}$ is the morphism forgetting the last $m$ marked points (and stabilising whenever necessary). For each stable $(g,1+n)$, the sum in equation~\eqref{eq:translation:F-CohFT} truncates to a finite sum as $T(u) = \bigO(u^2)$ and for cohomological degree reasons.

\begin{thm}[{Translation of F-CohFTs \cite{ABLR23}}] \label{thm:translation:F-CohFT}
	The collection of maps $\hat{T} \Omega$ forms an F-CohFT. The resulting action is an abelian group action of $(u^2 V_0\bbraket{u},+)$. Besides, suppose that $\Omega$ is an F-CohFT with flat unit $\mathrm{e}$. Given $R(u) \in \mathfrak{Giv}$, set 
	\begin{equation}
		T'_{R}(u)
		\coloneqq
		u\big(R(u) - \id_{V_0}\big)\mathrm{e}
		\qquad \text{and} \qquad
		T''_{R}(u)
		\coloneqq
		u\big(\id_{V_0} - R^{-1}(u)\big)\mathrm{e} \,.	
	\end{equation}
	Then $\hat{T}'_{R} \hat{R} \Omega$ and $\hat{R} \hat{T}''_{R} \Omega$ coincide, resulting in an F-CohFT with flat unit $\mathrm{e}$.
\end{thm}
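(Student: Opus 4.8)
The plan is to prove the two assertions in turn: that $\hat T\Omega$ satisfies the F-CohFT axioms and that $T\mapsto\hat T$ is an abelian action, and then the unit-preserving identity $\hat T'_{R}\hat R\Omega=\hat R\hat T''_{R}\Omega$. The $\Sy_n$-equivariance of $\hat T\Omega$ is immediate from \eqref{eq:translation:F-CohFT}, since $\fg_m$ forgets only the last $m$ marked points and the insertion $T(\psi_{n+1})\otimes\cdots\otimes T(\psi_{n+m})$ is $\Sy_m$-symmetric. For the separating glueing axiom I would use the compatibility of the forgetful morphisms with the glueing morphism $\gl$ of separating kind: the fibre product of $\fg_m\colon\Mbar_{g,1+n+m}\to\Mbar_{g,1+n}$ with $\gl$, after resolving the excess coming from new points colliding with the node or with one another, decomposes as a disjoint union $\bigsqcup_{m_1+m_2=m}\bigsqcup\Mbar_{h,1+(1+|J|)+m_1}\times\Mbar_{h',1+|J'|+m_2}$ indexed by the distributions of the $m$ new points between the two components. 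Base change then gives $\gl^{*}\fg_{m,*}=\sum_{m_1+m_2=m}\binom{m}{m_1}(\fg_{m_1,*}\otimes\fg_{m_2,*})\widetilde\gl^{*}$, and since the new points lie away from the glued node one has $\widetilde\gl^{*}\psi_{n+j}=\psi_{n+j}$; applying the glueing axiom \eqref{eq:glueing:axiom} for $\Omega$ on $\Mbar_{g,1+n+m}$ and resumming with $\tfrac1{m!}\binom{m}{m_1}=\tfrac1{m_1!m_2!}$ reproduces $(\hat T\Omega)_{h,1+(1+|J|)}\big((\hat T\Omega)_{h',1+|J'|}(v_{J'})\otimes v_J\big)$.

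For the group law $\widehat{T_1+T_2}=\hat T_1\hat T_2$ I would iterate \eqref{eq:translation:F-CohFT}: in $(\hat T_1(\hat T_2\Omega))_{g,1+n}$ the inner step forgets $m_2$ points with $T_2(\psi)$-insertions and the outer step forgets $m_1$ points with $T_1(\psi)$-insertions, and after the inner pull-back the classes $T_1(\psi_{n+j})$ become $T_1(\fg_{m_2}^{*}\psi_{n+j})$. The key point is that $T_1\in u^2V_0\bbraket u$: the correction $\fg_{m_2}^{*}\psi_{n+j}-\psi_{n+j}$ is supported on boundary divisors where $n+j$ lies on a rational tail together with some of the newer points, and $\bigO(u^2)$-divisibility together with the projection formula (using $\fg_{m_1,*}\circ\fg_{m_2,*}=\fg_{m_1+m_2,*}$) makes these corrections drop out after pushforward. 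What remains is $\sum_{m_1,m_2}\tfrac1{m_1!m_2!}\fg_{M,*}\big[\Omega_{g,1+n+M}(\id^{\otimes n}\otimes T_1(\psi)^{\otimes m_1}\otimes T_2(\psi)^{\otimes m_2})\big]$ with $M=m_1+m_2$; symmetrising the insertions and using the multinomial identity turns this into $\sum_M\tfrac1{M!}\fg_{M,*}\big[\Omega_{g,1+n+M}(\id^{\otimes n}\otimes(T_1+T_2)(\psi)^{\otimes M})\big]=\widehat{T_1+T_2}\Omega$. Commutativity of the action, the identity $T=0$, and inverses $-T$ are then clear.

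For the final identity I would expand both $\hat T'_{R}\hat R\Omega$ and $\hat R\hat T''_{R}\Omega$ as sums over stable trees $\bm T\in\mathbb T_{g,1+n}$ carrying the edge weights $\mathscr E_{R}(\psi_{e'},\psi_{e''})$, the root and leaf weights $R(-\psi_0)$ and $R^{-1}(\psi_i)$, and a number of extra ``translation leaves'' decorated by $T'_{R}(\psi)$ or $T''_{R}(\psi)$ that are then forgotten. Since $T'_{R}(u)=u(R(u)-\id_{V_0})\mathrm{e}$ and $T''_{R}(u)=u(\id_{V_0}-R^{-1}(u))\mathrm{e}$, each translation leaf is the flat unit $\mathrm{e}$ times a $\psi$-dependent scalar series, and by the flat-unit axiom such a leaf is a forgetful pull-back. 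The matching would rest on three local identities: $R(-u)\mathrm{e}=\mathrm{e}-T'_{R}(-u)/u$ and $R^{-1}(u)\mathrm{e}=\mathrm{e}-T''_{R}(u)/u$; the expansion of $\mathscr E_{R}(u',u'')\mathrm{e}$ obtained by substituting these into \eqref{eq:Bglbv:F-Giv}; and, exactly as in the group-law argument, the conversion of a forgotten translation leaf into a modification of an adjacent leaf, root, or edge weight. This should exhibit the extra $T'_{R}$-leaves dressing $\hat R\Omega$ as term-by-term equal to the $\hat R$-dressing of the $T''_{R}$-translated theory. That the common F-CohFT carries the flat unit $\mathrm{e}$ I would then check directly from the defining relations of a flat unit: by the choice of $T'_{R}$, $T''_{R}$ the translation leaves cancel precisely the terms by which $\hat R$ alone would violate $\fg^{*}\Omega=\Omega(\,\cdot\,\otimes\mathrm{e})$ and $\Omega_{0,3}(\,\cdot\,\otimes\mathrm{e})=\id$.

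The hard part will be the term-by-term matching in this last step: one must control, for every stable tree of the R-action, arbitrarily many forgotten translation leaves attached at each vertex, the behaviour of their $\psi$-classes under iterated forgetful maps, and check that the accumulated corrections assemble into exactly the weight modifications realising $\hat T'_{R}\hat R=\hat R\hat T''_{R}$. The vanishing of $R(u)\mathrm{e}-\mathrm{e}$ and $\mathrm{e}-R^{-1}(u)\mathrm{e}$ to first order in $u$ — equivalently $T'_{R},T''_{R}\in u^2V_0\bbraket u$ — is what keeps these corrections finite and the resummations legitimate, and the base-change statement used in the first part, although standard, should be invoked with care about the excess loci.
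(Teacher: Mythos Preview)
The paper does not give its own proof of this theorem: it is stated with attribution to \cite{ABLR23} and no argument follows. There is therefore nothing in the paper to compare your proposal against.

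On its own merits, your outline is the standard one and is essentially sound. For the glueing axiom, the fibre product of $\fg_m$ with a separating glueing map is genuinely the disjoint union over distributions of the $m$ extra points; there is no excess class here (the two maps are transverse), so your base-change identity holds on the nose. For the group law, your vanishing argument can be made precise: the correction $\fg_{m_2}^{*}T_1(\psi_{n+j})-T_1(\psi_{n+j})$ is supported on boundary divisors $D_S$ where the point $n+j$ bubbles off with a non-empty subset $S$ of the $T_2$-decorated points onto a rational tail; restricted to $D_S$, each $T_2(\psi_{n+m_1+l})$ with $l\in S$ contributes $\psi$-degree at least $2$ on the $\Mbar_{0,|S|+2}$ factor, so the total degree there is at least $2|S|>|S|-1=\dim\Mbar_{0,|S|+2}$, and the correction vanishes already before pushforward. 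This is slightly cleaner than invoking the projection formula.

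Your treatment of the unit-preserving identity $\hat T'_{R}\hat R\Omega=\hat R\hat T''_{R}\Omega$ is only a sketch, as you acknowledge. The matching you propose is the right strategy, but the honest version of this argument requires tracking, at each vertex of each stable tree, how forgotten $\mathrm{e}$-leaves interact with the comparison $\fg^{*}\psi_i=\psi_i-D_i$, and showing that the resulting boundary contributions reassemble into the edge-weight $\mathscr{E}_R$ and leaf-weight $R^{\pm 1}$ modifications. This is carried out in \cite{ABLR23}; if you intend a self-contained proof you will need to write this step out, as the ``should exhibit'' is doing a lot of work.
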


\subsection{New symmetries of F-CohFTs: the tick}
\label{subsec:new:symm}
As F-CohFTs are subjected to a less restrictive set of axioms compared to CohFTs, one can expect that they admit a larger group of symmetries. We propose an additional action, the tick, that preserves F-CohFTs. Contrarily to the R-action, this is a linear action.

Consider the abelian group (for the addition)
\begin{equation}\label{eq:tick:space} 
	\mathfrak{tick}
	\coloneqq
	\prod_{k \geq 2} \big(H^{\textup{even}}(\Mbar_{0,k}) \otimes V_0\bbraket{u}^{\otimes k}\big)^{\Sy_k} \,.
\end{equation}
Here we took the invariants under the action of the symmetric group $\Sy_k$ by simultaneous permutation of the $V_0$ factors and the marked points on the moduli space side, and the unstable summand $k=2$ should be understood as $\Sym{2}{V_0\bbraket{u}}$. Given an F-CohFT $\Omega$ on $V_0$ and an element $\Sha \in \mathfrak{tick}$ written as
\begin{equation}
	\Sha = \sum_{k \geq 2} \Sha_{k}(u_1,\ldots,u_k) \,,
\end{equation}
we define
\begin{multline}\label{eq:tick:action} 
	(\hat{\Sha}\Omega)_{g,1+n}
	\coloneqq
	\sum_{m \geq 0} \frac{1}{m!}
	\sum_{\bm{k} \in \mathbb{Z}_{\geq 2}^m}
		\xi_{\bm{\Gamma}_{\bm{k}},*}
		\Bigg[
			\Omega_{g - |\bm{k}| + m,1 + n + |\bm{k}|}
			\bigg(
				\id_{V_0^{\otimes n}} \\
				\otimes
				\bigotimes_{\ell = 1}^{m}
					\Sha_{k_{\ell}}
					\bigl(
						\psi_{(\ell,1)},\ldots,\psi_{(\ell,k_{\ell})}
					\bigr)
			\bigg)
		\Bigg] \,,
\end{multline}
where the $m = 0$ term is just $\Omega_{g,1+n}$ and we have denoted $|\bm{c}| \coloneqq c_1 + \cdots + c_m$ for an $m$-tuple $\bm{c} = (c_1, \dots, c_m)$ of positive integers. The map $\xi_{\bm{\Gamma}} \colon \Mbar_{\bm{\Gamma}} \to \Mbar_{g,1+n}$ is again the inclusion of the closed boundary stratum defined by the stable graph $\bm{\Gamma}$, and $\bm{\Gamma}_{\bm{k}}$ is the stable graph defined as follows.
\begin{itemize}
	\item It has a central vertex of genus $g - |\bm{k}| + m$ and valency $1 + n + |\bm{k}|$; the half-edges consist of all the leaves, labelled by $0,1,\dots,n$, and $|\bm{k}|$ additional half-edges.

	\item It has $m$ additional vertices, called tick vertices, having genus $0$ and valencies $\bm{k} = (k_1,\dots,k_m)$; all half-edges are connected to the central vertex. In the unstable case $k_i = 2$, this should be amended: there is no vertex but rather a loop attached to the central vertex.
	
	\item The half-edges connecting the central vertex to the $\ell$-th tick vertex (with corresponding $\psi$-classes appearing in \eqref{eq:tick:action}) are labelled as $(\ell,1), (\ell,2), \dots, (\ell,k_{\ell})$.
\end{itemize}
The stable graph and the convention for the tick vertices are depicted below.
\begin{equation}\label{eq:Gamma:tick}
\begin{tikzpicture}[baseline, scale=.6]
		\node at (-4.2,0) {$\bm{\Gamma}_{\bm{h},\bm{k}} =$};

		\draw[rounded corners] (-3,-1) -- (3,-1) -- (3,1) -- (-3,1) -- cycle;
		\node at (0,-.25) {\small $g-|\bm{k}|+m$};

		\draw (0,-1) -- (0,-2);
		\node at (0,-2) [right] {\scriptsize$0$};

		\draw (-2.7,1) -- (-2.7,2);
		\node at (-2,1.5) {\small $\cdots$};
		\draw (-1.3,1) -- (-1.3,2);
		\draw [decorate,decoration={calligraphic brace,amplitude=3pt},line width=1.25pt,xshift=0pt,yshift=4pt] (-2.8,2) -- (-1.2,2) node [black,midway,yshift=9pt] {\scriptsize $n$};

		\draw (.1,2) to[out=-150,in=90] (-.4,1);
		\draw (.1,2) to[out=-120,in=90] (-.25,1);
		\node at (.2,1.25) {\tiny$\cdots$};
		\draw (.1,2) to[out=-30,in=90] (.6,1);
		\draw[fill=white] (.1,2) circle (.45cm);
		\node at (.1,2) {\tiny$0$};

		\draw [decorate,decoration={calligraphic brace,mirror,amplitude=3pt},line width=1.25pt,xshift=0pt,yshift=-3pt] (-.5,1) -- (.7,1) node [black,midway,yshift=-7pt] {\tiny$k_1$};

		\node at (1.2,2) {\small $\cdots$};
		
		\begin{scope}[xshift=2.1cm]
			\draw (.1,2) to[out=-150,in=90] (-.4,1);
			\draw (.1,2) to[out=-120,in=90] (-.25,1);
			\node at (.2,1.25) {\tiny$\cdots$};
			\draw (.1,2) to[out=-30,in=90] (.6,1);
			\draw[fill=white] (.1,2) circle (.45cm);
			\node at (.1,2) {\tiny$0$};

			\draw [decorate,decoration={calligraphic brace,mirror,amplitude=3pt},line width=1.25pt,xshift=0pt,yshift=-3pt] (-.5,1) -- (.7,1) node [black,midway,yshift=-7pt] {\tiny$k_{\!m}$};
		\end{scope}
		
		\begin{scope}[xshift=8cm,scale=1.5]
			\draw (.1,.5) to[out=-150,in=90] (-.4,-.5);
			\draw (.1,.5) to[out=-120,in=90] (-.25,-.5);
			\node at (.2,-.25) {\small$\cdots$};
			\draw (.1,.5) to[out=-30,in=90] (.6,-.5);
			\draw[fill=white] (.1,.5) circle (.45cm);
			\node at (.1,.5) {\small$0$};

			\node at (-.4,-.5) [left] {\tiny $(\ell,1)$};
			\node at (-.25,-.5) [below] {\tiny $(\ell,2)$};
			\node at (.6,-.5) [below] {\tiny $(\ell,k_{\!\ell})$};
		\end{scope}
\end{tikzpicture}
\end{equation}

\begin{thm}\label{thm:tick}
	The collection of maps $\hat{\Sha} \Omega$ forms an F-CohFT. The resulting action is an abelian group action of $(\mathfrak{tick},+)$. Besides, if $\Omega$ has a flat unit, so does $\hat{\Sha}\Omega$.
\end{thm}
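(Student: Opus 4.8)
The plan is to verify, in turn, that $\hat{\Sha}\Omega$ satisfies the two F-CohFT axioms, that $\Sha \mapsto \hat{\Sha}$ is a left action of the abelian group $(\mathfrak{tick},+)$, and finally that the flat unit is preserved. First, \eqref{eq:tick:action} is a finite sum: since $k_\ell \geq 2$ one has $|\bm{k}| - m = \sum_\ell (k_\ell - 1) \geq m$, so the stability requirement $g - |\bm{k}| + m \geq 0$ forces $m \leq g$ and, for each $m$, leaves only finitely many $\bm{k} \in \ZZ_{\geq 2}^m$ (cohomological degree bounds do the rest). The $\Sy_n$-equivariance of $\hat{\Sha}\Omega$ then follows at once from the $\Sy_n$-equivariance of $\Omega$ restricted to the first $n$ of the central vertex's non-distinguished marked points, the $\Sy_{k_\ell}$-invariance of each $\Sha_{k_\ell}$, and the $\tfrac{1}{m!}$ symmetrisation over tick vertices, using that a permutation of the last $n$ marked points of $\Mbar_{g,1+n}$ lifts compatibly through the strata maps $\xi_{\bm{\Gamma}_{\bm{k}}}$.

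The heart of the argument is the separating glueing axiom \eqref{eq:glueing:axiom}. Writing $\gl = \xi_{\bm{\Delta}}$ for the one-edge stable graph $\bm{\Delta}$ whose two vertices carry genera $h$ (with leaves $\{0\}\sqcup J$) and $h'$ (with leaves $J'$), I would pull back the defining formula of $(\hat{\Sha}\Omega)_{g,1+n}$ along $\gl$ and invoke the standard intersection formula for tautological boundary strata,
\[
	\xi_{\bm{\Delta}}^* \, \xi_{\bm{\Gamma}_{\bm{k}},*}[\beta]
	=
	\sum_{\bm{\Pi}} \xi_{\bm{\Pi} \to \bm{\Delta},*}\bigl[ \xi_{\bm{\Pi} \to \bm{\Gamma}_{\bm{k}}}^*(\beta) \cdot c_{\bm{\Pi}} \bigr],
\]
summed over common degenerations $\bm{\Pi}$, with excess correction $c_{\bm{\Pi}}$ a product of $-(\psi'+\psi'')$ over the edges of $\bm{\Pi}$ contracted by \emph{both} maps. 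The crucial point — where the combinatorics of non-compact-type strata enters — is that the edge of $\bm{\Pi}$ lying over the edge of $\bm{\Delta}$ is separating in $\bm{\Pi}$, whereas every edge of $\bm{\Gamma}_{\bm{k}}$ is non-separating (the $k_\ell = 2$ loops trivially; removing one of the $k_\ell\geq 3$ edges to a tick vertex leaves $k_\ell-1\geq 2$ of them). Hence this edge cannot lie inside a tick vertex and must sit inside the central vertex, splitting it into an $h$-side and an $h'$-side to which the tick vertices distribute one way or the other; counting edges gives $|\Edge(\bm{\Pi})| = |\Edge(\bm{\Gamma}_{\bm{k}})| + 1 = |\Edge(\bm{\Delta})| + |\Edge(\bm{\Gamma}_{\bm{k}})|$, so the intersection is dimensionally transverse and $c_{\bm{\Pi}} = 1$. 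The common degenerations are then in bijection with splittings $\bm{k} = \bm{k}'\sqcup\bm{k}''$, $m = m'+m''$, the two new central genera $h'-|\bm{k}'|+m'\geq 0$ and $h-|\bm{k}''|+m''\geq 0$ being forced and summing to $g-|\bm{k}|+m$ precisely because $h+h'=g$. On each $\bm{\Pi}$, $\xi_{\bm{\Pi}\to\bm{\Gamma}_{\bm{k}}}^*$ applied to the central class is the pullback of $\Omega$ along a separating glueing map, to which I apply the glueing axiom of $\Omega$ itself; while $\xi_{\bm{\Pi}\to\bm{\Delta},*}$ factors as the product of the tick-graph strata maps for the $h$- and $h'$-sides, the extra edge realising the operadic plugging of the $\Omega_{h'}$-built class into the first input slot of the $\Omega_h$-built class. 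Resumming over $m,\bm{k}$ and the splittings, $\tfrac{1}{m!}$ distributes as $\tfrac{1}{m'!\,m''!}$ through the $\binom{m}{m'}$ choices of which tick vertices go to which side, and one lands exactly on $(\hat{\Sha}\Omega)_{h,1+(1+|J|)}\bigl((\hat{\Sha}\Omega)_{h',1+|J'|}(v_{J'})\otimes v_J\bigr)$.

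For the group-action statement, $\hat{\Sha}_2\Omega$ is an F-CohFT by the above, and I would compute $\hat{\Sha}_1(\hat{\Sha}_2\Omega)$ by substituting the expansion of $\hat{\Sha}_2\Omega$ inside that of $\hat{\Sha}_1$: the composition of pushforwards along the nested tick graphs equals the pushforward along the single tick graph with one $\Omega$-central vertex (of genus $g-|\bm{k}^{(1)}|-|\bm{k}^{(2)}|+m_1+m_2$) carrying $m_1$ $\Sha_1$-ticks and $m_2$ $\Sha_2$-ticks. On the other hand, expanding $\bigotimes_\ell\bigl((\Sha_1)_{k_\ell}+(\Sha_2)_{k_\ell}\bigr)$ in $\widehat{\Sha_1+\Sha_2}\,\Omega$ and symmetrising by $\Sy$-equivariance of $\Omega$, the factor $\tfrac{1}{m!}$ again distributes binomially into $\tfrac{1}{m_1!\,m_2!}$, yielding the same sum; together with $\hat{0}=\id$ this shows $\Sha\mapsto\hat{\Sha}$ is a left action of $(\mathfrak{tick},+)$, abelian since the group is.

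Finally, for the flat unit: $(\hat{\Sha}\Omega)_{0,3} = \Omega_{0,3}$ because $0-|\bm{k}|+m\geq 0$ forces $m=0$, so $(\hat{\Sha}\Omega)_{0,3}(v\otimes\mathrm{e}) = \Omega_{0,3}(v\otimes\mathrm{e}) = v$; and for $\fg^*(\hat{\Sha}\Omega)_{g,1+n} = (\hat{\Sha}\Omega)_{g,1+(n+1)}(\id^{\otimes n}\otimes\mathrm{e})$ I would pull back \eqref{eq:tick:action} along $\fg$ using the forgetful-pullback formula $\fg^*\xi_{\bm{\Gamma}_{\bm{k}},*}[\beta] = \sum_v \xi_{(\bm{\Gamma}_{\bm{k}})_{+v},*}[\pi_v^*\beta]$ together with the comparison $\pi_v^*\psi_i = \psi_i - D_{i,n+1}$ at the points of the $v$-factor carrying the tick $\psi$-classes; the term with the new leaf on the central vertex, after the flat-unit axiom of $\Omega$ is applied there and the $D_{i,n+1}$-corrections are set aside, reproduces $(\hat{\Sha}\Omega)_{g,1+(n+1)}(\id^{\otimes n}\otimes\mathrm{e})$, while the remaining contributions (new point on a tick vertex, plus the set-aside corrections) cancel in pairs via the relations $\psi_i\,D_{i,\bullet}=0$ and the recursive structure of the tick graphs, in close parallel with the flat-unit verification for the R-action in \cite{ABLR23}. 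The step I expect to be the main obstacle is the identification of the common degenerations in the glueing axiom and the vanishing of their excess corrections — exactly where one must exploit that $\bm{\Gamma}_{\bm{k}}$ has no separating edge — together with the careful matching of the symmetry factors $\tfrac{1}{m!}$ versus $\tfrac{1}{m'!\,m''!}$ and of the input orderings, so as to land precisely on the operadic composition in \eqref{eq:glueing:axiom}; the flat-unit cancellation is a secondary point of the same combinatorial flavour.
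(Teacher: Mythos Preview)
Your treatment of the separating-glueing axiom matches the paper's: both invoke the excess-intersection formula, identify the relevant common degenerations as splittings of the central vertex with the ticks distributed between the two halves, and observe that the excess class is trivial. The paper isolates two combinatorial conditions --- the separating edge can neither coincide with a tick-edge nor split a tick vertex --- whereas you argue the first via ``every edge of $\bm{\Gamma}_{\bm{k}}$ is non-separating'' but leave the second implicit. It is easy to fill: once $|\Edge(\bm{\Pi})|=|\Edge(\bm{\Gamma}_{\bm{k}})|+1$, the extra edge splits a single vertex of $\bm{\Gamma}_{\bm{k}}$; splitting a genus-$0$ leafless tick vertex either yields an unstable piece (all $k_\ell$ tick-edges on one side) or a non-separating edge (both sides still connected through the central vertex), so only the central vertex can be split. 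Your group-action verification is more explicit than the paper's and is correct.

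The flat-unit argument, however, has a genuine gap --- and the claim itself, asserted both in your sketch and in the paper's one-line proof, appears to be false. Take $V_0=\CC$, $\Omega$ the trivial CohFT, and $\Sha$ with only $\Sha_2=u_1+u_2$. Then $(\hat{\Sha}\Omega)_{1,1}=1$, since the sole $m=1$ loop term lives on $\Mbar_{0,3}$ where all $\psi$-classes vanish; hence $\fg^*(\hat{\Sha}\Omega)_{1,1}=1\in H^{\bullet}(\Mbar_{1,2})$. On the other hand $(\hat{\Sha}\Omega)_{1,2}(\mathrm{e}) = 1 + \xi_{*}\bigl[\psi_{(1,1)}+\psi_{(1,2)}\bigr]$ with $\xi\colon\Mbar_{0,4}\to\Mbar_{1,2}$ the loop glueing, and $\int_{\Mbar_{1,2}}\xi_{*}\psi_{(1,1)} = \int_{\Mbar_{0,4}}\psi_{(1,1)}=1$, so the two sides differ in top cohomological degree. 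The cancellation you propose --- ``new leaf on a tick vertex'' terms against $\psi$-comparison corrections $D_{i,n+1}$ --- is unavailable here precisely because the $k=2$ tick is a bare loop with no tick vertex on which the forgotten leaf could land. The F-CohFT and group-action statements survive; the flat-unit clause does not hold in the stated generality.
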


\begin{proof}
	The $\Sy_n$-equivariance follows from the symmetry of $\Omega$ and the definition of the tick action. Preservation of the flat unit axiom is straightforward, as tick vertices have no leaves. That this defines an abelian group action on collections of maps is clear. The non-trivial claim is that it respects the glueing axiom of separating kind.

	Fix $v_{[n]} = v_1 \otimes \cdots \otimes v_n \in V_0^{\otimes n}$, and let $\gl = \xi_{\bm{T}}$ be a glueing morphism of separating kind corresponding to the stable tree from \eqref{eq:glueing:sep:tree}, splitting the genus as $g = h + h'$ and the marked points (excluding the root) as $[n] = J \sqcup J'$. We need to re-express
	\begin{equation}\label{eq:push:pull} 
		\xi_{\bm{T}}^* \,
		\xi_{\bm{\Gamma}_{\bm{k}},*}
		\left[
			\Omega_{g_0,1 + n + |\boldsymbol{k}|}\bigg(v_{[n]} \otimes \bigotimes_{\ell = 1}^{m} \Sha_{k_{\ell}}\bigg)
		\right] 
	\end{equation}
	in terms of two classes $\Omega$. Here $g_0 = g - |\bm{k}| + m$ is the genus of the central vertex and the $\psi$-classes in $\Sha$ are omitted from the notation. A strategy would be to move the pullback to the right of the pushforward, as we will then be able to use the F-CohFT axioms for $\Omega$. To this end, we have to understand the intersection of the closed boundary strata corresponding to $\bm{T}$ and $\bm{\Gamma}_{\bm{k}}$. This is the disjoint union of boundary strata corresponding to stable graphs $\bm{\Gamma}$ which map to both $\bm{T}$ and $\bm{\Gamma}_{\bm{k}}$ after contraction of edges. The two inclusion morphisms at the level of the corresponding moduli spaces are denoted $\eta_{\bm{T}}$ and $\eta_{\bm{\Gamma}_{\bm{k}}}$ (we omit the dependence on $\bm{\Gamma}$) and they are such that the following diagram commutes.
	\begin{equation}
	\begin{tikzcd}
		{\Mbar_{\bm{\Gamma}}} & {\Mbar_{\bm{T}}} \\
		{\Mbar_{\bm{\Gamma}_{\bm{k}}}} & {\Mbar_{g,1+n}}
		\arrow["{\eta_{\bm{\Gamma}_{\bm{k}}}}"', from=1-1, to=2-1]
		\arrow["{\xi_{\bm{\Gamma}_{\bm{k}}}}"', from=2-1, to=2-2]
		\arrow["{\xi_{\bm{T}}}", from=1-2, to=2-2]
		\arrow["{\eta_{\bm{T}}}", from=1-1, to=1-2]
	\end{tikzcd}
	\end{equation}
	It is important to notice that the pre-image (under the contraction map) in $\bm{\Gamma}$ of the unique edge in $\bm{T}$ must be separating. Together with the condition that the tick vertices have genus $0$, this implies that:
	\begin{enumerate}[label=(\roman*)]
		\item\label{tick:1} it cannot coincide with any pre-image of an edge in $\bm{\Gamma}_{\bm{k}}$ between the central vertex and one of the tick vertices;

		\item\label{tick:2} it cannot split one of the tick vertices.
	\end{enumerate}
	Therefore, the set of possible stable graphs $\bm{\Gamma}$ consists of exactly those stable graphs obtained from $\bm{\Gamma}_{\bm{k}}$ by separating the central vertex into two vertices $\nu$ and $\nu'$ and distributing among them the genus and the tick vertices:
	\begin{equation}\label{eq:gamma}
	\begin{tikzpicture}[baseline, scale=.6]
			\node at (-4,0) {$\bm{\Gamma} =$};

			\draw (0,-1) -- (0,-2);
			\node at (0,-2) [right] {\scriptsize$0$};

			\draw (2,0) -- (5,3);

			\node at (3.3,-1.2) {\small$\nu\vphantom{\nu'}$};

			\draw[rounded corners,fill=white] (-3,-1) -- (3,-1) -- (3,1) -- (-3,1) -- cycle;
			\node at (0,0) {\small $h-|\bm{k}_{L}|+|L|$};

			\draw (-2.7,1) -- (-2.7,2);
			\node at (-2,1.5) {\small $\cdots$};
			\draw (-1.3,1) -- (-1.3,2);
			\node at (-2,2.5) {\scriptsize $J$};

			\draw (.1,2) to[out=-150,in=90] (-.4,1);
			\draw (.1,2) to[out=-120,in=90] (-.25,1);
			\node at (.2,1.25) {\tiny$\cdots$};
			\draw (.1,2) to[out=-30,in=90] (.6,1);
			\draw[fill=white] (.1,2) circle (.45cm);

			\node at (1.2,2) {\small $\cdots$};
			
			\begin{scope}[xshift=2.1cm]
				\draw (.1,2) to[out=-150,in=90] (-.4,1);
				\draw (.1,2) to[out=-120,in=90] (-.25,1);
				\node at (.2,1.25) {\tiny$\cdots$};
				\draw (.1,2) to[out=-30,in=90] (.6,1);
				\draw[fill=white] (.1,2) circle (.45cm);
			\end{scope}

			\draw [decorate,decoration={calligraphic brace,amplitude=3pt},line width=1.25pt,xshift=0pt,yshift=4pt] (-.2,2.5) -- (2.5,2.5) node [black,midway,yshift=9pt] {\scriptsize $L$};

			\begin{scope}[xshift=7cm,yshift=3cm]
				\node at (3.3,-1.2) {\small$\nu'$};

				\draw[rounded corners,fill=white] (-3,-1) -- (3,-1) -- (3,1) -- (-3,1) -- cycle;
				\node at (0,0) {\small $h'-|\bm{k}_{L\!'}|+|L'|$};

				\draw (-2.7,1) -- (-2.7,2);
				\node at (-2,1.5) {\small $\cdots$};
				\draw (-1.3,1) -- (-1.3,2);
				\node at (-2,2.5) {\scriptsize $J'$};

				\draw (.1,2) to[out=-150,in=90] (-.4,1);
				\draw (.1,2) to[out=-120,in=90] (-.25,1);
				\node at (.2,1.25) {\tiny$\cdots$};
				\draw (.1,2) to[out=-30,in=90] (.6,1);
				\draw[fill=white] (.1,2) circle (.45cm);

				\node at (1.2,2) {\small $\cdots$};
				
				\begin{scope}[xshift=2.1cm]
					\draw (.1,2) to[out=-150,in=90] (-.4,1);
					\draw (.1,2) to[out=-120,in=90] (-.25,1);
					\node at (.2,1.25) {\tiny$\cdots$};
					\draw (.1,2) to[out=-30,in=90] (.6,1);
					\draw[fill=white] (.1,2) circle (.45cm);
				\end{scope}

				\draw [decorate,decoration={calligraphic brace,amplitude=3pt},line width=1.25pt,xshift=0pt,yshift=4pt] (-.2,2.5) -- (2.5,2.5) node [black,midway,yshift=9pt] {\scriptsize $L'$};
			\end{scope}
	\end{tikzpicture}
	\end{equation}
	with $L \sqcup L' = [m]$ and similarly for $\bm{k}_{L}$, and $\bm{k}_{L'}$. In cohomology, the commuting diagram yields
	\begin{equation}\label{eq:push:pull:comm}
		\xi_{\bm{T}}^* \, \xi_{\bm{\Gamma}_{\bm{k}},*}
		=
		\sum_{\bm{\Gamma}} \epsilon_{\bm{\Gamma}} \cdot \eta_{\bm{\Gamma}_{\bm{k}},*} \, \eta_{\bm{T}}^* \, ,
	\end{equation}
	where $\epsilon_{\bm{\Gamma}}$ is the excess class, i.e. the Euler class of the normal bundle of the intersection of the two boundary strata under consideration. As explained in \cite{PPZ15}, it is equal to
	\begin{equation}
		\epsilon_{\bm{\Gamma}}
		=
		\prod_{e} (- \psi_{e'} - \psi_{e''}) \,,
	\end{equation}
	where the product ranges over all edges of $\bm{\Gamma}$ that are common to $\bm{T}$ and $\bm{\Gamma}_{\bm{k}}$, and $(\psi_{e'}$, $\psi_{e''})$ are the $\psi$-classes attached to the marked points joined by $e$. Condition \ref{tick:1} implies that there are no such edges, hence $\epsilon_{\bm{\Gamma}} = 1$ for all $\bm{\Gamma}$. We then employ \eqref{eq:push:pull:comm} in \eqref{eq:push:pull}: since $\eta_{\bm{T}}$ is a glueing morphism of separating kind and $\Omega$ is an F-CohFT, we have
\begin{equation}
\begin{split}
\label{222sk}
	& \quad 	\eta_{\bm{T}}^* \Omega_{g_0,1+n + |\boldsymbol{k}|}\bigg(
			v_{[n]}
			\otimes
			\bigotimes_{\ell = 1}^m \Sha_{k_{\ell}}
		\bigg) \\
		& =
		\Omega_{h_0,1+(1+|J|+|\bm{k}_{L}|)}
		\left(
			\Omega_{h_0',1+|J'|+|\bm{k}_{L'}|}
			\bigg(
				v_{J'} \otimes \bigotimes_{\ell \in L'} \Sha_{k_{\ell}}
			\bigg)
			\otimes
			v_{J} \otimes \bigotimes_{\ell \in L} \Sha_{k_{\ell}}
		\right), 
\end{split} 
\end{equation}
	where $h_0 = h-|\bm{k}_{L}|+|L|$ and $h_0' = h'-|\bm{k}_{L'}|+|L'|$. Applying further $\eta_{\bm{\Gamma}_{\bm{k}},*}$ to the above equation means pushing forward by the map contracting all edges of the tick vertices. This can be achieved by contracting the edges connecting the tick vertices to $\nu'$ first, and the edges connecting the tick vertices to $\nu$ second:
	\begin{equation}\label{eq:split:eta}
		\eta_{\bm{\Gamma}_{\bm{k}},*}
		=
		\eta_{\nu,*} \circ \eta_{\nu',*} \,.
	\end{equation}
	We conclude the computation by summing over all stable graphs $\bm{\Gamma}_{\bm{k}}$ as above, and then over all compatible stable graphs $\bm{\Gamma}$. In view of \eqref{eq:split:eta} and the fact that ticks are distributed on each of the factor in \eqref{222sk} independently, this re-constructs two independent tick actions on the two F-CohFTs placed at $\nu$ and $\nu'$, namely
	\begin{equation}
		\xi_{\bm{T}}^*(\hat{\Sha}\Omega)_{g,1+n}(v_{[n]})
		=
		(\hat{\Sha}\Omega)_{h,1+(1+|J|)}\left(
			(\hat{\Sha}\Omega)_{h',1+|J'|}(v_{J'})
			\otimes
			v_{J}
		\right) .
	\end{equation}
	This concludes the proof.
\end{proof}

\begin{rem}
	The tick action commutes with the translation but does not commute with the change of basis. Instead, we have $\hat{L}\hat{\Sha} \Omega = \hat{\Sha}{}_{L} \hat{L} \Omega$ with $[\Sha_L]_{k} \coloneqq L^{\otimes k} \circ \Sha_{k}$. The tick action does not commute with the R-action, and generates new operations when combined with it. It would be interesting to have a global description (i.e. not just by generators) of the group generated by $\hat{L},\hat{R},\hat{\Sha}$ and of its action on F-CohFTs.
\end{rem}

\begin{rem}
	One may wonder if it is possible to have symmetries coming from attaching ``ticks with genus $h > 0$''. However, in this case condition \ref{tick:2} would be violated: the tick vertices could have been split as
	\begin{equation}
	\begin{tikzpicture}[baseline,scale=.8]
		\draw (.1,.5) -- (.1,2);

		\draw (.1,.5) to[out=-150,in=90] (-.4,-.5);
		\draw (.1,.5) to[out=-120,in=90] (-.25,-.5);
		\node at (.2,-.25) {\small$\cdots$};
		\draw (.1,.5) to[out=-30,in=90] (.6,-.5);
		\draw[fill=white] (.1,.5) circle (.45cm);
		\node at (.1,.5) {\small$h'$};

		\draw [decorate,decoration={calligraphic brace,amplitude=3pt},line width=1.25pt,xshift=0pt,yshift=-4pt] (.7,-.5) -- (-.5,-.5) node [black,midway,yshift=-9pt] {\scriptsize $k$};

		\draw[fill=white] (.1,2) circle (.45cm);
		\node at (.1,2) {\small$h''$};
	\end{tikzpicture}
	\end{equation}
	whenever $k > 1$ and for $h = h' + h''$ with $h'' > 0$.
\end{rem}

\begin{rem}
	One may wonder if it is possible to have symmetries coming from attaching ``ticks with heads'', i.e. elements of
	\begin{equation}
		\big(
			\Hom(V^{\odot l},V^{\odot k})
			\otimes
			H^{\text{even}}(\Mbar_{0,l+k})
		\big)^{\mathfrak{S}_l \times \mathfrak{S}_{k}},
		\qquad
		\text{with $l > 0$.}
	\end{equation}
	Although \ref{tick:1} and \ref{tick:2} are respected for $k \geq 2$, such an action does not preserve F-CohFTs. Indeed, let $\tilde{\Omega}$ be the result of acting on an F-CohFT by attaching an arbitrary number of ticks with heads. The $l > 0$ head(s) of each tick are inputs of $\tilde{\Omega}$. Pulling back such terms by a glueing morphism of separating kind, the heads still correspond to inputs of the pulled-back class.

	In comparison, the right-hand side of the F-CohFT axiom \eqref{eq:glueing:axiom} for $\tilde{\Omega}$ contains all terms where the output of the innermost $\tilde{\Omega}$ is connected to \emph{some} input of the outermost $\tilde{\Omega}$. The latter input may originate from the head of a tick, and such terms can never occur in the pulled-back class (see \eqref{eq:tick:head}), causing $\tilde{\Omega}$ to violate the required glueing property for an F-CohFT.
	\begin{equation}\label{eq:tick:head}
	\begin{tikzpicture}[baseline, scale=.6]
			\draw (0,-1) -- (0,-2);

			\draw[rounded corners,fill=white] (-3,-1) -- (3,-1) -- (3,1) -- (-3,1) -- cycle;
			\node at (0,0) {$\tilde{\Omega}$};

			\draw (-2.7,1) -- (-2.7,2);
			\node at (-2,1.5) {\small $\cdots$};
			\draw (-1.3,1) -- (-1.3,2);

			\draw (.1,2) to[out=-150,in=90] (-.4,1);
			\draw (.1,2) to[out=-120,in=90] (-.25,1);
			\node at (.2,1.25) {\tiny$\cdots$};
			\draw (.1,2) to[out=-30,in=90] (.6,1);
			\draw (.1,2) to[out=150,in=-90] (-.4,3);
			\draw (.1,2) to[out=120,in=-90] (-.25,3);
			\node at (.2,2.75) {\tiny$\cdots$};
			\draw (.1,2) to[out=30,in=-90] (.6,3);
			\draw[fill=white] (.1,2) circle (.45cm);

			\node at (1.2,2) {\small $\cdots$};
			
			\begin{scope}[xshift=2.1cm]
				\draw (.1,2) to[out=-150,in=90] (-.4,1);
				\draw (.1,2) to[out=-120,in=90] (-.25,1);
				\node at (.2,1.25) {\tiny$\cdots$};
				\draw (.1,2) to[out=-30,in=90] (.6,1);
				\draw (.1,2) to[out=150,in=-90] (-.4,3);
				\draw (.1,2) to[out=120,in=-90] (-.25,3);
				\node at (.2,2.75) {\tiny$\cdots$};
				\draw (.1,2) to[out=30,in=-105] (.6,2.5) to[out=75,in=-90] (5,4);
				\draw[fill=white] (.1,2) circle (.45cm);
			\end{scope}

			\begin{scope}[xshift=7cm,yshift=5cm]

				\draw[rounded corners,fill=white] (-3,-1) -- (3,-1) -- (3,1) -- (-3,1) -- cycle;
				\node at (0,0) {$\tilde{\Omega}$};

				\draw (-2.7,1) -- (-2.7,2);
				\node at (-2,1.5) {\small $\cdots$};
				\draw (-1.3,1) -- (-1.3,2);

				\draw (.1,2) to[out=-150,in=90] (-.4,1);
				\draw (.1,2) to[out=-120,in=90] (-.25,1);
				\node at (.2,1.25) {\tiny$\cdots$};
				\draw (.1,2) to[out=-30,in=90] (.6,1);
				\draw (.1,2) to[out=150,in=-90] (-.4,3);
				\draw (.1,2) to[out=120,in=-90] (-.25,3);
				\node at (.2,2.75) {\tiny$\cdots$};
				\draw (.1,2) to[out=30,in=-90] (.6,3);
				\draw[fill=white] (.1,2) circle (.45cm);

				\node at (1.2,2) {\small $\cdots$};
				
				\begin{scope}[xshift=2.1cm]
					\draw (.1,2) to[out=-150,in=90] (-.4,1);
					\draw (.1,2) to[out=-120,in=90] (-.25,1);
					\node at (.2,1.25) {\tiny$\cdots$};
					\draw (.1,2) to[out=-30,in=90] (.6,1);
					\draw (.1,2) to[out=150,in=-90] (-.4,3);
					\draw (.1,2) to[out=120,in=-90] (-.25,3);
					\node at (.2,2.75) {\tiny$\cdots$};
					\draw (.1,2) to[out=30,in=-90] (.6,3);
					\draw[fill=white] (.1,2) circle (.45cm);
				\end{scope}
			\end{scope}
	\end{tikzpicture}
	\end{equation}
	We do not exclude, but find it unlikely, that universal symmetries (acting on any F-CohFTs) other than the ones presented exist.
\end{rem} 

\section{Identification of the two theories}
 \label{sec:identification}

\subsection{F-CohFT amplitudes}
Our main goal is to associate to a given F-CohFT $\Omega$ on $V_0$ a collection of linear maps, called amplitudes, of the form
\begin{equation}
	F_{g,1+n} \in \Hom(\Sym{n}{V_+},V_+)
	\qquad\text{on}\qquad
	V_+ \coloneqq V_0[u]
\end{equation}
that capture all intersections of $\Omega$ with $\psi$-classes. The space $V_+$ of $V_0$-valued polynomials in $u$ is called the \emph{loop space}, and the variable $u$ is responsible for controlling all possible powers of $\psi$-classes.

Unlike the usual setting, however, the definition of amplitudes associated with F-CohFTs involves the choice of an element $\mathscr{U} \in \End(V_0)[u_0]\bbraket{u}$ that keeps track of $\psi_0$, the class coupled to the output vector of the given F-CohFT. This is because, while F-CohFTs naturally treat input and output vectors differently, $\psi_0$ is treated on the same footing as all other $\psi$-classes. Thus, we are forced to `dualise' the loop variable to obtain an element of $V_+$ as output.

To this end, it will prove useful to introduce the space
\begin{equation}
	V_- \coloneqq V_0[u^{-1}]\frac{\dd u}{u}
\end{equation}
of $V_0$-valued polynomial differential forms in $u^{-1}$. The spaces $V_+$ and $V_-$ can be considered as `partially dual' to each other, with the duality taking place on the loop variable but not on $V_0$. Indeed, interpreting $V_+ = V_0 \otimes \CC[u]$ and $V_- = V_0 \otimes \CC[u^{-1}]\frac{\dd u}{u}$, we have the following identification at the level of loop variables: $\CC[u]^* \cong \CC\bbraket{u^{-1}}\frac{\dd u}{u}$, realised by the residue pairing
\begin{equation}
	\Braket{ \chi, f } = \Res_{u = 0} \ f(u) \chi(u)
\end{equation}
for $f \in \CC[u]$ and $\chi \in \CC\bbraket{u^{-1}}\frac{\dd u}{u}$. This interpretation will perhaps make the following definitions more natural.

\begin{defn}\label{def:up:down}
	An element $\mathscr{U}(u_0,u) \in \End(V_0)[u_0]\bbraket{u}$ is called non-degenerate if there exists $\mathscr{D}(u_0,u) \in \End(V_0)[u_0^{-1}]\bbraket{u^{-1}} \frac{\dd u_0 \, \dd u}{u_0 u}$ such that
	\begin{equation}
		\Res_{u,u_1=0} \
			\mathscr{U}(u_0,u) \mathscr{D}(u,u_1) f(u_1)
		=
		f(u_0) \,,
		\qquad
		\Res_{u,u_1=0} \
			\mathscr{D}(u_0,u) \mathscr{U}(u,u_1) \chi(u_1) 
		=
		\chi(u_0) \,,
	\end{equation}
	for all $f \in V_+$ and all $\chi \in V_-$. We call $\mathscr{U}$ an \emph{up-morphism}, and $\mathscr{D}$ a \emph{down-morphism}.
\end{defn}

Considering the `partial duality' between $V_+$ and $V_-$, it is natural to consider $\mathscr{U}$ and $\mathscr{D}$ as linear operators:
\begin{equation}\label{eq:U:D:maps}
\begin{aligned}
	\mathscr{U} \colon V_- & \longrightarrow V_+ 
	&\qquad&
	\mathscr{U}[\chi](u_0)
	\coloneqq
	\Res_{u = 0} \ \mathscr{U}(u_0,u)\chi(u) \,, \\
	\mathscr{D} \colon V_+ & \longrightarrow V_-
	&\qquad&
	\mathscr{D}[f](u_0)
	\coloneqq
	\Res_{u = 0} \ \mathscr{D}(u_0,u)f(u) \,.
\end{aligned}
\end{equation}
Abusing notations, we denote these maps with the symbols $\mathscr{U}$ and $\mathscr{D}$ respectively. The non-degeneracy condition simply asserts that $\mathscr{U}$ and $\mathscr{D}$ are inverses of each other as operators: $\mathscr{U} \circ \mathscr{D} = \id_{V_+}$ and $\mathscr{D} \circ \mathscr{U} = \id_{V_-}$.

For a fixed basis $(\mathrm{e}_{\alpha})_{\alpha \in \mathfrak{a}}$ of $V_0$, we have the natural bases
\begin{equation}\label{eq:basis:loop}
	\mathrm{e}_{(\alpha,k)} \coloneqq \mathrm{e}_{\alpha} u^k \in V_+ \,,
	\qquad\qquad
	\mathrm{e}_{\alpha}^k \coloneqq \mathrm{e}_{\alpha} \frac{\dd u}{u^{k+1}} \in V_- \,,
\end{equation}
indexed by $(\alpha,k) \in I = \mathfrak{a} \times \NN$. The positioning of the indices has been chosen to maintain a consistent use of Einstein's convention. In this case, the expression for $\mathscr{U}$ and $\mathscr{D}$ is given as
\begin{equation}\label{eq:U:D:coords}
	\mathscr{U}\bigl[ \mathrm{e}_{\beta}^{j} \bigr]
	=
	\mathscr{U}_{\beta}^{\alpha;i,j} \, \mathrm{e}_{(\alpha,i)} \,,
	\qquad\qquad
	\mathscr{D}\bigl[ \mathrm{e}_{(\beta,j)} \bigr]
	=
	\mathscr{D}_{\beta;i,j}^{\alpha} \, \mathrm{e}_{\alpha}^i \,,
\end{equation}
(which explains the choice of names for $\mathscr{U}$ and $\mathscr{D}$ as up- and down-morphisms respectively) and the non-degeneracy condition is $\mathscr{U}_{\mu}^{\beta;j,m} \mathscr{D}_{\alpha;m,i}^{\mu} = \delta_{\alpha}^{\beta} \, \delta_{i}^{j}$ and $\mathscr{D}_{\mu;j,m}^{\beta} \mathscr{U}_{\alpha}^{\mu;m,i} = \delta_{\alpha}^{\beta} \, \delta_{j}^{i}$.

\begin{ex}\label{ex:stndrd:up:down}
	A standard choice of up/down-morphisms is
	\begin{equation}
	\begin{split}
		\mathscr{U}(u_0,u) &= \id_{V_0} \frac{1}{1 - u_0u} = \id_{V_0} \sum_{k \geq 0} (u_0 u)^k \,, \\
		\mathscr{D}(u_0,u) &= \id_{V_0} \frac{\dd u_0 \, \dd u}{u_0 u - 1} = \id_{V_0} \sum_{k \geq 0} \frac{\dd u_0 \, \dd u}{(u_0 u)^{k + 1}} \,.
	\end{split}
	\end{equation}
	In this case, we have $\mathscr{U}_{\beta}^{\alpha;i,j} = \delta^{\alpha}_{\beta} \, \delta^{i,j}$ and $\mathscr{D}^{\alpha}_{\beta;i,j} = \delta^{\alpha}_{\beta} \, \delta_{i,j}$.
\end{ex}

\begin{defn}
	Let $\Omega$ be an F-CohFT and $(\mathscr{U},\mathscr{D})$ a choice of up/down-morphisms. Define the \emph{amplitudes} associated with $\Omega$ as the collection of linear maps $F_{g,1+n} \in \Hom(\Sym{n}{V_+},V_+)$ given by
	\begin{equation}\label{eq:ampl:F-CohFT}
		F_{g,1+n}(f_1 \otimes \cdots \otimes f_n)(u_0)
		\coloneqq
		\int_{\Mbar_{g,1+n}}
		\mathscr{U}(u_0,\psi_0)
		\Bigl[
			\Omega_{g,1+n} \bigl( f_1(\psi_1) \otimes \cdots \otimes f_n(\psi_n) \bigr)
		\Bigr]
		\,,
	\end{equation}
	where $f_i = f_i(u) \in V_+$ and the F-CohFT is extended from $V_0$ to $V_+$ by linearity. The \emph{ancestor (vector) potential} associated with $\Omega$ is the $\hbar^{-1}V_+\bbraket{\hbar}$-valued formal function
	\begin{equation}
		\Phi(x)
		\coloneqq
		\sum_{\substack{g,n \geq 0 \\ 2g-2+(1+n)>0}} \frac{\hbar^{g - 1}}{n!} \, F_{g,1+n}(x^{\otimes n}) \,,
	\end{equation}
	where $x = x(u)$ is the formal variable in $V_+$. In other words:
	\begin{equation}
		\Phi \in \mathrm{Fun}_{V_+} \coloneqq \prod_{g,n \geq 0} \hbar^{g - 1} \, \Hom(\Sym{n}{V_+},V_+)\,.
	\end{equation}
\end{defn}

We emphasise that, contrary to the usual setting, the amplitudes associated with a given F-CohFT depend on the choice of up/down-morphisms. Abusing notations, we omit this dependence. The necessity of such a choice is perhaps more transparent in coordinates: with the notation from \eqref{eq:basis:loop} and \eqref{eq:U:D:coords}, set
\begin{equation}
	\Braket{
		\tau^{\beta}_{\ell}\tau_{(\alpha_1,k_1)} \cdots \tau_{(\alpha_n,k_n)}
	}^{\Omega}_{g}
	\coloneqq
	\int_{\Mbar_{g,1+n}}
		\Braket{
			\mathrm{e}^{\beta},
			\Omega_{g,1+n}(\mathrm{e}_{\alpha_1} \otimes \cdots \otimes \mathrm{e}_{\alpha_n})
		}
		\psi_{0}^{\ell} \prod_{i=1}^n \psi_{i}^{k_i} \,.
\end{equation}
Here $\braket{ \mathrm{e}^{\beta}, \mathrm{e}_{\alpha} } = \delta_{\alpha}^{\beta}$ is the canonical pairing between $V_0^*$ and $V_0$. Then the amplitudes in coordinates read
\begin{equation}
	\indF{F}{g}{(\alpha_0,k_0)}{(\alpha_1,k_1),\dots,(\alpha_n,k_n)}
	=
	\mathscr{U}_{\beta}^{\alpha_0;k_0,\ell}
	\Braket{
		\tau^{\beta}_{\ell}\tau_{(\alpha_1,k_1)} \cdots \tau_{(\alpha_n,k_n)}
	}^{\Omega}_{g}
\end{equation}
and the ancestor potential is given by
\begin{equation}
	\Phi(x)
	=
	\sum_{\substack{g,n \geq 0 \\ 2g-2+(1+n)>0}} \frac{\hbar^{g - 1}}{n!} \,
		\indF{F}{g}{(\alpha_0,k_0)}{(\alpha_1,k_1),\dots,(\alpha_n,k_n)} \,
		\mathrm{e}_{(\alpha_0,k_0)}
		\prod_{i=1}^n x^{(\alpha_i,k_i)} \,,
\end{equation}
where $x = x^{(\alpha,k)} \mathrm{e}_{(\alpha,k)}$ denotes the formal variable on $V_+$.

\begin{rem}\label{rem:finite:F}
	Since $\overline{\mathcal{M}}_{g,1+n}$ has complex dimension $3g - 2 + n$, the evaluation of the tensors $F_{g,1+n} \in \Hom(\Sym{n}{V_+},V_+)$ on monomials $v_1 u^{d_1} \otimes \cdots \otimes v_n u^{d_n}$ vanishes whenever $d_1 + \cdots + d_n > 3g - 3 + (n+1)$. In particular, the tensor can be extended to $\Hom(\Sym{n}{\widehat{V}_+},V_+)$ for
	\begin{equation}
		\widehat{V}_+
		\coloneqq
		V_0\bbraket{u} \,,
	\end{equation}
	called the \emph{completed loop space}. By composition with the natural inclusion $V_+ \hookrightarrow \widehat{V}_+$, it can also be considered as an element of $\Hom(\Sym{n}{\widehat{V}_+},\widehat{V}_+)$. In the following, it will prove useful to introduce the `partial dual' completed space
	\begin{equation}
		\widehat{V}_-
		\coloneqq
		V_0\bbraket{u^{-1}} \frac{\dd u}{u} \,.
	\end{equation}
\end{rem}

\subsection{Actions on F-CohFT amplitudes}
\label{subsec:actions:FCohFT:ampl}
We can now describe the result of the different actions on F-CohFT at the level of amplitudes. For changes of basis and R-actions, this requires a concomitant transformation of the up/down-morphisms used to define the transformed amplitudes. Throughout the rest of the section, we fix an F-CohFT $\Omega$ on $V_0$ together with a choice $(\mathscr{U},\mathscr{D})$ of up/down-morphisms.

\paragraph{Change of basis.}
Given $L \in \GL(V_0)$, the amplitudes associated with $\hat{L}\Omega$ are given by
\begin{equation}\label{eq:change:basis:CohFT}
	(\hat{L}F)_{g,1+n}
	\coloneqq L
	\circ F_{g,1+n} \circ (L^{-1})^{\otimes n} \,,
\end{equation}
considering $L$ and $L^{-1}$ as elements in $\GL(V_+)$, provided we use the new up/down-morphisms
\begin{equation}\label{eq:new:up:down:change:basis}
	\preind{L}{\mathscr{U}}(u_0,u) \coloneqq L \circ \mathscr{U}(u_0,u) \circ L^{-1} \,,
	\qquad\qquad
	\preind{L}{\mathscr{D}}(u_0,u) \coloneqq L \circ \mathscr{D}(u_0,u) \circ L^{-1} \,
\end{equation}
to define the amplitudes $\hat{L}F$ of $\hat{L}\Omega$. Indeed:
\begin{equation}\label{eq:new:up:change:basis}
\begin{split}
	(\hat{L}F)_{g,1+n}&(f_1 \otimes \cdots \otimes f_n)
	=
	\int_{\Mbar_{g,1+n}}
		\preind{L}{\mathscr{U}}(u_0,\psi_0)
		\Bigl[
			\hat{L}\Omega_{g,1+n}\bigl( f_1(\psi_1) \otimes \cdots \otimes f_n(\psi_n) \bigr)
		\Bigr] \\
	&=
	\int_{\Mbar_{g,1+n}}
		\bigl( L \circ \mathscr{U}(u_0,\psi) \circ L^{-1} \bigr)
		\Bigl[
			L \circ \Omega_{g,1+n}\bigl( L^{-1}f_1(\psi_1) \otimes \cdots \otimes L^{-1}f_n(\psi_n) \bigr)
		\Bigr] \\
	&=
	L \circ
	\int_{\Mbar_{g,1+n}}
		\mathscr{U}(u_0,\psi)
		\Bigl[
			\Omega_{g,1+n}\bigl( L^{-1}f_1(\psi_1) \otimes \cdots \otimes L^{-1}f_n(\psi_n) \bigr)
		\Bigr] \\
	&=
	\bigl( L \circ F_{g,1+n} \circ (L^{-1})^{\otimes n} \bigr)(f_1 \otimes \cdots \otimes f_n) \,.
\end{split}
\end{equation}
Notice that the above computation fixes the new up-morphism $\preind{L}{\mathscr{U}}$. The non-degeneracy condition satisfied by $\mathscr{U}$ immediately implies that $\preind{L}{\mathscr{U}}$ is non-degenerate too, with associated down-morphism given as in \eqref{eq:new:up:down:change:basis}.

To sum up, the transformed amplitudes $\hat{L}F_{g,1+n}$ are of the form $\preind{L}{F}_{g,1+n}$, where the notation is in accordance with the one introduced in Section~\ref{subsec:change:bases} for changes of bases in the context of F-Airy structures. In particular, the ancestor potential transforms as
\begin{equation}
	\preind{L}{\Phi} = L \circ \Phi \circ L^{-1} \,.
\end{equation}
The changes of basis for F-CohFTs realise only special changes of bases for the amplitudes, namely those corresponding to $L_{\textup{t}} = L_{\textup{s}} = L$ induced by $\mathrm{GL}(V_0) \subset \mathrm{GL}(V_+)$.

\paragraph{R-action.}
Let $R \in \mathfrak{Giv}$ be an R-matrix. In order to analyse the amplitudes associated with $\hat{R}\Omega$, we need to introduce three operators $B_{R}\in \Hom(V_+,\widehat{V}_+)$, $L_{R,\textup{s}}, \ L_{R,\textup{t}} \in \GL(\widehat{V}_+)$ associated with the R-matrix as follows.
\begin{itemize}
	\item First, recall the definition of the edge weight \eqref{eq:Bglbv:F-Giv}:
	\begin{equation}
		\mathscr{E}_{R}(u_0,u)
		=
		\frac{\id_{V_0} - R^{-1}(u_0) \circ R(-u)}{u_0 + u} \in \End(V_0)\bbraket{u_0,u} \,,
	\end{equation}
	Notice that $\mathscr{E}_{R}$ can be equivalently considered as a linear operator:
	\begin{equation}
		\mathscr{E}_{R} \colon V_- \longrightarrow \widehat{V}_+ \,,
		\qquad
		\mathscr{E}_{R}[\chi](u)
		\coloneqq
		\Res_{u' = 0} \ \mathscr{E}_{R}(u,u')\chi(u') \,.
	\end{equation}
	Then, we have a well-defined map $B_{R} \coloneqq \mathscr{E}_{R} \circ \mathscr{D} \in \Hom(V_+,\widehat{V}_+)$, explicitly given by
	\begin{equation}\label{eq:BR}
		B_{R}[f](u_0) = \Res_{u,u' = 0} \ \mathscr{E}_{R}(u_0,u) \, \mathscr{D}(u,u') \, f(u') \,.
	\end{equation}

	\item Second, let $L_{R,\textup{s}}$ and $L_{R,\textup{t}}$ be the elements in $\GL(\widehat{V}_+)$ acting as multiplication by $R(u)$ and $R(-u)$ respectively:
	\begin{equation}
		L_{R,\textup{s}}[f](u) \coloneqq R(u)f(u) \,,
		\qquad
		L_{R,\textup{t}}[f](u) \coloneqq R(-u)f(u) \,.
	\end{equation}
	Notice that $L_{R,\textup{s}}$ and $L_{R,\textup{t}}$ are indeed invertible, with inverse being the multiplication by $R^{-1}(u)$ and $R^{-1}(-u)$ respectively.

	\item Third, let $V_{R,+} \coloneqq L_{R,\textup{t}}(V_+)$. This is a subspace of $\widehat{V}_+$ isomorphic to $V_+$. We introduce a new up-morphism $\preind{R}{\mathscr{U}} : V_- \rightarrow V_{R,+}$ by the formula
	\begin{equation}\label{eq:R:up}
		\preind{R}{\mathscr{U}}(u_0,u)
		\coloneqq
		R(-u_0) \circ \mathscr{U}(u_0,u) \circ R^{-1}(-u) \,.
	\end{equation}
\end{itemize}
We stress that the new up-morphism does not take value in $V_+$ but in the isomorphic space $V_{R,+}$. Unlike the case above, its invertibility is not immediately clear. To justify it we notice that the new up-morphism $\preind{R}{\mathscr{U}}$ can be written as the composition
\begin{equation}
	\preind{R}{\mathscr{U}}
	=
	L_{R,\textup{t}} \circ \mathscr{U} \circ M_{R}^{-1}
	\colon
	V_- \longrightarrow V_{R,+} \,,
\end{equation}
where $L_{R,\textup{t}} \in \GL(\widehat{V}_+)$ is the multiplication by $R(-u)$ as above, $\mathscr{U}$ is the old up-morphism, and $M_{R}^{-1} \in \GL(V_-)$ is defined as
\begin{equation}
	M_{R}^{-1}[\chi](u) \coloneqq \Bigl[ R^{-1}(-u) \chi(u) \Bigr]_- \,.
\end{equation}
Here $[ \ph ]_-$ is the projection from $\dd \widehat{V}_+ \oplus V_-$ to the $V_-$ summand, and we have silently used the natural inclusion $V_{R,+} \hookrightarrow \widehat{V}_+$. It is easy to see that $M_{R}^{-1}$ is indeed invertible with inverse
\begin{equation}
	M_{R}[\chi](u) = \Bigl[ R(-u) \chi(u) \Bigr]_- \,.
\end{equation}
The notation has been fixed so that it matches with the R-matrix. It is then clear that $\preind{R}{\mathscr{U}}$ is indeed invertible, with inverse
\begin{equation}
	\preind{R}{\mathscr{D}}
	\coloneqq
	M_R \circ \mathscr{D} \circ L_{R,\textup{t}}^{-1} \colon V_{R,+} \longrightarrow V_- \,.
\end{equation}
With these conventions, using the non-degeneracy condition $\mathscr{D} \circ \mathscr{U} = \id_{V_-}$, it is easy to check that the amplitudes associated with $\hat{R}\Omega$ and the up-morphism $\preind{R}{\mathscr{U}}$ are the elements $(\hat{R}F)_{g,1+n} \in \Hom(\Sym{n}{V_{R,+}},V_{R,+})$ given by
\begin{equation}\label{RFtrans}
	(\hat{R}F)_{g,1+n}
	=
	L_{R,\textup{t}}
	\circ
	\left(
		\sum_{\bm{T} \in \mathbb{T}_{g,1+n}}
			\bigg( \bigotimes_{v \in \Vert(\bm{T})} F_{g(v),1+n(v)} \bigg)
			\underset{\bm{T}}{\circ}
			\bigg( \bigotimes_{e \in \Edge(\bm{T})} B_{R} \Bigg)
	\right)
	\circ
	(L_{R,\textup{s}}^{-1})^{\otimes n}
	\,.
\end{equation}
On the right-hand side, Remark~\ref{rem:finite:F} was used to upgrade the definition $F_{g,1+n}$ from an element of $ \Hom(\Sym{n}{V_+},V_+)$ to an element of $\Hom(\Sym{n}{V_{R,+}},V_+)$. For the same finiteness reason, $(\hat{R}F)_{g,1+n}$ also extend as elements of $\Hom(\Sym{n}{\widehat{V}_+},\widehat{V}_+)$. 

The justification of \eqref{RFtrans} together with the formula for the new up-morphism \eqref{eq:R:up} is completely analogous to \eqref{eq:new:up:change:basis} and omitted. To sum up, the transformed amplitudes are of the form $\preind{L_{R}}{\vphantom{F}}(\preind{B_{R}\vphantom{L}}{F})$, following the notation introduced in Section~\ref{sec:action} for changes of bases and Bogoliubov transformations in the context of F-Airy structures. The corresponding transformation of the ancestor potential is uniquely characterised by the following fixed point equation:
\begin{equation}
	\bigl(
		L_{R,\textup{t}}^{-1} \circ \preind{R}{\Phi}
	\bigr) (x)
	=
	\Phi\left(
		L_{R,\textup{s}}^{-1}(x)
		+
		\hbar
		\bigl( B_{R} \circ L_{R,\textup{t}}^{-1} \circ \preind{R}{\Phi} \bigr)(x)
	\right) .
\end{equation}

\paragraph{Translation.}
Given $T(u) \in u^2 V_0\bbraket{u}$, the amplitudes associated with $\hat{T}\Omega$ and the same up/down-morphisms are simply
\begin{equation}
	(\hat{T}F)_{g,1+n}
	=
	\sum_{m \geq 0}
		\frac{1}{m!} F_{g,1+n+m} \bigl( \id_{V_+}^{\otimes n} \otimes T^{\otimes m} \bigr) \,.
\end{equation}
In particular the translated amplitudes take the form $\preind{T}{F}$, following the notation introduced in Section~\ref{subsec:translation} for translations in the context of F-Airy structures (cf. Remark~\ref{rem:translation}). In particular, the ancestor potential transforms as
\begin{equation}
	\preind{T}{\Phi}(x)
	=
	\Phi(x + T)
	-
	\hbar^{-1} \big( G + H(x) \big)
\end{equation}
where $G \coloneqq \sum_{m \ge 2} \frac{1}{m!} F_{0,1+m}(T^{\otimes m})$ and $H(x) \coloneqq \sum_{m \ge 1} \frac{1}{m!} F_{0,2+m}(x \otimes T^{\otimes m})$.

\paragraph{Tick action.}
Ticks act linearly on F-CohFTs and they act as differential operators at the level of ancestor potentials (we omit the action at the level of amplitudes, since it is easy to extract it from the definition). To formulate it precisely, we first introduce the following variant of the Weyl algebra of differential operators on $V_+$. As a graded vector space, it is
\begin{equation}
	\label{534eq} \mathfrak{W}_{V_+}
	\coloneqq
	\prod_{g,n,m \geq 0} \mathfrak{W}_{V_+}[g,n,m] \,,
	\qquad
	\mathfrak{W}_{V_+}[g,n,m]
	\coloneqq
	\hbar^{g - 1} \, \Hom(\Sym{n}{V_+},\Sym{m}{V_+}) \,,
\end{equation}
where the respective summands have weight $2g - 2 + n + m$. As an associative algebra it is the quotient of the free complete associative algebra generated by $V_+$ and $V_+^*$ modulo the relations
\begin{equation}
	\forall v,\,w \in V_+
	\quad
	\forall \lambda,\,\mu \in V_+^*
	\qquad
	[v,w] = 0 \,,
	\quad
	[v,\lambda] = \hbar \lambda(v) \,,
	\quad
	[\lambda,\mu] = 0 \,.
\end{equation}
We consider its subalgebra $\mathfrak{W}_{V_+}^{\geq 0}$ keeping only components of non-negative weight. 

We let $\mathfrak{W}_{V_+}^{\geq 0}$ act in a $\CC\bbraket{\hbar}$-linear and natural way on the space of vector-valued formal functions on $V_+$, that is on 
\begin{equation}
	\mathrm{Fun}_{V_+} \coloneqq \prod_{g,n \geq 0} \hbar^{g - 1} \, \Hom(\Sym{n}{V_+},V_+) \,.
\end{equation}
On the generators of $\mathfrak{W}_V$, this means:
\begin{itemize}
	\item elements of $\Hom(V_+,\CC)$ act by multiplication in the completed symmetric algebra of $V_+^*$;
	
	\item an element $v \in V_+$ acts on $\lambda \in V_+^*$ as $v.\lambda \coloneqq \hbar \lambda(v)$ and this action is extended to a derivation on $\mathrm{Fun}_{V_+}$.
\end{itemize}
The corresponding representation is denoted $\mathrm{Op} \colon \mathfrak{W}_{V_+}^{\ge 0} \to \End(\mathrm{Fun}_{V_+})$. Ticks determine elements of the Weyl algebra upon integration over the moduli space of curves, which are pure differential operators (i.e. we only see the summands $n = 0$ of \eqref{534eq}). More precisely, the following hold, without any change of up/down-morphisms.

\begin{lem}
Given $\Sha \in \mathfrak{tick}$, the ancestor potential associated with $\hat{\Sha}\Omega$ is
	\begin{equation}
		\preind{\Sha}{\Phi}
		=
		\exp\Biggl[
				\sum_{k \geq 2} \mathrm{Op}\left(
				\int_{\Mbar_{0,k}} \Sha_{k}
			\right)
		\Biggr].\Phi 
	\end{equation}
	where it is understood that in the $(0,2)$-summand the integration is omitted: there is no moduli space and $\Sha_{2}$ is already an element of $\Sym{2}{V_+}$.
	\end{lem}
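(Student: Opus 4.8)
The plan is to prove the statement by induction on $2g-2+(1+n)>0$, mirroring the structure of the proof of Theorem~\ref{thm:tick} but now tracking the effect of the tick action on $\psi_0$ as well. First I would unwind the definition of the amplitudes associated with $\hat{\Sha}\Omega$ by plugging the defining formula \eqref{eq:tick:action} into the integral formula \eqref{eq:ampl:F-CohFT}. The key point is that the pushforward $\xi_{\bm{\Gamma}_{\bm{k}},*}$ under the projection formula turns the integration over $\Mbar_{g,1+n}$ into integration over $\Mbar_{\bm{\Gamma}_{\bm{k}}}$, which factorises as the product of one integral over the moduli space attached to the central vertex (of type $(g-|\bm{k}|+m, 1+n+|\bm{k}|)$) and $m$ integrals over the moduli spaces $\Mbar_{0,k_\ell}$ attached to the tick vertices. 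Crucially, $\psi_0$ lives entirely on the central vertex and is not affected by the tick vertices, so the operator $\mathscr{U}(u_0,\psi_0)$ passes unchanged through the pushforward; no change of up/down-morphisms is needed, as claimed.

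The second step is to identify the resulting expression with the action of $\exp\bigl[\sum_{k\ge 2}\mathrm{Op}(\int_{\Mbar_{0,k}}\Sha_k)\bigr]$ on $\Phi$. Each tick vertex of valency $k_\ell$ contributes the tensor $\int_{\Mbar_{0,k_\ell}}\Sha_{k_\ell}(\psi_{(\ell,1)},\dots,\psi_{(\ell,k_\ell)}) \in \Sym{k_\ell}{V_+}$ (for $k_\ell=2$ the integration is vacuous and this is just $\Sha_2\in\Sym{2}{V_+}$). The $k_\ell$ half-edges joining it to the central vertex carry $\psi$-classes which, under the amplitude formalism, are precisely the loop variables that get inserted as inputs of $F_{g-|\bm{k}|+m,1+n+|\bm{k}|}$. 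Thus the $m$-th summand of the tick action inserts $m$ copies of elements of $\Sym{\bullet}{V_+}$ into the amplitudes — which is exactly what $\frac{1}{m!}$ times the $m$-th power of the differential operator $\sum_{k\ge 2}\mathrm{Op}(\int_{\Mbar_{0,k}}\Sha_k)$ does when acting on the ancestor potential: each $\mathrm{Op}(\cdot)$ acts as a derivation lowering the number of input slots and producing the appropriate insertions, and the sum over $k\ge 2$ matches the sum over the valency $k_\ell$ of each tick vertex. The combinatorial bookkeeping — the symmetry factor $\frac{1}{m!}$ for the unordered collection of tick vertices, and the $\Sy_{k_\ell}$-invariance of $\Sha_{k_\ell}$ reconciling the labelled half-edges $(\ell,1),\dots,(\ell,k_\ell)$ with the symmetric tensor — is the standard dictionary between exponentials of operators and sums over configurations, and is exactly the content of the Weyl-algebra action $\mathrm{Op}$ defined above.

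The main obstacle is the careful matching of the $\psi_{(\ell,j)}$-classes on the half-edges with the loop-variable insertions: one must check that when $\mathrm{Op}(\int_{\Mbar_{0,k}}\Sha_k)$ acts on $\Phi$, the element $\int_{\Mbar_{0,k}}\Sha_k \in \Sym{k}{V_+}$ is fed into the amplitudes exactly as $\Sha_k$ is fed into $\Omega$ in \eqref{eq:tick:action} after pushing forward. Concretely: in the amplitude $F_{g,1+n}$ an input $v\,u^{d}\in V_+$ contributes $\psi_i^d$ against $\Omega$, so an element $\Sha_k(\psi_{(\ell,1)},\dots,\psi_{(\ell,k)})$, regarded as an element of $V_+^{\otimes k}$ via the loop variables tracking the half-edge $\psi$-classes, when inserted into $F_{g-|\bm{k}|+m,1+n+|\bm{k}|}$ reproduces exactly the integral $\int_{\Mbar_{g-|\bm{k}|+m,1+n+|\bm{k}|}} \mathscr{U}(u_0,\psi_0)\bigl[\Omega(\dots\otimes\Sha_k(\psi_{(\ell,\bullet)}))\bigr]$, which is the pushed-forward integrand in \eqref{eq:tick:action}. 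Once this identification is set up — and here the only subtlety is the infinite-dimensionality of $V_+$, which is harmless by Remark~\ref{rem:finite:F} since the integrand vanishes for high powers of $\psi$ — summing over $m\ge 0$, over $\bm{k}\in\ZZ_{\ge 2}^m$, and over $g$ and $n$ collects into the exponential, and the base cases are immediate since $\exp$ starts with the identity. I do not expect the preservation of the F-CohFT axioms to reappear here, as that was already established in Theorem~\ref{thm:tick}.
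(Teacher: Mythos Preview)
Your approach is correct, and in fact the paper does not provide a proof of this lemma at all—it is stated without argument, followed only by the remark that the operators commute (being pure derivatives $\partial_x^k$). Your direct computation via the projection formula, factorising the integral over $\Mbar_{\bm{\Gamma}_{\bm{k}}}$ into the central piece and the tick pieces, is exactly the right way to verify it.

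Two small remarks. First, there is no induction on $2g-2+(1+n)$ here: the identification is term-by-term, and the reference to ``mirroring the proof of Theorem~\ref{thm:tick}'' is misleading, since that proof is about compatibility with glueing and has a different structure. Second, be explicit about the $\hbar$-grading: a single tick of valency $k$ raises the genus by $k-1$, so the operator $\mathrm{Op}(\int_{\Mbar_{0,k}}\Sha_k)$ must contribute $\hbar^{k-1}$; this forces $\int_{\Mbar_{0,k}}\Sha_k$ to sit in $\mathfrak{W}_{V_+}[0,0,k]$, i.e.\ with an implicit $\hbar^{-1}$ from the genus-zero moduli space. The paper leaves this implicit, but you should make it explicit in your write-up, as otherwise the exponential would produce the wrong power of $\hbar$ at order $m$.
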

Note that the operators appearing in the exponential for the tick action do commute (they are of the form $\partial_x^k$).

\subsection{The identification}
\label{subsec:identif}
The striking similarity between F-Airy structures and F-CohFTs is not a coincidence, and is parallel to the one explored in \cite{DOSS14} in the ordinary case. The identification consists of two steps. Firstly, we identify the two theories for a `small' set of cases, namely topological F-CohFTs on $V_0$ (i.e. F-CohFT obtained by coupling the fundamental class of $\Mbar_{g,1+n}$ to an F-TFT) with certain F-Airy structures on the loop space $V_+ = V_0[u]$. Secondly, we identify the action on the two theories.
\begin{center}
\begin{tikzpicture}[
	baseline, thick, shorten >=2pt, shorten <=2pt,%
	box/.style = {draw,inner sep=5pt,rounded corners=5pt}%
]
	\node[box,align=center] (FCohFT) at (-2.75,0) {topological \\ F-CohFTs on $V_0$};
	\node[box,align=center] (FAiry) at (2.75,0) {F-Airy structures \\ on $V_+$};

	\draw[right hook->] (FCohFT) --node[midway,above] {Thm.~\ref{thm:top:F-CohFT:F-TR}} (FAiry);

	\path (FCohFT) edge[loop below] (FCohFT);
	\path (FAiry) edge[loop below] (FAiry);

	\node[align=center] (FCohFTaction) at (-2.75,-2) {actions \\ from Sec. \ref{subsec:F-Giv}};
	\node[align=center] (FAiryaction) at (2.75,-2) {actions \\ from Sec. \ref{sec:action}};

	\draw[<->] (FCohFTaction) --node[midway,below] {Thm.~\ref{thm:ident:actions}} (FAiryaction);
\end{tikzpicture}
\end{center}
The only actions we are going to consider at the level of F-CohFT are the changes of basis, R-actions and translations. Handling tick actions would bring us to the world of ``higher F-Airy structures'', in the same flavour as the higher Airy structures considered in \cite{BBCCN24}. The corresponding higher F-topological recursion would have higher-degree terms instead of just quadratic terms in \eqref{eq:F-TR:coord:free}. There is no difficulty in posing a definition of higher F-Airy structure and introduce actions on their defining tensors that would then implement the tick actions on F-CohFTs at the level of amplitudes. As this would demand another level of notational complexity without any surprise in the logic, we refrain from discussing it further.

\paragraph{Topological F-CohFTs.}
Let us start by defining topological F-CohFTs.

\begin{lem}
	Let $(V_0, \bcdot, w)$ be an F-TFT. The collection of maps
	\begin{equation}\label{eq:F-TFT:fund:class}
		\Omega_{g,1+n} \colon V_0^{\otimes n} \longrightarrow H^0(\Mbar_{g,1+n}) \otimes V_0 \,,
		\qquad
		v_1 \otimes \cdots \otimes v_n
		\longmapsto
		[1] \otimes \big( v_1 \bcdots v_n \bcdot w^g \big) \,,
	\end{equation}
	where $[1] \in H^0(\Mbar_{g,1+n})$ is the fundamental class, forms an F-CohFT on $V_0$. F-CohFTs of this type are called \emph{topological F-CohFT}. If the original F-TFT is unital, the associated F-CohFT has a flat unit.
\end{lem}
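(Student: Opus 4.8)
The plan is to verify that the collection $(\Omega_{g,1+n})$ from \eqref{eq:F-TFT:fund:class} satisfies the two defining axioms of an F-CohFT (symmetry and compatibility with glueing of separating kind), and, in the unital case, the flat unit axiom. Since each $\Omega_{g,1+n}$ takes values purely in $H^0(\Mbar_{g,1+n}) \otimes V_0 \cong V_0$, all cohomological subtleties disappear: pullbacks along any morphism act as the identity on $H^0$ (connectedness of the source of the glueing and forgetful morphisms), so the axioms reduce to purely algebraic identities in the F-TFT $(V_0,\bcdot,w)$.

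First I would record that $H^{\mathrm{even}}(\Mbar_{g,1+n}) \supseteq H^0(\Mbar_{g,1+n}) = \CC \cdot [1]$, so the maps land in the correct target and it suffices to track the $V_0$-component $v_1 \bcdots v_n \bcdot w^g$. The $\Sy_n$-equivariance is then immediate: permuting the marked points $1,\dots,n$ acts trivially on $[1] \in H^0$, while permuting the tensor factors of $V_0^{\otimes n}$ permutes the factors $v_1,\dots,v_n$ in the product $v_1 \bcdots v_n \bcdot w^g$, which is unchanged by commutativity of $\bcdot$.

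Next I would check the glueing axiom \eqref{eq:glueing:axiom}. For the separating glueing morphism $\gl \colon \Mbar_{h,1+(1+|J|)} \times \Mbar_{h',1+|J'|} \to \Mbar_{g,1+n}$ with $g = h+h'$ and $J \sqcup J' = [n]$, the pullback $\gl^*$ sends $[1]$ to $[1] \otimes [1]$ (both source components are connected). Hence the left-hand side of \eqref{eq:glueing:axiom} is $[1]\otimes[1]$ times $v_1 \bcdots v_n \bcdot w^g$. The right-hand side, unwinding the definition \eqref{eq:F-TFT:fund:class} twice, is $\Omega_{h,1+(1+|J|)}$ applied to $\bigl(v_{J'_1} \bcdots v_{J'_{|J'|}} \bcdot w^{h'}\bigr) \otimes v_J$, which equals $\bigl(v_{J'_1} \bcdots v_{J'_{|J'|}} \bcdot w^{h'}\bigr) \bcdot \bigl(v_{J_1} \bcdots v_{J_{|J|}}\bigr) \bcdot w^{h}$. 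Using commutativity and associativity of $\bcdot$ this rearranges to $v_1 \bcdots v_n \bcdot w^{h+h'} = v_1 \bcdots v_n \bcdot w^g$, matching the $V_0$-component of the left-hand side; the cohomological factors agree as noted. This is exactly the content already packaged in Proposition~\ref{prop:F-TFT} (that $\mathscr{F}_{g,1+n}(v_1\otimes\cdots\otimes v_n) = v_1 \bcdots v_n \bcdot w^g$ is consistent under the F-TR recursion built from $A=B=\disc{C}$ the product), so the verification is short and involves no real obstacle — the only thing to be careful about is keeping the bookkeeping of which marked point is the distinguished one ($\ell_0$, never forgotten or merged on the "outside") straight, since the first node of the left factor becomes the distinguished point of $\Mbar_{h,1+(1+|J|)}$ and is glued to the distinguished point of $\Mbar_{h',1+|J'|}$.

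Finally, for the flat unit: if $(V_0,\bcdot,w)$ is unital with unit $\mathrm{e}$, then $\Omega_{0,3}(v \otimes \mathrm{e}) = v \bcdot \mathrm{e} = v$, and for the forgetful morphism $\fg \colon \Mbar_{g,1+(n+1)} \to \Mbar_{g,1+n}$ we have $\fg^*[1] = [1]$ (connected fibres), while $\Omega_{g,1+(n+1)}(v_1 \otimes \cdots \otimes v_n \otimes \mathrm{e}) = v_1 \bcdots v_n \bcdot \mathrm{e} \bcdot w^g = v_1 \bcdots v_n \bcdot w^g = \Omega_{g,1+n}(v_1 \otimes \cdots \otimes v_n)$, so $\fg^*\Omega_{g,1+n} = \Omega_{g,1+(n+1)}(\ph \otimes \mathrm{e})$ holds. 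Assembling these three verifications completes the proof; there is essentially no hard step, the whole point being that restricting to $H^0$ trivialises the cohomology and leaves only the (already known) Frobenius-type algebra identities for the not-necessarily-unital commutative associative algebra $(V_0,\bcdot)$ with its distinguished element $w$.
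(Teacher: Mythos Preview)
Your proposal is correct and follows essentially the same approach as the paper: both reduce the F-CohFT axioms to the fact that pullbacks act trivially on $H^0$ together with commutativity/associativity of $(V_0,\bcdot)$, and handle the flat unit identically. Your version simply spells out in detail what the paper dispatches in two sentences.
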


\begin{proof}
	The fact that the collection of maps $(\Omega_{g,1+n})_{g,n}$ forms an F-CohFT is a straightforward consequence of the compatibility of the fundamental class with glueing pullbacks and the commutativity/associativity of the F-TFT product. If the F-TFT is unital, the flat unit axiom follows from the fundamental class being compatible with the forgetful pullback.
\end{proof}

The amplitudes associated with the trivial CohFT (that is, the fundamental class) are recursively computed by topological recursion after Laplace transform. This is nothing but a reformulation of Witten's conjecture/Kontsevich's theorem \cite{Wit91,DVV91,Kon92} in terms of Virasoro constraints. As a consequence, we find that the amplitudes associated with a topological F-CohFT are again computed by F-topological recursion after Laplace transform. In order to state the precise result, preliminary considerations are due.

\begin{defn}
	First, we introduce a second loop space:
	\begin{equation}
		\Upsilon_+ \coloneqq \zeta V_0[\zeta^2] \,.
	\end{equation}
	The space $\Upsilon_+$ is related to $V_+$ through the \emph{Laplace isomorphism}:
	\begin{equation}
		\mathscr{L} \colon \Upsilon_+ \longrightarrow V_+ \,,
		\qquad\qquad
		\mathscr{L}[f](u)
		\coloneqq
		\frac{1}{\sqrt{2\pi u}} \int_{-\infty}^{+\infty}
			e^{-\frac{\zeta^2}{2u}} \,
			\dd f(\zeta)
		\,.
	\end{equation}
	Concretely, $\mathscr{L}^{-1}$ maps the basis vector $\mathrm{e}_{(\alpha,k)} \coloneqq \mathrm{e}_{\alpha} u^k$ to the basis vector $\epsilon_{(\alpha,k)} \coloneqq \mathrm{e}_{\alpha} \frac{\zeta^{2k+1}}{(2k+1)!!}$.

	We can also consider the `partial dual' space and the associated Laplace isomorphism:
	\begin{equation}
		\Upsilon_- \coloneqq V_0[\zeta^{-2}] \frac{\dd\zeta}{\zeta^2} \,,
		\qquad\qquad
		\mathscr{L}^{\ast} \colon V_- \longrightarrow \Upsilon_- \,.
	\end{equation}
	The duality in the loop variable is defined analogously through the formal residue. Concretely, the dual isomorphism maps the basis vector $\mathrm{e}_{\alpha}^k \coloneqq \mathrm{e}_{\alpha} \frac{\dd u}{u^{k+1}}$ to the basis vector $\epsilon_{\alpha}^k \coloneqq \mathrm{e}_{\alpha} \frac{(2k+1)!!}{\zeta^{2k+2}}\dd\zeta$. The Laplace isomorphisms also extend to isomorphisms involving $\widehat{V}_{\pm}$ and the completed loop spaces
	\begin{equation}
		\widehat{\Upsilon}_+
		\coloneqq
		\zeta V_0\bbraket{\zeta^2} \,,
		\qquad\qquad
		\widehat{\Upsilon}_-
		\coloneqq
		V_0\bbraket{\zeta^{-2}}\frac{\dd \zeta}{\zeta^2} \,.
	\end{equation}
\end{defn}

After applying the Laplace isomorphisms $(\mathscr{L}^{-1})^{\otimes 2}$ to $\mathscr{U}$ and $(\mathscr{L}^{\ast})^{\otimes 2}$ to $\mathscr{D}$, we obtain the elements $\Up \in \zeta_0 \zeta\,\End(V_0)[\zeta_0^2]\bbraket{\zeta^2}$ and $\Delta \in \End(V_0)[\zeta_0^{-2}]\bbraket{\zeta^{-2}}\frac{\dd\zeta_0\dd\zeta}{\zeta_0^2 \zeta^2}$. The non-degeneracy condition is equivalent to
\begin{equation}
	\Res_{\zeta,\zeta_1=0} \ \Up(\zeta_0,\zeta) \Delta(\zeta,\zeta_1) f(\zeta_1)
	=
	f(\zeta_0) \,,
	\qquad
	\Res_{\zeta,\zeta_1=0} \ \Delta(\zeta_0,\zeta) \Up(\zeta,\zeta_1) \chi(\zeta_1) 
	=
	\chi(\zeta_0)
\end{equation}
for any $\phi \in \Upsilon_+$ and $\chi \in \Upsilon_-$. Again, we can interpret both $\Up$ and $\Delta$ as linear operators:
\begin{equation}
\begin{aligned}
	& \Up = \mathscr{L}^{-1} \circ \mathscr{U} \circ (\mathscr{L}^{\ast})^{-1} \colon \Upsilon_- \longrightarrow \Upsilon_+ 
	&\qquad&
	\Up[\chi](\zeta_0)
	\coloneqq
	\Res_{\zeta = 0} \ \Up(\zeta_0,\zeta)\chi(\zeta) \,, \\
	& \Delta = \mathscr{L}^{\ast} \circ \mathscr{D} \circ \mathscr{L} \colon \Upsilon_+ \longrightarrow \Upsilon_-
	&\qquad&
	\Delta[f](\zeta_0)
	\coloneqq
	\Res_{\zeta = 0} \ \Delta(\zeta_0,\zeta) f(\zeta) \,,
\end{aligned}
\end{equation}
and the non-degeneracy condition simply states that $\Up$ and $\Delta$ are inverses of each other.

\begin{ex}\label{ex:stndrd:up:down:L}
	The standard up/down-morphisms of Example~\ref{ex:stndrd:up:down} yield, after application of the Laplace isomorphism
	\begin{equation}
	\begin{split}
		\Up(\zeta_0,\zeta)
			&=
			\id_{V_0} \sum_{k \geq 0} \frac{(\zeta_0\zeta)^{2k + 1}}{(2k + 1)!!^2}
			=
			\id_{V_0} \, \frac{1}{\zeta_0 \zeta} \biggl(
				{}_{1}F_2 \Bigl[
					\begin{smallmatrix} & \hspace*{-2pt}1\hspace*{-2pt} & \\[1pt] \frac{1}{2} && \frac{1}{2} \end{smallmatrix}
				\Bigr] \bigl( \tfrac{\zeta_0^{2}\zeta^{2}}{4} \bigr)
				-1
			\biggr) \,, \\
		\Delta(\zeta_0,\zeta)
			&=
			\id_{V_0} \sum_{k \geq 0} \frac{(2k + 1)!!^2}{(\zeta_0 \zeta)^{2k + 2}} \, \dd\zeta_0 \, \dd\zeta
			=
			\id_{V_0} \,
			\dd_{\zeta_0} \, \dd_{\zeta}
			\biggl(
				\frac{1}{\zeta_0 \zeta} \,
				{}_{3}F_0 \Bigl[
					\begin{smallmatrix} \frac{1}{2} & \frac{1}{2} & 1 \\[1pt] & \emptyset & \end{smallmatrix}
				\Bigr] \bigl( 4\zeta_0^{-2}\zeta^{-2} \bigr)
			\biggr) \,.
	\end{split}
	\end{equation}
	Here ${}_{p}F_q$ is a generalised hypergeometric series. The up-morphism can be alternatively written as $\Up(\zeta_0,\zeta) = \id_{V_0} \, \tfrac{\pi}{2} \, \bm{L}_0(\zeta_0\zeta)$, where $\bm{L}_0$ is the modified Struve function of order $0$. In coordinates, it yields $\Up[\epsilon_{\alpha}^{k}] = \epsilon_{(\alpha,k)}$ and $\Delta[\epsilon_{(\alpha,k)}] = \epsilon_{\alpha}^{k}$.
\end{ex}

\begin{thm}\label{thm:top:F-CohFT:F-TR}
	Let $(V_0, \bcdot, w)$ be an F-TFT. For a fixed choice of up/down-morphisms $(\mathscr{U},\mathscr{D})$, denote by $(F_{g,1+n})_{g,n}$ the associated amplitudes on $V_+$. Their Laplace transform, that is
	\begin{equation}
		\mathscr{L}^{-1} \circ F_{g,1+n} \circ \mathscr{L}^{\otimes n} \in \Hom(\Sym{n}{\Upsilon_+},\Upsilon_+) \,,
	\end{equation}
	are computed by F-topological recursion from the following F-Airy structure:
	\begin{equation}\label{eq:FAiry:loop}
	\begin{aligned}
		&
		A \in \Hom(\Sym{2}{\Upsilon_+}, \Upsilon_+)
		&& \qquad
		A(f_1 \otimes f_2)
		=
		\overline{\Up}\bigl[
			\dd f_1 \bcdot_{\scriptscriptstyle 0} \dd f_2
		\bigr]
		\in \Upsilon_+
		\,, \\[1ex]
		&
		B \in \Hom(\Upsilon_+^{\otimes 2}, \Upsilon_+) 
		&& \qquad
		B(f_1 \otimes f_2)
		=
		\overline{\Up}\bigl[
			\dd f_1 \bcdot_{\scriptscriptstyle 0} \Delta f_2
		\bigr]
		\in \Upsilon_+
		\,, \\[1ex]
		&
		\conn{C} \in \Hom(\Upsilon_+, \Upsilon_+^{\otimes 2}) 
		&& \qquad
		\conn{C}(f)
		=
		\bigl( (\overline{\Up} \otimes \id_{\Upsilon_+}) \circ \kappa_{\scriptscriptstyle 0} \bigr) \bigl[
			\Delta f
		\bigr]
		\in \Upsilon_+\, \widehat{\otimes} \,\Upsilon_+
		\,, \\[1ex]
		&
		\disc{C} \in \Hom(\Sym{2}{\Upsilon_+}, \Upsilon_+) 
		&& \qquad
		\disc{C}(f_1 \otimes f_2)
		=
		\overline{\Up}\bigl[
			\Delta f_1 \bcdot_{\scriptscriptstyle 0} \Delta f_2
		\bigr]
		\in \Upsilon_+
		\,, \\[1ex]
		&
		D \in \Upsilon_+
		&& \qquad
		D = 
		\tfrac{1}{2} \, \overline{\Up}\bigl[ \varpi_{\scriptscriptstyle 0} \bigr]
		\in \Upsilon_+
		\,. 
	\end{aligned}
	\end{equation}
	Here the following notations/conventions have been used.
	\begin{itemize}
		\item The linear map $\overline{\Up}$ is the extension of $\Up$ to $\Upsilon \coloneqq V_0(\!( \zeta^2 )\!) \dd\zeta = \dd \widehat{\Upsilon}_+ \oplus \Upsilon_-$ which is zero on $\dd\widehat{\Upsilon}_+$.

		\item The product $\bcdot_{\scriptscriptstyle 0}$ is the one induced by the F-TFT on $V_0$-valued $1$-forms (that is, the product $\bcdot$ on $V_0$ and the usual product on $1$-forms), twisted by $\theta_{\scriptscriptstyle 0} \coloneqq \frac{1}{\zeta^2 \dd\zeta}$:
		\begin{equation}
			\chi_1(\zeta)
			\bcdot_{\scriptscriptstyle 0}
			\chi_2(\zeta)
			\coloneqq
			\bigl( \chi_1(\zeta) \bcdot \chi_2(\zeta) \bigr) \theta_{\scriptscriptstyle 0}(\zeta) \,.
		\end{equation}

		\item The map $\kappa_{\scriptscriptstyle 0} \colon \Upsilon_- \to \Upsilon_-\, \widehat{\otimes}\, \Upsilon_+$ is defined as
		\begin{equation}
			\chi(\zeta)
			\longmapsto
			\sum_{k \ge 0}
				\left(
					\chi(\zeta_1) \biggl( \theta_{\scriptscriptstyle 0}(\zeta_1) \frac{\dd\zeta_1}{\zeta_1^{2k+2}} \biggr)
				\right)
				\otimes
				\left(
					w \, \zeta_2^{2k+1}
				\right)
		\end{equation}
		In other words, $\kappa_{\scriptscriptstyle 0}$ is the multiplication of $\chi(\zeta_1)\theta_{\scriptscriptstyle 0}(\zeta_1) \otimes w$ by $\frac{\zeta_2 \dd\zeta_1}{\zeta_1^2 - \zeta_2^2}$, expanded in geometric series in the regime $|\zeta_1| > |\zeta_2|$.

		\item $\varpi_{\scriptscriptstyle 0} \coloneqq w \, \theta_{\scriptscriptstyle 0}(\zeta) \, \frac{(\dd\zeta)^2}{(2\zeta)^2} \in \Upsilon_-$.
	\end{itemize}
	In coordinates, for a fixed basis $(\mathrm{e}_{(\alpha,k)} = \mathrm{e}_{\alpha}u^k)_{\alpha,k}$ of $V_+$ with structure constants $\mathrm{e}_{\beta} \bcdot \mathrm{e}_{\gamma} = P^{\alpha}_{\beta,\gamma} \mathrm{e}_{\alpha}$ and distinguished vector $w = w^{\alpha} \mathrm{e}_{\alpha}$, we have
	\begin{equation}\label{eq:FAiry:loop:coord}
	\begin{split}
		& A_{(\beta,j),(\gamma,k)}^{(\alpha,i)}
		=
			\mathscr{U}_{\lambda}^{\alpha;i,0} \,
			P_{\beta,\gamma}^{\lambda} \,
			\delta_{j,k,0} \,, \\
		& B_{(\beta,j),(\gamma,k)}^{(\alpha,i)}
		=
			\mathscr{U}^{\alpha;i,\ell}_{\lambda} \,
			P_{\beta,\mu}^{\lambda} \,
			\mathscr{D}^{\mu}_{\gamma;m,k} \,
			\delta^{m+1}_{\ell+j} \,
			\frac{(2m+1)!!}{(2\ell+1)!!(2j-1)!!} \,
			\,, \\
		& \indC{\conn{C}}{(\alpha,i),(\beta,j)}{(\gamma,k)}
		=
			\mathscr{U}^{\alpha;i,\ell}_{\lambda} \,
			w^{\beta} \,
			\mathscr{D}^{\lambda}_{\gamma;m,k} \,
			\delta_{\ell}^{m+j+2} \,
			\frac{(2m+1)!! (2j+1)!!}{(2\ell + 1)!!}
		\,, \\
		& \indC{\disc{C}}{(\alpha,i)}{(\beta,j),(\gamma,k)}
		=
			\mathscr{U}^{\alpha;i,\ell}_{\lambda} \,
			P_{\rho,\sigma}^{\lambda} \,
			\mathscr{D}^{\rho}_{\beta;r,j} \,
			\mathscr{D}^{\sigma}_{\gamma;s,k} \,
			\delta_{\ell}^{r+s+2} \,
			\frac{(2r+1)!! (2s+1)!!}{(2\ell + 1)!!}
		\,, \\
		& D^{(\alpha,i)}
		=
			\mathscr{U}_{\lambda}^{\alpha;i,1} \,
			\frac{w^{\lambda}}{24} \,.
	\end{split}
	\end{equation}
\end{thm}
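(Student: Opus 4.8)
The plan is to pass to coordinates and reduce the statement to the Virasoro constraints for pure descendent intersection numbers. For a topological F-CohFT one has $\Omega_{g,1+n}(\mathrm{e}_{\alpha_1}\otimes\cdots\otimes\mathrm{e}_{\alpha_n}) = [1]\otimes(\mathrm{e}_{\alpha_1}\bcdots\mathrm{e}_{\alpha_n}\bcdot w^g)$, so \eqref{eq:ampl:F-CohFT} gives
\begin{equation*}
	\indF{F}{g}{(\alpha_0,k_0)}{(\alpha_1,k_1),\ldots,(\alpha_n,k_n)}
	=
	\mathscr{U}_{\beta}^{\alpha_0;k_0,\ell}\,\braket{\mathrm{e}^{\beta},\mathrm{e}_{\alpha_1}\bcdots\mathrm{e}_{\alpha_n}\bcdot w^g}\,\langle\tau_{\ell}\tau_{k_1}\cdots\tau_{k_n}\rangle_g\,,
\end{equation*}
where $\langle\tau_{\ell}\tau_{k_1}\cdots\tau_{k_n}\rangle_g = \int_{\Mbar_{g,1+n}}\psi_0^{\ell}\,\psi_1^{k_1}\cdots\psi_n^{k_n}$. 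Since F-topological recursion is basis-independent and the tensors \eqref{eq:FAiry:loop:coord} are the Laplace transports to $V_+$ of the basis-free tensors \eqref{eq:FAiry:loop}, it suffices to check that the family $(F_{g,1+n})_{g,n}$ above obeys the F-TR recursion \eqref{eq:F-TR:coords} with initial data \eqref{eq:FAiry:loop:coord}. The base cases are immediate: the dimensions of $\Mbar_{0,3}$ and $\Mbar_{1,1}$ force $\langle\tau_{\ell}\tau_{k_1}\tau_{k_2}\rangle_0 = \delta_{\ell,0}\delta_{k_1,0}\delta_{k_2,0}$ and $\langle\tau_{\ell}\rangle_1 = \tfrac{1}{24}\delta_{\ell,1}$, whence $F_{0,3} = A$ and $F_{1,1} = D$.

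For the induction step I would substitute into the displayed formula the Virasoro (DVV) recursion, valid for $2g-2+(1+n)>1$ and equivalent to Witten's conjecture/Kontsevich's theorem \cite{Wit91,DVV91,Kon92}:
\begin{multline*}
	(2\ell+1)!!\,\langle\tau_{\ell}\,\tau_{k_1}\cdots\tau_{k_n}\rangle_g
	=
	\sum_{j=1}^{n}\frac{(2\ell+2k_j-1)!!}{(2k_j-1)!!}\,\langle\tau_{\ell+k_j-1}\,{\textstyle\prod_{i\neq j}}\tau_{k_i}\rangle_g
	\\
	+\frac{1}{2}\sum_{a+b=\ell-2}(2a+1)!!\,(2b+1)!!\,\Bigl(\langle\tau_a\tau_b\,\tau_{k_1}\cdots\tau_{k_n}\rangle_{g-1}
	+\sum_{\substack{h+h'=g\\ I\sqcup I'=[n]}}\langle\tau_a\,{\textstyle\prod_{I}}\tau_{k_i}\rangle_h\,\langle\tau_b\,{\textstyle\prod_{I'}}\tau_{k_i}\rangle_{h'}\Bigr).
\end{multline*}
Three ingredients then have to be matched. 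First, the factor $\braket{\mathrm{e}^{\beta},\mathrm{e}_{\alpha_1}\bcdots\mathrm{e}_{\alpha_n}\bcdot w^g}$ is multiplicative in the F-TFT product: deleting a factor $\mathrm{e}_{\alpha_j}$ reproduces the $\mathrm{e}_{\alpha_j}\bcdot(\ph)$ built into $B$; splitting the product along a partition $[n]=I\sqcup I'$ and $g=h+h'$, using $w^g=w^h\bcdot w^{h'}$ together with commutativity and associativity of $\bcdot$, reproduces the product $(\ph)\bcdot(\ph)$ built into $\disc{C}$ (the structure constants $P$ appearing when the $V_0$-output is read off in the basis); and in the genus-dropping term the power $w^{g-1}$ must be promoted to $w^g$ by one extra $w$, which is exactly the distinguished vector $w$ carried by $\conn{C}$. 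Second, the double factorials of DVV, read through the Laplace normalisation $\mathrm{e}_{\alpha}u^k\leftrightarrow\mathrm{e}_{\alpha}\tfrac{\zeta^{2k+1}}{(2k+1)!!}$, account for the loop-variable bookkeeping: they reproduce the twist $\theta_{\scriptscriptstyle 0}=\tfrac{1}{\zeta^2\dd\zeta}$ inside $\bcdot_{\scriptscriptstyle 0}$, the geometric expansion of $\tfrac{\zeta_2\dd\zeta_1}{\zeta_1^2-\zeta_2^2}$ defining $\kappa_{\scriptscriptstyle 0}$ (which splits the $\conn{C}$ leg into a returning half-edge and a $w$-decorated half-edge), and the down-operator $\Delta$ together with $\overline{\Up}$ (the extension of $\Up$ by zero on $\dd\widehat{\Upsilon}_+$). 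Third, the class $\psi_0$ is restored by $\mathscr{U}(u_0,\psi_0)$; since $\mathscr{U}$ and $\mathscr{D}$ — equivalently $\Up$ and $\Delta$ — are mutually inverse (Definition~\ref{def:up:down}), the composite $\overline{\Up}\circ(\cdots)\circ\Delta$ turns the distinguished $\psi_0$-exponent $\ell$ of DVV into the output loop variable. Tracking the bookkeeping, one finds that the $j$-sum becomes $\sum_m B\bigl(\mathrm{e}_{(\alpha_m,k_m)}\otimes F_{g,1+(n-1)}(\cdots)\bigr)$, the genus-dropping term becomes $\tfrac{1}{2}(\id\otimes\tr)\bigl(\conn{C}\circ F_{g-1,1+(n+1)}(\cdots\otimes\ph)\bigr)$, and the splitting term becomes $\tfrac{1}{2}\,\disc{C}\bigl(\sum_{h+h',\,J\sqcup J'}F_{h,1+|J|}(v_J)\otimes F_{h',1+|J'|}(v_{J'})\bigr)$, the prefactors $\tfrac{1}{2}$ matching; this is precisely \eqref{eq:F-TR:coords} with initial data \eqref{eq:FAiry:loop:coord}.

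It then remains to observe that \eqref{eq:FAiry:loop:coord} genuinely defines an F-Airy structure in the graded infinite-dimensional sense of Section~\ref{sec:setting}: by Remark~\ref{rem:finite:F} all descendent numbers of total $\psi$-degree exceeding $3g-2+n$ vanish, so each $F_{g,1+n}$ factors through finite jets of its loop-variable arguments, and an admissible pair $(\varphi,\tilde{d})$ is read off from $\deg_{u_0}\mathscr{U}$ and the support of $\mathscr{D}$ — a routine check. Translating \eqref{eq:FAiry:loop:coord} back through the Laplace isomorphism, whose Jacobian is the source of the double factorials, gives the basis-free formulae \eqref{eq:FAiry:loop}. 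I expect the only genuine obstacle to be the second ingredient above: matching, coefficient by coefficient, the double factorials of DVV against the combined effect of the Laplace Jacobian, the twist $\theta_{\scriptscriptstyle 0}$, the geometric expansion defining $\kappa_{\scriptscriptstyle 0}$, and — because $\mathscr{U}$ sits on the \emph{output} and so has to be traded for its inverse — the exact placement of $\overline{\Up}$ and $\Delta$ in the five defining tensors. Everything else is either the DVV recursion (cited) or the elementary multiplicativity of the F-TFT product.
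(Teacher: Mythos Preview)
Your proposal is correct and follows essentially the same approach as the paper: factor the amplitudes as $\mathscr{U}\times(\text{F-TFT factor})\times(\text{Witten correlator})$, apply the DVV/Virasoro recursion to the Witten correlator, couple its three terms with the three multiplicative identities for the F-TFT factor, and use the non-degeneracy $\mathscr{D}\mathscr{U}=\id$ to insert the missing $\mathscr{U}$ on the newly created output leg. The paper makes the ``genuine obstacle'' you flag concrete by writing out the $B$-term explicitly and then checking that the coordinate expression \eqref{eq:FAiry:loop:coord} agrees with the basis-free \eqref{eq:FAiry:loop}; the $\conn{C}$ and $\disc{C}$ checks are declared analogous and omitted.
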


\begin{proof}
	We proceed by induction on $2g-2+(1+n) > 0$, after setting up some notations. In coordinates, denote the F-CohFT associated with $(V_0,\bcdot,w)$ defined in equation~\eqref{eq:F-TFT:fund:class} as
	\begin{equation}
		\Omega_{g,1+n}(\mathrm{e}_{\alpha_1} \otimes \cdots \otimes \mathrm{e}_{\alpha_n})
		=
		[1] \otimes \mathrm{e}_{\alpha_1} \bcdots \mathrm{e}_{\alpha_n} \bcdot w^g
		=
		\mathscr{F}_{g;\alpha_1,\dots,\alpha_n}^{\alpha_0} [1] \otimes \mathrm{e}_{\alpha_0} \,.
	\end{equation}
	Thus, the associated F-CohFT amplitudes read
	\begin{equation}\label{eq:top:FCohFT:ampl}
		F_{g;(\alpha_1,k_1),\dots,(\alpha_n,k_n)}^{(\alpha_0,k_0)}
		=
		\mathscr{U}_{\beta}^{\alpha_0;k_0,\ell} \,
		\mathscr{F}_{g;\alpha_1,\dots,\alpha_n}^{\beta}
		\braket{\tau_{\ell} \tau_{k_1} \cdots \tau_{k_n} }_g \,,
	\end{equation}
	where $\braket{\tau_{\ell} \tau_{k_1} \cdots \tau_{k_n} }_g \coloneqq \int_{\Mbar_{g,1+n}} \psi_0^{\ell} \psi^{k_1}_1 \cdots \psi^{k_n}_n$ is Witten's notation for $\psi$-class intersection numbers. Using $\braket{\tau_i \tau_j \tau_k}_0 = \delta_{i,j,k,0}$ and $\braket{\tau_i}_1 = \delta_{i,1}/24$, we deduce the thesis for the basic topologies. In order to compute recursively the F-CohFT amplitudes, we notice that the amplitudes in \eqref{eq:top:FCohFT:ampl} `decouple' as F-TFT correlators multiplied by Witten's correlators. We can then employ the Virasoro constraints for $\psi$-class intersection numbers
	\begin{multline}\label{eq:Virasoro}
		\braket{ \tau_{\ell} \tau_{k_1} \cdots \tau_{k_n} }_g
		=
		\sum_{m=1}^n
			\frac{(2(\ell+k_m-1)+1)!!}{(2\ell+1)!! (2k_m-1)!!}
			\braket{ \tau_{\ell+k_m-1} \tau_{k_1} \cdots \widehat{\tau_{k_m}} \cdots \tau_{k_n} }_g
			\\
		+
		\frac{1}{2} \sum_{\substack{a,a' \ge 0 \\ a + a' = \ell-2}} \frac{(2a+1)!! (2a'+1)!!}{(2\ell+1)!!} \Bigg(
			\braket{ \tau_{a} \tau_{a'} \tau_{k_1} \cdots \tau_{k_n} }_{g-1}
			\\
			+
			\sum_{ \substack{h+h' = g \\ J \sqcup J' = [n] }}
				\braket{ \tau_{a} \tau_{K_{J}} }_{h}
				\braket{ \tau_{a'} \tau_{K_{J'}} }_{h'}
		\Bigg)
	\end{multline}
	together with the recursive structure of the F-TFT amplitudes:
	\begin{equation}\label{eq:FTFT:rec}
	\begin{aligned}
		& \mathscr{F}_{g;\alpha_1,\dots,\alpha_n}^{\lambda}
		=
		P_{\mu,\alpha_m}^{\lambda} \, \mathscr{F}_{g;\alpha_1,\dots,\widehat{\alpha_m},\dots,\alpha_n}^{\mu}
		&& \qquad m \in [n] \,, \\
		& \mathscr{F}_{g;\alpha_1,\dots,\alpha_n}^{\lambda}
		=
		w^{\mu} \, \mathscr{F}_{g-1;\mu,\alpha_1,\dots,\alpha_n}^{\lambda \vphantom{\mu'}} \,,
		&& \\
		& \mathscr{F}_{g;\alpha_1,\dots,\alpha_n}^{\lambda}
		=
		P_{\mu,\mu'}^{\lambda} \, \mathscr{F}_{h;\alpha_J}^{\mu} \, \mathscr{F}_{h';\alpha_{J'}}^{\mu'}
		&& \qquad h + h' = g , \ J \sqcup J' = [n] \,.
	\end{aligned}
	\end{equation}
	Coupling the three different terms in the Virasoro recursion (corresponding to the three lines in \eqref{eq:Virasoro}) with the three different relations satisfied by the F-TFT amplitudes (corresponding to the three lines in \eqref{eq:FTFT:rec}), we deduce that $F_{g,1+n}$ is indeed computed by F-topological recursion with data given by \eqref{eq:FAiry:loop:coord}. For instance, for the B-term we find
	\begin{equation}
	\begin{split}
		&
		\mathscr{U}_{\lambda}^{\alpha_0;k_0,\ell} \,
		P_{\mu,\alpha_m}^{\lambda} \,
		\delta^{p+1}_{\ell+k_m} \,
		\frac{(2p+1)!!}{(2\ell+1)!! (2k_m-1)!!} \,
		\mathscr{F}_{g;\alpha_1,\dots,\widehat{\alpha_m},\dots,\alpha_n}^{\mu}
		\braket{ \tau_{p} \tau_{k_1} \cdots \widehat{\tau_{k_m}} \cdots \tau_{k_n} }_g \\
		& \quad =
		\mathscr{U}_{\lambda}^{\alpha_0;k_0,\ell} \,
		P_{\mu,\alpha_m}^{\lambda} \,
		\delta^{p+1}_{\ell+k_m} \,
		\frac{(2p+1)!!}{(2\ell+1)!! (2k_m-1)!!} \,
		\delta^{\mu}_{\nu} \, \delta_{p}^{q} \,
		\mathscr{F}_{g;\alpha_1,\dots,\widehat{\alpha_m},\dots,\alpha_n}^{\nu}
		\braket{ \tau_{q} \tau_{k_1} \cdots \widehat{\tau_{k_m}} \cdots \tau_{k_n} }_g \\
		& \quad =
		\underbrace{
			\mathscr{U}_{\lambda}^{\alpha_0;k_0,\ell} \,
			P_{\mu,\alpha_m}^{\lambda} \,
			\mathscr{D}_{\beta;p,j}^{\mu} \,
			\delta^{p+1}_{\ell+k_m} \,
			\frac{(2p+1)!!}{(2\ell+1)!! (2k_m-1)!!}
		}_{
			= B^{(\alpha_0,k_0)}_{(\beta,j),(\alpha_m,k_m)}
		} \\
		&
		\qquad\qquad\qquad \times
		\underbrace{
			\mathscr{U}_{\nu}^{\beta;j,q} \,
			\mathscr{F}_{g;\alpha_1,\dots,\widehat{\alpha_m},\dots,\alpha_n}^{\nu}
			\braket{ \tau_{q} \tau_{k_1} \cdots \widehat{\tau_{k_m}} \cdots \tau_{k_n} }_g
			\vphantom{\frac{(2p+1)!!}{(2\ell+1)!! (2k_m-1)!!}}
		}_{
			= \indF{F}{g}{(\beta,j)}{(\alpha_1,k_1),\dots,\widehat{(\alpha_m,k_m)},\dots,(\alpha_n,k_n)}
		} .
	\end{split}
	\end{equation}
	Notice that a crucial role is played by the non-degeneracy conditions $\mathscr{D}_{\beta;p,j}^{\mu} \mathscr{U}_{\nu}^{\beta;j,q} = \delta^{\mu}_{\nu} \, \delta_{p}^{q}$. It is now easy to check that the expression for $B$ in \eqref{eq:FAiry:loop} is the coordinate-free versions of \eqref{eq:FAiry:loop:coord}:
	\begin{equation}
	\begin{split}
		B\bigl( \epsilon_{(\beta,j)} \otimes \epsilon_{(\gamma,k)} \bigr)
		&=
		\overline{\Up}\bigl[
			\bigl( \dd\epsilon_{(\beta,j)} \bcdot \Delta\epsilon_{(\gamma,k)} \bigr) \theta_{\scriptscriptstyle 0}
		\bigr] \\
		&=
		\overline{\Up}\Biggl[
			\left(
				\mathrm{e}_{\beta} \frac{\zeta^{2j}}{(2j-1)!!} \dd\zeta
				\bcdot
				\mathscr{D}^{\mu}_{\gamma;m,k} \, \mathrm{e}_{\mu} \, \frac{(2m+1)!!}{\zeta^{2m+2}} \dd\zeta
			\right)
			\frac{1}{\zeta^2 \dd\zeta}
		\Biggr] \\
		&=
		P^{\lambda}_{\beta,\mu} \, \mathscr{D}^{\mu}_{\gamma;m,k} \, \frac{(2m+1)!!}{(2j-1)!!} \,
		\overline{\Up}\Biggl[
			\mathrm{e}_{\lambda} \frac{\dd \zeta}{\zeta^{2(m-j+1)+2}}
		\Biggr] \\
		&=
		\mathscr{U}^{\alpha;i,\ell}_{\lambda} \,
		P_{\beta,\mu}^{\lambda} \,
		\mathscr{D}^{\mu}_{\gamma;m,k} \,
		\delta^{m+1}_{\ell+j} \,
		\frac{(2m+1)!!}{(2\ell+1)!!(2j-1)!!} \,
		\epsilon_{(\alpha,i)} \,.
	\end{split}
	\end{equation}
	A similar computation can be carried out for $\conn{C}$ and $\disc{C}$, thus completing the proof.
\end{proof}

\paragraph{Identification of the orbits.}
We are now ready to state the main result of this section: the identification of the actions on F-Airy structures and F-CohFTs described in Sections~\ref{sec:action} and \ref{subsec:F-Giv} respectively. The proof is a simple consequence of the analysis carried out in Section~\ref{subsec:actions:FCohFT:ampl}.

\begin{thm}\label{thm:ident:actions}
	Let $\Omega$ be an F-CohFT on $V_0$. Suppose that, after a choice of up/down-morphisms $(\mathscr{U}, \mathscr{D})$, the associated amplitudes $(F_{g,1+n})_{g,n}$ are computed by F-TR on $V_+$.
		\begin{itemize}
		\item \emph{Change of basis}. For a given $L \in \GL(V_0)$, the amplitudes associated with the F-CohFT $\hat{L}\Omega$
		are computed by F-TR and coincide with $\preind{L}{F}$ for
		\begin{equation}
			L_{\textup{t}} = L_{\textup{s}} \coloneqq L \in \GL(V_+) \,,
		\end{equation}
		provided that the transformed amplitudes are computed with respect to the up/down-morphisms
		\begin{equation}
			\preind{L}{\mathscr{U}}
			\coloneqq
			L \circ \mathscr{U} \circ L^{-1}
			\qquad\text{and}\qquad
			\preind{L}{\mathscr{D}}
			\coloneqq
			L \circ \mathscr{D} \circ L^{-1} \,.
		\end{equation}

		\item \emph{R-action}. For a given $R(u) \in \mathfrak{Giv}$, the amplitudes associated with the F-CohFT $\hat{R}\Omega$ are computed by F-TR  (with underlying F-Airy structure based on $V_{R,+} \coloneqq L_{R,\textup{t}}(V_+) \subset \widehat{V}_+$, see \ref{subsec:actions:FCohFT:ampl}) and coincide with $\preind{L_{R}}{\vphantom{F}}(\preind{B_{R}}{F})$ for
		\begin{equation}
		\begin{aligned}
			\qquad
			& L_{{R},\textup{s}} \in \GL(\widehat{V}_+)
			& \quad &
			L_{{R},\textup{s}}[f](u)
			\coloneqq
			R(u)f(u) \,, \\
			\qquad
			& L_{{R},\textup{t}} \in \GL(\widehat{V}_+)
			& \quad &
			L_{{R},\textup{t}}[f](u)
			\coloneqq
			R(-u)f(u) \,, \\
			\qquad
			& B_{{R}} \in \Hom(V_+,\widehat{V}_+)
			& \quad & 
			B_{{R}}[f](u)
			\coloneqq
			\Res_{u',u'' = 0} \
				\frac{\id_{V_0} - R^{-1}(u) \circ R(-u')}{u + u'} \,
				\mathscr{D}(u',u'') \, f(u'') \,,
		\end{aligned}
		\end{equation}
		provided that the transformed amplitudes are computed with respect to the up/down-morphisms
		\begin{equation}
			\preind{R}{\mathscr{U}} 
			\coloneqq
			L_{{R},\textup{t}} \circ \mathscr{U} \circ M_R^{-1} \,,
			\qquad\text{and}\qquad
			\preind{R}{\mathscr{D}} 
			\coloneqq
			M_R \circ \mathscr{D} \circ L_{{R},\textup{t}}^{-1} \,,
		\end{equation}
		where $L_{{R},\textup{t}}$ is the multiplication by $R(-u)$ as above, its inverse $L_{{R},\textup{t}}^{-1}$ is the multiplication by $R^{-1}(-u)$, and $M_R, M_R^{-1} \in \GL(V_-)$ are defined as
		\begin{equation}
			M_{R}[\chi](u) \coloneqq \Bigl[ R(-u) \chi(u) \Bigr]_- \,,
			\qquad
			M_{R}^{-1}[\chi](u) \coloneqq \Bigl[ R^{-1}(-u) \chi(u) \Bigr]_- \,.
		\end{equation}

		\item \emph{Translation}. For a given $T(u) \in u^2 V_0\bbraket{u} \subset \widehat{V}_+$, the amplitudes associated with the F-CohFT $\hat{T}\Omega$ are computed by F-TR and coincide with the amplitudes $\preind{T}{F}$ (cf. Remark~\ref{rem:translation}).
	\end{itemize}
\end{thm}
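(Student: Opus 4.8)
The plan is to match, one operation at a time, the amplitude-level formulas already derived in Section~\ref{subsec:actions:FCohFT:ampl} against the three F-Airy structure operations of Section~\ref{sec:action} (change of basis, Bogoliubov transformation, translation), and then quote the corresponding preservation theorems to conclude that F-TR still computes the transformed amplitudes. Most of the work has in fact already been done in Section~\ref{subsec:actions:FCohFT:ampl}: there we computed $(\hat{L}F)_{g,1+n} = L \circ F_{g,1+n} \circ (L^{-1})^{\otimes n}$ in \eqref{eq:change:basis:CohFT}, the stable-tree formula \eqref{RFtrans} for $(\hat{R}F)_{g,1+n}$, and $(\hat{T}F)_{g,1+n} = \sum_{m \geq 0} \tfrac{1}{m!}\, F_{g,1+n+m}(\id_{V_+}^{\otimes n} \otimes T^{\otimes m})$ for the translation, each together with the accompanying up/down-morphisms, which were pinned down precisely by requiring that these formulas reproduce \eqref{eq:ampl:F-CohFT} for the transformed F-CohFT and for which the attached down-morphisms were exhibited there. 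So what remains is a pattern-matching argument.

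For the change of basis, \eqref{eq:change:basis:CohFT} is exactly \eqref{eq:F:change:bases} with $\lambda_{\textup{s}} = \lambda_{\textup{t}} = L$; since the change-of-bases action of Section~\ref{subsec:change:bases} turns the amplitudes of one F-Airy structure into those of another, the hypothesis that $(F_{g,1+n})_{g,n}$ are F-TR amplitudes gives that $(\hat{L}F)_{g,1+n} = \preind{L}{F}_{g,1+n}$ are as well. For the R-action, the inner sum over $\bm{T} \in \mathbb{T}_{g,1+n}$ with edge operator $B_{R}$ in \eqref{RFtrans} is \eqref{eq:Bglbv:action} with $\beta = B_{R}$, while the outer conjugation by $L_{R,\textup{t}}$ on the output and $(L_{R,\textup{s}}^{-1})^{\otimes n}$ on the inputs is \eqref{eq:F:change:bases} with $\lambda_{\textup{t}} = L_{R,\textup{t}}$, $\lambda_{\textup{s}} = L_{R,\textup{s}}$; composing Theorem~\ref{thm:Bglbv:FAiry} with the change-of-bases discussion then identifies $(\hat{R}F)_{g,1+n}$ with $\preind{L_{R}}{\vphantom{F}}(\preind{B_{R}}{F})$, the F-TR amplitudes of the F-Airy structure obtained by first applying \eqref{eq:Bglbv:data} with $\beta = B_{R}$ and then transporting the resulting tensors along $(L_{R,\textup{s}}, L_{R,\textup{t}})$ to $V_{R,+} = L_{R,\textup{t}}(V_+)$. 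For the translation, the displayed formula for $(\hat{T}F)_{g,1+n}$ is literally \eqref{eq:t:spec} with $\tau$ replaced by $T \in u^2 V_0\bbraket{u}$, so Theorem~\ref{thm:transl:FAiry} and Remark~\ref{rem:translation} give that $(\hat{T}F)_{g,1+n} = \preind{T}{F}_{g,1+n}$ are F-TR amplitudes, with no change of up/down-morphisms. In each case the fixed-point (resp.\ translated-potential) description of the transformed ancestor potential recorded in Section~\ref{subsec:actions:FCohFT:ampl} matches Lemma~\ref{lem:fixed:pnt} (resp.\ \eqref{eq:Phi:trans}).

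The one point requiring care --- and the main obstacle --- is the infinite-dimensional bookkeeping on the loop spaces. Theorem~\ref{thm:Bglbv:FAiry} and the change-of-bases discussion are stated for a fixed $V$, whereas the R-action genuinely mixes $V_+$, its completion $\widehat{V}_+$, and the subspace $V_{R,+} \subset \widehat{V}_+$, and $B_{R}$ only maps $V_+ \to \widehat{V}_+$. So one must verify that, after transport along $L_{R,\textup{s}}, L_{R,\textup{t}}$ and with $\beta = B_{R}$, the transformed data still fits the graded infinite-dimensional F-Airy framework of Section~\ref{sec:setting} --- that is, that the edge-compositions in \eqref{eq:Bglbv:action} are well defined and the finite-dimensionality conditions hold for a suitable $(\varphi, \tilde{d})$. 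This is supplied by Remark~\ref{rem:finite:F}: the bound $\dim_{\CC}\Mbar_{g,1+n} = 3g-2+n$ forces every $F_{g,1+n}$ to land in the uncompleted $V_+$ and makes all the relevant sums truncate, so the compositions make sense and the graded-finiteness hypotheses hold with the same $(\varphi, \tilde{d})$ one reads off from Theorem~\ref{thm:top:F-CohFT:F-TR}. Once this is in place, the statement follows by the pure pattern-matching described above.
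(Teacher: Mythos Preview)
Your proposal is correct and follows exactly the paper's approach: the paper states that the theorem ``is a simple consequence of the analysis carried out in Section~\ref{subsec:actions:FCohFT:ampl}'', and you have done precisely this pattern-matching, quoting \eqref{eq:change:basis:CohFT}, \eqref{RFtrans}, and the translation formula and identifying them with \eqref{eq:F:change:bases}, \eqref{eq:Bglbv:action}, and \eqref{eq:t:spec} respectively. Your added paragraph on the infinite-dimensional bookkeeping via Remark~\ref{rem:finite:F} is a welcome clarification that the paper leaves implicit.
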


Keeping in mind possible applications, let us consider the case of F-CohFTs of the form
\begin{equation}
	\Omega = \hat{L}\hat{R}\hat{T}\Omega^0 \,,
\end{equation}
where $\Omega^0$ is a topological F-CohFT, $T(u) \in u^2V\bbraket{u}$, $R(u) \in \mathfrak{Giv}$, and $L \in \GL(V_0)$. Our goal is to write down the initial data $(A,B,\conn{C},\disc{C},D)$ explicitly in terms of the F-TFT structure and the data of $T$, $R$, and $L$. We proceed step by step, modifying the initial data for $\Omega^0$ accordingly.

\textit{Step 1.} Theorem~\ref{thm:top:F-CohFT:F-TR} provides the initial data for $\Omega^0$ in terms of the associated F-TFT data.

\textit{Step 2.} As for $\hat{T}\Omega^0$, the transformation of the initial data is provided by Theorem~\ref{thm:transl:FAiry}, and the formulae can be simplified as follows. Firstly notice that, as $T$ starts in degree $2$, for cohomological degree reasons the tensors $\preind{T}{G}$ and $\preind{T}{H}$ identically vanish. Hence, the equations defining the translated initial data drastically simplify. We claim that, upon identification of the underlying loop spaces via Laplace isomorphisms, the translated initial data are given by
\begin{equation}\label{eq:FAiry:loop:transl}
\begin{aligned}
	&
	\preind{T}{A}(f_1 \otimes f_2)
	=
	\overline{\Up}\bigl[
		\dd f_1 \bcdot_{\scriptscriptstyle T} \dd f_2
	\bigr]
	\in \Upsilon_+
	\,, \\[1ex]
	&
	\preind{T}{B}(f_1 \otimes f_2)
	=
	\overline{\Up}\bigl[
		\dd f_1 \bcdot_{\scriptscriptstyle T} \Delta f_2
	\bigr]
	\in \Upsilon_+
	\,, \\[1ex]
	&
	\preind{T}{\vphantom{C}}\conn{C}(f)
	=
	\bigl( (\overline{\Up} \otimes \id_{\Upsilon_+}) \circ \kappa_{\scriptscriptstyle T} \bigr) \bigl[
			\Delta f
	\bigr]
	\in \Upsilon_+\, \widehat{\otimes}\,\Upsilon_+
	\,, \\[1ex]
	&
	\preind{T}{\vphantom{C}}\disc{C}(f_1 \otimes f_2)
	=
	\overline{\Up}\bigl[
		\Delta f_1 \bcdot_{\scriptscriptstyle T} \Delta f_2
	\bigr]
	\in \Upsilon_+
	\,, \\[1ex]
	&
	\preind{T}{D} = 
	\tfrac{1}{2} \, \overline{\Up}\bigl[ \varpi_{T} \bigr]
	\in \Upsilon_+
	\,,
\end{aligned}
\end{equation}
where now the following ($T$-dependent) notations/conventions have been introduced.
\begin{itemize}
	\item The product $\bcdot_{\scriptscriptstyle T}$ is now twisted using $\tau \coloneqq \mathscr{L}^{-1}[T]$. Namely, we introduce
	\begin{equation}
		\theta_{\scriptscriptstyle T}
		\coloneqq
		\frac{1}{\zeta^2 \dd\zeta- \dd\tau}
		=
		\frac{1}{\zeta^2\dd\zeta} \left(
			1 + \sum_{m \geq 1} \left( \frac{\dd \tau}{\zeta^2\dd \zeta} \right)^m
		\right) ,
	\end{equation}
	where the powers live in the algebra of $V_0$-valued formal power series. The twisted product is then defined as
	\begin{equation}
		\chi_1(\zeta)
		\bcdot_{\scriptscriptstyle T}
		\chi_2(\zeta)
		\coloneqq
		\bigl( \chi_1(\zeta) \bcdot \chi_2(\zeta) \bigr) \bcdot \theta_{\scriptscriptstyle T}(\zeta) \,.
	\end{equation}
	Strictly speaking, the $m = 0$ term does not live in the same space as the series over $m \geq 1$, but the above formula still makes sense if interpreted as
	\begin{equation}
		\chi_1(\zeta)
		\bcdot_{\scriptscriptstyle T}
		\chi_2(\zeta)
		=
		\bigl( \chi_1(\zeta) \bcdot \chi_2(\zeta) \bigr) \frac{1}{\zeta^2 \dd \zeta}
		+
		\sum_{m \geq 1} \chi_1(\zeta) \bcdot \chi_2(\zeta) \bcdot \left(\frac{\dd \tau}{\zeta^2\dd \zeta}\right)^m \frac{1}{\zeta^2\dd \zeta} \,.
	\end{equation}
	If the F-TFT comes with a unit $\mathrm{e}$, one can safely substitute the $m = 0$ term with $\mathrm{e} \frac{1}{\zeta^2 \dd\zeta}$.

	\item The map $\kappa_{\scriptscriptstyle T} \colon \Upsilon_- \to \Upsilon \,\widehat{\otimes}\, \Upsilon_+$ is defined as
	\begin{equation}
		\chi(\zeta)
		\longmapsto
		\sum_{k \ge 0}
			\left(
				\chi(\zeta_1) \bcdot \theta_{\scriptscriptstyle T}(\zeta_1) \biggl( \frac{\dd\zeta_1}{\zeta_1^{2k+2}} \biggr)
			\right)
			\otimes
			\bigl(
				w \, \zeta_2^{2k+1}
			\bigr) \,.
	\end{equation}

	\item $\varpi_{T} \coloneqq w \bcdot \theta_{\scriptscriptstyle T}(\zeta) \, \frac{(\dd\zeta)^2}{(2\zeta)^2} \in \Upsilon$.
\end{itemize}
Notice that the result of the $T$-twisted product $\chi_1 \bcdot_{\scriptscriptstyle T} \chi_2$, the first tensor factor of $\kappa_{\scriptscriptstyle T}[\chi]$, and $\varpi_{\scriptscriptstyle T}$ belong to $\Upsilon = \dd\widehat{\Upsilon}_+ \oplus \Upsilon_-$ rather than simply $\Upsilon_-$. This is due to the fact that $\theta_{\scriptscriptstyle T}(\zeta)$ contains (arbitrarily large) positive powers of $\zeta$. However, this is not a problem: the application of $\overline{\Up}$ in \eqref{eq:FAiry:loop:transl} annihilates all the terms from $\dd\widehat{\Upsilon}_+$, providing well-defined elements of $\widehat{\Upsilon}_+$ at the end of the computation.

Let us check \eqref{eq:FAiry:loop:transl} for the $B$-tensor. The translated initial data are uniquely characterised by equation~\eqref{eq:BC:tilde}, which in the case of vanishing $\preind{T}{G}$ and $\preind{T}{H}$ simplifies to $B = K \circ \preind{T}{B}$ with $K \coloneqq \id_{V_+} - B(\tau \otimes \id_{V_+}) \in \End(V_+)$. Notice that the definition of $K$ involves the inverse Laplace-transformed translation, since all computations are performed on $\Upsilon_+$ rather than $V_+$. We can now check that the translated initial data indeed satisfy the equation:
\begin{equation}\label{eq:transl:B:check}
\begin{split}
	(K \circ \preind{T}{B})(f_1 \otimes f_2)
	& =
	\preind{T}{B}(f_1 \otimes f_2)
	-
	B( \tau \otimes \preind{T}{B}(f_1 \otimes f_2) ) \\
	& =
	\overline{\Up}\bigl[
		\dd f_1\bcdot \Delta f_2 \bcdot \theta_{\scriptscriptstyle T}
	\bigr]
	-
	\overline{\Up}\Bigl[
		\theta_{\scriptscriptstyle 0} \, \dd {\tau} \bcdot (\Delta \circ \overline{\Up}) \bigl[ \dd f_1 \bcdot \Delta f_2 \bcdot \theta_{\scriptscriptstyle T} \bigr]
	\Bigr] \\
	& =
	\overline{\Up}\Bigl[
		\bigl( \theta_{\scriptscriptstyle T} - \theta_{\scriptscriptstyle 0} \, \dd {\tau} \bcdot \theta_{\scriptscriptstyle T} \bigr) \bcdot \dd f_1 \bcdot \Delta f_2
	\Bigr] .
\end{split}
\end{equation}
To go from the second to the last line, we recall that $\overline{\Up}$ is the operator $\Up$ extended by zero on $\dd\widehat{\Upsilon}_+$. We then decompose
\begin{equation}
	\chi = \dd f_1 \bcdot \Delta f_2 \bcdot \theta_{\scriptscriptstyle T}
	=
	\chi_- + \dd \chi_+
	\qquad\text{with}\qquad
	\chi_{-} \in \Upsilon_{-}\,, \ \chi_+ \in \widehat{\Upsilon}_+
\end{equation}
and employ the non-degeneracy condition $\Delta \circ \Up = \id_{\Upsilon_-}$ to get $\Delta \circ \overline{\Up}[\chi] = \chi_- = \chi - \dd\chi_+$. Nevertheless, since $T(u) = \bigO(u^2)$, we have $\theta_{\scriptscriptstyle 0} \, \dd\tau = \bigO(\zeta^2)$, so that $\theta_{\scriptscriptstyle 0} \, \dd\tau \bcdot \dd\chi_+ \in \dd\widehat{\Upsilon}_+$ is annihilated by the outermost $\overline{\Up}$ and yields the last line of \eqref{eq:transl:B:check}. Now, recalling that $\theta_{\scriptscriptstyle 0} = (\zeta^2 \dd\zeta)^{-1}$, we find $(\theta_{\scriptscriptstyle T} - \theta_{\scriptscriptstyle 0} \, \dd\tau \bcdot \theta_{\scriptscriptstyle T}) = \theta_{\scriptscriptstyle 0}$, hence the claim $K \circ \preind{T}{B} = B$. Similar computations hold for the other tensors.

\textit{Step 3.} For $\hat{R}\hat{T}\Omega^0$, ignoring the change of bases induced by the R-action (see Step 4), the transformation of the initial data is provided by Theorem~\ref{thm:Bglbv:FAiry} and reads (upon identification of the underlying loop spaces via Laplace isomorphisms)
\begin{equation}\label{eq:FAiry:loop:RT}
\begin{aligned}
	&
	\preind{RT}{A}(f_1 \otimes f_2)
	=
	\overline{\Up}\bigl[
		\dd f_1 \bcdot_{\scriptscriptstyle T} \dd f_2
	\bigr]
	\in \Upsilon_+
	\,, \\[1ex]
	&
	\preind{RT}{B}(f_1 \otimes f_2)
	=
	\overline{\Up}\bigl[
		\dd f_1
		\bcdot_{\scriptscriptstyle T}
		\bigl(
			(\id_{\Upsilon_-} + \dd \circ \Epsilon_{R}) \circ \Delta
		\bigr) f_2
	\bigr]
	\in \Upsilon_+
	\,, \\[1ex]
	&
	\preind{RT}{\vphantom{C}}\conn{C}(f)
	=
	\bigl( (\overline{\Up} \otimes \id_{\Upsilon_+}) \circ \kappa_{\scriptscriptstyle T} \bigr) \bigl[
			\Delta f
	\bigr]
	\in \Upsilon_+ \,\widehat{\otimes}\,\Upsilon_+
	\,, \\[1ex]
	&
	\preind{RT}{\vphantom{C}}\disc{C}(f_1 \otimes f_2)
	=
	\overline{\Up}\bigl[
		\bigl(
			(\id_{\Upsilon_-} + \dd \circ \Epsilon_{R}) \circ \Delta
		\bigr) f_1
		\bcdot_{\scriptscriptstyle T}
		\bigl(
			(\id_{\Upsilon_-} + \dd \circ \Epsilon_{R}) \circ \Delta
		\bigr) f_2
	\bigr]
	\in \Upsilon_+
	\,, \\[1ex]
	&
	\preind{RT}{D} = 
	\tfrac{1}{2} \, \overline{\Up}\bigl[ \varpi_{T} \bigr]
	\in \Upsilon_+
	\,,
\end{aligned}
\end{equation}
where $\Epsilon_{R} \colon \Upsilon_- \to \Upsilon_+$ is the Laplace transform of the edge weight operator $\mathscr{E}_{R} \colon V_- \to V_+$ defined in terms of the R-matrix in equation~\eqref{eq:Bglbv:F-Giv}. In other words, $\Epsilon_{R} \coloneqq \mathscr{L}^{-1} \circ \mathscr{E}_{R} \circ (\mathscr{L}^{\ast})^{-1}$.

\textit{Step 4.} To conclude, following Section~\ref{subsec:change:bases}, the change of bases induced by $R$ and $L$ giving the amplitudes associated with $\hat{L}\hat{R}\hat{T}\Omega^0$ (computed with respect to the properly modified up/down-morphisms) produces the following transformed initial data (again, upon identification of the underlying loop spaces via Laplace isomorphisms):
\begin{equation}\label{eq:FAiry:loop:LRT}
\begin{aligned}
	&
	\preind{LRT}{A}(f_1 \otimes f_2)
	=
	\overline{\Up}_{LR}\bigl[
		\dd_{LR} f_1 \bcdot_{\scriptscriptstyle T} \dd_{LR} f_2
	\bigr]
	\in \Upsilon_+
	\,, \\[1ex]
	&
	\preind{LRT}{B}(f_1 \otimes f_2)
	=
	\overline{\Up}_{LR}\bigl[
		\dd_{LR} f_1
		\bcdot_{\scriptscriptstyle T}
		\bigl(
			(\id_{\Upsilon_-} + \dd \circ \Epsilon_{R}) \circ \Delta_{LR}
		\bigr) f_2
	\bigr]
	\in \Upsilon_+
	\,, \\[1ex]
	&
	\preind{LRT}{\vphantom{C}}\conn{C}(f)
	=
	\bigl(
		(\overline{\Up}_{LR} \otimes \lambda_{{LR},\textup{t}})
		\circ
		\kappa_{\scriptscriptstyle T}
	\bigr) \bigl[
			\Delta_{LR} f
	\bigr]
	\in \Upsilon_+ \,\widehat{\otimes}\,\Upsilon_+
	\,, \\[1ex]
	&
	\preind{LRT}{\vphantom{C}}\disc{C}(f_1 \otimes f_2)
	=
	\overline{\Up}_{LR}\bigl[
		\bigl(
			(\id_{\Upsilon_-} + \dd \circ \Epsilon_{R}) \circ \Delta_{LR}
		\bigr) f_1
		\bcdot_{\scriptscriptstyle T}
		\bigl(
			(\id_{\Upsilon_-} + \dd \circ \Epsilon_{R}) \circ \Delta_{LR}
		\bigr) f_2
	\bigr]
	\in \Upsilon_+
	\,, \\[1ex]
	&
	\preind{LRT}{D} = 
	\tfrac{1}{2} \, \overline{\Up}_{LR}\bigl[ \varpi_{T} \bigr]
	\in \Upsilon_+
	\,,
\end{aligned}
\end{equation}
where:
\begin{itemize}
	\item $\lambda_{{LR},\textup{s}} \coloneqq \mathscr{L}^{-1} \circ L_{{LR},\textup{s}} \circ \mathscr{L}$ and $\lambda_{{LR},\textup{t}} \coloneqq \mathscr{L}^{-1} \circ L_{{LR},\textup{t}} \circ \mathscr{L}$ are the automorphisms of $\Upsilon_+$ defined as the Laplace transform of
	\begin{equation}\label{eq:L:LR}
		L_{{LR},\textup{s}} \coloneqq L \circ R(u) \,,
		\qquad
		L_{{LR},\textup{t}} \coloneqq L \circ R(-u) \,,
	\end{equation}
	that is the automorphisms of $V_+$ acting as $L$ composed with the multiplication by $R(u)$ and $R(-u)$ respectively.

	\item $\Up_{LR}$, $\dd_{LR}$, and $\Delta_{LR}$ are the following compositions of linear maps:
	\begin{equation}\label{eq:UpLR:DownLR}
		\Up_{LR} \coloneqq \lambda_{{LR},\textup{t}} \circ \Up  \,,
		\qquad
		\dd_{LR} \coloneqq \dd \circ \lambda_{{LR},\textup{s}}^{-1} \,,
		\qquad
		\Delta_{LR} \coloneqq \Delta \circ \lambda_{{LR},\textup{t}}^{-1} \,.
	\end{equation}
	The operator $\overline{\Up}_{LR}$ is again the extension of $\Up_{LR}$ to $\Upsilon = \dd \widehat{\Upsilon}_+ \oplus \Upsilon_-$ which is zero on $\dd\widehat{\Upsilon}_+$.
\end{itemize}
Diagrammatically, the final formulae for the F-Airy structure computing the Laplace transformed amplitudes associated with $\Omega = \hat{L}\hat{R}\hat{T}\Omega^0$ can be represented as follows (we omit the superscript $LRT$ from the tensors).
\begin{equation}\label{eq:F-CohFT:diagr}
	\begin{tikzpicture}[%
		baseline,%
		xscale=.7, yscale=.5
	]
		\node at (-1.8,0) {$A =$};
		\draw (-1,-1) -- (1,-1) -- (1,1) -- (-1,1) -- cycle;
		\node at (0,0) {$\bcdot_{{\scriptscriptstyle T}}$};

		\draw (-.6,1) -- (-.6,1.5);
		\draw (-.1,1.5) -- (-1.1,1.5) -- (-1.1,2.5) -- (-.1,2.5) -- cycle;
		\node at (-.6,2) {\tiny $\dd_{\scriptscriptstyle L \!\! R}$};
		\draw (-.6,2.5) -- (-.6,3);

		\draw (.6,1) -- (.6,1.5);
		\draw (.1,1.5) -- (1.1,1.5) -- (1.1,2.5) -- (.1,2.5) -- cycle;
		\node at (.6,2) {\tiny $\dd_{\scriptscriptstyle L \!\! R}$};
		\draw (.6,2.5) -- (.6,3);

		\draw (0,-1) -- (0,-1.5);
		\draw (-.5,-1.5) -- (.5,-1.5) -- (.5,-2.5) -- (-.5,-2.5) -- cycle;
		\node at (0,-2) {\tiny $\overline{\Up}_{\scriptscriptstyle L \!\! R}$};
		\draw (0,-2.5) -- (0,-3);

	\begin{scope}[xshift=4.5cm]
		\node at (-1.8,0) {$B =$};
		\draw (-1,-1) -- (1,-1) -- (1,1) -- (-1,1) -- cycle;
		\node at (0,0) {$\bcdot_{{\scriptscriptstyle T}}$};

		\draw (-.6,1) -- (-.6,3);
		\draw (-.1,3) -- (-1.1,3) -- (-1.1,4) -- (-.1,4) -- cycle;
		\node at (-.6,3.5) {\tiny $\dd_{\scriptscriptstyle L \!\! R}$};
		\draw (-.6,4) -- (-.6,4.5);

		\draw (.6,1) -- (.6,1.5);
		\draw (-.2,1.5) -- (1.4,1.5) -- (1.4,2.5) -- (-.2,2.5) -- cycle;
		\node at (.6,2) {\tiny $\id \!+\! \dd\Epsilon_{\scriptscriptstyle R}$};
		\draw (.6,2.5) -- (.6,3);
		\draw (.1,3) -- (1.1,3) -- (1.1,4) -- (.1,4) -- cycle;
		\node at (.6,3.5) {\tiny $\Delta_{\scriptscriptstyle L \!\! R}$};
		\draw (.6,4) -- (.6,4.5);

		\draw (0,-1) -- (0,-1.5);
		\draw (-.5,-1.5) -- (.5,-1.5) -- (.5,-2.5) -- (-.5,-2.5) -- cycle;
		\node at (0,-2) {\tiny $\overline{\Up}_{\scriptscriptstyle L \!\! R}$};
		\draw (0,-2.5) -- (0,-3);
	\end{scope}

	\begin{scope}[xshift=9cm]
		\node at (-1.8,0) {$\conn{C} =$};
		\draw (-1,-1) -- (1,-1) -- (1,1) -- (-1,1) -- cycle;
		\node at (0,0) {$\gamma_{{\scriptscriptstyle T}}$};

		\draw (-.6,-1) -- (-.6,-1.5);
		\draw (-.1,-1.5) -- (-1.1,-1.5) -- (-1.1,-2.5) -- (-.1,-2.5) -- cycle;
		\node at (-.6,-2) {\tiny $\overline{\Up}_{\scriptscriptstyle L \!\! R}$};
		\draw (-.6,-2.5) -- (-.6,-3);

		\draw (.6,-1) -- (.6,-1.5);
		\draw (.1,-1.5) -- (1.1,-1.5) -- (1.1,-2.5) -- (.1,-2.5) -- cycle;
		\node at (.6,-2) {\tiny ${\vphantom{\Up}\lambda}_{\scriptscriptstyle L \!\! R,\textup{t}}$};
		\draw (.6,-2.5) -- (.6,-3);

		\draw (0,1) -- (0,1.5);
		\draw (-.5,1.5) -- (.5,1.5) -- (.5,2.5) -- (-.5,2.5) -- cycle;
		\node at (0,2) {\tiny $\Delta_{\scriptscriptstyle L \!\! R}$};
		\draw (0,2.5) -- (0,3);
	\end{scope}

	\begin{scope}[xshift=13.5cm]
		\node at (-1.8,0) {$\disc{C} =$};
		\draw (-1,-1) -- (1,-1) -- (1,1) -- (-1,1) -- cycle;
		\node at (0,0) {$\bcdot_{{\scriptscriptstyle T}}$};

		\draw (-.6,1) -- (-.6,1.5);
		\draw (-.1,1.5) -- (-1.7,1.5) -- (-1.7,2.5) -- (-.1,2.5) -- cycle;
		\node at (-.9,2) {\tiny $\id \!+\! \dd\Epsilon_{\scriptscriptstyle R}$};
		\draw (-.6,2.5) -- (-.6,3);
		\draw (-.1,3) -- (-1.1,3) -- (-1.1,4) -- (-.1,4) -- cycle;
		\node at (-.6,3.5) {\tiny $\Delta_{\scriptscriptstyle L \!\! R}$};
		\draw (-.6,4) -- (-.6,4.5);

		\draw (.6,1) -- (.6,1.5);
		\draw (.1,1.5) -- (1.7,1.5) -- (1.7,2.5) -- (.1,2.5) -- cycle;
		\node at (.9,2) {\tiny $\id \!+\! \dd\Epsilon_{\scriptscriptstyle R}$};
		\draw (.6,2.5) -- (.6,3);
		\draw (.1,3) -- (1.1,3) -- (1.1,4) -- (.1,4) -- cycle;
		\node at (.6,3.5) {\tiny $\Delta_{\scriptscriptstyle L \!\! R}$};
		\draw (.6,4) -- (.6,4.5);

		\draw (0,-1) -- (0,-1.5);
		\draw (-.5,-1.5) -- (.5,-1.5) -- (.5,-2.5) -- (-.5,-2.5) -- cycle;
		\node at (0,-2) {\tiny $\overline{\Up}_{\scriptscriptstyle L \!\! R}$};
		\draw (0,-2.5) -- (0,-3);
	\end{scope}

	\begin{scope}[xshift=18cm]
		\node at (-1.6,0) {$D =$};
		\draw (-.8,-1) -- (.8,-1) -- (.8,1) -- (-.8,1) -- cycle;
		\node at (0,0) {$\tfrac{1}{2}\varpi_{{\scriptscriptstyle T}}$};

		\draw (0,-1) -- (0,-1.5);
		\draw (-.5,-1.5) -- (.5,-1.5) -- (.5,-2.5) -- (-.5,-2.5) -- cycle;
		\node at (0,-2) {\tiny $\overline{\Up}_{\scriptscriptstyle L \!\! R}$};
		\draw (0,-2.5) -- (0,-3);
	\end{scope}

	\end{tikzpicture}
\end{equation}

\begin{rem}\label{rem:UpLR:DownLR:new}
	Formulae \eqref{eq:UpLR:DownLR} express the operator $\Up_{LR}$ and $\Delta_{LR}$ in terms of the up/down-morphisms $(\Up,\Delta)$. It would be more natural though to express them in terms of (the Laplace transform of) the new up/down-morphisms $(\tilde{\mathscr{U}},\tilde{\mathscr{D}})$ provided by Theorem~\ref{thm:ident:actions}, that is
	\begin{equation}
		\tilde{\Up}
		=
		\lambda_{{LR},\textup{t}} \circ \Up \circ \mu_{LR}^{-1}
		\qquad\text{and}\qquad
		\tilde{\Delta}
		=
		\mu_{LR} \circ \Delta \circ \lambda_{{LR},\textup{t}}^{-1} \,.
	\end{equation}
	Here $\mu_{LR} \coloneqq \mathscr{L}^{*} \circ M_{LR} \circ (\mathscr{L}^{*})^{-1}$ and its inverse $\mu_{LR}^{-1} = \mathscr{L}^{*} \circ M_{LR}^{-1} \circ (\mathscr{L}^{*})^{-1}$ are the automorphisms of $\Upsilon_-$ defined as the Laplace transforms of
	\begin{equation}
		M_{LR}[\chi](u) \coloneqq \Bigl[ L \circ R(-u) \chi(u) \Bigr]_{-} \,,
		\qquad
		M_{LR}^{-1}[\chi](u) = \Bigl[ (L \circ R(-u))^{-1} \chi(u) \Bigr]_{-} \,.
	\end{equation}
	The up/down-morphisms $(\tilde{\mathscr{U}},\tilde{\mathscr{D}})$ are more natural, since they are the ones used to compute the transformed amplitudes. This is easily achieved as
	\begin{equation}\label{eq:UpLR:DownLR:new}
		\Up_{LR}
		=
		\tilde{\Up} \circ \mu_{LR}
		\qquad\text{and}\qquad
		\Delta_{LR}
		=
		\mu_{LR}^{-1} \circ \tilde{\Delta} \,.
	\end{equation}
\end{rem}

\begin{rem}
	We conclude with a remark on the symmetries of the F-Airy structure tensors associated with an F-CohFT of the form $\Omega = \hat{L}\hat{R}\hat{T}\Omega^0$. If $\Omega^0$ is a topological field theory (rather than merely an F-topological one), then the associated tensors satisfy the so-called IHX relations (cf.~\cite{ABCO24}). In this case, the actions of $\hat{L}$, $\hat{R}$, and $\hat{T}$ preserve these symmetries for the tensors $A$, $B$, and $\disc{C}$, since their transformation laws (up to index raising/lowering) match those of the corresponding classical Airy structures. However, this argument does not extend to the tensors $\conn{C}$ and $D$, which transform differently due to the presence of sums over stable trees (rather than stable graphs).
\end{rem}

\subsection{Example: the extended 2-spin F-CohFT}
\label{subsec:ext:2spin:class}
An example of F-CohFT is given by the extended $r$-spin class. The underlying F-manifold was constructed in \cite{JKV01} and further studied in \cite{BCT19,Bur20,BR21,ABLR23}. In this section, we focus on the $r = 2$ case.

From the F-manifold of the extended $2$-spin theory, we can associate two families of F-CohFTs depending on a parameter $s \in \mathbb{C}^*$ and both defined over the vector space $V_0 \coloneqq \CC\mathrm{e}_1 \oplus \CC\mathrm{e}_2$. The first one is the extended $2$-spin CohFT shifted along $(0,s)$ \cite{BR21}:
\begin{equation}
	\Omega_{g,1+n}^{s} \colon
	V_0^{\otimes n} \longrightarrow H^{\textup{even}}(\Mbar_{g,1+n}) \otimes V_0 \,.
\end{equation}
In the original reference, it is denoted as $c^{2,\textup{ext},(0,s)}_{g,1+n}$. The second family is obtained from the F-Givental group action \cite{ABLR23}:
\begin{equation}
	\overline{\Omega}_{g,1+n}^{s} \colon
	V_0^{\otimes n} \longrightarrow H^{\textup{even}}(\Mbar_{g,1+n}) \otimes V_0 \,.
\end{equation}
In the original reference, it is denoted as $c^{\overline{F}_{(0,s)},(0,-s^2)}_{g,1+n}$ and its construction is recalled below.

As pointed out in \cite{ABLR23}, the two F-CohFTs do not coincide, but it is reasonable to expect that they are related on the moduli space of stable curves of compact type (recall that a stable curve is of compact type if its dual graph is a stable tree). Indeed, on the one hand $\overline{\Omega}^{s}$ is constructed through the F-Givental action, and as such it is supported on compact type. On the other hand $\Omega^{s}$ is non-zero outside compact type, but after multiplication by $\lambda_g$ (the top Chern class of the Hodge bundle) we get a class supported on compact type. It is conjectured that
\begin{equation}\label{eq:conj:ext:2spin}
	\overline{\Omega}_{g,1+n}^{s} 
	\overset{?}{=}
	\lambda_g \, \Omega_{g,1+n}^{s} \,.
\end{equation}
In support of this conjecture, notice that $\overline{\Omega}_{g,1+n}^{s}(\mathrm{e}_{1}^{\otimes n}) = 0$, while $\Omega_{g,1+n}^{s}(\mathrm{e}_{1}^{\otimes n}) = \lambda_g \, \mathrm{e}_1$. The latter restricts to zero on the moduli of compact type as $\lambda_g^2 = 0$. A proof of equation~\eqref{eq:conj:ext:2spin} would be particularly interesting from the point of view of the double ramification hierarchy \cite{Bur15,BR16,BR21}, where only $\lambda_g \, \Omega_{g,1+n}^{s}$ is relevant. This last point motivates our interest in the intersection indices of $\overline{\Omega}_{g,1+n}^{s}$ and $\psi$-classes: thanks to the identification discussed in Theorems~\ref{thm:top:F-CohFT:F-TR} and \ref{thm:ident:actions}, such intersection indices are recursively computed by F-TR.

We start by recalling from \cite{ABLR23} the construction of $\overline{\Omega}^{s}$. The underlying F-TFT, denoted $\overline{\Omega}^{s,0}$ is identified by the algebra $(V_0,\bcdot)$ and distinguished vector $w$ given as
\begin{equation}
	V_0 \coloneqq \CC\mathrm{e}_1 \oplus \CC\mathrm{e}_2 \,,
	\qquad
	\mathrm{e}_{\beta} \bcdot \mathrm{e}_{\gamma} \coloneqq \delta^{\alpha}_{\beta,\gamma} \mathrm{e}_{\alpha} \,,
	\qquad
	w \coloneqq -s^2 \, \mathrm{e}_2 \,.
\end{equation}
In particular, the unit is $\mathrm{e} = \mathrm{e}_1 + \mathrm{e}_2$. Notice that the F-TFT is semisimple. Now consider $L \in \GL(V_0)$, $R(u) \in \mathfrak{Giv}$, and $T(u) \in u^2 V_0\bbraket{u}$ given by
\begin{equation}
	L
	\coloneqq
	\begin{pmatrix}
		1 & 0 \\
		\frac{1}{s} & -\frac{1}{s}
	\end{pmatrix} ,
	\quad
	R(u)
	\coloneqq
	\id_{V_0}
	-
	\sum_{m \ge 1}
		\begin{pmatrix}
			0 & 0 \\
			\frac{(2m-1)!!}{s^{2m}} & 0
		\end{pmatrix} u^m \,,
	\quad
	T(u)
	\coloneqq
	- \sum_{m \geq 2} \frac{(2m-3)!!}{s^{2m - 2}} \mathrm{e}_2 \, u^{m} \,.
\end{equation}
Notice that that the translation is the one induced by the R-matrix as in Theorem~\ref{thm:translation:F-CohFT}, that is $T(u) = u(\id_{V_0} - R^{-1}(u))\mathrm{e}$. Then $\overline{\Omega}^{s}$ is defined as
\begin{equation}
	\overline{\Omega}^{s}
	\coloneqq
	\hat{L}\hat{R}\hat{T}\overline{\Omega}^{s,0} \,.
\end{equation}
Our goal is to compute the correlators associated with the above F-CohFT. To this end, we choose the standard up/down-morphism of Example~\ref{ex:stndrd:up:down}, so that in the basis $(\mathrm{e}_{\alpha,k} = \mathrm{e}_{\alpha} u^k)_{\alpha,k}$ of $V_+$ the correlators read
\begin{equation}
	\indF{F}{g}{(\alpha_0,k_0)}{(\alpha_1,k_1),\dots,(\alpha_n,k_n)}
	=
	\int_{\Mbar_{g,1+n}}
		\Braket{
			\mathrm{e}^{\alpha_0},
			\overline{\Omega}^{s}_{g,1+n}(\mathrm{e}_{\alpha_1} \otimes \cdots \otimes \mathrm{e}_{\alpha_n})
		}
		\psi_{0}^{k_0} \prod_{i=1}^n \psi_{i}^{k_i} \,.
\end{equation}
We proceed by computing all the ingredients appearing in \eqref{eq:FAiry:loop:LRT}. The computations are performed on the natural bases of $\Upsilon_+$ and $\Upsilon_-$, that is
\begin{equation}
	\epsilon_{(\alpha,k)} \coloneqq \mathrm{e}_{\alpha} \frac{\zeta^{2k+1}}{(2k+1)!!} \in \Upsilon_+ \,,
	\qquad\qquad
	\epsilon_{\alpha}^k \coloneqq \mathrm{e}_{\alpha} \frac{(2k+1)!!}{\zeta^{2k+2}} \dd\zeta \in \Upsilon_- \,.
\end{equation}
It is also convenient to introduce a basis of $\dd\Upsilon_+$ by extending that of $\Upsilon_-$ to negative indices:
\begin{equation}
	\epsilon_{\alpha}^{-k}
	\coloneqq
	\mathrm{e}_{\alpha} \frac{(-2k+1)!!}{\zeta^{-2k+2}} \dd\zeta
	=
	(-1)^{k-1} \, \mathrm{e}_{\alpha} \frac{\zeta^{2k-2}}{(2k-3)!!} \dd\zeta \,.
\end{equation}
The last equation follows the convention $(-2k+1)!! \coloneqq (-1)^{k-1} \frac{1}{(2k-3)!!}$, which is the natural extension of the double factorial deduced from its relation with the Gamma function. With this convention,
\begin{equation}
	\dd\epsilon_{(\alpha,k)} = (-1)^k \epsilon_{\alpha}^{-k-1} \,.
\end{equation}
We will use double factorials of odd negative integers throughout the rest of this section.

\textbf{Change of bases.} 
The automorphisms $\lambda_{LR,\textup{s}}^{-1}$ and $\lambda_{LR,\textup{t}}$ of $\Upsilon_+$ responsible for the change of bases are
\begin{equation}
\begin{split}
	& \lambda_{LR,\textup{s}}^{-1}[\epsilon_{(\alpha,k)}]
	=
	\delta_{\alpha}^1 \biggl(
		\epsilon_{(1,k)}
		+
		\sum_{m \geq 0}
			\tfrac{(2m-1)!!}{s^{2m}} \epsilon_{(2,k+m)}
	\biggr)
	- s \delta_{\alpha}^2 \epsilon_{(2,k)} \,,
	\\
	& \lambda_{LR,\textup{t}}[\epsilon_{(\alpha,k)}]
	=
	\delta_{\alpha}^1 \biggl(
		\epsilon_{(1,k)}
		+
		\tfrac{1}{s}
		\sum_{m \geq 0}
			\tfrac{1}{(-2m-1)!! \, s^{2m}} \epsilon_{(2,k+m)}
	\biggr)
	- \tfrac{1}{s} \delta_{\alpha}^2 \epsilon_{(2,k)} \,.
\end{split}
\end{equation}
In particular, the twisted differential $\dd_{LR}$ reads
\begin{equation}
	\dd_{LR}\epsilon_{(\alpha,k)}
	=
	(-1)^k
	\biggl[
		\delta_{\alpha}^1
		\biggl(
			\epsilon_{1}^{-k-1}
			+
			\sum_{m \geq 0} \tfrac{1}{(-2m-1)!! \, s^{2m}} \epsilon_{2}^{-k-1-m}
		\biggr)
		- s \delta_{\alpha}^2 \epsilon_{2}^{-k-1}
	\biggr] \,.
\end{equation}
As the final up/down-morphisms are chosen to be the standard ones, we find that the isomorphisms $\Up_{LR}$ and $\Delta_{LR}$ (computed via \eqref{eq:UpLR:DownLR:new}) are simply given by
\begin{equation}
\begin{split}
	& \Up_{LR}[\epsilon_{\alpha}^k]
	=
	\delta_{\alpha}^1 \biggl(
		\epsilon_{(1,k)}
		+
		\tfrac{1}{s}
		\sum_{m = 0}^{k}
			\tfrac{1}{(-2m-1)!! \, s^{2m}} \epsilon_{(2,k-m)}
	\biggr)
	- \tfrac{1}{s} \delta_{\alpha}^2 \epsilon_{(2,k)} \,,
	\\
	& \Delta_{LR}[\epsilon_{(\alpha,k)}]
	=
	\delta_{\alpha}^1 \biggl(
		\epsilon_{1}^k
		+
		\sum_{m = 0}^{k}
			\tfrac{1}{(-2m-1)!! \, s^{2m}} \epsilon_{2}^{k-m}
	\biggr)
	- s \delta_{\alpha}^2 \epsilon_{2}^k \,.
\end{split}
\end{equation}

\textbf{R-action.} 
The Laplace transform of the differential of the edge weight, as a linear operator $\dd \circ \Epsilon_R \colon \Upsilon_- \to \dd\Upsilon_+$, reads
\begin{equation}
	(\dd \circ \Epsilon_R)[\epsilon_{\alpha}^k]
	=
 	\delta_{\alpha}^1 \sum_{m \geq k+1} \frac{1}{(-2m-1)!! \, s^{2m}} \epsilon_{2}^{k-m} \,.
\end{equation}

\textbf{Translation.}
The element $\theta_{\scriptscriptstyle T}(\zeta) = \frac{1}{\zeta^2 \dd\zeta - \dd\tau}$ is easily computed from the Laplace transform of the translation as
\begin{equation}
	\theta_{\scriptscriptstyle T}(\zeta)
	=
	\frac{1}{\zeta^2 \dd\zeta} \,\frac{1}{\mathrm{e} - \mathrm{e}_2
	\frac{s}{2\zeta} \ln{\frac{s - \zeta}{s + \zeta}}
	}
	=
	\frac{1}{\zeta^2 \dd\zeta}
	\biggl(
		\mathrm{e}_1 + \mathrm{e}_2 \sum_{m \ge 0} \frac{\vartheta_m}{s^{2m}} \zeta^{2m}
	\biggr) ,
\end{equation}
with the convention $\vartheta_0 \coloneqq 1$. The last equation follows from the fact that $\mathrm{e}_2$ is idempotent, and the expansion coefficients are given by
\begin{equation}
	\vartheta_m
	=
	\sum_{\substack{m_1 + m_2 + \cdots = m \\ m_1,m_2,\ldots \ge 1}}
		\prod_{i \ge 1} \frac{-1}{2m_i + 1} \,.
\end{equation}
Notice that the sum is finite, since $m_i \ge 1$ are required to sum up to $m$. The first elements of the sequence $(\vartheta_m)_{m \ge 0}$ are $(1,\frac{1}{3},\frac{4}{45},\frac{44}{945},\frac{428}{14175},\frac{10196}{467775},\frac{10719068}{638512875},\ldots)$.

It is now possible to compute the twisted product $\bcdot_{\scriptscriptstyle T}$, the map $\kappa_{\scriptscriptstyle T}$, and the $V_0$-valued form $\varpi_{\scriptscriptstyle T}$. The computations are performed modulo $\dd\Upsilon_+$, since the subsequent application of $\overline{\Up}_{LR}$ would annihilate all such terms. The twisted product on elements of $\dd\Upsilon_+ \oplus \Upsilon_-$ is given by
\begin{equation}
	\epsilon_{\beta}^j
	\bcdot_{\scriptscriptstyle T}
	\epsilon_{\gamma}^k
	=
	\delta^{1}_{\beta,\gamma} \, \dfsymb{j,k}{j+k+2} \, \epsilon_{1}^{j+k+2}
	+
	\delta^{2}_{\beta,\gamma}
	\sum_{m = 0}^{\max\{0,j+k+2\}}
		\dfsymb{j,k}{j+k+2-m} \,
		\tfrac{\vartheta_m}{s^{2m}} \,
		\epsilon_{2}^{j+k+2-m}
	+
	\dd\Upsilon_+
\end{equation}
for any $j,k \in \ZZ$. Here, and in the rest of this section, we use the following short-hand notation for ratio of double factorials:
\begin{equation}
	\Bigl\{ \! \genfrac..{0pt}{1}{a_1 \, ,\ldots, \, a_M}{b_1 \, ,\ldots, \, b_N} \! \Bigr\}
	\coloneqq
	\frac{\prod_{i=1}^M (2a_i+1)!!}{\prod_{j=1}^N (2b_j+1)!!}
	\qquad\text{for}\qquad
	a_i,b_j \in \ZZ \,.
\end{equation}
The double factorial of negative odd integers is assumed as above. Notice that expressions of this form are the main combinatorial factors appearing in the Virasoro constrains for the Witten--Kontsevich correlators.

The map $\kappa_{\scriptscriptstyle T} \colon \Upsilon_- \to \Upsilon \otimes \Upsilon_+$ is
\begin{multline}
	\kappa_{\scriptscriptstyle T}[\epsilon_{\gamma}^k]
	=
	- s^2
	\sum_{\ell \ge 0} \biggl(
		\delta_{\gamma}^1 \,
		\dfsymb{\ell,k}{k+\ell+2} \,
		\epsilon_{1}^{k+\ell+2} \\
		+
		\delta_{\gamma}^2
		\sum_{m = 0}^{k+\ell+2} 
			\dfsymb{\ell,k}{k+\ell+2-m} \,
			\tfrac{\vartheta_m}{s^{2m}}
			\epsilon_{2}^{k+\ell+2-m}
	\biggr) \otimes \epsilon_{(2,\ell)}
	+
	\dd\Upsilon_+ \otimes \Upsilon_+ \,.
\end{multline}

Finally the $V_0$-valued form is
\begin{equation}
	\varpi_T
	=
	- \frac{1}{12}
	\left(
		s^2 \epsilon_{2}^{1} - \epsilon_{2}^{0}
	\right)
	+ \dd\Upsilon_+ \,.
\end{equation}

\textbf{The $(A,B,\conn{C},\disc{C},D)$ tensors.}
Using all the necessary ingredients, we obtain the following expressions for the tensors of the extended $2$-spin F-CohFT $\overline{\Omega}^s$.
{\allowdisplaybreaks
\begin{align}
	\notag
	& \ind{A}{(\alpha,i)}{(\beta,j),(\gamma,k)}
		=
		\Bigl[
			\delta^{\alpha}_1
			\delta_{\beta,\gamma}^1
			+
			\delta^{\alpha}_2
			\bigl(
				\delta_{(\beta,\gamma)}^{(1,2)}
				+
				\delta_{(\beta,\gamma)}^{(2,1)}
				-
				s
				\delta_{\beta,\gamma}^{2}
			\bigr)
		\Bigr]
		\delta_{0}^{i}
		\delta_{j,k}^{0}
		\\[1ex]
	\notag
	& \ind{B}{(\alpha,i)}{(\beta,j),(\gamma,k)}
		=
		\delta^{\alpha}_1
		\delta_{\beta,\gamma}^1
		\delta^{i}_{k-j+1}
		\dfsymb{k}{i,j-1}
		+
		\tfrac{
			\delta^{\alpha}_2
			\delta^{i \le k-j+1}
		}{
			s^{2(k-j+1-i)}
		}
		\Biggl[
			\tfrac{1}{s}
			\delta_{\beta,\gamma}^1
			\Biggl(
				\dfsymb{k}{k-j+1,j-1,i+j-k-2}
			\\
			\notag
			&\quad\qquad\qquad
				-
				\sum_{\substack{p,q \ge 0 \\ p+q \le k-j+1-i}}
					\dfsymb{k-q,p-1}{i,j-1+p,-q-1}
					\vartheta_{k-j+1-i-p-q} 
			\Biggr)
			-
			s
			\delta_{\beta,\gamma}^2
			\dfsymb{k}{i,j-1}
			\vartheta_{k-j+1-i}
			\\
			\notag
			&\quad\qquad\qquad
			+
			\delta_{(\beta,\gamma)}^{(1,2)}
			\sum_{p=0}^{k-j+1-i}
				\dfsymb{k,p-1}{i,j-1+p}
				\vartheta_{k-j+1-i-p}
			+
			\delta_{(\beta,\gamma)}^{(2,1)}
			\sum_{q=0}^{k-j+1-i}
				\dfsymb{k-q}{i,j-1,-q-1}
				\vartheta_{k-j+1-i-q}
		\Biggr]
		\\[1ex]
	\notag
	& \indC{\conn{C}}{(\alpha,i),(\beta,j)}{(\gamma,k)}
		=
		s \, 
		\delta^{(\alpha,\beta)}_{(1,2)}
		\delta_{\gamma}^1
		\delta^{i}_{j+k+2}
		\dfsymb{j,k}{i}
		+
		\tfrac{ 
			\delta^{\alpha,\beta}_2
			\delta^{i \le j+k+2}
		}{
			s^{2(j+k+2-i)}
		}
		\biggl[
			s
			\delta_{\gamma}^1
			\biggl(
				\dfsymb{j,k}{j+k+2,i-j-k-3}
			\\
			\notag
			&\quad\qquad\qquad\qquad\qquad\qquad
				-
				\sum_{m=0}^{k}
					\dfsymb{j,k-m}{i,-m-1}
					\vartheta_{j+k+2-i-m}
			\biggr)
			+
			\delta_{\gamma}^2
			\dfsymb{j,k}{i}
			\vartheta_{j+k+2-i}
		\biggr]
		\\[1ex]
	\notag
	& \indC{\disc{C}}{(\alpha,i)}{(\beta,j),(\gamma,k)}
		=
		\delta^{\alpha}_1
		\delta_{\beta,\gamma}^1
		\delta^i_{j+k+2}
		\dfsymb{j,k}{i}
		+
		\tfrac{
			\delta^{\alpha}_2
			\delta^{i \le j+k+2}
		}{
			s^{2(j+k+2-i)}
		}
		\Biggl[
			\tfrac{1}{s}
			\delta_{\beta,\gamma}^1
			\Biggl(
				\dfsymb{j,k}{j+k+2,i-j-k-3}
			\\
			\notag
			&\quad\qquad\qquad
				-
				\sum_{\substack{p,q \ge 0 \\ p+q \le j+k+2-i}}
				\dfsymb{j-p,k-q}{i,-p-1,-q-1}
				\vartheta_{j+k+2-i-p-q}
			\Biggr)
			-
			s
			\delta_{\beta,\gamma}^2
			\dfsymb{j,k}{i}
			\vartheta_{j+k+2-i}
			\\
			\notag
			&\quad\qquad\qquad
			+
			\delta_{(\beta,\gamma)}^{(1,2)}
			\sum_{p = 0}^{j+k+2-i}
				\dfsymb{j-p,k}{i,-p-1}
				\vartheta_{j+k+2-i-p}
			+
			\delta_{(\beta,\gamma)}^{(2,1)}
			\sum_{q = 0}^{j+k+2-i}
				\dfsymb{j,k-q}{i,-q-1}
				\vartheta_{j+k+2-i-q}
		\Biggr]
		\\[1ex]
	\label{eq:tensors:ext:2spin}
	& D^{(\alpha,i)}
		=
		\tfrac{1}{24}
		\delta^{\alpha}_2
		\left(
			s \delta^i_1
			-
			\tfrac{1}{s} \delta^i_0
		\right)
\end{align}
}

\section{A spectral curve formulation}
\label{sec:residue:formulation}

In this section, we provide an alternative definition of F-topological recursion in terms of spectral curves, along the lines of the original formulation of topological recursion by Eynard--Orantin \cite{EO07}.

\subsection{Definition of F-spectral curves and their associated F-topological recursion}
\label{Section61}
Define an \emph{F-spectral curve} as the data $(\Sigma,x,y,\conn{\omega_{0,2}},\disc{\omega_{0,2}},w)$, where:
\begin{itemize}
	\item $\Sigma$ is a smooth complex curve (not necessarily compact, nor connected);

	\item $x$ and $y$ are two meromorphic functions on $\Sigma$, such that $x$ has finitely many ramification points $\mathfrak{a} \subset \Sigma$ that are simple; additionally, we require $\dd y$ to be holomorphic and non-zero at the ramification points;

	\item $\conn{\omega}_{0,2}$ and $\disc{\omega}_{0,2}$ are two bidifferentials on $\Sigma^2$ (not necessarily symmetric), holomorphic except for a double pole along the diagonal with leading coefficient $1$ and no other poles;

	\item $w = (w^{\alpha})_{\alpha \in \mathfrak{a}}$ is a collection of scalar weights associated with the ramification points.
\end{itemize}
Since ramification points are simple, in the neighbourhood of each $\alpha \in \mathfrak{a}$ there is a holomorphic involution $\sigma^{\alpha}$ such that $x \circ \sigma^{\alpha} = x$ and $\sigma^{\alpha} \neq \id$. Let $\mathscr{O}$ (resp. $\mathscr{M}$) be the space of holomorphic functions (resp. meromorphic forms with poles at $\mathfrak{a}$ an vanishing residues) on $\Sigma$. Introduce the maps
\begin{equation} \label{eq:projectors}
	\cd{\mathscr{P}} \colon
	\mathscr{M} \longrightarrow \mathscr{M} \,,
	\qquad
	\chi(z) \longmapsto \cd{\mathscr{P}}[\chi](z_0)
	\coloneqq
	\sum_{\alpha \in \mathfrak{a}} \Res_{z = \alpha} \
		\left( \int_{\alpha}^{z} \cd{\omega}_{0,2}(z_0|\ph) \right) \chi(z)
\end{equation}
for $\star \in \set{\connsymb,\discsymb}$, and $z_0$ is considered outside of the contour defining the residue. The meromorphic form $\cd{\mathscr{P}}[\chi]$ is called the polar part of $\chi$, and it has the same divergent part of $\chi$ at $\mathfrak{a}$, while its holomorphic part gets modified in accordance with the choice of $\cd{\omega}_{0,2}$. The properties imposed on $\cd{\omega}_{0,2}$ imply (cf. \cite[Section~2]{BS17}, but we do not need symmetry in the two variables) that $\cd{\mathscr{P}}$ is a projector and $\Ker(\cd{\mathscr{P}}) = \dd\mathscr{O}$. As a consequence,
\begin{equation}\label{eq:global:dec}
	\mathscr{M} = \dd\mathscr{O} \oplus \cd{\mathscr{M}}_{-}
	\qquad\text{with}\qquad
	\cd{\mathscr{M}}_{-} \coloneqq \Im(\cd{\mathscr{P}}) \,.
\end{equation}
Further, $\conn{\mathscr{P}} \circ \disc{\mathscr{P}} = \conn{\mathscr{P}}$ and $\disc{\mathscr{P}} \circ \conn{\mathscr{P}} = \disc{\mathscr{P}}$.

We now define a collection of multidifferentials $\omega_{g,1 + n}$ on $\Sigma^{1 + n}$. They are indexed by integers $g,n \geq 0$ such that $2g - 2 + (1 + n) > 0$, and will be invariant under permutation of their $n$ last variables. We write $\omega_{g,1 + n}(z_0 | z_1,\ldots,z_n)$ to emphasise the special role played by the first variable. The definition proceeds by induction. We first set $\omega_{0,1} \coloneqq y \, \dd x$ and $\omega_{0,2} \coloneqq \disc{\omega}_{0,2}$. We introduce the two kernels
\begin{equation}
	K^{\star,\alpha}(z_0|z)
	\coloneqq
	\frac{
		\frac{1}{2} \int_{\sigma^{\alpha}(z)}^{z} \cd{\omega}_{0,2}(z_0|\ph)
	}{
		\omega_{0,1}(z) - \omega_{0,1}(\sigma^{\alpha}(z))
	}
\end{equation}
and define the operators $\cd{K} \colon \mathscr{M}^{\otimes 2} \to \mathscr{M}$ as
\begin{align}
	\label{eq:ker:conn}
	\conn{K}[\chi](z_0)
	&\coloneqq
	\sum_{\alpha \in \mathfrak{a}} \Res_{z = \alpha} \
		w^{\alpha} \, K^{\connsymb,\alpha}(z_0|z)
		(\conn{\mathscr{P}})^{\otimes 2}[\chi](z,\sigma^{\alpha}(z)) \,,
	\\
	\label{eq:ker:disc}
	\disc{K}[\chi](z_0)
	&\coloneqq
	\sum_{\alpha \in \mathfrak{a}} \Res_{z = \alpha} \
		K^{\discsymb,\alpha}(z_0|z)
		(\disc{\mathscr{P}})^{\otimes 2}[\chi](z,\sigma^{\alpha}(z)) \,.
\end{align}
Notice the multiplication by $w^{\alpha}$ in the connected operator. Then, for $2g - 2 + (1 + n) > 0$, set
\begin{multline}\label{eq:FTR:residue}
	\omega_{g,1+n}(z_0|z_1,\ldots,z_n)
	\coloneqq
	\conn{K} \!\left[
		\omega_{g -1,1+(n+1)}(\ph|\ph,z_1,\ldots,z_n)
	\right]\!(z_0) \\
	+
	\disc{K} \!\left[
		\sum_{\substack{ h + h' = g \\ J \sqcup J' = [n] }}^{*}
				\omega_{h,1+|J|}(\ph|z_J)
				\otimes
				\omega_{h',1+|J'|}(\ph|z_{J'})
		\right]\!(z_0) \,.
\end{multline}
The starred sum means excluding the two terms $(h,1+|J|) = (0,1)$ and $(h',1+|J'|) = (0,1)$. Further, two special cases need to be addressed. If $(g,1+n) = (1,1)$, then $\conn{K}$ acts on $\omega_{0,2}$ as in \eqref{eq:ker:conn} after setting
\begin{equation}
	(\conn{\mathscr{P}})^{\otimes 2}[\omega_{0,2}](z|z') \coloneqq \conn{\omega}_{0,2}(z|z') \,.
\end{equation}
The second special case involves the disconnected terms from \eqref{eq:FTR:residue} with $(h,1+|J|) = (0,2)$ or $(h',1+|J'|) = (0,2)$. In this case, $\disc{\mathscr{P}}$ acts on $\omega_{0,2}(\ph|z_i)$ as the identity.

The invariance of $\omega_{g,1 + n}$ under permutation of the $n$ last variables is clear from the definition. Besides, for all $2g - 2 + (1+ n) > 0$, the multidifferentials satisfy the linear loop equations with respect to any of its variables: for any $\alpha \in \mathfrak{a}$
\begin{equation}\label{eq:symmetry:multidiff}
\begin{aligned}
	&
	\omega_{g,1+n}(z_0|z_1,\ldots,z_n)
	+
	\omega_{g,1+n}(\sigma^{\alpha}(z_0)|z_1,\ldots,z_n)
	\text{ is holomorphic as $z_0 \rightarrow \alpha$ ,}
	\\
	&
	\omega_{g,1+n}(z_0|z_1,\ldots,z_n)
	+
	\omega_{g,1+n}(z_0|\sigma^{\alpha}(z_1),\ldots,z_n)
	\text{ is holomorphic as $z_1 \rightarrow \alpha$ .}
\end{aligned}
\end{equation}

\begin{rem}
	The recursion could be formulated directly in terms of  the multidifferentials $\disc{\omega}_{g,1+n} \coloneqq (\disc{\mathscr{P}})^{\otimes (1+n)}[\omega_{g,1+n}]$. Namely, we have
	\begin{multline}
		\disc{\omega}_{g,1+n}(z_0|z_1,\ldots,z_n)
		\coloneqq 
		\conn{\mathcal{K}} \!\left[
			\disc{\omega}_{g -1,1+(n+1)}(\ph|\ph,z_1,\ldots,z_n)
		\right]\!(z_0) \\
		+
		\disc{\mathcal{K}} \!\left[
			\sum_{\substack{ h + h' = g \\ J \sqcup J' = [n] }}^{*}
					\disc{\omega}_{h,1+|J|}(\ph|z_J)
					\otimes
					\disc{\omega}_{h',1+|J'|}(\ph|z_{J'})
			\right]\!(z_0) \,.
	\end{multline}
	The disconnected recursion operator is then the usual recursion kernel of topological recursion
	\begin{equation}
		\disc{\mathcal{K}}[\chi](z_0)
		\coloneqq
		\sum_{\alpha \in \mathfrak{a}} \Res_{z = \alpha} \
			K^{\discsymb,\alpha}(z_0|z) \, \chi\big(z,\sigma^{\alpha}(z)\big)\,,
	\end{equation} 
	while the connected recursion operator contains all the novelties:
	\begin{equation}
		\conn{\mathcal{K}}[\chi](z_0)
		\coloneqq
		\sum_{\alpha \in \mathfrak{a}} w^{\alpha} \Res_{z = \alpha} \
			K^{\discsymb,\alpha}(z_0|z) \, (\conn{\mathscr{P}})^{\otimes 2}[\chi]\big(z,\sigma^{\alpha}(z)\big)\,.
	\end{equation}
	In both formulae, $z_0$ is considered outside of the contour defining the residue. This alternative definition has the property that, for any $2g - 2 + (1 + n) > 0$, the multidifferential $\disc{\omega}_{g,1+n}$ is an element of $(\disc{\mathscr{M}}_-)^{\otimes(1+ n)}$. We chose \eqref{eq:FTR:residue} as the main definition, as it gives a more symmetric role to the connected and disconnected kernels and only differing by the presence of $w^{\alpha}$ in the connected one. Besides, there is a loss of information from $\omega_{g,1+n}$ to its projection $\disc{\omega}_{g,1+n}$.
\end{rem}

\subsection{F-Airy structures from F-spectral curves}
Let $(\Sigma,x,y,\conn{\omega}_{0,2},\disc{\omega}_{0,2},w)$ be an F-spectral curve. We now explain how to define an F-Airy structure whose amplitudes reconstruct the projected multidifferentials $(\mathscr{P}^{\star})^{\otimes (n + 1)}[\omega_{g,1+n}]$. This construction will however depend on a choice of up/down-morphisms.

First, we should discuss the local picture. Define $\mathscr{O}_{\loc}$ (resp. $\mathscr{M}_{\loc}$) as the space of germs of holomorphic functions at $\mathfrak{a}$ (resp. germs of meromorphic forms at $\mathfrak{a}$ without residues). The obvious restriction map $\mathscr{M} \rightarrow \mathscr{M}_{\loc}$ and the fact that \eqref{eq:projectors} only depends on the germ of $\chi$ at $\mathfrak{a}$ allows the definition of projectors $\cd{\mathscr{P}}_{\loc} \colon \mathscr{M}_{\loc} \rightarrow \mathscr{M}_{\loc}$ by the same formula. We have $\dd\mathscr{O}_{\loc} = \Ker(\cd{\mathscr{P}}_{\loc})$ and we obtain a decomposition analogous to \eqref{eq:global:dec}:
\begin{equation}\label{eq:local:dec}
	\mathscr{M}_{\loc} = \dd \mathscr{O}_{\loc} \oplus \cd{\mathscr{M}}_{-,\loc}
	\qquad\text{with}\qquad
	\cd{\mathscr{M}}_{-,\loc} \coloneqq \Im(\cd{\mathscr{P}}_{\loc}) \,.
\end{equation}
An extra feature of the local picture is that $\mathscr{M}_{\loc}$ is symplectic for the residue pairing
\begin{equation}\label{eq:residue:pair}
	\langle \chi_1,\chi_2 \rangle
	=
	\sum_{\alpha \in \mathfrak{a}} \Res_{z = \alpha} \left(\int^{z} \chi_1\right) \chi_2(z) \,,
\end{equation}
and the two summands in \eqref{eq:local:dec} are Lagrangian subspaces. In this formula $\int^z \chi_1$ denotes any germ of meromorphic function near $\mathfrak{a}$ such that $\dd(\int^{z} \chi_1) = \chi_1(z)$.

Ideally, one would like to define an F-Airy structure on the subspace $\mathscr{V}_+ \subset \mathscr{O}_{\loc}$ of germs of odd (with respect to the local involution) holomorphic functions at $\mathfrak{a}$. We denote $\cd{\mathscr{V}}_-$ the image by $\cd{\mathscr{P}}_{\loc}$ of the space of germs of meromorphic forms whose polar part is odd, precisely as in \eqref{eq:symmetry:multidiff}. However, this would not quite work due to infinite-dimensional issues (this has to do with the role of up/down-morphisms in the formulae of Proposition~\ref{prop:FAiry:Fsp}).

In order to solve these issues, let us make a choice of injective linear maps $\cd{\mathscr{V}}_- \rightarrow \mathscr{V}_+$ for $\star \in \set{\connsymb,\discsymb}$ with common image $\check{\mathscr{V}}_+$. Then, we consider the up/down-morphisms (analogous to Definition~\ref{def:up:down})
\begin{equation}
	\cd{\Delta} \colon \check{\mathscr{V}}_+ \longrightarrow \cd{\mathscr{V}}_-
	\qquad\quad\text{and}\qquad\quad
	\cd{\Up} \colon \cd{\mathscr{V}}_- \longrightarrow \check{\mathscr{V}}_+ \,,
\end{equation} 
where $\cd{\Up}$ is given by the aforementioned linear maps. We assume that the choices made are such that the corresponding down-morphisms satisfy the compatibility relations
\begin{equation}\label{eq:comp:up:down:proj}
	\disc{\mathscr{P}}_{\loc} \big|_{\conn{\mathscr{V}}_-}
	=
	\disc{\Delta} \circ \conn{\Up}
	\qquad\quad\text{and}\qquad\quad
	\conn{\mathscr{P}}_{\loc} \big|_{\disc{\mathscr{V}}_-}
	=
	\conn{\Delta} \circ \disc{\Up} \,. 
\end{equation}
Note that it is sufficient to choose up/down-morphisms for $\connsymb$ or $\discsymb$, as \eqref{eq:comp:up:down:proj} can then be used to define them for the remaining $\discsymb$ or $\connsymb$.

 We can now define $F_{g,1+n} \in \Hom(\Sym{n}{\check{\mathscr{V}}_+},\check{\mathscr{V}}_+)$ for $2g - 2 + (1 + n) > 0$ by the formula
\begin{equation}\label{eq:Fgn:tensors}
	F_{g,1+n}(f_1 \otimes \cdots \otimes f_n)
	\coloneqq
	(\cd{\Up} \circ \cd{\mathscr{P}}_{\loc})
	\left[
		\sum_{\alpha_1,\ldots,\alpha_n \in \mathfrak{a}}
		\left( \prod_{i = 1}^{n} \Res_{z_i = \alpha_i} f_i(z_i) \right)
		\omega_{g,1+n}(z_0|z_1,\ldots,z_n)
	\right] ,
\end{equation}
where we implicitly used the natural restriction morphism $\mathscr{M} \rightarrow \mathscr{M}_{\loc}$ before applying $\cd{\mathscr{P}}_{\loc}$. The compatibility condition \eqref{eq:comp:up:down:proj} guarantees that these tensors do not depend on the choice of $\star \in \set{\connsymb,\discsymb}$.

\begin{rem}
	By restricting the residue pairing \eqref{eq:residue:pair} to $\dd\mathscr{V}_+ \oplus \cd{\mathscr{V}}_- \subset \mathscr{M}_{\loc}$, we still obtain a symplectic space split as a direct sum of two Lagrangians. In particular, the symplectic structure gives an isomorphism between $\cd{\mathscr{V}}_-$ and the dual of $\dd \mathscr{V}_+$, hence with the dual of $\mathscr{V}_+$ as $\dd |_{\mathscr{V}_+}$ is invertible onto its image. Yet, after taking the residues in \eqref{eq:Fgn:tensors} we are left with an element of $\cd{\mathscr{V}}_-$. To get an output taking values in $\check{\mathscr{V}}_+$ (as $F_{g,1+n}$ should be) without losing information, we need a choice of injective linear map $\Up \colon \cd{\mathscr{V}}_- \rightarrow \check{\mathscr{V}}_+ \subset \mathscr{V}_+$, or equivalently, a choice of isomorphism $\check{\mathscr{V}}_+ \cong \mathscr{V}_+^*$.
\end{rem}

\begin{prop}\label{prop:FAiry:Fsp}
	The tensors $(F_{g,1+n})_{g,n}$ are the amplitudes of the F-Airy structure on $\check{\mathscr{V}}_+$ given by
	\begin{equation}\label{eq:ABCD:Fsc}
		\begin{split}
		& A(f_1 \otimes f_2) 
		=
			\overline{\disc{\Up}}\bigl[
				 \dd f_1 \bcdot \dd f_2 
			\bigr]
			\,, \\[1ex]
		& B(f_1 \otimes f_2)
		=
			\overline{\disc{\Up}}\bigl[
				\dd f_1 \bcdot \disc{\Delta} f_2
			\bigr]
			\,, \\[1ex]
		& \conn{C}(f)
		=
			\bigl( (\overline{\conn{\Up}} \otimes \id_{\mathscr{V}_+}) \circ \conn{\kappa} \bigr)
			\bigl[
				\conn{\Delta} f
			\bigr]
			\,, \\[1ex]
		& \disc{C}(f_1 \otimes f_2)
		=
			\overline{\disc{\Up}}\bigl[
				\disc{\Delta} f_1 \bcdot \disc{\Delta} f_2
			\bigr]
			\,, \\[1ex]
		& D
		=
			\conn{\Up}
			\Biggl[
				\sum_{\alpha \in \mathfrak{a}} w^{\alpha} \Res_{z = \alpha} \ K^{\connsymb,\alpha}(z_0|z) \, \conn{\omega}_{0,2}(z|\sigma^{\alpha}(z))
			\Biggr]
			\,. 
	\end{split} 
	\end{equation}
	Here the following notations/conventions have been used.
	\begin{itemize}
		\item The linear map $\overline{\cd{\Up}}$ is the extension of $\cd{\Up}$ to $\mathscr{M}_{\loc}$ by setting them to zero on $\mathscr{V}_+$ and on meromorphic $1$-forms that are even for the local involutions.

		\item The product $\bcdot$ between two germs of meromorphic forms at $\mathfrak{a}$ is given for $z$ near $\alpha \in \mathfrak{a}$ by
		\begin{equation}
			\chi_1(z) \bcdot \chi_2(z) \coloneqq \chi_1(z) \chi_2(z) \theta^{\alpha}(z)
			\qquad\text{with}\qquad
			\theta^{\alpha}(z) \coloneqq \frac{-2}{\omega_{0,1}(z) - \omega_{0,1}(\sigma^{\alpha}(z))} \,.
		\end{equation}

		\item The map $\conn{\kappa} \colon \conn{\mathscr{V}}_{-} \rightarrow \conn{\mathscr{V}}_-\, \widehat{\otimes}\,\check{\mathscr{V}}_+ \cong \conn{\mathscr{V}}_- \,\widehat{\otimes}\,\mathscr{V}_+$ is defined as
		\begin{equation}
			\chi
			\longmapsto
			\chi(z_1) w^{\alpha} \theta^{\alpha}(z_1) \, \frac{1}{2} \int^{z_2}_{\sigma^{\alpha}(z_2)} \conn{\omega}_{0,2}(z_1|\ph)
		\end{equation}
		for $z_1,z_2$ near the same $\alpha \in \mathfrak{a}$, and zero if $z_1,z_2$ are near different ramification points.
		\item $z_0$ is considered outside of the contour defining the residue in the formula for $D$.
	\end{itemize}
\end{prop}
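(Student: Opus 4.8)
The strategy is to reduce Proposition~\ref{prop:FAiry:Fsp} to Theorem~\ref{thm:top:F-CohFT:F-TR} followed by the Bogoliubov and translation actions of Section~\ref{sec:action}, exactly as the local spectral curve picture in the ordinary case reduces to the Airy structure of the $r=2$ Witten--Kontsevich point. Concretely, the plan is to show that the tensors $(A,B,\conn{C},\disc{C},D)$ in \eqref{eq:ABCD:Fsc} are precisely those produced by feeding the F-TR formula \eqref{eq:FTR:residue} its own defining recursion. First I would unwind the residue formula: starting from \eqref{eq:FTR:residue}, apply $(\cd{\mathscr{P}}_{\loc})^{\otimes(1+n)}$ to both sides (using that $\conn{\mathscr{P}}\circ\disc{\mathscr{P}}=\conn{\mathscr{P}}$ and $\disc{\mathscr{P}}\circ\conn{\mathscr{P}}=\disc{\mathscr{P}}$ to move projectors past the kernels), expand each recursion kernel $\conn{K},\disc{K}$ using the explicit form of $K^{\star,\alpha}(z_0|z)$, and then insert residues $\Res_{z_i=\alpha_i} f_i(z_i)$ against germs $f_i\in\check{\mathscr{V}}_+$ as in \eqref{eq:Fgn:tensors}. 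The key local computation is to identify the recursion kernel's action: the factor $\tfrac12\int_{\sigma^\alpha(z)}^{z}\cd{\omega}_{0,2}(z_0|\ph)$ divided by $\omega_{0,1}(z)-\omega_{0,1}(\sigma^\alpha(z))$, when paired against the residues, is exactly the composition $\cd{\Up}\circ(\text{multiplication by }\theta^\alpha)$ after the $\Delta$-morphisms have been inserted to convert the meromorphic inputs into the form expected by the twisted product $\bcdot$.

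The heart of the argument is a careful bookkeeping of which of the three terms of the F-TR recursion \eqref{eq:F-TR:coord:free} each piece of \eqref{eq:FTR:residue} contributes to. The $B$-term arises from the unstable summands $(h,1+|J|)=(0,2)$ in the disconnected sum of \eqref{eq:FTR:residue} (where $\disc{\mathscr{P}}$ acts as the identity on $\omega_{0,2}(\ph|z_i)$, giving one input with $\dd f$ and the other still requiring $\disc{\Delta}$); the $\disc{C}$-term is the genuinely-disconnected remainder; the $\conn{C}$-term is the connected kernel $\conn{K}$ applied to $\omega_{g-1,1+(n+1)}$, with the weight $w^\alpha$ and the $\conn{\omega}_{0,2}$-flavoured half-integral packaged into $\conn{\kappa}$; and $A=\overline{\disc{\Up}}[\dd f_1\bcdot\dd f_2]$, $D$ are read off from the base cases $\omega_{0,3}=\disc{K}[\omega_{0,2}\otimes\omega_{0,2}]$ (both unstable, both inputs with $\dd f$) and $\omega_{1,1}=\conn{K}[\omega_{0,2}]$ with the special convention $(\conn{\mathscr{P}})^{\otimes2}[\omega_{0,2}]=\conn{\omega}_{0,2}$. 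Once the dictionary term-by-term is set up, the claim follows by induction on $2g-2+(1+n)$: assuming $F_{g_0,1+n_0}=(\cd{\Up}\circ\cd{\mathscr{P}}_{\loc})[\text{residues of }\omega_{g_0,1+n_0}]$ for smaller complexity, substitute into \eqref{eq:FTR:residue}, move all projectors and $\Up/\Delta$-morphisms into place using the non-degeneracy relations $\cd{\Up}\circ\cd{\Delta}=\id$, $\cd{\Delta}\circ\cd{\Up}=\id$ and the compatibility \eqref{eq:comp:up:down:proj}, and recognise the result as the right-hand side of \eqref{eq:F-TR:coord:free} with the stated tensors.

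The main obstacle I anticipate is handling the interplay between the two different projectors $\conn{\mathscr{P}}$ and $\disc{\mathscr{P}}$ correctly while also tracking the extensions-by-zero $\overline{\cd{\Up}}$. The subtlety is that an output of $\omega_{g,1+n}$ need not land in $\cd{\mathscr{M}}_-$ before projecting, so one must verify that the pieces killed by $\overline{\cd{\Up}}$ — namely the $\dd\mathscr{V}_+$-components and the even $1$-forms — are exactly the pieces that the F-TR formula also discards, so that nothing is lost and nothing spurious is added; this is the analogue of the check performed in Step~2 of Section~\ref{subsec:identif} around \eqref{eq:transl:B:check}, where $\theta_{\scriptscriptstyle 0}\,\dd\tau\bcdot\dd\chi_+$ is annihilated by the outermost $\overline{\Up}$. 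A secondary technical point is that $\theta^\alpha(z)$ (like $\theta_{\scriptscriptstyle T}$) contains arbitrarily high positive powers of the local coordinate, so the twisted products $\dd f_1\bcdot\disc{\Delta}f_2$ etc. a priori live in $\mathscr{M}_{\loc}$ rather than $\cd{\mathscr{V}}_-$; one must argue, as in the remark following \eqref{eq:FAiry:loop:transl}, that applying $\overline{\cd{\Up}}$ afterwards still yields a well-defined element of $\check{\mathscr{V}}_+$. The base cases $(0,3)$ and $(1,1)$ should be checked first and separately, since they fix the normalisations of $A$ and $D$ and exhibit the two special conventions of Section~\ref{Section61} in their simplest form.
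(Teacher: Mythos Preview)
Your concrete plan (induction on $2g-2+(1+n)$, expand the recursion kernels, identify the $(0,2)$-unstable terms as $B$, the disconnected remainder as $\disc{C}$, the connected kernel as $\conn{C}$, and read off $A,D$ from the base cases) is exactly what the paper does. The paper streamlines the computation by fixing local coordinates $\zeta^{\alpha}(z)=\sqrt{2(x(z)-x(\alpha))}$ and working in the $\epsilon$-basis $\epsilon_{(\alpha,k)}=\zeta^{2k+1}/(2k+1)!!$ with the \emph{standard} up/down-morphisms first; in that basis the kernel expansions $K^{\star,\alpha}(z_0|z)\approx -\tfrac12\theta^{\alpha}(z)\sum_k\xi^{\star,(\alpha,k)}(z_0)\epsilon_{(\alpha,k)}(z)$ make the residue computations mechanical, and the obstacles you flag (interplay of $\conn{\mathscr{P}}$ vs.\ $\disc{\mathscr{P}}$, extensions-by-zero, positive powers in $\theta^{\alpha}$) dissolve into explicit parity and pole-order checks. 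The paper also saves work by noting that $A,B,\disc{C}$ are identical to the ordinary Airy-structure case already treated in \cite{ABCO24} and only writes out $D$ and $\conn{C}$ in detail; the passage to arbitrary compatible up/down-morphisms is then a one-line remark. One caveat on your opening framing: the sentence ``reduce to Theorem~\ref{thm:top:F-CohFT:F-TR} followed by Bogoliubov and translation'' is a useful heuristic but not a proof strategy here, since Proposition~\ref{prop:FAiry:Fsp} is stated for \emph{arbitrary} F-spectral curves (in particular with arbitrary $\conn{\omega}_{0,2}$), not only those arising as $\hat{L}\hat{R}\hat{T}\Omega^0$ --- that identification is the content of Section~\ref{Section61} onwards and goes in the other direction. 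Stick with the direct computation you describe afterwards.
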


The above initial data can equivalently be written in coordinates. For each $\alpha \in \mathfrak{a}$, we first choose a determination of the square-root to define the local coordinate near $\alpha$:
\begin{equation}
    \zeta^{\alpha}(z) = \sqrt{2(x(z) - x(\alpha))} \,.
\end{equation}
Then, we introduce the basis of $\cd{\mathscr{V}}_-$ indexed by $\alpha \in \mathfrak{a}$ and $k \geq 0$:
\begin{equation}\label{eq:xi:basis}
	\cdxi{(\alpha,k)}(z_0)
	=
	(2k+1)!! \
	\Res_{z = \alpha} \ \frac{\dd \zeta^{\alpha}(z)}{\zeta^{\alpha}(z)^{2k + 2}}
	\biggl(
		\int_{\alpha}^{z} \cd{\omega}_{0,2}(z_0|\ph)
	\biggr) \,.
\end{equation}
We get a basis $\xi_{(\alpha,k)} = \cd{\Up}[ \cdxi{(\alpha,k)} ]$ of $\check{\mathscr{V}}_+$ which is independent of $\star \in \set{\connsymb,\discsymb}$ due to the compatibility condition \eqref{eq:comp:up:down:proj}. From the definition of the tensors \eqref{eq:Fgn:tensors}, it is easy to see by induction on $2g - 2 + (1 + n) > 0$ that
\begin{equation}\label{eq:Fgn:Fsc:coords}
\begin{aligned}
	& (\cd{\mathscr{P}})^{\otimes(1+n)}\big[ \omega_{g,1+n} \big]
	=
	\indF{F}{g}{(\alpha_0,k_0)}{(\alpha_1,k_1),\dots,(\alpha_n,k_n)} \prod_{i = 1}^{n} \cdxi{(\alpha_i,k_i)} \,, \\
	&
	F_{g,1+n}\big[ \xi_{(\alpha_1,k_1)} \otimes \cdots \otimes \xi_{(\alpha_n,k_n)} \big]
	=
	\indF{F}{g}{(\alpha_0,k_0)}{(\alpha_1,k_1),\dots,(\alpha_n,k_n)} \, \xi_{(\alpha_0,k_0)} \,,
\end{aligned}
\end{equation}
for the same set of coefficients $F_{g;(\alpha_1,k_1),\dots,(\alpha_n,k_n)}^{(\alpha_0,k_0)}$. Note that, as $\cd{\omega}_{0,2}$ is globally defined on $\Sigma^2$, $\cdxi{(\alpha,k)}$ in \eqref{eq:xi:basis} exists not only as a germ at $\mathfrak{a}$ but rather as a globally defined meromorphic form on $\Sigma$, and this is how it should be considered in the first line of \eqref{eq:Fgn:Fsc:coords}.

\begin{proof}
	We first consider $\check{\mathscr{V}}_+$ to be spanned by the following vectors indexed by $(\alpha,k) \in \mathfrak{a} \times \mathbb{Z}_{\geq 0}$
	\begin{equation}
		\epsilon_{(\alpha,k)}(z)
		=
		\begin{cases}
			\dfrac{\zeta^{\alpha}(z)^{2k + 1}}{(2k + 1)!!}
			&
			\text{if $z$ is near $\alpha$,} \\
			0
			&
			\text{else.}
		\end{cases}
	\end{equation}
	By construction of the $\cd{\xi}$-basis of $\cd{\mathscr{V}}_-$ we have the series expansion as $z$ is near $\alpha$ (in this formula $\alpha$ is not summed over):
	\begin{equation}
		\cd{\omega}_{0,2}(z_0|z)
		\mathop{\approx}_{z \rightarrow \alpha}
		\sum_{k \geq 0}
			\cdxi{(\alpha,k)}(z_0) \,
			\dd\epsilon_{(\alpha,k)}(z)
		+
		\big( \text{odd in $z \leftrightarrow \sigma^{\alpha}(z)$} \big) \,.
	\end{equation}
	Accordingly, the kernels of connected or disconnected type admit the following expansions as $z$ is near $\alpha$ (again, $\alpha$ is not summed over):
	\begin{equation}
	\begin{aligned}
		&
		K^{\connsymb,\alpha}(z_0|z)
		\mathop{\approx}_{z \rightarrow \alpha}
		- \frac{w^{\alpha}}{2} \, \theta^{\alpha}(z)
		\sum_{k \geq 0}
			\xi^{\connsymb,(\alpha,k)}(z_0) \,
			\epsilon_{(\alpha,k)}(z) \,, \\
		&
		K^{\discsymb,\alpha}(z_0|z)
		\mathop{\approx}_{z \rightarrow \alpha}
		- \frac{1}{2} \, \theta^{\alpha}(z)
		\sum_{k \geq 0}
			\xi^{\discsymb,(\alpha,k)}(z_0) \,
			\epsilon_{(\alpha,k)}(z) \,.
	\end{aligned}
	\end{equation}
	Let us first assume the standard choice of up/down-morphisms, that is
	\begin{equation}\label{eq:up:down:xi}
		\xi_{(\alpha,k)} = \cd{\Up}\big[ \cdxi{(\alpha,k)} \big] = \epsilon_{(\alpha,k)} \,.
	\end{equation}
	In view of \eqref{eq:Fgn:Fsc:coords}, the recursive definition of the multidifferentials \eqref{eq:FTR:residue} implies that the tensors $(F_{g,1+n})_{g,n}$ coincide with the amplitudes of an F-Airy structure, with tensors $(A,B,\conn{C},\disc{C},D)$ to be identified. We claim that their coefficients in the $\epsilon$-basis read (again, $\alpha$ is not summed over):
	\begin{equation}\label{eq:ABCD:res}
	\begin{split}
		\ind{A}{(\alpha,i)}{(\beta,j),(\gamma,k)}
		& =
		\Res_{z = \alpha} \
			\theta^{\alpha}(z) \,
			\epsilon_{(\alpha,i)}(z) \,
			\dd \epsilon_{(\beta,j)}(z) \,
			\dd \epsilon_{(\gamma,k)}(z)
		\,, \\
		\ind{B}{(\alpha,i)}{(\beta,j),(\gamma,k)}
		& =
		\Res_{z = \alpha} \
			\theta^{\alpha}(z) \,
			\epsilon_{(\alpha,i)}(z) \,
			\dd \epsilon_{(\beta,j)}(z) \, 
			\xi^{\discsymb,(\gamma,k)}(z)
		\,, \\
		\indC{\conn{C}}{(\alpha,i),(\beta,j)}{(\gamma,k)}
		& =
		- \frac{w^{\alpha}}{2} \,
		\Res_{z = \alpha} \
			\theta^{\alpha}(z) \,
			\epsilon_{(\alpha,i)}(z)
			\Bigl(
				\xi^{\connsymb,(\beta,j)}(z) \,
				\xi^{\connsymb,(\gamma,k)}(\sigma^{\alpha}(z))
				+
				\big(z \leftrightarrow \sigma^{\alpha}(z)\big)
			\Bigr)
		\,, \\
		\indC{\disc{C}}{(\alpha,i)}{(\beta,j),(\gamma,k)}
		& =
		-\frac{1}{2}\Res_{z = \alpha} \
			\theta^{\alpha}(z) \,
			\epsilon_{(\alpha,i)}(z)
			\Bigl(
				\xi^{\discsymb,(\beta,j)}(z) \,
			\xi^{\discsymb,(\gamma,k)}(\sigma^{\alpha}(z))
			+
			\big(z \leftrightarrow \sigma^{\alpha}(z)\big)
		\Bigr)
		\,, \\
		D^{(\alpha,k)}
		& =
		- \frac{ w^{\alpha} }{2}
		\Res_{z = \alpha} \
			\theta^{\alpha}(z) \,
			\epsilon_{(\alpha,k)}(z) \,
			\conn{\omega}_{0,2}(z|\sigma^{\alpha}(z))
		\,.
	\end{split} 
	\end{equation}
	In the above equations, we got rid of the occurrence of the local involutions whenever possible, using the facts that the $\epsilon$-basis is odd with respect to the local involutions and the even part the $1$-forms $\cd{\xi}$ with respect to $\sigma^{\alpha}$ is holomorphic. As in the proof of Theorem~\ref{thm:top:F-CohFT:F-TR}, it is not hard to check that the above formulae are equivalent to the coordinate-free \eqref{eq:ABCD:Fsc}. We also remark that indices appearing in different positions in the left- and right-hand sides of \eqref{eq:ABCD:res} signals the presence of standard up/down-morphisms, which are simply Kronecker deltas. The coordinate-free expressions remain true if we use an arbitrary pair of compatible up/down-morphisms instead of the standard one.

	To complete the proof, it remains to justify equations \eqref{eq:ABCD:res}. The tensors $A$, $B$ and $\disc{C}$ are computed exactly as in \cite{ABCO24}, so we simply focus on the identification of $D$ and $\conn{C}$. For $D$ we compute
	\begin{equation}
	\begin{split}
		\cd{\mathscr{P}}[\omega_{1,1}](z_0)
		&=
		\cd{\mathscr{P}}\left[
			\sum_{\alpha \in \mathfrak{a}} \Res_{z=\alpha} \
				w^{\alpha} \,
				K^{\connsymb,\alpha}(z_0|z) \,
				\conn{\omega}_{0,2}(z|\sigma^{\alpha}(z))
			\right] \\
		&=
		\sum_{\substack{ \alpha \in \mathfrak{a} \\ k \ge 0}}
		\underbrace{
			\vphantom{\bigg(}
			\left(
			- \frac{w^{\alpha}}{2}
			\Res_{z=\alpha} \
				\theta^{\alpha}(z)
				 \,
				\epsilon_{(\alpha,k)}(z) \,
				\omega_{0,2}(z|\sigma^{\alpha}(z))
			\right)
		}_{= D^{(\alpha,k)}}
		\underbrace{
			\vphantom{\bigg(}
			\cd{\mathscr{P}}\left[
				\xi^{\connsymb,(\alpha,k)}(z_0)
			\right]
		}_{= \cdxi{(\alpha,k)}(z_0)} \\
		&=
		\sum_{\substack{ \alpha \in \mathfrak{a} \\ k \ge 0}}
			D^{(\alpha,k)} \, \cdxi{(\alpha,k)}(z_0) \,.
	\end{split}
	\end{equation}
	So indeed $\indF{F}{1}{(\alpha,k)}{\emptyset} = D^{(\alpha,k)}$. As for $\conn{C}$, we insert the decomposition of the projected correlators on the $\cd{\xi}$-basis in the topological recursion formula \eqref{eq:FTR:residue}. Focusing on the term of connected type, we find: 
	\begingroup
	\allowdisplaybreaks
	\begin{align}
		\notag
		&
		(\cd{\mathscr{P}})^{\otimes(1+n)}
		\Bigg[
			\sum_{\alpha_0 \in \mathfrak{a}} \Res_{z=\alpha_0} \
				w^{\alpha_0} \,
				K^{\connsymb,\alpha_0}(z_0|z) \,
				\sum_{\substack{
					(\beta,k),(\beta',k') \\
					(\alpha_1,k_1),\dots,(\alpha_n,k_n)
				}}
				\indF{F}{g-1}{(\beta,k)}{(\beta',k'),(\alpha_1,k_1),\dots,(\alpha_n,k_n)} \\
		\notag
		&\qquad\qquad\qquad\qquad\qquad\qquad\qquad\qquad\qquad\qquad
		\times
				\xi^{\connsymb,(\beta,k)}(z)
				\xi^{\connsymb,(\beta',k')}(\sigma^{\alpha_0}(z))
				\prod_{i=1}^n \cdxi{(\alpha_i,k_i)}(z_i)
			\Biggr] \\
		\notag
		& = 
		\frac{1}{2} \sum_{\substack{(\alpha_0,k_0) \\ (\beta,k), (\beta',k')}}
		\overbrace{
			\vphantom{\Big(}
			-\frac{w^{\alpha_0}}{2}
			\Res_{z = \alpha_0} \
				\theta_{\alpha_0}(z) \,
				\epsilon_{(\alpha_0,k_0)}(z) \,
				\Big(\xi^{\connsymb,(\beta,k)}(z) \,
				\xi^{\connsymb,(\beta',k')}(\sigma^{\alpha_0}(z)) + \big(z \leftrightarrow \sigma^{\alpha_0}(z)\big)\Big)
		}^{
			= \indC{\conn{C}}{(\alpha_0,k_0),(\beta',k')}{(\beta,k)}
		} \\
		\notag
		& \quad \times 
		\overbrace{
			\vphantom{\Big(}
			\cd{\mathscr{P}}\left[ \xi^{\connsymb,(\alpha_0,k_0)}(z_0) \right]
		}^{
			= \cdxi{(\alpha_0,k_0)}(z_0)
		}
		\sum_{ (\alpha_1,k_1),\dots,(\alpha_n,k_n) }
			\indF{F}{g-1}{(\beta,k)}{(\beta',k'),(\alpha_1,k_1),\dots,(\alpha_n,k_n)}
			\prod_{i=1}^n \cdxi{(\alpha_i,k_i)}(z_i) \\
		& =
		\,\,\frac{1}{2}
		\sum_{ (\alpha_0,k_0),\dots,(\alpha_n,k_n) }
		\Biggl(
			\sum_{ (\beta,k), (\beta',k')}
				\indC{\conn{C}}{(\alpha_0,k_0),(\beta',k')}{(\beta,k)}
				\indF{F}{g-1}{(\beta,k)}{(\beta',k'),(\alpha_1,k_1),\dots,(\alpha_n,k_n)}
		\Biggr)
			\prod_{i=0}^n \cdxi{(\alpha_i,k_i)}(z_i) \,.
	\end{align}
	\endgroup
	This choice of $\conn{C}$ therefore matches the form of the connected term in the topological recursion formula \eqref{eq:F-TR:coords} for amplitudes of F-Airy structures. This concludes the proof.
\end{proof}
 
\subsection{F-spectral curves for semisimple F-CohFTs of the form \texorpdfstring{$\hat{L}\hat{R}\hat{T}\Omega^0$}{LRTOmega0}}
In Section~\ref{sec:identification} we described F-Airy structures whose amplitudes compute the intersection indices of F-CohFTs that are obtained from topological F-CohFTs by the action of translations, F-Givental and changes of bases. Under a semisimplicity assumption (already present in the original dictionary of \cite{DOSS14}) they coincide with F-Airy structures from F-spectral curves that we now explicitly describe.

Recall the setup of Section~\ref{subsec:identif}: let $\Omega = \hat{L}\hat{R}\hat{T}\Omega^0$, where $\Omega^0$ is a topological F-CohFT on $V_0$, $T \in u^2 V_0\bbraket{u}$, $R \in \mathfrak{Giv}$, $L \in \GL(V_0)$. After a choice of up/down-morphisms $(\mathscr{U},\mathscr{D})$ and upon Laplace transform, the F-CohFT amplitudes associated with $\Omega$ coincide with the amplitudes of the F-Airy structure \eqref{eq:FAiry:loop:LRT} on $\Upsilon_+$. We call ${}^{\Omega}(A,B,\conn{C},\disc{C},D)$ this F-Airy structure and ${}^{\Omega}F_{g,1+n} \in \Hom(\Sym{n}{\Upsilon_+},\Upsilon_+)$ the corresponding amplitudes.

Assuming semisimplicity, we define a local spectral curve ${}^{\Omega}(\Sigma,x,y,\conn{\omega}_{0,2},\disc{\omega}_{0,2},w)$ as follows. Decompose $V_0 \cong \bigoplus_{\alpha \in \mathfrak{a}} \mathbb{C}\mathrm{e}_{\alpha}$ in the canonical basis, i.e. $\mathrm{e}_{\alpha} \bcdot \mathrm{e}_{\beta} = \delta_{\alpha,\beta}^{\gamma} \mathrm{e}_{\gamma}$ for any $\alpha,\beta \in \mathfrak{a}$. In particular, $\mathrm{e} = \sum_{\alpha \in \mathfrak{a}} \mathrm{e}_{\alpha}$ is the unit. Define $\Sigma$ as the local curve $\bigsqcup_{\alpha \in \mathfrak{a}} \Sigma_{\alpha}$ where $\Sigma_{\alpha}$ is a formal neighbourhood of $0$ in $\mathbb{C}$. Functions/forms on $\Sigma$ can be identified with $V_0$-valued functions/forms on a formal neighbourhood of $0$ in $\mathbb{C}$ (with standard coordinate $\zeta$), and with this identification we set:
\begin{equation}\label{eq:Fsc:LRT}
\begin{split}
	x(\zeta)
	&= 
		\frac{\zeta^2}{2} \mathrm{e}
		\,, \\
	y(\zeta)
	&=
		\sum_{\alpha \in \mathfrak{a}} \left(-\zeta + \frac{\dd \tau}{\dd x}(\zeta)\right) \bcdot \mathrm{e}_{\alpha}
		\,, \\
	\conn{\omega}_{0,2}(\zeta_1|\zeta_2)
	&=
		\left( \sum_{\alpha \in \mathfrak{a}} \mathrm{e}_{\alpha} \otimes \mathrm{e}_{\alpha} \right)
		\frac{\dd \zeta_1 \, \dd \zeta_2}{(\zeta_1 - \zeta_2)^2}
		\,, \\
	\disc{\omega}_{0,2}(\zeta_1|\zeta_2)
	&=
		\left( \sum_{\alpha \in \mathfrak{a}} \mathrm{e}_{\alpha} \otimes \mathrm{e}_{\alpha} \right)\frac{\dd \zeta_1 \, \dd \zeta_2}{(\zeta_1 - \zeta_2)^2} + \dd_{\zeta_1}\dd_{\zeta_2}\Epsilon_{R}(\zeta_1,\zeta_2)
		\,, \\
	w
	&=
		\bigl( w^{\alpha} \bigr)_{\alpha \in \mathfrak{a}}
		\,.
\end{split}
\end{equation}
The scalars $w^{\alpha}$ are simply the expansion coefficients of the distinguished vector $w$ in the canonical basis.  We identify $\widehat{\Upsilon}_+ \cong \mathscr{V}_+$ in the natural way, so that the expression for $\conn{\omega}_{0,2}$ implies $\conn{\mathscr{V}}_- = \Upsilon_-$. We also take $\check{\mathscr{V}}_+ = \Upsilon_+$. We then choose up/down-morphisms $\conn{\Up} = \Up$ and $\conn{\Delta} = \Delta$, and $(\disc{\Up},\disc{\Delta})$ are deduced by compatibility. Let us call ${}^{\Omega}(\tilde{A},\tilde{B},\conn{\tilde{C}},\disc{\tilde{C}},\tilde{D})$ the F-Airy structure specified by the formulae in Proposition~\ref{prop:FAiry:Fsp}, and $\tilde{F}_{g,1+n} \in \Hom(\Sym{n}{\check{\mathscr{V}}_+},\check{\mathscr{V}})$ the corresponding amplitudes.
 
The following result shows that these two F-Airy structures agree up to a change of bases. In particular, this means that the intersection indices of $\Omega$ can be computed by F-topological recursion on spectral curves as formulated in Section~\ref{Section61}.

\begin{prop}
	The F-Airy structure ${}^{\Omega}(A,B,\conn{C},\disc{C},D)$ is obtained by applying to the F-Airy structure ${}^{\Omega}(\tilde{A},\tilde{B},\conn{\tilde{C}},\disc{\tilde{C}},\tilde{D})$ the change of bases  with source isomorphism $\lambda_{{LR},\textup{s}}$ and target isomorphism $\lambda_{{LR},\textup{t}}$ defined in \eqref{eq:L:LR}.  In particular the amplitudes for $2g - 2 + 1 + n > 0$ are related by
	\begin{equation}
		F_{g,1+n} = \lambda_{{LR},\textup{t}} \circ \tilde{F}_{g,1+n} \circ (\lambda_{{LR},\textup{s}}^{-1})^{\otimes n} \,,
	\end{equation}
	where we may consider the extension of the tensors $F_{g,1+n}$ to completed loop spaces (see Remark~\ref{rem:finite:F}).
\end{prop}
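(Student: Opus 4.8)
The statement is a comparison of two explicitly described F-Airy structures. On one side, ${}^{\Omega}(A,B,\conn{C},\disc{C},D)$ is given by the formulae \eqref{eq:FAiry:loop:LRT} produced in Section~\ref{subsec:identif}; on the other side, ${}^{\Omega}(\tilde{A},\tilde{B},\conn{\tilde{C}},\disc{\tilde{C}},\tilde{D})$ is given by specialising Proposition~\ref{prop:FAiry:Fsp} to the F-spectral curve \eqref{eq:Fsc:LRT}. The plan is to verify that the second set of tensors becomes the first after the change of bases with source isomorphism $\lambda_{{LR},\textup{s}}$ and target isomorphism $\lambda_{{LR},\textup{t}}$ from \eqref{eq:L:LR}; granting this, the asserted relation $F_{g,1+n} = \lambda_{{LR},\textup{t}}\circ\tilde{F}_{g,1+n}\circ(\lambda_{{LR},\textup{s}}^{-1})^{\otimes n}$ is exactly the action of a change of bases on amplitudes recorded in Section~\ref{subsec:change:bases}, extended to the completed loop spaces as in Remark~\ref{rem:finite:F}.

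First I would unwind the local geometry of \eqref{eq:Fsc:LRT}. Since $x = \tfrac{\zeta^2}{2}\,\mathrm{e}$ on each component $\Sigma_{\alpha}$, the standard coordinate $\zeta$ is already the canonical one, $\zeta^{\alpha}(z) = \sqrt{2(x(z) - x(\alpha))} = \zeta$, and the local involution is $\sigma^{\alpha}\colon\zeta\mapsto-\zeta$. A short computation gives $\omega_{0,1} = y\,\dd x = -\zeta^2\,\dd\zeta\cdot\mathrm{e} + \dd\tau$ with $\tau = \mathscr{L}^{-1}[T]$; since $T\in u^2 V_0\bbraket{u}$ the form $\dd\tau$ is $\sigma^{\alpha}$-invariant and starts in order $\zeta^4$, so that $\omega_{0,1}(z) - \omega_{0,1}(\sigma^{\alpha}(z)) = -2(\zeta^2\,\dd\zeta - \dd\tau)$ and the twist $\theta^{\alpha}$ of Proposition~\ref{prop:FAiry:Fsp} equals $\theta_{\scriptscriptstyle T}$ from Section~\ref{subsec:identif}. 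Consequently the product $\bcdot$ of Proposition~\ref{prop:FAiry:Fsp} is the $T$-twisted product $\bcdot_{\scriptscriptstyle T}$, the $V_0$-valued form defining $D$ is $\tfrac12\varpi_{\scriptscriptstyle T}$, and the kernel insertion defining $\conn{\kappa}$ reproduces $\kappa_{\scriptscriptstyle T}$ after expanding $\tfrac12\int_{\sigma^{\alpha}(z_2)}^{z_2}\conn{\omega}_{0,2}(z_1|\ph)$ as a geometric series in the regime $|\zeta_1| > |\zeta_2|$ and inserting the factor $w^{\alpha}\theta^{\alpha} = w^{\alpha}\theta_{\scriptscriptstyle T}$. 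I would also record the identifications of spaces built into the setup: $\mathscr{V}_+ \cong \widehat{\Upsilon}_+$, $\check{\mathscr{V}}_+ = \Upsilon_+$, and $\conn{\mathscr{V}}_- = \Upsilon_-$ with $\conn{\Up} = \Up$, $\conn{\Delta} = \Delta$ the Laplace transforms of the standard up/down-morphisms.

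The heart of the matter is the effect of the perturbed bidifferential $\disc{\omega}_{0,2} = \conn{\omega}_{0,2} + \dd_{\zeta_1}\dd_{\zeta_2}\Epsilon_{R}$. One must first check that $\disc{\omega}_{0,2}$ is still admissible (double pole on the diagonal with leading coefficient $1$, no other poles), so that its polar-part projector $\disc{\mathscr{P}}_{\loc}$ is well-defined; this holds because $\Epsilon_{R}$ is holomorphic. A residue computation then shows that, on $\conn{\mathscr{V}}_- = \Upsilon_-$, one has $\disc{\mathscr{P}}_{\loc} = \id_{\Upsilon_-} + \dd\circ\Epsilon_{R}$, the second summand taking values in $\dd\widehat{\Upsilon}_+$, so that $\disc{\mathscr{V}}_- = (\id_{\Upsilon_-} + \dd\circ\Epsilon_{R})(\Upsilon_-)$. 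Feeding this into the compatibility relations \eqref{eq:comp:up:down:proj} forces $\disc{\Delta} = (\id_{\Upsilon_-} + \dd\circ\Epsilon_{R})\circ\Delta$ and $\disc{\Up} = \Up\circ p$, where $p$ is the projection of $\disc{\mathscr{V}}_-$ onto $\Upsilon_-$ along $\dd\widehat{\Upsilon}_+$; since $\disc{\mathscr{P}}_{\loc}$ and $\conn{\mathscr{P}}_{\loc}$ have the same polar part, this also yields $\overline{\disc{\Up}} = \overline{\conn{\Up}}$ on all the arguments appearing in \eqref{eq:ABCD:Fsc}. These are precisely the operators occurring in \eqref{eq:FAiry:loop:RT}: after the change of bases, $\Up$ becomes $\Up_{LR} = \lambda_{{LR},\textup{t}}\circ\Up$, $\dd$ becomes $\dd_{LR} = \dd\circ\lambda_{{LR},\textup{s}}^{-1}$, and $\Delta$ becomes $\Delta_{LR} = \Delta\circ\lambda_{{LR},\textup{t}}^{-1}$, in agreement with \eqref{eq:UpLR:DownLR} and Remark~\ref{rem:UpLR:DownLR:new}.

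With these identifications in place, the proof closes by comparing the five tensors term by term. For $A$, $B$ and $\disc{C}$ this is the substitution of $\theta^{\alpha} = \theta_{\scriptscriptstyle T}$, $\bcdot = \bcdot_{\scriptscriptstyle T}$, $\overline{\disc{\Up}} = \overline{\Up}$ and $\disc{\Delta} = (\id + \dd\circ\Epsilon_{R})\circ\Delta$ into \eqref{eq:ABCD:Fsc}, conjugated by $(\lambda_{{LR},\textup{s}},\lambda_{{LR},\textup{t}})$ according to \eqref{eq:F:change:bases}, which reproduces the corresponding lines of \eqref{eq:FAiry:loop:LRT}; this part is entirely parallel to the ordinary case treated in \cite{ABCO24}. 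For $\conn{C}$ one uses $\conn{\kappa} = \kappa_{\scriptscriptstyle T}$ and the fact that the target leg carries $\lambda_{{LR},\textup{t}}$, exactly as in \eqref{eq:F-CohFT:diagr}. For $D$ one evaluates the residue $\sum_{\alpha}w^{\alpha}\Res_{z=\alpha}K^{\connsymb,\alpha}(z_0|z)\,\conn{\omega}_{0,2}(z|\sigma^{\alpha}(z))$ and recognises $\tfrac12\overline{\Up}_{LR}[\varpi_{\scriptscriptstyle T}]$, the scalar input being $\braket{\tau_1}_1 = \tfrac{1}{24}$ packaged into $\varpi_{\scriptscriptstyle T}$. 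The main obstacle is the previous step: identifying the perturbed polar-part projector with $\id + \dd\circ\Epsilon_{R}$, checking that the chosen up/down-morphisms indeed satisfy \eqref{eq:comp:up:down:proj}, and threading these relations so that the two \emph{a priori} distinct (but compatible) pairs $(\conn{\Up},\conn{\Delta})$ and $(\disc{\Up},\disc{\Delta})$ of the spectral-curve construction, after conjugation by $\lambda_{LR}$, land exactly on the operators appearing in \eqref{eq:FAiry:loop:LRT} — so that the required change of bases is $(\lambda_{{LR},\textup{s}},\lambda_{{LR},\textup{t}})$ and nothing more. Once this bookkeeping is settled, the remaining computations are analogous to those in the proofs of Theorem~\ref{thm:top:F-CohFT:F-TR} and Proposition~\ref{prop:FAiry:Fsp}.
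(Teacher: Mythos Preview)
Your proposal is correct and follows essentially the same approach as the paper's proof. The paper's argument is considerably terser: it singles out the key identification $\disc{\mathscr{P}}_{\loc}\big|_{\Upsilon_-} = \id_{\Upsilon_-} + \dd\circ\Epsilon_{R}$, deduces $\disc{\Delta} = (\id_{\Upsilon_-} + \dd\circ\Epsilon_{R})\circ\Delta$ from the compatibility condition, and then simply asserts that the resulting $(\tilde{A},\tilde{B},\conn{\tilde{C}},\disc{\tilde{C}},\tilde{D})$ matches the F-Airy structure of Step~3 (equation~\eqref{eq:FAiry:loop:RT}), so that Step~4 (the change of bases by $(\lambda_{{LR},\textup{s}},\lambda_{{LR},\textup{t}})$) yields the claim. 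Your write-up makes explicit the intermediate identifications the paper leaves implicit --- $\theta^{\alpha} = \theta_{\scriptscriptstyle T}$, $\bcdot = \bcdot_{\scriptscriptstyle T}$, $\conn{\kappa} = \kappa_{\scriptscriptstyle T}$, the evaluation of $D$ --- but the logical structure is identical.
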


\begin{proof}
	We compare Proposition~\ref{prop:FAiry:Fsp} with the formulae for the F-Airy structure associated with $\hat{L}\hat{R}\hat{T}\Omega^0$ and $(\Up,\Delta)$ at the end of Section~\ref{subsec:identif}. Notice that the isomorphism
	\begin{equation}
		\id_{\Upsilon_-} + \dd \circ \Epsilon_{R} \colon \Upsilon_- \longrightarrow \disc{\mathscr{V}}_-
	\end{equation}
	coincides with the restriction of $\disc{\mathscr{P}}_{\loc}$ to $\Upsilon_-$. Therefore, the choice of $\conn{\Up} = \Up$ and $\conn{\Delta} = \Delta$ together with the compatibility yields
	\begin{equation}
		\disc{\Delta} = ( \id_{\Upsilon_-} + \dd \circ \Epsilon_{R}) \circ \Delta \,.
	\end{equation}
	Then ${}^{\Omega}(\tilde{A},\tilde{B},\conn{\tilde{C}},\disc{\tilde{C}},D)$ matches precisely the F-Airy structure obtained at the end of Step 3 after Theorem~\ref{thm:ident:actions}. It remains to apply Step 4 (the change of bases with specified isomorphisms) to get the F-Airy structure ${}^{\Omega}(A,B,\conn{C},\disc{C},D)$.
\end{proof}
 
We remark that for topological F-CohFTs both $\conn{\omega}_{0,2}$ and $\disc{\omega}_{0,2}$ coincide with the `standard bidifferential' in the local coordinates $\zeta$ that transform as $\zeta \mapsto -\zeta$ under the local involutions. The F-Givental action generates a more general $\disc{\omega}_{0,2}$ but does not change $\conn{\omega}_{0,2}$, which remains the `standard' one.

Besides, the F-Airy structure in the proposition above remains unchanged if we modify the F-spectral curve by adding $V_0$-valued constants to $x$, an even (with respect to the local involutions) germ of holomorphic function to $y$, and a germ of holomorphic bidifferential to $\omega_{0,2}$ that is even in at least one variable. This freedom may be exploited to investigate the existence of a global F-spectral curve whose local behaviour matches the germ near its ramification points.

The spectral curve description \eqref{eq:Fsc:LRT} is rather compact in contrast to the (equivalent) F-Airy structure description \eqref{eq:FAiry:loop:LRT}. It also handles `by itself' the infinite-dimensional questions that were more annoying to treat in the tensorial presentation. Besides, the F-spectral curve approach does not need up/down-morphisms to be formulated (it is only necessary to choose some to compare it to F-CohFTs amplitudes). These advantages can be appreciated in the extended 2-spin class example studied in Section~\ref{subsec:ext:2spin:class}.

\begin{ex}
	A simple computation shows that the following local F-spectral curve is associated with the extended $2$-spin F-CohFT:
	\begin{equation}
	\begin{split}
		x(\zeta)
		&=
			\frac{\zeta^2}{2} \mathrm{e}
			\,, \\
		y(\zeta)
		&=
			-\zeta \mathrm{e}_1 + \frac{\ln(s - \zeta)}{s} \mathrm{e}_2
			\,, \\
		\conn{\omega}_{0,2}(\zeta_1|\zeta_2)
		&=
			(\mathrm{e}_1 \otimes \mathrm{e}_1 + \mathrm{e}_2 \otimes \mathrm{e}_2)\,\frac{\dd \zeta_1 \, \dd \zeta_2}{(\zeta_1 - \zeta_2)^2}
			\,, \\
		\disc{\omega}_{0,2}(\zeta_1|\zeta_2)
		&=
			(\mathrm{e}_1 \otimes \mathrm{e}_1 + \mathrm{e}_2 \otimes \mathrm{e}_2)\,\frac{\dd \zeta_1 \, \dd \zeta_2}{(\zeta_1 - \zeta_2)^2} \\
		& \qquad
			+ \mathrm{e}_2 \otimes \mathrm{e}_1 \sum_{k_1,k_2 \geq 0} \frac{(2k_1 + 2k_2 + 1)!!}{(2k_1 - 1)!!(2k_2 - 1)!!} \zeta_1^{2k_1} (-1)^{k_2} \zeta_2^{2k_2} \dd \zeta_1 \, \dd \zeta_2
			\,, \\
		w & = (0,-s^2) \,.
	\end{split}
	\end{equation}
	Note that the double series in $\disc{\omega}_{0,2}(\zeta_1|\zeta_2)$ takes the alternative form
	\begin{equation}
		F_2\bigl[\tfrac{3}{2};1,1;\tfrac{1}{2},\tfrac{1}{2}\bigr](\zeta_1^2,-\zeta_2^2) \, \dd \zeta_1 \, \dd \zeta_2\,.
	\end{equation}
	where $F_2$ is the second Appell series.

	Moreover, we notice that by formally setting $\mathrm{e}_2 = 0$ we retrieve the Airy spectral curve, which is known to compute $\psi$-class intersection numbers. This is in line with the construction of the extended $2$-spin class, which precisely extends the Witten $2$-spin class (i.e. the fundamental class) by adding the new direction $\mathrm{e}_2$.
\end{ex}

\section{Conclusion}

In this article, we introduced F-topological recursion in both the Airy structure and spectral curve formalisms. We analysed the symmetries of these structures, identifying the classical change of basis, the Bogoliubov transformation, and the translation, as well as a novel transformation: the tick action. We then compared our recursion with the F-Givental action on F-CohFTs, establishing a full parallelism between the two theories.
 
This work should be viewed as an extension to the F-world of a well-known square of structures in the non-F setting.
\begin{center}
\begin{tikzcd}[row sep=5em, column sep=6em]
	\parbox{5.5cm}{\centering (1) Kontsevich--Soibelman\\Airy structures \cite{KS18}} 
	\arrow[r, dashed, bend right=5] 
  &	\parbox{5.5cm}{\centering (2) Chekhov--Eynard--Orantin\\topological recursion \cite{EO07}} 
    \arrow[l, bend right=5, "\text{loop eqs.}"']
    \arrow[d, dashed, bend right=5, xshift=-.15cm]
    \\
	\parbox{5.5cm}{\centering (4) Dubrovin's\\Frobenius manifolds \cite{Dub96}} 
	\arrow[r, <->, "\text{genus 0}", "\text{semisimple}"']
  & \parbox{5.5cm}{\centering (3) Kontsevich--Manin\\CohFTs \cite{KM94}} 
    \arrow[u, bend right=5, xshift=.15cm, "\text{local, global?}"']
    \arrow[ul, ->, "\text{Virasoro}" description] 
\end{tikzcd}
\end{center}

Each of these has its own independent development, with several known interrelations:
\begin{itemize}
  \item (2)~$\Rightarrow$~(1): Spectral curves with simple ramification yield quadratic Airy structures (i.e., Virasoro constraints) via loop equations \cite{KO10,KS18,ABCO24}. Curves with higher ramification give rise to Airy structures beyond the quadratic case \cite{BBCCN24}.
  
  \item (3)~$\Rightarrow$~(2), local: Semisimple CohFTs give rise to local spectral curves, and topological recursion computes the all-genera potential of the corresponding theory \cite{Eyn14,DOSS14}.
  
  \item The composition (3)~$\Rightarrow$~(2)~$\Rightarrow$~(1) encodes the Virasoro constraints in canonical coordinates for descendant integrals of semisimple CohFTs.
  
  \item (3)~$\Rightarrow$~(2), global: Constructing a global spectral curve involves Dubrovin's superpotential, if it exists, which provides a Riemann surface with two functions $x$ and $y$ \cite{DNOPS18,DNOPS19}. How to construct a compatible bidifferential $\omega_{0,2}$ is a problem generally unsolved except in special cases (e.g., Hurwitz--Frobenius manifolds).

  \item (3)~$\Leftrightarrow$~(4): The genus-zero part of a CohFT corresponds to a Frobenius manifold \cite{KM94}. Conversely, any semisimple Frobenius manifold gives rise to a semisimple CohFT whose genus-zero part retrieves the original Frobenius manifold, and one can retrieve in this way semisimple CohFTs in all genera \cite{Tel12}.
  
  \item There is no general (1)~$\Rightarrow$~(2): not all Airy structures arise from spectral curves (e.g., finite-dimensional Airy structures based on simple Lie algebras \cite{HR21}). Nor is there a general (2)~$\Rightarrow$~(3): spectral curves associated with semisimple CohFTs are those for which the Laplace transform of $\omega_{0,2}$ factors through an $R$-matrix \cite{DOSS14}. Moreover, there is no general procedure to construct CohFTs from spectral curves with higher ramification.
\end{itemize}
 
In the F-setting, prior work has involved:  
\begin{itemize}
  \item[(3)] F-CohFTs \cite{BR21},  
  \item[(4)] flat F-manifolds \cite{HM99},  
\end{itemize} 
with the known implication (3)~$\Rightarrow$~(4). There is a known construction (4)~$\Rightarrow$~(3) of F-CohFTs from semi-simple F-CohFTs using the F-analogue of Givental symmetries \cite{ABLR23}, but the analogue of Teleman's reconstruction fails in general. Our paper introduces the analogues of (1) and (2), namely:
\begin{itemize}
  \item[(1)] F-Airy structures,  
  \item[(2)] F-topological recursion,  
\end{itemize}
together with the implication (3)~$\Rightarrow$~(2)~$\Rightarrow$~(1) for those F-CohFTs that lie in the F-Givental orbit of semisimple ones. It remains an open question to find an analogue of Dubrovin's superpotential in the F-setting, and the construction of a global F-spectral curve from a given F-CohFT is an open problem that remains only partially understood even in the non-F setting. These are therefore interesting but out-of-scope questions for this article, which does not address F-manifolds directly, though they serve as background motivation.

\subsubsection*{Acknowledgements.}
We thank Paolo Rossi for discussions on F-CohFTs: his talk at the MPIM-HU AG\&Phy seminar in June 2022 sparked this project, and he pointed a (now corrected) mistake in a first version. G.B. thanks the IHÉS for hospitality and excellent working conditions allowing the completion of this project, and Bruno Vallette for discussions on symmetries of algebraic structures. A.G. was supported by an ETH Fellowship (22-2~FEL-003) and a Hermann-Weyl-Instructorship from the Forschungsinstitut für Mathematik at ETH Zürich. G.U. was funded by the Deutsche Forschungsgemeinschaft (DFG, German Research Foundation) under Germany's Excellence Strategy – The Berlin Mathematics Research Center MATH+ (EXC-2046/1, project ID: 390685689).




\printbibliography 

\end{document}